\documentclass[acmlarge]{acmart}


\setcopyright{none}

\bibliographystyle{ACM-Reference-Format}
\citestyle{acmauthoryear}   


\usepackage{booktabs}   
\usepackage{subcaption} 
\usepackage{amsmath}

\usepackage{amssymb}
\usepackage{tikz-cd}
\usepackage{ifthen}
\usepackage{xspace}
\usepackage{xparse}
\usepackage{xstring}
\usepackage{mathpartir}
\usepackage{cleveref}
\usepackage{enumitem}
\usepackage{stmaryrd}
\usepackage{multicol}
\usepackage{tikz-cd}
\usepackage{mathtools}
\usepackage{stackengine}
\usepackage{scalerel}
\usepackage{wasysym}
\usepackage{fontawesome}
\usepackage{xcolor}
\usepackage{tcolorbox}
\usepackage{accents}
\usepackage{ulem}
\usepackage{multirow}
\usepackage{array,longtable}

\usetikzlibrary{matrix}
\usetikzlibrary{calc}
\usetikzlibrary{positioning}
\usetikzlibrary{decorations.pathreplacing}

\makeatletter
\setlength{\@fptop}{0pt}
\makeatother

\DeclareFontFamily{OT1}{pzc}{}
\DeclareFontShape{OT1}{pzc}{m}{it}{<-> s * [1.10] pzcmi7t}{}
\DeclareMathAlphabet{\mathpzc}{OT1}{pzc}{m}{it}

\crefname{figure}{fig.}{fig.}
\Crefname{figure}{Fig.}{Fig.}
\Crefname{section}{Sec.}{Sec.}

\definecolor{light-gray}{gray}{0.85}

\allowdisplaybreaks

\DeclareUnicodeCharacter{21D2}{$\Rightarrow$}
\DeclareUnicodeCharacter{2113}{$\mathpzc{l}$}
\DeclareUnicodeCharacter{2248}{$\approx$}
\DeclareUnicodeCharacter{2200}{$\forall$}
\DeclareUnicodeCharacter{2218}{$\circ$}
\DeclareUnicodeCharacter{2294}{$\sqcup$}
\DeclareUnicodeCharacter{393}{$\Gamma$}
\DeclareUnicodeCharacter{3A8}{$\Psi$}
\DeclareUnicodeCharacter{3B1}{$\alpha$}
\DeclareUnicodeCharacter{3B2}{$\beta$}
\DeclareUnicodeCharacter{3B7}{$\eta$}
\DeclareUnicodeCharacter{3BB}{$\lambda$}
\DeclareUnicodeCharacter{3BC}{$\mu$}
\DeclareUnicodeCharacter{2080}{$_0$}
\DeclareUnicodeCharacter{2081}{$_1$}
\DeclareUnicodeCharacter{2082}{$_2$}
\DeclareUnicodeCharacter{2261}{$\equiv$}
\DeclareUnicodeCharacter{2E1}{$^l$}
\DeclareUnicodeCharacter{2B3}{$^r$}
\DeclareUnicodeCharacter{2032}{$'$}
\DeclareUnicodeCharacter{21D4}{$\Leftrightarrow$}
\DeclareUnicodeCharacter{224A}{$\approxeq$}
\DeclareUnicodeCharacter{27E8}{$\langle$}
\DeclareUnicodeCharacter{27E9}{$\rangle$}
\DeclareUnicodeCharacter{27F6}{$\longrightarrow$}
\DeclareUnicodeCharacter{220E}{$\blacksquare$}
\DeclareUnicodeCharacter{27FA}{$\Leftrightarrow$}
\DeclareUnicodeCharacter{22A4}{$\top$}
\DeclareUnicodeCharacter{22A5}{$\bot$}
\DeclareUnicodeCharacter{03A3}{$\Sigma$}
\DeclareUnicodeCharacter{03C0}{$\pi$}
\DeclareUnicodeCharacter{25A1}{$\square$}
\DeclareUnicodeCharacter{2237}{$::$}
\DeclareUnicodeCharacter{2308}{$\lceil$}
\DeclareUnicodeCharacter{2309}{$\rceil$}
\DeclareUnicodeCharacter{22A2}{$\vdash$}

\newcolumntype{L}{>{$}l<{$}}
\newcolumntype{R}{>{$}r<{$}}
\setlength\tabcolsep{5pt}

\theoremstyle{remark}

\makeatletter
\def\namedlabel#1#2{\begingroup
   \def\@currentlabel{#2}%
   \label{#1}\endgroup
}
\makeatother

\makeatletter
\newcommand{\customlabel}[2]{%
   \protected@write \@auxout {}{\string \newlabel {#1}{{#2}{\thepage}{#2}{#1}{}} }%
   \hypertarget{#1}{}
}
\makeatother

\makeatletter
\DeclareFontFamily{U}{MnSymbolA}{}
\DeclareFontShape{U}{MnSymbolA}{m}{n}{
    <-6>  MnSymbolA5
   <6-7>  MnSymbolA6
   <7-8>  MnSymbolA7
   <8-9>  MnSymbolA8
   <9-10> MnSymbolA9
  <10-12> MnSymbolA10
  <12->   MnSymbolA12}{}
\DeclareFontShape{U}{MnSymbolA}{b}{n}{
    <-6>  MnSymbolA-Bold5
   <6-7>  MnSymbolA-Bold6
   <7-8>  MnSymbolA-Bold7
   <8-9>  MnSymbolA-Bold8
   <9-10> MnSymbolA-Bold9
  <10-12> MnSymbolA-Bold10
  <12->   MnSymbolA-Bold12}{}
\DeclareSymbolFont{MnSyA}{U}{MnSymbolA}{m}{n}
\SetSymbolFont{MnSyA}{bold}{U}{MnSymbolA}{b}{n}

\DeclareRobustCommand{\overleftharpoon}{\mathpalette{\overarrow@\leftharpoonfill@}}
\DeclareRobustCommand{\overrightharpoon}{\mathpalette{\overarrow@\rightharpoonfill@}}
\def\leftharpoonfill@{\arrowfill@\leftharpoondown\mn@relbar\mn@relbar}
\def\rightharpoonfill@{\arrowfill@\mn@relbar\mn@relbar\rightharpoonup}

\DeclareMathSymbol{\leftharpoondown}{\mathrel}{MnSyA}{'112}
\DeclareMathSymbol{\rightharpoonup}{\mathrel}{MnSyA}{'100}
\DeclareMathSymbol{\mn@relbar}{\mathrel}{MnSyA}{'320}
\makeatother

\DeclareFontFamily{U}{mathx}{}
\DeclareFontShape{U}{mathx}{m}{n}{<-> mathx10}{}
\DeclareSymbolFont{mathx}{U}{mathx}{m}{n}
\DeclareMathAccent{\widecheck}{0}{mathx}{"71}




\newcommand{\vect}[1]{\overrightarrow{#1}}

\newcommand{\vGamma}{\vect \Gamma}
\newcommand{\vDelta}{\vect \Delta}

\newcommand{\To}{\Longrightarrow}
\newcommand{\STo}{\Rightarrow}

\newcommand{\Nat}{\ensuremath{\texttt{Nat}}}

\newcommand{\tbox}{\ensuremath{\texttt{box}}\xspace}

\newcommand{\tletbox}{\ensuremath{\texttt{letbox}}}

\newcommand{\ze}{\textsf{zero}}
\newcommand{\tsucc}{\textsf{succ}}
\newcommand{\su}[1]{\tsucc~#1}
\newcommand{\tapp}{\textsf{app}}
\newcommand{\trec}{\textsf{rec}}
\newcommand\recn[4]{\trec_{#1}\ #2\ (#3)\ #4}
\newcommand{\Nf}{\textsf{Nf}}
\newcommand{\Ne}{\textsf{Ne}}

\newcommand{\Typ}{\textsf{Typ}}
\newcommand{\Exp}{\textsf{Exp}}
\newcommand{\CExp}{\textsf{CExp}}

\newcommand{\intp}[1]{\ensuremath{\llbracket #1 \rrbracket}}
\newcommand{\glu}[1]{\ensuremath{\llparenthesis #1 \rrparenthesis}}

\newcommand{\boxit}[1]{\tbox\ #1}

\newcommand{\letbox}[3]{\tletbox\ {#1} \shortleftarrow #2\ \texttt{in}\ #3}

\newcommand{\tisapp}{\textsf{isapp?}\xspace}
\newcommand{\isapp}[3]{\ensuremath{\tisapp\ #1 \ \textsf{then}\ #2\
    \textsf{else}\ #3}}
\newcommand{\tmatc}{\textsf{match}\xspace}
\newcommand{\matc}[1]{\ensuremath{\textsf{match}\ #1\ \textsf{with}\xspace}}
\newcommand{\tnfbranch}{\textsf{nfbranch}\xspace}

\newcommand{\sep}{\;|\;}
\newcommand{\func}{\longrightarrow}
\newcommand\hfunc{{\hat{\func}}}

\newcommand{\nbe}{\textsf{nbe}}

\newcommand{\inser}{\textsf{insert}}

\newcommand\id{\ensuremath{\textsf{id}}}

\newcommand{\Se}{\texttt{Ty}}

\newcommand{\Ty}[1]{\ensuremath{\Se_{#1}}}

\newcommand\WC{\ensuremath{\mathpzc{W}}\xspace}

\newcommand\SetC{\ensuremath{\mathpzc{Set}}\xspace}

\newcommand\El{\textbf{El}}

\newcommand{\tru}{\textsf{true}\xspace}
\newcommand{\fal}{\textsf{false}\xspace}
\newcommand{\branch}{b}
\newcommand{\nbranch}{r}
\newcommand{\var}[1]{\textsf{var}_{#1}}

\newcommand\tand{\text{ and }}
\newcommand\byIH{\tag{by IH}}

\newcommand\red[1]{{\color{red} #1}}

\newcommand{\vertrule}[1][1ex]{\rule{.45pt}{#1}}

\newcommand{\sttstile}{\ensuremath{\mathrel{\raisebox{0.01pt}{\vertrule[1.6ex]}{\hspace{.1em}\vDash}}}}

\DeclareDocumentCommand{\dSemjudgebf}{ o o m } {
  \IfNoValueTF {#1}
  {\IfNoValueTF {#2}
    {\Psi; \Gamma \sttstile #3}
    {\Psi; #2 \sttstile #3}}
  {\IfNoValueTF {#2}
    {#1; \Gamma \sttstile #3}
    {#1; #2 \sttstile #3}}
}

\DeclareDocumentCommand{\judge}{ o m } {
  \IfNoValueTF {#1}
  {\Gamma \vdash #2}
  {#1 \vdash #2}
}

\DeclareDocumentCommand{\gequiv}{ o m m } {
  \IfNoValueTF {#1}
  {L \vdash #2 \approx #3}
  {#1 \vdash #2 \approx #3}
}

\DeclareDocumentCommand{\cont}{ o m } {
  \square (\judge[#1]{#2})
}

\DeclareDocumentCommand{\rjudge}{ o m } {
  \IfNoValueTF {#1}
  {\vGamma \vdash_r #2}
  {#1 \vdash_r #2}
}

\DeclareDocumentCommand{\semjudge}{ o m } {
  \IfNoValueTF {#1}
  {\Gamma \vDash #2}
  {#1 \vDash #2}
}

\DeclareDocumentCommand{\Semjudge}{ o m } {
  \IfNoValueTF {#1}
  {\Gamma \Vdash #2}
  {#1 \Vdash #2}
}

\DeclareDocumentCommand{\mjudge}{ o m } {
  \IfNoValueTF {#1}
  {\vGamma \vdash #2}
  {#1 \vdash #2}
}

\DeclareDocumentCommand{\djudge}{ o o m } {
  \IfNoValueTF {#1}
  {\IfNoValueTF {#2}
    {\Psi; \Gamma \vdash #3}
    {\Psi; #2 \vdash #3}}
  {\IfNoValueTF {#2}
    {#1; \Gamma \vdash #3}
    {#1; #2 \vdash #3}}
}

\DeclareDocumentCommand{\ljudge}{ o m m } {
  \IfNoValueTF {#1}
  {\Psi \vdash_{#2} #3}
  {#1 \vdash_{#2} #3}
}

\DeclareDocumentCommand{\ljudgel}{ o m m } {
  \IfNoValueTF {#1}
  {\Gamma \vdash_{#2} #3}
  {#1 \vdash_{#2} #3}
}

\DeclareDocumentCommand{\DTyp}{o m m} {
  \ensuremath{(\ljudgel[#1]{#2}{\Ty{#3}})}
}

\DeclareDocumentCommand{\DTrm}{o m m m} {
  \ensuremath{(\ljudgel[#1]{#2}{#3 : \Ty{#4}})}
}

\DeclareDocumentCommand{\CTyp}{o m} {
  \ensuremath{\square \DTyp[#1]{0}{#2}}
}

\DeclareDocumentCommand{\CTrm}{o m m} {
  \ensuremath{\square \DTrm[#1]{0}{#2}{#3}}
}

\DeclareDocumentCommand{\TPI}{m o m m m}{
  \ensuremath{(#1 : \DTyp[#2]{h}{#3}) \STo^{#4} #5}
}

\DeclareDocumentCommand{\pjudge}{ o o m } {
  \IfNoValueTF {#1}
  {\IfNoValueTF {#2}
    {L \sep \Psi \vdash #3}
    {#2 \sep \Psi \vdash #3}}
  {\IfNoValueTF {#2}
    {L \sep #1 \vdash #3}
    {#2 \sep #1 \vdash #3}}
}

\DeclareDocumentCommand{\ptyping}{o o m m } {
  \pjudge[#1][#2]{#3 : #4}
}

\DeclareDocumentCommand{\ptyequiv}{o o m m m } {
  \pjudge[#1][#2]{#3 \approx #4 : #5}
}

\DeclareDocumentCommand{\lpjudge}{ o o m m } {
  \pjudge[#1][#2]{_{#3}#4}
}

\DeclareDocumentCommand{\lpequiv}{ o o m m m } {
  \lpjudge[#1][#2]{#3}{#4 \approx #5}
}

\DeclareDocumentCommand{\tjudge}{ o o o m } {
  \IfNoValueTF {#1}
  {\IfNoValueTF {#2}
    {\IfNoValueTF {#3}
      {L \sep \Psi; \Gamma \vdash #4}
      {#3 \sep \Psi; \Gamma \vdash #4}}
    {\IfNoValueTF {#3}
      {L \sep \Psi; #2 \vdash #4}
      {#3 \sep \Psi; #2 \vdash #4}}}
  {\IfNoValueTF {#2}
    {\IfNoValueTF {#3}
      {L \sep #1; \Gamma \vdash #4}
      {#3 \sep #1; \Gamma \vdash #4}}
    {\IfNoValueTF {#3}
      {L \sep #1; #2 \vdash #4}
      {#3 \sep #1; #2 \vdash #4}}}
}

\DeclareDocumentCommand{\ltjudge}{o o o m m} {
  \tjudge[#1][#2][#3]{_{#4} #5}
}

\DeclareDocumentCommand{\lttypwf}{o o o m m m} {
  \ltjudge[#1][#2][#3]{#4}{#5 : \Ty{#6}}
}

\DeclareDocumentCommand{\lttyping}{o o o m m m m} {
  \ltjudge[#1][#2][#3]{#4}{#5 : #6 : \Ty{#7}}
}

\DeclareDocumentCommand{\lttypingv}{o o o m m m m} {
  \ltjudge[#1][#2][#3]{#4}{^{\#} #5 : #6 : \Ty{#7}}
}

\DeclareDocumentCommand{\lttypingd}{o o o m m m m} {
  \ltjudge[#1][#2][#3]{#4}{#5 : \Ty{#6}\;(: \Ty{#7})}
}

\DeclareDocumentCommand{\ltsubst}{o o o m m m} {
  \ltjudge[#1][#2][#3]{#4}{#5 : #6}
}

\DeclareDocumentCommand{\ltsubstv}{o o o m m m} {
  \ltjudge[#1][#2][#3]{#4}{^{\#} #5 : #6}
}

\DeclareDocumentCommand{\ltsubeq}{o o o m m m m} {
  \ltjudge[#1][#2][#3]{#4}{#5 \approx #6 : #7}
}

\DeclareDocumentCommand{\lttypeq}{o o o m m m m} {
  \ltjudge[#1][#2][#3]{#4}{#5 \approx #6 : \Ty{#7}}
}

\DeclareDocumentCommand{\lttyequiv}{o o o m m m m m} {
  \ltjudge[#1][#2][#3]{#4}{#5 \approx #6 : #7 : \Ty{#8}}
}

\DeclareDocumentCommand{\lttyequivv}{o o o m m m m m} {
  \ltjudge[#1][#2][#3]{#4}{^{\#} #5 \approx #6 : #7 : \Ty{#8}}
}

\DeclareDocumentCommand{\lttyequivd}{o o o m m m m m} {
  \ltjudge[#1][#2][#3]{#4}{#5 \approx #6 : \Ty{#7}\;(: \Ty{#8})}
}

\DeclareDocumentCommand{\lsemjudge}{ o m m } {
  \IfNoValueTF {#1}
  {\Psi \vDash_{#2} #3}
  {#1 \vDash_{#2} #3}
}

\DeclareDocumentCommand{\lsemvjudge}{ o m m } {
  \lsemjudge[#1]{#2}{^v #3}
}

\DeclareDocumentCommand{\lmjudge}{ o o m m } {
  \djudge[#1][#2]{_{#3} #4}
}

\DeclareDocumentCommand{\dsemjudge}{ o o m } {
  \IfNoValueTF {#1}
  {\IfNoValueTF {#2}
    {\Psi; \Gamma \vDash #3}
    {\Psi; #2 \vDash #3}}
  {\IfNoValueTF {#2}
    {#1; \Gamma \vDash #3}
    {#1; #2 \vDash #3}}
}

\DeclareDocumentCommand{\dSemjudge}{ o o m } {
  \IfNoValueTF {#1}
  {\IfNoValueTF {#2}
    {\Psi; \Gamma \Vdash #3}
    {\Psi; #2 \Vdash #3}}
  {\IfNoValueTF {#2}
    {#1; \Gamma \Vdash #3}
    {#1; #2 \Vdash #3}}
}

\DeclareDocumentCommand{\wmjudge}{ o m } {
  \IfNoValueTF {#1}
  {\vGamma \vdash_{\! w} #2}
  {#1 \vdash_{\! w} #2}
}

\DeclareDocumentCommand{\msemjudge}{ o m } {
  \IfNoValueTF {#1}
  {\vGamma \vDash #2}
  {#1 \vDash #2}
}

\DeclareDocumentCommand{\mSemjudge}{ o m } {
  \IfNoValueTF {#1}
  {\vGamma \Vdash #2}
  {#1 \Vdash #2}
}

\DeclareDocumentCommand{\typing}{ o m m } {
  \judge[#1]{#2 : #3}
}

\DeclareDocumentCommand{\rtyping}{ o m m } {
  \rjudge[#1]{#2 : #3}
}

\DeclareDocumentCommand{\semtyp}{ o m m } {
  \semjudge[#1]{#2 : #3}
}

\DeclareDocumentCommand{\semvtyp}{ o m m } {
  \IfNoValueTF {#1}
  {\semtyp[\Psi]{^v #2}{#3}}
  {\semtyp[#1]{^v #2}{#3}}
}

\DeclareDocumentCommand{\Semtyp}{ o m m } {
  \Semjudge[#1]{#2 : #3}
}

\DeclareDocumentCommand{\tyequiv}{ o m m m } {
  \judge[#1]{#2 \approx #3 : #4}
}

\DeclareDocumentCommand{\semtyeq}{ o m m m } {
  \semjudge[#1]{#2 \approx #3 : #4}
}

\DeclareDocumentCommand{\semvtyeq}{ o m m m } {
  \IfNoValueTF {#1}
  {\semjudge[\Psi]{^v #2 \approx #3 : #4}}
  {\semjudge[#1]{^v #2 \approx #3 : #4}}
}



\DeclareDocumentCommand{\lSemtypPrime}{ o o m m m } {
  \dSemjudgebf[#1][#2]{_{#3} #4 : #5}
}

\DeclareDocumentCommand{\mtyping}{ o m m } {
  \mjudge[#1]{#2 : #3}
}

\DeclareDocumentCommand{\lmtyping}{ o m m m } {
  \mjudge[#1]{_{#2} #3 : #4}
}

\DeclareDocumentCommand{\dtyping}{ o o m m } {
  \djudge[#1][#2]{#3 : #4}
}

\DeclareDocumentCommand{\ltyping}{ o o m m m } {
  \djudge[#1][#2]{_{#3} #4 : #5}
}

\DeclareDocumentCommand{\wmtyping}{ o m m } {
  \wmjudge[#1]{#2 : #3}
}

\DeclareDocumentCommand{\mSemtyp}{ o m m } {
  \mSemjudge[#1]{#2 : #3}
}

\DeclareDocumentCommand{\lsemtyp}{ o o m m m } {
  \dsemjudge[#1][#2]{_{#3} #4 : #5}
}

\DeclareDocumentCommand{\lsemvtyp}{ o o m m m } {
  \dsemjudge[#1][#2]{^v_{#3} #4 : #5}
}

\DeclareDocumentCommand{\lSemtyp}{ o o m m m } {
  \dSemjudge[#1][#2]{_{#3} #4 : #5}
}

\DeclareDocumentCommand{\mrarr}{ o m m } {
  \mjudge[#1]{#2 > #3}
}

\DeclareDocumentCommand{\wmrarr}{ o m m } {
  \wmjudge[#1]{#2 > #3}
}

\DeclareDocumentCommand{\mtyequiv}{ o m m m } {
  \mjudge[#1]{#2 \approx #3 : #4}
}

\DeclareDocumentCommand{\dtyequiv}{ o o m m m } {
  \djudge[#1][#2]{#3 \approx #4 : #5}
}

\DeclareDocumentCommand{\ltyequiv}{ o o m m m m } {
  \djudge[#1][#2]{_{#3} #4 \approx #5 : #6}
}

\DeclareDocumentCommand{\ltygneeq}{ o o m m m } {
  \djudge[#1][#2]{ #3 \sim #4 : #5}
}

\DeclareDocumentCommand{\ltygteq}{ o o m m m } {
  \djudge[#1][#2]{ #3 \simeq #4 : #5}
}

\DeclareDocumentCommand{\ltyred}{ o o m m m } {
  \djudge[#1][#2]{ #3 \rightsquigarrow #4 : #5}
}

\DeclareDocumentCommand{\ltyreds}{ o o m m m } {
  \djudge[#1][#2]{ #3 \rightsquigarrow^\ast #4 : #5}
}

\DeclareDocumentCommand{\dtconv}{ o o m m m } {
  \djudge[#1][#2]{ #3 \ \hat{\Longleftrightarrow}\ #4 : #5}
}

\DeclareDocumentCommand{\dtconvnf}{ o o m m m } {
  \djudge[#1][#2]{ #3 \Longleftrightarrow #4 : #5}
}

\DeclareDocumentCommand{\dtconvne}{ o o m m m } {
  \djudge[#1][#2]{ #3 \longleftrightarrow #4 : #5}
}

\DeclareDocumentCommand{\lsemtyp}{ o o m m m } {
  \dsemjudge[#1][#2]{_{#3} #4 : #5}
}

\DeclareDocumentCommand{\dsemtyeq}{ o o m m m } {
  \dsemjudge[#1][#2]{#3 \approx #4 : #5}
}

\DeclareDocumentCommand{\lsemtyeq}{ o o m m m m } {
  \dsemjudge[#1][#2]{_{#3} #4 \approx #5 : #6}
}

\DeclareDocumentCommand{\lsemvtyeq}{ o o m m m m } {
  \dsemjudge[#1][#2]{^v_{#3} #4 \approx #5 : #6}
}

\DeclareDocumentCommand{\msemtyeq}{ o m m m } {
  \msemjudge[#1]{#2 \approx #3 : #4}
}
\DeclareDocumentCommand{\msemtyp}{ o m m } {
  \msemjudge[#1]{#2 : #3}
}

\DeclareDocumentCommand{\mgluty}{ o m o m } {
  \IfNoValueTF {#3}
  {\mjudge[#1]{#2 \; \circledR \; #4}}
  {\mjudge[#1]{#2 \; \circledR_{#3} \; #4}}
}

\DeclareDocumentCommand{\mglutm}{ o m m m o m } {
   \IfNoValueTF {#5}
   {\mjudge[#1]{#2 : #3 \; \circledR \; #4 \in \El(#6)}}
   {\mjudge[#1]{#2 : #3 \; \circledR_{#5} \; #4 \in \El_{#5}(#6)}}
 }

 \DeclareDocumentCommand{\mglunat}{ o m m } {
   \mjudge[#1]{#2 : N \; \circledR \; #3 \in Nat}
 }

\DeclareDocumentCommand{\mglutms}{ o m m m } {
  \mjudge[#1]{#2 : #3 \; \circledR \; #4}
}

\DeclareDocumentCommand{\mglutyu}{ o m o m } {
  \IfNoValueTF {#3}
  {\mjudge[#1]{#2 \; \overline{\circledR} \; #4}}
  {\mjudge[#1]{#2 \; \overline{\circledR}_{#3} \; #4}}
}

\DeclareDocumentCommand{\mglutmu}{ o m m m o m } {
  \IfNoValueTF {#5}
  {\mjudge[#1]{#2 : #3 \; \overline{\circledR} \; #4 \in \El(#6)}}
  {\mjudge[#1]{#2 : #3 \; \overline{\circledR}_{#5} \; #4 \in \El_{#5}(#6)}}
}

\DeclareDocumentCommand{\mglutmd}{ o m m m o m } {
  \IfNoValueTF {#5}
  {\mjudge[#1]{#2 : #3 \; \underline{\circledR} \; #4 \in \El(#6)}}
  {\mjudge[#1]{#2 : #3 \; \underline{\circledR}_{#5} \; #4 \in \El_{#5}(#6)}}
}

\newcommand{\labeledit}[1]{\label{#1}}

\newcommand{\mhighlight}[1]{\colorbox{light-gray}{\ensuremath{#1}}}

\newcommand{\iscore}[1]{\ensuremath{#1\ \texttt{core}}}
\newcommand{\istype}[1]{\ensuremath{#1\ \texttt{type}}}
\newcommand{\wf}[2]{\ensuremath{#2\ \texttt{wf}^{#1}}}

\newcommand{\JH}[1]{}



\AtBeginDocument{%
  \providecommand\BibTeX{{%
    \normalfont B\kern-0.5em{\scshape i\kern-0.25em b}\kern-0.8em\TeX}}}

\begin{document}

\title{Layered Modal Type Theories}         


\author{Jason Z. S. Hu}
\email{zhong.s.hu@mail.mcgill.ca}
\author{Brigitte Pientka}
\email{bpientka@cs.mcgill.ca}
\affiliation{%
  \department{School of Computer Science}
  \institution{McGill University}
  \streetaddress{McConnell Engineering Bldg. , 3480 University St.}
  \city{Montr\'eal}
  \state{Qu\'ebec}
  \country{Canada}
  \postcode{H3A 0E9}
}

\begin{abstract}
  We introduce layers to modal type theories, which subsequently enables type theories
  for pattern matching on code in meta-programming and clean and straightforward
  semantics.
\end{abstract}




\maketitle

\section{Introduction}

Under Curry-Howard correspondence, \citet{davies_modal_2001} discovered that the modal
logic $S4$ corresponds to staged or meta-programming. %
In their settings, what fundamentally characterizes meta-programming in $S4$ is the
reading of $\square A$ as the code of type $A$ and the following two axioms:
\begin{align*}
  T &: \square A \to A \\
  4 &: \square A \to \square \square A
\end{align*}
In these two axioms, $T$ means that we can extract a term of $A$ from its
representation. %
In other words, $T$ runs a meta-program and generates a program which evaluates to a
value of $A$. %
The Axiom $4$, on the other hand, allows the use of a macro inside of a macro of a
macro of a macro ... until an arbitrary depth. %
Though \citet{davies_modal_2001} have made a convincing argument that $S4$ does
provide a logical foundation for meta-programming, compared to practice, the facility
provided by $S4$ is hardly enough. %
In particular, tactics in proof assistants frequently refer to the current goal,
analyze the structure of the goal and apply various strategies to fill it in. %
However, $S4$ in itself only allows composition of meta-programs. It does not
provide a clear direction of how one should add support for case analyses of code
structures as we would usually expect from a reasonable tactic language in a proof
assistant. %
The most recent breakthrough in this direction is due to \citet{Jang:POPL22}. %
They propose Moebius, which extend System F with modal types so that it supports the
analysis or pattern matching on code. %
The crucial ideas of their work are:
\begin{enumerate}
\item contextual types due to \citet{nanevski_contextual_2008}, which generalize
  \citet{davies_modal_2001} and support representation of an open piece of code
  relative to a certain local context, and
  
\item levels, which ultimately are a technicality forced by the fact that pattern
  matching on code must be able to pattern match on the code that again does pattern
  matching.
\end{enumerate}
Though this work has provided ideas and shed light on how we should support pattern
matching on code, there are also problems:
\begin{enumerate}
\item Moebius, despite having preservation, does not have progress. %
  The crucial issue here is that Moebius does not provide guarantees on the covering
  of pattern matching on code. %
  It is possible for a pattern matching to find no matching case for a piece of code,
  in which case the evaluation gets stuck. %
  
\item Moebius does not have a normalization proof. %
  It is hard to say whether any well-typed program in Moebius terminates or not even
  if there is a fix for the covering problem. 
\end{enumerate}
These shortcomings are usually not a big deal if Moebius is used as a programming language. %
However, in type theory, we have to be more cautious, as our system must be
foundationally consistent. %
In this technical report, we develop another group of modal type theories based on one
simple idea of \emph{layers}. %
In this idea, a modal type theory is stratified into layers (primarily into two but
possibly more) but terms share one common syntax. %
The modalities are then responsible for distinguishing terms from different layers. %
It turns out that having layers is one powerful idea that not only enables a simpler
way to do pattern matching on code, but also finds a clear correspondence in the
semantics, which seems to work around the limitations found by
\citet{kavvos_intensionality_2021}.

In this technical report, we start with the simplest layered modal type theory that
can only be used to meta-program closed code. %
Then we work our way up to support open code, and next pattern matching. %
We are also interested in scaling the whole setup to dependent types, but we leave the
work in another technical report due to the size of the development. 

\section{Layered Dual-context Formulation}\label{sec:st}

In this section, we develop the layered modal type theory $S4$ in the dual-context
style with simple types. %
We first introduce the original formulation of $S4$
by~\citet{davies_modal_2001,pfenning_judgmental_2001} and compare it with our layered
formulation. %
It turns out that the original dual-context $S4$ is just a limit of our layered
version. %
As we will see, when we restrict the type theory to two layers, we obtain very clean
and intuitive semantics for the type theory. %

\subsection{Original Dual-context Formulation}

The original dual-context formulation of $S4$ is pioneered
by~\citet{davies_modal_2001,pfenning_judgmental_2001}, which not only gives an
intuitionistic formulation of the modal logic $S4$, but also shows that $S4$
corresponds to meta-programming under Curry-Howard correspondence. %
The core idea of the dual-context formulation is simple: there are two kinds of facts
in the system that are tracked by two different contexts. %
One kind stands for truth or local facts. %
The other kind is global facts. %
Effectively, that $\square A$ is true is equivalent to that $A$ is globally true. %
The type theory has the following syntax:
\begin{alignat*}{2}
  S, T &:=&&\ \Nat \sep \square T \sep S \func T
  \tag{Types, \Typ} \\
  x, y & && \tag{Local variables} \\
  u & && \tag{Global variables} \\
  s, t &:=&&\ x \sep u \tag{Terms, $\Exp$} \\
  & && \sep \ze \sep \su t
  \tag{natural numbers}\\
  & && \sep \boxit t \sep \letbox u s t
  \tag{box}\\
  & &&\sep \lambda x. t \sep s\ t \tag{functions}  \\
  \Gamma, \Delta &:= &&\ \cdot \sep \Gamma, x : T
  \tag{Local contexts}\\
  \Phi, \Psi &:= &&\ \cdot \sep \Phi, u : T
  \tag{Global contexts}
\end{alignat*}
We use natural numbers as our base type. %
We could have introduced a recursor for $\Nat$ but omit it to avoid clutter. %
Since there are two kinds of facts, we have two kinds of contexts and variables. %
$\boxit t$ introduces a $\square$ type while $\letbox u s t$ eliminates one. %
Their typing rules are:
\begin{mathpar}
  \inferrule
  {\dtyping[\Psi][\cdot]t T}
  {\dtyping{\boxit t}{\square T}}

  \inferrule
  {\dtyping{s}{\square T} \\ \dtyping[\Psi, u : T]{t}{T'}}
  {\dtyping{\letbox u s t} T'}
\end{mathpar}
According tho the introduction rule, $\boxit t$ can only depends on global facts, so
we can be sure that it is also a global fact. %
In the elimination rule, an additional global assumption of $T$ is added to the global
context when typing the body $t$. %
We can easily see that these two rules are coherent in the sense of local soundness:
\begin{mathpar}
  \inferrule
  {\dtyping[\Psi][\cdot]s T \\ \dtyping[\Psi, u : T]{t}{T'}}
  {\dtyping{\letbox u {\boxit s} t} T}

  \Rightarrow

  \dtyping{t[s/u]} T
\end{mathpar}
The right hand side holds due to a theorem of global substitutions.

\subsection{Adding Layers}

In this section, we add layers to the previous modal type theory. %
The ultimate motivation for adding layers comes from the desire of drawing more
distinctions to code and computation. %
In the original dual-context formulation, a judgment $\dtyping t T$ does not provide
information about whether we are treating $t$ as a piece of code or not. %
This distinction is made in programmers' mind, when they place $t$ inside or outside
of $\tbox$. %
However, this existing distinction is not made in the judgment, and we perceive that
this lack of information is the critical reason of why supporting pattern matching on
code is so difficult and cumbersome. %
By adding layers, we add a subscript $i \in [0, n]$ for a fixed natural number $n$
to the typing judgment so that the nested layers of $\square$ are accounted for. 
\begin{mathpar}
  \inferrule
  {\ltyping[\Psi][\cdot] i t T}
  {\ltyping{1 + i}{\boxit t}{\square T}}

  \inferrule
  {\ltyping i {s}{\square T} \\ \ltyping[\Psi, u : T] i {t}{T'}}
  {\ltyping i {\letbox u s t} T'}
\end{mathpar}
It is immediate to see that to recover the original $S4$, we simply take $n$ to
$\omega$ so that arbitrarily nested $\square$s are allowed. %
In this sense, the original formulation becomes a special case of the layered
version. %
In this technical report, we focus on the case of $n = 1$, in which typing of code and
computation is naturally distinguished, and subsequently we find that supporting
pattern matching on code becomes natural and the semantics of the type theory is
amazingly intuitive.  %
Under Curry-Howard correspondence, we intend to make layer $0$ the layer for code, and
layer $1$ the one for computation. %

When we take $n = 1$, the introduction rule is only possible when $i = 0$, namely
\begin{mathpar}
  \inferrule
  {\ltyping[\Psi][\cdot] 0 t T}
  {\ltyping{1}{\boxit t}{\square T}}\red{(?)}
\end{mathpar}
Since $\boxit t$ can only live at layer $1$, it is not possible to construct a
$\square T$ at layer $0$ without using the global assumptions. %
To actual ban $\square$ entirely at layer $0$, we must maintain an invariant:
\begin{mathpar}
  \inferrule
  { }
  {\iscore \Nat}

  \inferrule
  {\iscore S \\ \iscore T}
  {\iscore{S \func T}}
\end{mathpar}
$\iscore T$ holds whenever $T$ contains no $\square$. %
We require that all types from a valid global context satisfy this judgment by
generalizing this judgment to $\iscore \Psi$. %
Semantically, $\iscore T$ describes the types that belong to some core type theory
(simply typed $\lambda$-calculus in this case) where programming occurs, while
$\square$ is an add-on to extend this core type theory with
meta-programming.  %
Similarly, we require that all types from a valid local context at layer $1$ must have
at max one layer of $\square$. %
This is checked by the following judgment:
\begin{mathpar}
  \inferrule
  { }
  {\istype \Nat}

  \inferrule
  {\istype S \\ \istype T}
  {\istype{S \func T}}

  \inferrule
  {\iscore T}
  {\istype{\square T}}
\end{mathpar}
This judgment ensures that a meta-program never describes another meta-program,
i.e. meta-meta-programs don't exist. %
This might appear as a restriction at the first glance, but, lo and behold, most
widely used meta-programming systems have the same limitation and they have the
additional disadvantage of having a different meta-programming language\footnote{These
  systems include macro systems of practical languages like C and Rust, and tactic
  languages like Ltac, Ltac2 of Coq, and Eisbach of Isabelle}. %
Not to mention some of these systems are not even typed. %
Yet, they are still widely used and are pretty practical. %
We perceive this limitation as a blessing: practicality comes from simplicity, not
generality. %
Clearly, $\istype T$ subsumes $\iscore T$.
\begin{lemma}
  If $\iscore T$, then $\istype T$.
\end{lemma}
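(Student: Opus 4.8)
The plan is to proceed by a routine induction on the derivation of $\iscore{T}$. Since the judgment $\iscore{\cdot}$ is generated by exactly two rules, this is equivalently a structural induction on $T$ with two cases. The guiding observation is that each of the two $\iscore$ rules has a counterpart of identical shape among the $\istype$ rules, differing only in the name of the judgment; consequently the genuinely new $\istype$ rule---the one introducing $\square$---is never needed.

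For the base case, the derivation ends in the axiom $\iscore{\Nat}$, so $T = \Nat$. The corresponding axiom for $\istype{\cdot}$ gives $\istype{\Nat}$ directly.

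For the inductive case, the derivation ends in the function rule, so $T = S \func T'$ with subderivations of $\iscore{S}$ and $\iscore{T'}$. Applying the induction hypothesis to each subderivation yields $\istype{S}$ and $\istype{T'}$, and the function-formation rule for $\istype{\cdot}$ then concludes $\istype{S \func T'}$, as required.

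There is no real obstacle in this argument: the statement amounts to the remark that the rule system defining $\iscore{\cdot}$ embeds into the one defining $\istype{\cdot}$, so $\istype{\cdot}$ is strictly the more permissive judgment. The only substantive content is the informal reading already given in the text---$\iscore{T}$ means $T$ has no $\square$ while $\istype{T}$ admits one layer---and the single-$\square$ rule of $\istype{\cdot}$ plays no role precisely because a core type contains no $\square$ to begin with.
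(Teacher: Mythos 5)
Your proof is correct and matches the paper's, which simply says ``Induction'': a structural induction on $T$ (equivalently, on the derivation of $\iscore{T}$) with the $\Nat$ and $S \func T'$ cases handled by the corresponding $\istype$ rules. The observation that the $\square$ rule of $\istype$ is never needed is exactly why the argument goes through trivially.
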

\begin{proof}
  Induction.
\end{proof}
The following theorem is part of the syntactic validity of the typing judgment:
\begin{theorem}\labeledit{thm:st:ctx-wf}
  If $\ltyping i t T$, then $\iscore \Psi$ and
  \begin{itemize}
  \item if $i = 0$, then $\iscore \Gamma$;
  \item if $i = 1$, then $\istype \Gamma$.
  \end{itemize}
\end{theorem}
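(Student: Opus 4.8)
The plan is to proceed by induction on the derivation of $\ltyping i t T$, establishing the three conclusions ($\iscore \Psi$ in every case, and $\iscore \Gamma$ or $\istype \Gamma$ according to whether $i = 0$ or $i = 1$) simultaneously. The statement is essentially a bookkeeping invariant: it asserts that the well-formedness discipline on the global and local contexts is preserved by every typing rule, so the real work is to check, rule by rule, that the invariant is either inherited from the premises or re-established from an explicit context-formation side condition.

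Most cases are routine. The leaf rules — the two variable rules and $\ze$ — do not build $t$ from smaller typed subterms, so the well-formedness of $\Psi$ and of $\Gamma$ cannot be recovered inductively and must instead appear as premises of those rules (a context-formation check); under that reading the conclusions hold immediately, with the layer $i$ selecting $\iscore \Gamma$ versus $\istype \Gamma$. The rules that leave both contexts fixed — application and $\tsucc$ — follow by applying the induction hypothesis to any premise, since the contexts and the layer are shared between premise and conclusion. For $\lambda x.t$ the premise types $t$ in the extended local context $\Gamma, x : S$; the induction hypothesis yields well-formedness of $\Gamma, x : S$ at the current layer, and inverting the (pointwise) context judgment strips off $x : S$ to give well-formedness of $\Gamma$. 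For $\letbox u s t$ both contexts of the first premise agree with the conclusion, so the induction hypothesis on that premise gives everything directly; as a consistency check, the induction hypothesis on the second premise yields $\iscore{(\Psi, u : T)}$, which decomposes into $\iscore \Psi$ together with $\iscore T$, matching the extension of the global context by $u : T$.

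The case that needs care is the box rule
\begin{mathpar}
  \inferrule
  {\ltyping[\Psi][\cdot] 0 t T}
  {\ltyping{1}{\boxit t}{\square T}}
\end{mathpar}
Here the induction hypothesis on the premise delivers $\iscore \Psi$ — exactly what the conclusion needs for the global context, since $\Psi$ is shared — but it says nothing about the local context $\Gamma$ of the conclusion, because the premise is typed in the empty local context $\cdot$ rather than in $\Gamma$. Since the conclusion sits at layer $1$, we must produce $\istype \Gamma$, and this can only come from an explicit well-formedness premise on $\Gamma$ in the box rule, not from the induction hypothesis. This is the main obstacle: the rule discards the local context when descending to layer $0$, so local well-formedness has to be re-supplied at the boundary between layers. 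Once that side condition is present, $\istype \Gamma$ is immediate; and should the side condition instead be stated at layer $0$ as $\iscore \Gamma$, the preceding lemma (that $\iscore T$ implies $\istype T$), generalized pointwise to contexts, upgrades it to $\istype \Gamma$.
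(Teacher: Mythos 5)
Your proof is correct and takes essentially the same approach as the paper: induction on the typing derivation, where the leaf rules and the $\tbox$ rule carry explicit context-formation premises ($\iscore{\Psi}$, and $\iscore{\Gamma}$ or $\istype{\Gamma}$ according to the layer) and every other case follows from the induction hypotheses because the contexts are shared or extended pointwise. In particular, your identification of the $\tbox$ rule as the one place where $\istype{\Gamma}$ must be supplied as a side condition (since the premise discards the local context) reproduces exactly the paper's design discussion, where the preliminary rule marked (?) lacks this premise and the final rule includes it precisely so that the induction goes through.
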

Moreover, the layers are associated with the validity of types:
\begin{theorem}\labeledit{thm:st:typ-wf} $ $
  \begin{itemize}
  \item If $\ltyping 0 t T$, then $\iscore T$.
  \item If $\ltyping 1 t T$, then $\istype T$. 
  \end{itemize}
\end{theorem}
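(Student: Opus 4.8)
The plan is to prove both statements simultaneously by induction on the derivation of $\ltyping i t T$, letting the conclusion bifurcate on the layer $i$. Before the main induction I would record two routine auxiliary facts. First, a lookup lemma: if $\iscore \Psi$ (resp.\ $\iscore \Gamma$, resp.\ $\istype \Gamma$) holds and a variable is assigned type $T$ in that context, then $\iscore T$ (resp.\ $\istype T$) holds; this is immediate from the definition of the lifted well-formedness judgments on contexts. Second, an inversion principle for the arrow former: since each of $\iscore{S \func T}$ and $\istype{S \func T}$ is derivable by a single rule, it entails well-formedness of the codomain $T$ at the corresponding grade. These two facts discharge the variable and application cases.

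For the main induction I would proceed by case analysis on the last typing rule. The two variable cases use \Cref{thm:st:ctx-wf} to obtain well-formedness of the ambient context and then apply the lookup lemma: a local variable at layer $i$ reads off its type from $\iscore \Gamma$ or $\istype \Gamma$ as appropriate, and a global variable gets $\iscore T$ from $\iscore \Psi$, which I upgrade to $\istype T$ at layer $1$ using the preceding lemma that $\iscore T$ implies $\istype T$. The constants $\ze$ and $\su t$ have type $\Nat$, which is simultaneously core and a type by the axioms for $\iscore \Nat$ and $\istype \Nat$. For $\lambda x.\,t$ of type $S \func T$, the induction hypothesis on the body supplies well-formedness of the codomain $T$, while \Cref{thm:st:ctx-wf} applied to the body's premise (whose local context is extended by $x : S$) supplies well-formedness of the domain $S$ at the same layer; I then recombine the two with the arrow-formation rule. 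Application is dual: the induction hypothesis on the function yields $\iscore{S \func T}$ or $\istype{S \func T}$, and the inversion principle extracts the result type.

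The two genuinely modal cases are where the layers interact. For $\boxit t$, which at $n = 1$ is typeable only at layer $1$ from a premise $\ltyping[\Psi][\cdot]0 t T$, the induction hypothesis at layer $0$ gives $\iscore T$, and then the rule $\frac{\iscore T}{\istype{\square T}}$ delivers exactly $\istype{\square T}$; this is the step that ties the two halves of the statement together. For $\letbox u s t$ at layer $i$, I would simply apply the induction hypothesis to the continuation $t$, whose subderivation carries the same layer $i$ and result type $T'$ (its global context merely being extended to $\Psi, u : T$), which directly yields $\iscore{T'}$ or $\istype{T'}$; the premise concerning the scrutinee $s$ plays no role.

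I expect no serious obstacle, as this is a structural rule induction. The points demanding care are organizational rather than deep: the statement must be set up as a single simultaneous induction so that the layer-$1$ conclusion for $\boxit t$ may legitimately invoke the layer-$0$ conclusion on its subterm; the appeals to \Cref{thm:st:ctx-wf} in the variable and $\lambda$ cases must be discharged, which is sound since that theorem is established earlier; and the $\letbox$ case must be handled through the body rather than through the scrutinee, whose ostensible type $\square T$ at layer $0$ would in fact never be derivable. The auxiliary inversion and context-lookup lemmas are the only genuinely new ingredients, and both are one-line arguments.
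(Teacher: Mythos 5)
Your proposal is correct and follows essentially the same route as the paper, which establishes \Cref{thm:st:typ-wf} (together with \Cref{thm:st:ctx-wf}) by induction on the typing judgment; your case analysis fills in exactly the details the paper leaves implicit, with the $\boxit t$ case correctly invoking the layer-$0$ hypothesis to obtain $\iscore T$ and hence $\istype{\square T}$. The only cosmetic redundancy is the appeal to \Cref{thm:st:ctx-wf} in the variable cases, where $\iscore \Gamma$ (resp.\ $\istype \Gamma$) is already a premise of the rule and can be used directly.
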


With these two intended theorems in mind, the rules become:
\begin{mathpar}
  \inferrule
  {\istype \Gamma \\ \ltyping[\Psi][\cdot] 0 t T}
  {\ltyping{1}{\boxit t}{\square T}}

  \inferrule
  {\ltyping 1 {s}{\square T} \\ \ltyping[\Psi, u : T] 1 {t}{T'}}
  {\ltyping 1 {\letbox u s t} T'}
\end{mathpar}
\Cref{thm:st:ctx-wf,thm:st:typ-wf} checks out in these two rules by seeing that the
induction hypotheses directly apply. %
Notice that the elimination rule can only occur at layer $1$. %
This is forced by the fact that $\square T$ is only possible at layer $1$. %
Eliminating a $\square$ at layer $0$ is automatically vacuous, so we immediately
exclude that from our rules. %
We finish the typing rules for other terms for completeness:
\begin{mathpar}
  \inferrule
  {\iscore \Psi \\ \iscore \Gamma}
  {\ltyping{0}{\ze}{\Nat}}

  \inferrule
  {\iscore \Psi \\ \istype \Gamma}
  {\ltyping{1}{\ze}{\Nat}}

  \inferrule
  {\ltyping{i}{t}{\Nat}}
  {\ltyping{i}{\su t}{\Nat}}

  \inferrule
  {\iscore \Psi \\ \iscore \Gamma \\ u : T \in \Psi}
  {\ltyping{0}{u}{T}}

  \inferrule
  {\iscore \Psi \\ \istype \Gamma \\ u : T \in \Psi}
  {\ltyping{1}{u}{T}}

  \inferrule
  {\iscore \Psi \\ \iscore \Gamma \\ x : T \in \Gamma}
  {\ltyping{0}{x}{T}}

  \inferrule
  {\iscore \Psi \\ \istype \Gamma \\ x : T \in \Gamma}
  {\ltyping{1}{x}{T}}
  \\
  
  \inferrule
  {\ltyping[\Psi][\Gamma, x : S]{i}{t}{T}}
  {\ltyping{i}{\lambda x. t}{S \func T}}

  \inferrule
  {\ltyping{i}{t}{S \func T} \\ \ltyping{i}{s}{S}}
  {\ltyping{i}{t\ s}{T}}
\end{mathpar}
\Cref{thm:st:ctx-wf,thm:st:typ-wf} are checked by doing induction on the typing
judgment. %
At last, layer $0$ is subsumed by layer $1$:
\begin{lemma}[Lifting]
  If $\ltyping 0 t T$, then $\ltyping 1 t T$. 
\end{lemma}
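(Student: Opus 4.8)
The plan is to proceed by induction on the derivation of $\ltyping 0 t T$. Before starting the case analysis, I would record one small preliminary: the preceding lemma (if $\iscore T$ then $\istype T$) lifts pointwise to local contexts, so that $\iscore \Gamma$ implies $\istype \Gamma$. This is the only place where the two layers genuinely differ. Note that the global context $\Psi$ is required to satisfy $\iscore \Psi$ at \emph{both} layers, so it need not be touched; only the side condition on the local context $\Gamma$ changes, from $\iscore \Gamma$ at layer $0$ to $\istype \Gamma$ at layer $1$, and that change is exactly what the pointwise lemma supplies.

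With that in hand, each case reassembles the corresponding layer-$1$ rule. For the leaf rules --- $\ze$, the global variable $u$, and the local variable $x$ --- the layer-$0$ derivation carries the side conditions $\iscore \Psi$ and $\iscore \Gamma$; I would keep $\iscore \Psi$, convert $\iscore \Gamma$ to $\istype \Gamma$ via the preliminary, and then the layer-$1$ leaf rule applies, the membership hypotheses $u : T \in \Psi$ and $x : T \in \Gamma$ being unaffected. For the structural rules --- $\su t$, $\lambda x.\,t$, and application $t\ s$ --- the premises are themselves layer-$0$ typings, so the induction hypothesis yields their layer-$1$ counterparts directly, and the shared-shape layer-$1$ rule (instantiated at $i = 1$) then produces the conclusion. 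In the $\lambda$ case the induction hypothesis is applied to the subderivation over the extended context $\Gamma, x : S$, and since this hypothesis delivers a genuine layer-$1$ derivation, the extended context is automatically $\istype$ by \Cref{thm:st:ctx-wf}; no separate argument is needed. Crucially, the cases for $\boxit t$ and $\letbox u s t$ never arise: the introduction rule for $\square$ produces a layer-$1$ conclusion, and the elimination rule only occurs at layer $1$, so neither can head a layer-$0$ derivation.

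There is essentially no hard step here; the statement is a structural bookkeeping fact. The one point that requires care is the context-invariant shift, and it is entirely isolated in the pointwise extension of the lemma from $\iscore T$ to $\istype T$, used only in the leaf cases. Everywhere else the induction hypothesis transports the invariant through the subderivations for free, which is precisely the payoff of having layers $0$ and $1$ share a single term syntax and near-identical rules.
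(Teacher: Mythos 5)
Your proof is correct and takes essentially the same approach as the paper: induction on the typing derivation, with the leaf cases discharged by converting $\iscore \Gamma$ to $\istype \Gamma$ (what the paper calls ``subsumption''), the structural cases handled by the induction hypothesis and the shared $i$-indexed rules, and the modal cases vacuous at layer $0$. Your write-up is simply a more explicit rendering of the paper's one-line proof, correctly noting that the subsumption step is needed at all three leaf rules, not only at local variables.
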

\begin{proof}
  Induction. For the case of local variables, apply subsumption.
\end{proof}

\subsection{Programming in Layered Formulation}

Before discussing more about the layered $S4$, let us see some programs written in
this system. %
It is possible to implement Axioms $K$ and $T$:
\begin{mathpar}
  \inferrule*
  {\inferrule*
    {\inferrule*
      {\inferrule*
        {\ltyping[\Psi, u : A \func B, u' : A][\Gamma, f : \square (A \func B), x :
          \square A] 0 {u}{A \func B}
          \\
          \ltyping[\Psi, u : A \func B, u' : A][\Gamma, f : \square (A \func B), x : \square A] 0 {u'}{A}}
      {\ltyping[\Psi, u : A \func B, u' : A][\Gamma, f : \square (A \func B), x : \square A] 0 {u\ u'}{B}}}
      {\ltyping[\Psi, u : A \func B, u' : A][\Gamma, f : \square (A \func B), x : \square A] 1 {\boxit{(u\ u')}}{\square B}}}
    {\ltyping[\Psi][\Gamma, f : \square (A \func B), x : \square A] 1 {\letbox u f {\letbox {u'} x {\boxit{(u\ u')}}}}{\square B}}}
  {\ltyping 1 {\lambda f\ x. \letbox u f {\letbox {u'} x {\boxit{(u\ u')}}}}{\square (A \func B) \func \square A \func \square B}}
\end{mathpar}
\begin{mathpar}
  \inferrule*
  {\inferrule*
    {\ltyping[\Psi, u : A][\Gamma, x : \square A] 1 {u}{A}}
    {\ltyping[\Psi][\Gamma, x : \square A] 1 {\letbox u x u}{A}}}
  {\ltyping 1 {\lambda x. \letbox u x u}{\square A \func A}}
\end{mathpar}
We omit the validity checking of contexts and types without loss of generality. %

Nevertheless, though we claim this system is a layered $S4$, it does not admit the
Axiom $4$:
\begin{align*}
  \square A \func \square \square A
\end{align*}
In fact, the type is not even valid. %
$\istype{\square \square A}$ requires $\iscore{\square A}$, which is not possible. %
This is intended because our goal is to exclude meta-meta-programs. %
If we let $n > 1$, then it is possible to express this axiom. %
Nevertheless, it is still possible to write reasonable meta-programs. %
The following composes two pieces of code of type $\Nat$ with addition (if we assume
the system supports it):
\begin{align*}
  \ltyping 1 {\lambda x\ y. \letbox u x {\letbox{u'}y{\boxit{(u + u')}}}}{\square \Nat
      \func \square \Nat \func \square \Nat}
\end{align*}
Let this function be $f$. Then we can have:
\begin{align*}
  \ltyping 1 {f\ (\boxit{(3 + 2)})\ (\boxit{(1 + 4)})}{\square \Nat}
\end{align*}
Executing this program generates the code $\boxit{((3 + 2) + (1 + 4))}$. %
We will see soon that this execution has already reached a normal form because code is
distinguished by its syntactic structure, so this code is different from $\boxit{10}$,
for example.

\subsection{Equivalence of Layered Formulation}

In this section, we consider the equivalence of our layered formulation. %
The equivalence relation is also layered. %
In principle, the $\beta$ and $\eta$ equivalence rules only apply for layer $1$,
i.e. the computation layer. %
Other rules like the PER rules and the congruence rules apply for both layers.

The PER rules:
\begin{mathpar}
  \inferrule*
  {\ltyequiv i s t T}
  {\ltyequiv i t s T}

  \inferrule*
  {\ltyequiv i s t T \\ \ltyequiv i t u T}
  {\ltyequiv i s u T}
\end{mathpar}

Congruence rules:
\begin{mathpar}
  \inferrule
  {\iscore \Psi \\ \iscore \Gamma}
  {\ltyequiv{0}{\ze}{\ze}{\Nat}}

  \inferrule
  {\iscore \Psi \\ \istype \Gamma}
  {\ltyequiv{1}{\ze}{\ze}{\Nat}}

  \inferrule
  {\ltyequiv{i}{t}{t'}{\Nat}}
  {\ltyequiv{i}{\su t}{\su t'}{\Nat}}
  
  \inferrule
  {\iscore \Psi \\ \iscore \Gamma \\ u : T \in \Psi}
  {\ltyequiv{0}{u}{u}{T}}

  \inferrule
  {\iscore \Psi \\ \istype \Gamma \\ u : T \in \Psi}
  {\ltyequiv{1}{u}{u}{T}}
  
  \inferrule*
  {\iscore \Psi \\ \iscore \Gamma \\  x : T \in \Gamma}
  {\ltyequiv 0 x x T}

  \inferrule*
  {\iscore \Psi \\ \istype \Gamma \\  x : T \in \Gamma}
  {\ltyequiv 1 x x T}
  \\
  
  \inferrule*
  {\ltyequiv[\Psi][\Gamma, x : S]{i}{t}{t'}{T}}
  {\ltyequiv{i}{\lambda x. t}{\lambda x. t'}{S \func T}}

  \inferrule*
  {\ltyequiv{i}{t}{t'}{S \func T} \\ \ltyequiv{i}{s}{s'}{S}}
  {\ltyequiv{i}{t\ s}{t'\ s'}{T}}

  \inferrule*
  {\iscore \Psi \\ \ltyequiv[\Psi][\cdot]{0}{t}{t'}T}
  {\ltyequiv{1}{\boxit t}{\boxit t'}{\square T}}

  \inferrule*
  {\ltyequiv 1 {s}{s'}{\square T} \\ \ltyequiv[\Psi, u : T] 1 {t}{t'}{T'}}
  {\ltyequiv 1 {\letbox u s t}{\letbox u{s'}{t'}} T'}
\end{mathpar}

$\beta$ equivalence:
\begin{mathpar}
  \inferrule*
  {\ltyping[\Psi][\Gamma, x : S] 1 t T \\ \ltyping 1 s S}
  {\ltyequiv 1{(\lambda x. t)\ s}{t[s/x]}{T}}

  \inferrule*
  {\ltyping[\Psi][\cdot] 0 s T \\ \ltyping[\Psi, u : T] 1 {t}{T'}}
  {\ltyequiv 1 {\letbox u {\boxit s} t}{t[s/u]}T}
\end{mathpar}

$\eta$ equivalence:
\begin{mathpar}
  \inferrule*
  {\ltyping 1 {t}{S \func T}}
  {\ltyequiv 1 t {\lambda x. (t\ x)}{S \func T}}
\end{mathpar}

Due to lack of dynamics at layer $0$, equivalence in fact is just syntactic equality:
\begin{lemma}
  If $\ltyequiv 0 t s T$, then $t = s$. 
\end{lemma}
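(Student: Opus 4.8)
The plan is to proceed by induction on the derivation of $\ltyequiv 0 t s T$. The key observation that makes the whole argument go through is a purely syntactic inspection of the equivalence rules: among all of them, the only ones whose conclusion can sit at layer $0$ are the two PER rules (symmetry and transitivity) and the congruence rules for $\ze$, for the variables $x$ and $u$, for $\tsucc$, for $\lambda$-abstraction, and for application. Crucially, both $\beta$ rules, the $\eta$ rule, and the congruence rules for $\tbox$ and $\tletbox$ all carry the subscript $1$ on their conclusions, so none of them can derive a layer-$0$ equivalence. Thus at layer $0$ there is no genuine computation, only reflexivity seeded at the term constructors and propagated through them, closed under symmetry and transitivity.

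First I would dispatch the base cases: the congruence rules for $\ze$, $u$, and $x$ all have syntactically identical left- and right-hand sides, so $t = s$ holds immediately. Next I would handle the structural congruence cases. Each such rule preserves the layer from its premises to its conclusion, so when the conclusion is at layer $0$ its premises are themselves equivalences at layer $0$, and the induction hypothesis applies to them. For $\tsucc$ the hypothesis gives equality of the argument, whence $\su t = \su t'$; for $\lambda$-abstraction the hypothesis is taken in the extended context $\Gamma, x : S$, still at layer $0$, and yields $t = t'$, so $\lambda x. t = \lambda x. t'$ (the bound variable being shared by both sides); for application the two hypotheses give equality of the two components and hence of the applications. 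In every case the conclusion follows by congruence of syntactic equality.

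Finally, the PER cases are immediate from the corresponding properties of $=$: the symmetry rule yields $t = s$ from the induction hypothesis on its sub-derivation together with symmetry of equality, and the transitivity rule composes two equalities $t = u$ and $u = s$ into $t = s$. I do not expect a real obstacle here; the entire content of the lemma is the observation that layer $0$ is \emph{dynamics-free}. Accordingly, the only step demanding care is the rule-by-rule confirmation that no $\beta$ or $\eta$ step, and no $\tbox$/$\tletbox$ congruence, can ever contribute at layer $0$ — once that inspection is pinned down, the induction is entirely routine.
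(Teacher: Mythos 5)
Your proof is correct and follows essentially the same route as the paper's: induction on the equivalence derivation, with the key observation that the $\beta$, $\eta$, and modal ($\tbox$/$\tletbox$) rules all conclude at layer $1$ and thus never appear in a layer-$0$ derivation. The paper states this in one line; your write-up is simply the fully elaborated version of that same argument.
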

\begin{proof}
  Induction, and notice that $\beta$ and $\eta$ equivalences are excluded from layer
  $0$.
\end{proof}
This lemma ensures that terms at layer $0$ are indeed treated as code, as they are
distinguished by their syntactic structures. 

The $\beta$ equivalence rules depend on the folklore substitution lemma:
\begin{theorem} $ $
  \begin{itemize}
  \item If $\ltyping[\Psi][\Gamma, x : S, \Gamma'] 1 t T$ and $\ltyping 1 s S$, then
    $\ltyping[\Psi][\Gamma, \Gamma']1{t[s/x]}T$. 
  \item If $\ltyping[\Psi, u : S, \Psi'] i t T$ and $\ltyping[\Psi][\cdot] 0 s S$, then
    $\ltyping[\Psi, \Psi'] i {t[s/u]}T$. 
  \end{itemize}
\end{theorem}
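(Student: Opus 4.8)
The plan is to prove each clause separately by induction on the derivation of the typing judgment for $t$, after first establishing the two routine structural weakening lemmas — local weakening (one may freely extend the local context $\Gamma$ on the right) and global weakening (one may freely extend the global context $\Psi$) — and recalling the Lifting lemma. The context-validity side conditions $\iscore \Psi$, $\iscore \Gamma$, $\istype \Gamma$ that decorate the base-case rules are discharged throughout by the elementary observations that $\istype{\Gamma, x : S, \Gamma'}$ entails $\istype{\Gamma, \Gamma'}$ and that $\iscore{\Psi, u : S, \Psi'}$ entails $\iscore{\Psi, \Psi'}$, i.e. deleting a hypothesis preserves well-formedness of the remaining context.

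For the first clause (local substitution at layer $1$) I would walk through the layer-$1$ rules. In the local-variable case I split on whether the variable equals $x$: if it does, the goal is exactly $s$ re-typed in the larger context $\Gamma, \Gamma'$, which follows by local weakening; otherwise the variable is untouched and remains in $\Gamma, \Gamma'$. The global-variable, $\ze$, and $\tsucc$ cases are immediate since $[s/x]$ leaves them fixed. The $\boxit{t'}$ case is the clean one: its body is typed at layer $0$ in the \emph{empty} local context, so $x$ cannot occur there and the substitution is vacuous, while $\istype{\Gamma, \Gamma'}$ supplies the side condition. The abstraction and application cases are direct appeals to the induction hypothesis (for $\lambda$, with $\Gamma'$ extended on the right). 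The one case that needs care is $\letbox u {s'}{t'}$: the induction hypothesis for $t'$ is taken under the enlarged global context $\Psi, u : T$, so I must first globally weaken the hypothesis $\Psi; \Gamma \vdash_1 s : S$ to $\Psi, u : T; \Gamma \vdash_1 s : S$ before the induction hypothesis applies, and then reassemble with the $\tletbox$ rule.

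For the second clause (global substitution), the statement is deliberately phrased for a general layer $i$ and an arbitrary local context $\Gamma$, and this generality is exactly what makes the induction go through: in the $\boxit{t'}$ case the body is typed at layer $0$ with empty local context, and the induction hypothesis is applied there at $i = 0$, $\Gamma = \cdot$, recursing into code. The crux of the whole lemma is the global-variable case $t = v$ with $v = u$: here $t[s/u] = s$ and I must transport the hypothesis $\Psi; \cdot \vdash_0 s : S$ to the required $\Psi, \Psi'; \Gamma \vdash_i s : S$. This is done by applying the Lifting lemma when $i = 1$ to move $s$ from layer $0$ to layer $1$ (the empty local context is trivially valid at both layers), and then weakening both the global context from $\Psi$ to $\Psi, \Psi'$ and the local context from $\cdot$ to $\Gamma$; when $i = 0$ no lifting is needed and the two weakenings suffice. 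If $v \neq u$, the variable is unchanged and still lies in $\Psi, \Psi'$. The $\letbox v {s'}{t'}$ case applies the induction hypothesis to $t'$ with $\Psi'$ replaced by $\Psi', v : T$ — the prefix $\Psi$, and hence the typing of $s$, is unchanged — and the remaining constructors are routine congruence closures under the induction hypothesis.

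I expect the main obstacle to be precisely this global-variable case of the second clause, since it is the unique point at which the substituted term crosses both layers and contexts, and it is where the Lifting lemma and both weakening principles must be combined in the correct order (lift first, then weaken, so that the layer-$1$ validity of $\Gamma$ is what is demanded). Every other case is either a direct congruence or a vacuous substitution. The secondary subtlety is purely bookkeeping: recognizing that the second clause must be stated with a free layer $i$ and arbitrary $\Gamma$ so that the induction hypothesis is available at layer $0$ inside a $\tbox$, and noticing that the first clause silently needs global weakening of $s$ underneath $\tletbox$.
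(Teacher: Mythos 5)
Your proof is correct and follows essentially the same route as the paper: induction on the typing derivation, with the crux being exactly what the paper's one-line proof flags --- in the global-variable case, the substituted term $s$ is transported via the Lifting lemma (layer $0$ subsumed by layer $1$) together with weakening. Your additional care about the order (lift before weakening the local context, since $\Gamma$ is only $\istype$ and need not be $\iscore$) is a detail the paper leaves implicit but is exactly right.
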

Notice that the case for local substitutions only occurs at layer $1$ while the case
for global substitutions occurs at both layers. %
This is because local substitutions do not propagate under $\tbox$ so the $\beta$ rule
for function will never affect layer $0$. %
Contrarily, global substitutions do go under $\tbox$, so we must consider all
layers. %
\begin{proof}
  Induction. In the case of global substitutions, notice layer $0$ is subsumed by
  layer $1$. 
\end{proof}

Presupposition then is a consequence of substitution lemma:
\begin{lemma}[Presupposition]\labeledit{lem:st:presup}
  If $\ltyequiv i{t}{t'}T$, then $\ltyping i t T$ and $\ltyping i {t'}T$.
\end{lemma}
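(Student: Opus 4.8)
The plan is to proceed by induction on the derivation of $\ltyequiv{i}{t}{t'}{T}$, producing in each case a typing derivation for \emph{both} sides via the matching typing rule, and appealing to the substitution theorem for the $\beta$ rules.

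The two \emph{PER} cases are immediate bookkeeping. For symmetry, the induction hypothesis on $\ltyequiv{i}{s}{t}{T}$ already yields $\ltyping{i}{s}{T}$ and $\ltyping{i}{t}{T}$, which is exactly what the conclusion $\ltyequiv{i}{t}{s}{T}$ demands, up to swapping. For transitivity, the hypotheses on the premises $\ltyequiv{i}{s}{t}{T}$ and $\ltyequiv{i}{t}{u}{T}$ share the middle term, so we keep the outer typings $\ltyping{i}{s}{T}$ and $\ltyping{i}{u}{T}$ and discard the redundant copies of $\ltyping{i}{t}{T}$. Each \emph{congruence} case reassembles the goal from its hypotheses: applying the induction hypothesis to every equivalence premise yields a typing derivation for each of the two sides, and feeding these into the corresponding typing rule (for $\ze$, $\tsucc$, local and global variables, $\lambda$, application, $\tbox$, and $\tletbox$) rebuilds $\ltyping{i}{t}{T}$ and $\ltyping{i}{t'}{T}$. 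The base congruences for $\ze$ and for variables carry their well-formedness side conditions ($\iscore\Psi$ together with $\iscore\Gamma$ or $\istype\Gamma$) directly as premises, so these pass straight through into the typing rules.

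For the $\beta$ and $\eta$ rules the premises are already typing judgments rather than equivalences, so no induction hypothesis is used and the work is to type both sides directly. For function $\beta$, the left-hand side $(\lambda x. t)\ s$ is typed by one use of the $\lambda$ rule followed by the application rule, while the right-hand side $t[s/x]$ is typed by the local part of the substitution theorem. For $\tbox$ $\beta$, the left-hand side $\letbox{u}{\boxit{s}}{t}$ is built by the $\tbox$ rule and then the $\tletbox$ rule, and the right-hand side $t[s/u]$ is typed by the global part of the substitution theorem. For $\eta$, the left side is the premise itself, and the right side $\lambda x. (t\ x)$ is obtained by first weakening $t$ into the extended context to get $\ltyping[\Psi][\Gamma, x : S]{1}{t}{S \func T}$, then typing $x$ by the variable rule, forming the application, and closing with the $\lambda$ rule.

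I expect the main obstacle to be recovering local-context well-formedness in the $\tbox$ cases. The $\tbox$ typing rule requires $\istype\Gamma$ for the ambient local context, yet in both the $\tbox$ congruence and the $\tbox$ $\beta$ rule the relevant premise is typed in the \emph{empty} local context $\cdot$, so the induction hypothesis returns no information about $\Gamma$. The clean fix is to treat context well-formedness as an invariant of the induction: all equivalence and typing judgments range over well-formed contexts (as witnessed explicitly at the variable and $\ze$ leaves), so $\istype\Gamma$ is available as a standing hypothesis, and \Cref{thm:st:ctx-wf,thm:st:typ-wf} supply the well-formedness of any type or context introduced along the way. The only other nonroutine ingredient is the structural weakening needed for the $\eta$ case, which is the standard weakening lemma for the typing judgment and is orthogonal to the layering.
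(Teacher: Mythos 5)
Your proposal is correct and follows essentially the same route as the paper, whose entire proof is ``Induction on $\ltyequiv i{t}{t'}T$'' together with the remark that presupposition is a consequence of the substitution lemma — exactly your induction-plus-substitution-theorem structure, with the substitution theorem discharging the $\beta$ cases. Your observation about the $\tbox$ congruence case is a genuine refinement the paper glosses over: as written, the rule's premises ($\iscore\Psi$ and an equivalence in the empty local context) do not yield the $\istype\Gamma$ needed by the $\tbox$ typing rule, so either that side condition must be added to the congruence rule or, as you propose, well-formedness of the ambient contexts must be carried as a standing invariant of the induction.
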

\begin{proof}
  Induction on $\ltyequiv i{t}{t'}T$.
\end{proof}

\subsection{Simultaneous Substitutions}

In preparation for the semantic development, we first develop the theory of
simultaneous substitutions for the dual-context formulation. %
Though this section directly targets our layered formulation, nothing here cannot be
generalized to the original formulation. %
The development in this section will directly contribute to a deepened understanding
of the presheaf model developed later in this technical report.

Since there are two contexts in the dual-context style, we need two different kinds of
substitutions to handle those between global contexts and those between local
contexts. %
First, we consider the substitutions between global contexts. %
They are very similar to the substitutions in simply typed $\lambda$-calculus:
\begin{align*}
  \sigma := \cdot \sep \sigma, t/u \tag{Global substitutions}
\end{align*}
\begin{mathpar}
  \inferrule*
  {\iscore \Psi}
  {\typing[\Psi]{\cdot}{\cdot}}

  \inferrule*
  {\typing[\Psi]{\sigma}{\Phi} \\ \ltyping[\Psi][\cdot] 0 {t}{T}}
  {\typing[\Psi]{\sigma, t/u}{\Phi, u : T}}
\end{mathpar}
Notice that in the step case, the term $t$ is typed at layer $0$. %
This is because a global substitution is meant to store a list of codes of right
types. %
The equivalence between global substitutions is simply the congruence rules:
\begin{mathpar}
  \inferrule*
  {\iscore \Psi}
  {\tyequiv[\Psi]{\cdot}{\cdot}{\cdot}}

  \inferrule*
  {\tyequiv[\Psi]{\sigma}{\sigma'}{\Phi} \\ \ltyequiv[\Psi][\cdot] 0 {t}{t'}{T}}
  {\tyequiv[\Psi]{\sigma, t/u}{\sigma', t'/u}{\Phi, u : T}}
\end{mathpar}
The step case is equivalent to say $t = t'$ and that $t$ is well-typed. 

The local substitutions are similar. %
We quickly list out the definitions for completeness:
\begin{align*}
  \delta := \cdot \sep \delta, t/x \tag{Local substitutions}
\end{align*}
\begin{mathpar}
  \inferrule*
  {\iscore \Psi \\ \istype \Gamma}
  {\dtyping{\cdot}{\cdot}}

  \inferrule*
  {\dtyping{\delta}{\Delta} \\ \ltyping 1 {t}{T}}
  {\dtyping{\delta, t/x}{\Delta, x : T}}
\end{mathpar}
Notice that in the step case, the term $t$ is typed at layer $1$.
\begin{mathpar}
  \inferrule*
  {\iscore \Psi \\ \istype \Gamma}
  {\dtyequiv{\cdot}{\cdot}{\cdot}}

  \inferrule*
  {\dtyequiv{\delta}{\delta'}{\Delta} \\ \ltyequiv 1 {t}{t'}{T}}
  {\dtyequiv{\delta, t/x}{\delta', t'/x}{\Delta, x : T}}
\end{mathpar}
Since the term equivalence in the step case is at layer $1$, it is possible to have
$\beta$ and $\eta$ equivalence occurring in a local substitution. %
A simultaneous substitution between dual-contexts is simply a pair of global and local
substitutions:
\begin{mathpar}
  \inferrule
  {\typing[\Psi]{\sigma}{\Phi} \\ \dtyping{\delta}{\Delta}}
  {\dtyping{\sigma; \delta}{\Phi; \Delta}}

  \inferrule
  {\tyequiv[\Psi]{\sigma}{\sigma'}{\Phi} \\ \dtyequiv{\delta}{\delta'}{\Delta}}
  {\dtyequiv{\sigma; \delta}{\sigma', \delta'}{\Phi; \Delta}}
\end{mathpar}

To develop a substitution calculus, we first define the application of a global
substitution:
\begin{align*}
  x[\sigma] &:= x \\
  u[\sigma] &:= \sigma(u) \tag{lookup $u$ in $\sigma$} \\
  \ze[\sigma] &:= \ze \\
  \su t [\sigma] &:= \su{(t[\sigma])} \\
  \boxit t [\sigma] &:= \boxit{(t[\sigma])} \\
  \letbox u s t [\sigma] &:= \letbox u {s[\sigma]} {(t[\sigma, u/u])} \\
  \lambda x. t [\sigma] &:= \lambda x. (t[\sigma]) \\
  t\ s [\sigma] &:= (t[\sigma]) \ (s[\sigma])
\end{align*}
The following substitution lemma holds:
\begin{theorem}[Global substitution]
  If $\ltyping[\Phi] i t T$ and $\typing[\Psi] \sigma \Phi$, then $\ltyping i{t[\sigma]} T$.
\end{theorem}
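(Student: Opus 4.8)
The plan is to proceed by induction on the derivation of $\ltyping[\Phi] i t T$, observing that a global substitution leaves local variables untouched ($x[\sigma] = x$), so the local context $\Gamma$ is carried through unchanged on both sides of the conclusion. For the arithmetic, $\lambda$-abstraction, application, and local-variable cases the substitution commutes with the term former (e.g. $(\lambda x. t)[\sigma] = \lambda x.\, t[\sigma]$ and $t\,s[\sigma] = (t[\sigma])\,(s[\sigma])$), so I would apply the induction hypothesis to each immediate subterm and reassemble with the same rule. The side conditions $\iscore \Psi$ and $\istype \Gamma$ (or $\iscore \Gamma$ at layer $0$) are furnished, respectively, by the premise $\iscore \Psi$ coming from the well-typedness of $\sigma$ and by the context hypothesis already carried in the original derivation.

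The first genuinely interesting case is the global variable $u$, where $u[\sigma] = \sigma(u)$. By inversion on $\typing[\Psi]\sigma\Phi$, the entry $\sigma(u)$ is typed as $\ltyping[\Psi][\cdot] 0 {\sigma(u)} T$, i.e. at layer $0$ under the \emph{empty} local context, whereas the goal is $\ltyping[\Psi] i {\sigma(u)} T$, i.e. under $\Gamma$ at layer $i$. I would close this gap with a local weakening step. When $i = 0$ the original rule supplies $\iscore \Gamma$, and weakening the local context from $\cdot$ to $\Gamma$ at layer $0$ finishes the case. When $i = 1$ the original rule supplies $\istype \Gamma$; here I would first invoke the Lifting lemma to obtain $\ltyping[\Psi][\cdot] 1 {\sigma(u)} T$ and then weaken the local context from $\cdot$ to $\Gamma$ at layer $1$.

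The other interesting case is $\tletbox$, where $\letbox u s t[\sigma] = \letbox u {s[\sigma]} {t[\sigma, u/u]}$. The subterm $s$ is handled by the induction hypothesis directly, but $t$ is typed under the extended global context $\Phi, u : T$, so to apply the induction hypothesis I must exhibit a well-typed substitution $\typing[\Psi, u : T]{\sigma, u/u}{\Phi, u : T}$. This decomposes into two facts: that $\sigma$ still types into the enlarged context, $\typing[\Psi, u : T]{\sigma}{\Phi}$, which is a global weakening of the substitution (pushing a fresh global assumption through each of its layer-$0$ entries), and that the new entry $u/u$ is well typed, $\ltyping[\Psi, u : T][\cdot] 0 u T$, which holds because $u : T \in \Psi, u : T$ and $\iscore T$ is available from type validity (\Cref{thm:st:typ-wf}) applied to the premise $\ltyping[\Phi] 1 s {\square T}$. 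The $\tbox$ case, by contrast, is effortless: its body is typed under the empty local context, so the induction hypothesis applies with no weakening and the side condition $\istype \Gamma$ is untouched. The main obstacle is therefore not the substitution bookkeeping itself but having the two auxiliary structural lemmas — local weakening of terms and global weakening of terms and substitutions — in place, and threading the layer coercion (via Lifting) together with the context coercion in the global-variable case so that the resulting judgment lands at the right layer over the right local context.
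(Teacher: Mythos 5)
Your proposal is correct and follows essentially the same route as the paper's proof: induction on the typing derivation, with the global-variable case closed by local-context weakening plus layer subsumption (the paper's ``weakening and subsumption''), and the $\tletbox$ case handled by globally weakening $\sigma$ and extending it with $u/u$ before applying the induction hypothesis. You spell out the auxiliary weakening and lifting lemmas that the paper leaves implicit, but the argument is the same.
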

\begin{proof}
  We do induction on $\ltyping[\Phi] i t T$ and only consider some important cases:
  \begin{itemize}[label=Case]
  \item $t = u$
    \begin{mathpar}
      \inferrule
      {\iscore \Phi \\ \istype \Gamma \\ u : T \in \Phi}
      {\ltyping[\Phi]{1}{u}{T}}
    \end{mathpar}
    \begin{align*}
      & \ltyping[\Psi][\cdot] 0 {\sigma(u)} T \tag{due to $u : T \in \Phi$} \\
      & \ltyping i {\sigma(u)} T \tag{weakening and subsumption}
    \end{align*}

  \item $t = \boxit t'$
    \begin{mathpar}
      \inferrule
      {\istype \Gamma \\ \ltyping[\Phi][\cdot] 0{t'} T}
      {\ltyping[\Phi]{1}{\boxit t'}{\square T}}
    \end{mathpar}
    \begin{align*}
      & \ltyping[\Psi][\cdot] 0{t'[\sigma]} T
        \byIH \\
      & \ltyping{1}{\boxit t'[\sigma]}{\square T}
    \end{align*}
    
  \item $t = \letbox u s t'$
    \begin{mathpar}
      \inferrule
      {\ltyping[\Phi] 1 {s}{\square T'} \\ \ltyping[\Phi, u : T'] 1 {t'}{T}}
      {\ltyping[\Phi] 1 {\letbox u s t'} T}
    \end{mathpar}
    \begin{align*}
      & \ltyping 1 {s[\sigma]}{\square T'}
        \byIH \\
      & \ltyping[\Psi, u : T'] 1 {t'[\sigma, u/u]}{T}
      \tag{by IH and weakening} \\
      & \ltyping{1}{\letbox u s t'[\sigma]}{T}
    \end{align*}
  \end{itemize}
\end{proof}

Applying a local substitution follows the same principle:
\begin{align*}
  x[\delta] &:= \delta(x) \tag{lookup $x$ in $\delta$} \\
  u[\delta] &:= u  \\
  \ze[\delta] &:= \ze \\
  \su t [\delta] &:= \su{(t[\delta])} \\
  \boxit t [\delta] &:= \boxit t \\
  \letbox u s t [\delta] &:= \letbox u {s[\delta]} {(t[\delta])} \\
  \lambda x. t [\delta] &:= \lambda x. (t[\delta, x/x]) \\
  t\ s [\delta] &:= (t[\delta]) \ (s[\delta])
\end{align*}

The local substitution lemma is only concerned about layer $1$:
\begin{theorem}[Local substitution]
  If $\ltyping[\Psi][\Delta] 1 t T$ and $\dtyping \delta \Delta$, then $\ltyping 1{t[\delta]} T$.
\end{theorem}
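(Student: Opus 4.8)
The plan is to induct on the derivation of $\ltyping[\Psi][\Delta] 1 t T$, generalizing the statement over the ambient contexts and the substitution so that it reads: for all $\Psi$, $\Gamma$, and $\delta$ with $\dtyping \delta \Delta$, we have $\ltyping 1 {t[\delta]} T$. The single-layer restriction is exactly what makes this work: the substitution action is defined so that it never crosses a $\tbox$, so every subderivation we recurse into stays at layer $1$ except inside a box, where the substitution is inert.

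For the base cases, a local variable $x$ with $x : T \in \Delta$ has $x[\delta] = \delta(x)$, and $\ltyping 1 {\delta(x)} T$ is read off directly from $\dtyping \delta \Delta$, whose step rule types each entry at layer $1$ in $\Psi; \Gamma$. A global variable is untouched ($u[\delta] = u$); since $\delta$ preserves the global context $\Psi$ and supplies $\istype \Gamma$ through its own validity (via \Cref{thm:st:ctx-wf}), the layer-$1$ variable rule reapplies with the same $u : T \in \Psi$. The cases $\ze$, $\su t$, and application $t\ s$ are immediate from the induction hypotheses and the compatible clauses of the substitution action.

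The crucial case is $t = \boxit{t'}$, typed by the introduction rule from $\istype \Delta$ and $\ltyping[\Psi][\cdot] 0 {t'} {T'}$. Here $\boxit{t'}[\delta] = \boxit{t'}$, so there is literally nothing to rewrite: the body $t'$ lives at layer $0$ under the \emph{empty} local context, hence carries no free local variable that $\delta$ could reach. I reapply the box rule, reusing $\ltyping[\Psi][\cdot] 0 {t'} {T'}$ verbatim and recovering the needed $\istype \Gamma$ from the validity of $\dtyping \delta \Delta$. This is precisely why the lemma need only speak of layer $1$, and it is the conceptual heart of the argument even though technically it is the easiest case.

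The two binding cases require weakening the substitution before the induction hypothesis applies, and this is where I expect the only real friction. For $\lambda x. t'$ the action is $\lambda x.(t'[\delta, x/x])$, so I must produce $\Psi; \Gamma, x : S \vdash (\delta, x/x) : (\Delta, x : S)$: this follows by weakening $\delta$ into the extended local context $\Gamma, x : S$ and adjoining the entry $\ltyping[\Psi][\Gamma, x : S] 1 x S$ from the layer-$1$ variable rule, with $\istype S$ supplied by \Cref{thm:st:typ-wf} on the premise. For $\letbox{u}{s}{t'}$ the body is typed in the extended \emph{global} context $\Psi, u : T'$, so I weaken $\dtyping \delta \Delta$ along the global context to $\Psi, u : T'; \Gamma \vdash \delta : \Delta$, apply the induction hypotheses to $s$ and to $t'$, and reassemble with the elimination rule. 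The main obstacle is thus not any individual case but having the two stability-under-weakening properties of local-substitution typing on hand — global-context weakening for $\letbox{}{}{}$ and local-context weakening for $\lambda$ — which I would record as short auxiliary lemmas or draw from the same weakening machinery already invoked in the global substitution proof.
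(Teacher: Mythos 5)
Your proposal is correct and follows essentially the same route as the paper's proof: induction on the layer-$1$ typing derivation, with the variable case read off from $\dtyping \delta \Delta$, the $\tbox$ case trivial because the substitution does not enter the closed layer-$0$ body, and the binder cases ($\lambda$ and $\tletbox$) handled by weakening $\delta$ locally resp. globally before applying the induction hypothesis. The paper is merely terser, showing only the variable, $\tbox$, and $\tletbox$ cases and citing weakening of $\delta$ implicitly where you spell it out.
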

\begin{proof}
  We do induction on $\ltyping[\Psi][\Delta] 1 t T$. %
  The fact that we only operate at layer $1$ has eliminated several cases. %
  We only work on a few selected cases.
  \begin{itemize}[label=Case]
  \item $t = x$
    \begin{mathpar}
      \inferrule
      {\iscore \Psi \\ \istype \Delta \\ x : T \in \Delta}
      {\ltyping[\Psi][\Delta]{1}{x}{T}}
    \end{mathpar}
    \begin{align*}
      & \ltyping 1 {\delta(x)} T \tag{due to $x : T \in \Delta$} 
    \end{align*}

    \item $t = \boxit t'$
    \begin{mathpar}
      \inferrule
      {\istype \Delta \\ \ltyping[\Psi][\cdot] 0{t'} T}
      {\ltyping[\Psi][\Delta]{1}{\boxit t}{\square T}}
    \end{mathpar}
    \begin{align*}
      & \istype \Gamma \tag{by $\dtyping \delta \Delta$} \\
      & \ltyping{1}{\boxit t'}{\square T}
    \end{align*}
    
  \item $t = \letbox u s t'$
    \begin{mathpar}
      \inferrule
      {\ltyping[\Psi][\Delta] 1 {s}{\square T'} \\ \ltyping[\Psi, u : T'][\Delta] 1 {t'}{T}}
      {\ltyping[\Psi][\Delta] 1 {\letbox u s t'} T}
    \end{mathpar}
    \begin{align*}
      & \ltyping 1 {s[\delta]}{\square T'}
        \byIH \\
      & \ltyping[\Psi, u : T'] 1 {t'[\delta]}{T}
      \tag{by weakening of $\delta$ and IH} \\
      & \ltyping{1}{\letbox u s t'[\delta]}{T}
    \end{align*}
  \end{itemize}
\end{proof}

Then the application of a simultaneous substitution is just a chain of applying two
substitutions one after the other:
\begin{align*}
  t [\sigma; \delta] := t[\sigma][\delta]
\end{align*}
\begin{theorem}[Simultaneous substitution]
  If $\ltyping[\Phi][\Delta] 1 t T$ and $\dtyping{\sigma;\delta}{\Phi;\Delta}$, then
  $\ltyping 1 {t[\sigma;\delta]} T$.
\end{theorem}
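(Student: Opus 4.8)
The plan is to prove this purely by composing the two preceding substitution theorems, exploiting the definitional unfolding $t[\sigma;\delta] = t[\sigma][\delta]$. No fresh induction is required: the simultaneous-substitution theorem is entirely a corollary of Global substitution and Local substitution chained together.

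First I would invert the typing derivation of the simultaneous substitution $\dtyping{\sigma;\delta}{\Phi;\Delta}$. By the formation rule for simultaneous substitutions, this decomposes into exactly two components: a global substitution $\typing[\Psi]{\sigma}{\Phi}$ and a local substitution $\dtyping{\delta}{\Delta}$ landing in the target dual-context $\Psi;\Gamma$. These are precisely the hypotheses consumed by the two earlier theorems.

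Next, I would apply the Global substitution theorem at layer $i = 1$ to the hypothesis $\ltyping[\Phi][\Delta] 1 t T$ together with $\typing[\Psi]{\sigma}{\Phi}$, yielding $\ltyping[\Psi][\Delta] 1 {t[\sigma]} T$. The key observation is that global substitution rewrites only the global context, from $\Phi$ to $\Psi$, while leaving the local context untouched; since the Global substitution theorem is stated schematically in its ambient local context, instantiating that context to $\Delta$ is immediate, and $\Delta$ is carried through the conclusion verbatim. I would then feed this intermediate judgment into the Local substitution theorem, which expects exactly a layer-$1$ term typed over a local context $\Delta$ paired with a local substitution $\dtyping{\delta}{\Delta}$. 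This produces $\ltyping 1 {t[\sigma][\delta]} T$, which is by definition $\ltyping 1 {t[\sigma;\delta]} T$, as required.

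The only point demanding care—though it is scarcely an obstacle—is the context bookkeeping at the seam between the two applications: I must confirm that after global substitution the term $t[\sigma]$ is still typed at layer $1$ over the identical local context $\Delta$, so that the local substitution theorem applies with matching contexts. This is guaranteed because global substitution is both layer-preserving and local-context-preserving, as witnessed by the box and $\tletbox$ cases in the proof of that theorem. With the contexts lined up, the composition goes through immediately.
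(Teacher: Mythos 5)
Your proposal is correct and is exactly the paper's own argument: the paper proves this theorem by applying the global substitution lemma and then the local one, using the definition $t[\sigma;\delta] = t[\sigma][\delta]$. Your additional care about the seam (that global substitution preserves the layer and the local context $\Delta$) is sound and merely makes explicit what the paper leaves implicit.
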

Simultaneous substitutions are only applicable at layer $1$. %
At layer $0$, only we are only concerned about global substitutions. 
\begin{proof}
  Use the global substitution lemma and then the local one. 
\end{proof}

We can then generalize the application of global substitutions to local substitutions:
\begin{theorem}[Global substitution]
  If $\dtyping[\Phi] \delta \Delta$ and $\typing[\Psi] \sigma \Phi$, then
  $\dtyping{\delta[\sigma]} \Delta$.
\end{theorem}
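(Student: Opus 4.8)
The plan is to proceed by induction on the derivation of $\dtyping[\Phi]\delta\Delta$, which coincides with induction on the structure of the local substitution $\delta$. First I would record the evident componentwise definition of applying a global substitution to a local one, namely $\cdot[\sigma] := \cdot$ and $(\delta, t/x)[\sigma] := \delta[\sigma], t[\sigma]/x$, so that the statement has content. A guiding observation throughout is that types are built only from $\Nat$, $\square$, and $\func$ and contain no global variables, so neither the target context $\Delta$ nor the source local context $\Gamma$ is disturbed by $\sigma$; only the term slots of $\delta$ are affected.

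For the base case $\delta = \cdot$ with $\Delta = \cdot$, the goal $\dtyping{\cdot}{\cdot}$ follows from its formation rule, whose premises are $\iscore\Psi$ and $\istype\Gamma$. The first is supplied by the hypothesis $\typing[\Psi]\sigma\Phi$, since the formation of a global substitution into $\Psi$ already demands $\iscore\Psi$. The second is inherited from the premise $\istype\Gamma$ of $\dtyping[\Phi]{\cdot}{\cdot}$; because $\Gamma$ is untouched by $\sigma$, this well-formedness transports verbatim.

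For the inductive step $\delta = \delta', t/x$ with $\Delta = \Delta', x : T$, the last rule gives $\dtyping[\Phi]{\delta'}{\Delta'}$ together with $\ltyping[\Phi] 1 t T$. The induction hypothesis yields $\dtyping{\delta'[\sigma]}{\Delta'}$. For the new component I would invoke the already-established term-level Global substitution theorem: from $\ltyping[\Phi]1 t T$ and $\typing[\Psi]\sigma\Phi$ we obtain $\ltyping 1 {t[\sigma]} T$, with the local context $\Gamma$ held fixed on both sides. Re-applying the cons rule for local substitutions to these two facts produces $\dtyping{\delta'[\sigma], t[\sigma]/x}{\Delta', x : T}$, which is precisely $\dtyping{\delta[\sigma]}{\Delta}$ by the definition above.

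This is essentially a bookkeeping argument, so there is no genuinely hard step; the entire content is reducing the problem to the previously proven term-level theorem. The only point demanding care is the cons case's appeal to that theorem together with the observation that $\sigma$ leaves $\Gamma$ and all types invariant, so that the layer-$1$ term judgment for $t[\sigma]$ lands in exactly the context $\Gamma$ required by the cons rule. Once that is granted, the two induction hypotheses compose with the formation rule immediately.
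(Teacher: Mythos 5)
Your proof is correct and is exactly the routine induction the paper has in mind: the paper states this theorem without proof (it is used immediately afterwards to define composition of simultaneous substitutions), and your componentwise definition of $\delta[\sigma]$ together with the cons-case appeal to the term-level global substitution theorem matches the definition the paper later makes explicit in the contextual system. The only elided detail is that extracting $\iscore\Psi$ from $\typing[\Psi]{\sigma}{\Phi}$ formally needs a one-line inversion/induction, which is harmless.
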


This generalization is necessary to define the composition of two simultaneous substitutions:
\begin{align*}
  (\sigma'; \delta') \circ (\sigma; \delta) := \sigma' \circ \sigma; \delta' [\sigma]
  \circ \delta
\end{align*}
where the composition of global and local substitutions is just defined naturally.
\begin{theorem}
  If $\dtyping[\Psi'][\Gamma']{\sigma'; \delta'}{\Psi''; \Gamma''}$
  and $\dtyping{\sigma; \delta}{\Psi'; \Gamma'}$, then
  $\dtyping{(\sigma'; \delta') \circ (\sigma; \delta)}{\Psi''; \Gamma''}$
\end{theorem}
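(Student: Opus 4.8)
The plan is to unfold the composition into its two independent components and type each using the substitution lemmas already in hand. By definition $(\sigma'; \delta') \circ (\sigma; \delta) = \sigma' \circ \sigma;\, \delta'[\sigma] \circ \delta$, so the pairing rule for simultaneous substitutions reduces the goal to two separate obligations: that the global part is well-typed, $\typing[\Psi]{\sigma' \circ \sigma}{\Psi''}$, and that the local part is well-typed, $\dtyping[\Psi][\Gamma]{\delta'[\sigma] \circ \delta}{\Gamma''}$. I would discharge these in turn and then reassemble them with the pairing rule.

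For the global part, I would induct on the typing derivation of $\typing[\Psi']{\sigma'}{\Psi''}$. The empty case is immediate; in the step case $\sigma' = \sigma'_0, t/u$ with $\ltyping[\Psi'][\cdot]{0}{t}{T}$, the Global substitution lemma (for terms) gives $\ltyping[\Psi][\cdot]{0}{t[\sigma]}{T}$, and the inductive hypothesis types $\sigma'_0 \circ \sigma$, so the whole $\sigma' \circ \sigma$ targets $\Psi''$ over $\Psi$. This is a routine pointwise application of the term-level lemma at layer $0$.

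For the local part lies the one genuinely delicate point, and it is exactly what forces the definition of composition to insert $\delta'[\sigma]$ rather than $\delta'$: the substitution $\delta'$ is a local substitution relative to the global context $\Psi'$, whereas $\delta$ arrives at $\Gamma'$ under the global context $\Psi$. Before the two can be chained, $\delta'$ must be transported along $\sigma$ so that its global context becomes $\Psi$. This is precisely what the earlier Global substitution theorem for local substitutions supplies: from $\dtyping[\Psi'][\Gamma']{\delta'}{\Gamma''}$ and $\typing[\Psi]{\sigma}{\Psi'}$ I obtain $\dtyping[\Psi][\Gamma']{\delta'[\sigma]}{\Gamma''}$. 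Now $\delta'[\sigma]$ and $\delta$ share the global context $\Psi$, and I can compose them by induction on $\delta'[\sigma]$: each entry $t/x$, with $t$ typed at layer $1$ in $\Psi; \Gamma'$, becomes $t[\delta]/x$, which the Local substitution lemma types at layer $1$ in $\Psi; \Gamma$, yielding $\dtyping[\Psi][\Gamma]{\delta'[\sigma] \circ \delta}{\Gamma''}$.

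The two compositional inductions are otherwise mechanical, so the only conceptual obstacle is this alignment of global contexts. I expect no further difficulty once the role of $\sigma$ in retargeting the global context of $\delta'$ is made explicit, and I would finish by combining the two established obligations through the pairing rule for simultaneous substitutions.
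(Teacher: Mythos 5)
Your proposal is correct and follows essentially the same route as the paper's proof: invert into the four component judgments, type the global composition $\sigma' \circ \sigma$, use the generalized global substitution lemma to transport $\delta'$ to $\dtyping[\Psi][\Gamma']{\delta'[\sigma]}{\Gamma''}$, compose locally, and reassemble with the pairing rule. You merely spell out the two pointwise inductions that the paper leaves implicit, and your identification of the retargeting of $\delta'$ along $\sigma$ as the crux matches the paper's reasoning exactly.
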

\begin{proof}
  By inversion, we have the following
  \begin{mathpar}
    \typing[\Psi']{\sigma'}{\Psi''}

    \dtyping[\Psi'][\Gamma']{\delta'}{\Gamma''}

    \typing[\Psi]{\sigma}{\Psi'}

    \dtyping{\delta}{\Gamma'}
  \end{mathpar}

  First it is clear that $\typing[\Psi]{\sigma' \circ \sigma'}{\Psi''}$ holds. %
  By the generalized global substitution lemma, we also have
  $\dtyping[\Psi][\Gamma']{\delta'[\sigma]}{\Gamma''}$ and hence
  $\dtyping{\delta'[\sigma] \circ \delta}{\Gamma''}$. %
  That justifies our definition. 
\end{proof}

With the easy definition of identity, we see that there are two categories to organize
for the dual-context formulation. %
First we have the category of global contexts and global substitutions. %
In this category, terms in global substitutions are identified by their syntactic
structures. %
The second category is the category of dual-contexts and simultaneous substitutions. %
Local substitutions in simultaneous substitutions do admit dynamics and it is similar
to typical type theories like the simply typed $\lambda$-calculus. %
In our presheaf model, we take their special subcategories of weakenings as bases and
use $\square$ to travel between two bases. %
Terms in the first category is interpret to programs in the second, so that dynamics
of syntactic structures are recovered. %

\section{A Presheaf Model and Normalization by Evaluation}\labeledit{sec:presheaf}

In this section, we follow the ideas we built at the end of the last section and build
a presheaf model for our 2-layered dual-context modal type theory. %
From this presheaf model, we extract a normalization by
evaluation~\citep{martin-lof_intuitionistic_1975,berger_inverse_1991,altenkirch_categorical_1995,abel_normalization_2013}
algorithm that is sound and complete w.r.t. the typing and equivalence judgments. %
Normalization by evaluation (or NbE for short) is a technique to establish the
normalization property of a type theory. %
An NbE proof usually is composed of two steps: first, we evaluate the terms of a type
theory in some mathematical domain which has a notion of computation; second, we
extract from the values in this mathematical domain normal forms in the original type
theory. %
Compared to the more traditional approach of reducibility candidates, An NbE proof has
the advantage of not having to establish the confluence property of certain reduction
relation. %
The confluence property of the type theory is ultimately piggybacked on the confluence
property of that chosen domain. %

An NbE proof using a presheaf model has become a standard technique in the community
to quickly and robustly establish the normalization property. %
A presheaf category maps from some chosen base category to the category of sets. %
By carefully choosing this base category, a straightforward interpretation from terms
to the presheaf category corresponds to a normalization algorithm. %
In this section, this base category is the category of weakenings. %
The presheaf model showed in this section is a moderate extension of the classic
presheaf model of simply typed $\lambda$-calculus~\citep{altenkirch_categorical_1995}.

\subsection{Weakenings}

As previously mentioned, we will use the category of weakenings as the base
category. %
Similar to substitutions, we need to have two different kinds of weakenings. %
In spite of that, weakenings are actually much simpler than substitutions, because we
only need to handle shifting of positions of variables in the contexts. %
First, we define the weakenings of global contexts:
\begin{align*}
  \gamma := \varepsilon \sep q(\gamma) \sep p(\gamma)
  \tag{Global weakenings}
\end{align*}
\begin{mathpar}
  \inferrule
  { }
  {\varepsilon: \cdot \To_g \cdot}

  \inferrule
  {\gamma : \Psi \To_g \Phi \\ \iscore T}
  {q(\gamma) : \Psi, u : T \To_g \Phi, u : T}

  \inferrule
  {\gamma : \Psi \To_g \Phi \\ \iscore T}
  {p(\gamma) : \Psi, u : T \To_g \Phi}
\end{mathpar}
This definition is identical to the one for simply typed $\lambda$-calculus. %
We let $\id_\Psi : \Psi \To_g \Psi$ be the identity global weakening. %
We omit the subscript whenever it is clear from the context. %
In fact, local substitutions are defined in the same way:
\begin{align*}
  \tau := \varepsilon \sep q(\tau) \sep p(\tau)
  \tag{Local weakenings}
\end{align*}
\begin{mathpar}
  \inferrule
  {\iscore \Psi}
  {\varepsilon: \Psi;\cdot \To_l \cdot}

  \inferrule
  {\tau : \Psi; \Gamma \To_l \Delta \\ \istype T}
  {q(\tau) : \Psi; \Gamma, x : T \To_l \Delta, x : T}

  \inferrule
  {\tau : \Psi; \Gamma \To_l \Delta \\ \istype T}
  {p(\tau) : \Psi; \Gamma, x : T \To_l \Delta}
\end{mathpar}
We overload $\id_{\Psi; \Gamma} : \Psi; \Gamma \To_l \Gamma$ be the identity local
weakening. %
Again, we omit the subscript when it can be inferred from the context. %
Both global and local weakenings have composition and form categories. %
Then the weakenings of dual-contexts are defined as pairs of global and local
weakenings:
\begin{mathpar}
  \inferrule
  {\gamma : \Psi \To_g \Phi \\ \tau : \Psi; \Gamma \To_l \Delta}
  {\gamma;\tau : \Psi; \Gamma \To \Phi; \Delta}
\end{mathpar}
Composition of weakenings is simply defined as composition of global and local
substitutions, due to the following lemma:
\begin{lemma}
  If $\tau : \Psi; \Gamma \To_l \Delta$ and $\iscore \Phi$, then $\tau : \Phi; \Gamma \To_l \Delta$.
\end{lemma}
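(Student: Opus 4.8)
The plan is to proceed by straightforward induction on the derivation of $\tau : \Psi; \Gamma \To_l \Delta$. The guiding observation is that, among the three formation rules for local weakenings, the global context is inspected only in the base rule $\varepsilon : \Psi; \cdot \To_l \cdot$, where it carries the side condition $\iscore \Psi$. The two recursive rules for $q(\tau)$ and $p(\tau)$ merely thread the global context through unchanged and constrain only the local type via an $\istype T$ premise. Consequently, swapping $\Psi$ for $\Phi$ can affect only the base case, and there the hypothesis $\iscore \Phi$ supplies exactly the premise we need.

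First I would treat the base case $\tau = \varepsilon$, where necessarily $\Gamma = \cdot$ and $\Delta = \cdot$. The given derivation discharges $\iscore \Psi$; to rebuild $\varepsilon : \Phi; \cdot \To_l \cdot$ I simply appeal to the assumption $\iscore \Phi$ in its place. Next, for $\tau = q(\tau')$ with $\Gamma = \Gamma', x : T$ and $\Delta = \Delta', x : T$, inversion yields $\tau' : \Psi; \Gamma' \To_l \Delta'$ together with $\istype T$. The induction hypothesis gives $\tau' : \Phi; \Gamma' \To_l \Delta'$, and since $\istype T$ mentions only the local type $T$ it continues to hold; re-applying the $q$ rule produces $q(\tau') : \Phi; \Gamma', x : T \To_l \Delta', x : T$. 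The case $\tau = p(\tau')$ is identical, save that the codomain remains $\Delta'$.

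I do not expect any genuine obstacle: the lemma is essentially the bookkeeping fact that local weakenings are insensitive to the ambient global context beyond its core well-formedness. The one point worth verifying explicitly is that every premise occurring in the local weakening rules is either independent of the global context---namely the $\istype T$ premises, which constrain only local types---or recoverable from $\iscore \Phi$, which covers the base case. Both hold immediately, so the induction goes through without complication.
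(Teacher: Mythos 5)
Your proof is correct: the induction on the local weakening derivation goes through exactly as you describe, since the global context is only consulted in the $\varepsilon$ base rule (where $\iscore \Phi$ substitutes for $\iscore \Psi$) and is merely threaded through the $q$ and $p$ rules, whose $\istype T$ premises are independent of it. The paper states this lemma without proof, treating it as exactly the routine bookkeeping fact you have verified, so your argument fills in precisely what was left implicit.
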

Our base category is set to be the category of dual-contexts and weakenings:
\begin{theorem}
  Dual-contexts and weakenings form a category.
\end{theorem}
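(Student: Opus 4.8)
The plan is to supply the categorical data — identities and composition — and then reduce the three category axioms to the facts, already granted, that global weakenings form a category and that local weakenings form a category. The objects are the dual-contexts $\Psi; \Gamma$, and a morphism $\Psi; \Gamma \To \Phi; \Delta$ is a weakening $\gamma; \tau$ with $\gamma : \Psi \To_g \Phi$ and $\tau : \Psi; \Gamma \To_l \Delta$. For the identity on $\Psi; \Gamma$ I would take $\id_\Psi; \id_{\Psi; \Gamma}$, which is a dual-context weakening $\Psi; \Gamma \To \Psi; \Gamma$ directly by the pairing rule, since the global and local identities already carry the required component types.

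For composition, given $\gamma'; \tau' : \Phi; \Delta \To \Phi'; \Delta'$ and $\gamma; \tau : \Psi; \Gamma \To \Phi; \Delta$, I define $(\gamma'; \tau') \circ (\gamma; \tau) := (\gamma' \circ \gamma); (\tau' \circ \tau)$. The global component composes on the nose, giving $\gamma' \circ \gamma : \Psi \To_g \Phi'$. The local component is the only place that needs attention: $\tau'$ is typed over the base $\Phi$, namely $\tau' : \Phi; \Delta \To_l \Delta'$, whereas $\tau$ lives over the base $\Psi$. I first apply the preceding lemma to re-type the very same syntactic weakening $\tau'$ as $\tau' : \Psi; \Delta \To_l \Delta'$; the side condition $\iscore\Psi$ is immediate, since the base case of the local-weakening rules already records it. Then $\tau' \circ \tau : \Psi; \Gamma \To_l \Delta'$ is an ordinary composite of local weakenings over the common base $\Psi$, and the pair $(\gamma' \circ \gamma); (\tau' \circ \tau)$ is a dual-context weakening $\Psi; \Gamma \To \Phi'; \Delta'$ by the pairing rule.

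It then remains to check associativity and the two unit laws. Each law splits into an independent claim about the global components and a claim about the local components: the global claims are precisely the category laws for global weakenings, and the local claims are precisely the category laws for local weakenings. Since both kinds of weakenings already form categories, every law follows componentwise.

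The only subtlety — and the point I would emphasize to keep the componentwise argument honest — is the coherence of the rebasing step with composition. The lemma does not modify the underlying syntactic weakening; it merely re-derives its typing over another (still core) global base. Hence $\tau' \circ \tau$ denotes one and the same syntactic object regardless of which admissible base $\tau'$ is read over, so the local-component laws are genuinely just the laws of the local-weakening category, and no extra bookkeeping for the base change can interfere with associativity or the units.
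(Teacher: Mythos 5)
Your proposal is correct and follows essentially the same route the paper intends: the paper defines composition of dual-context weakenings componentwise, justified precisely by the rebasing lemma ($\tau : \Psi; \Gamma \To_l \Delta$ and $\iscore \Phi$ imply $\tau : \Phi; \Gamma \To_l \Delta$), and leaves the category laws to the already-established categories of global and local weakenings. Your additional observation that rebasing leaves the underlying syntactic weakening unchanged—so the laws really do reduce componentwise—is exactly the coherence point that makes the paper's terse justification go through.
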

We call this category \WC.

\subsection{Presheaf Model}

In this section, we construct the presheaf model. %
Since the presheaf model effectively constructs a normalization algorithm, we first
define what it means to be normal and neutral:
\begin{alignat*}{2}
  t^c &:=&&\ x \sep u \sep \ze \sep \su t^c \sep \lambda x. t^c \sep t^c\ {t^c}'
            \tag{Core terms ($\CExp$)} \\
  w &:= &&\ v \sep \ze \sep \su w \sep \boxit t^c \sep \lambda x. w
  \tag{Normal form ($\Nf$)} \\
  v &:= &&\ x \sep u \sep v\ w \sep \letbox u v w \tag{Neutral form
    ($\Ne$)}
\end{alignat*}
In order to completely capture all normal forms by using grammar only, we introduce
\emph{core terms} which effectively remove $\tbox$ and $\tletbox$ from the terms. %
In other words, $t^c$ only represents programs that live at layer $0$. %
Moreover, our definition of normal forms indicates that we are not concerned about
extensional equivalence of $\square$, including commuting conversions. %
For example, the following is a well possible normal form:
\begin{align*}
  \lambda x. (\letbox u v {\lambda y. u\ y}) x
\end{align*}
even though in principle it can be further simplified to:
\begin{align*}
  \lambda x. \letbox u v {u\ x}
\end{align*}
This longer normal form is due to the lack of following commuting conversions:
\begin{align*}
  \letbox u s {\lambda x. t} \approx \lambda x. \letbox u s t
\end{align*}
Namely in our system, $\tletbox$ and $\lambda$ do not commute. %
In general, such commuting conversions can be introduced for sum types, i.e. those
types that use pattern matching instead of projections for elimination. %
We can replace $\tletbox$ for an \texttt{if} expression for example to obtain a
possible commuting conversion for booleans. %
In this report, we are not concerned about this because eventually our system will be
scaled to dependent types, where extensional equivalence is usually not considered
at all.

Furthermore, we let
\begin{align*}
  \Nf^T_{\;\Psi; \Gamma} &:= \{ w \sep \ltyping 1 w T \} \\
  \Ne^T_{\;\Psi; \Gamma} &:= \{ v \sep \ltyping 1 v T \}
\end{align*}
Notice that the sets only capture judgments at layer $1$, because at layer $0$, terms
do not compute so all well-typed terms are automatically normal. %
$\Nf^T$ and $\Ne^T$ are the mapping from dual-contexts to the sets of normal and
neutral forms, respectively. %
Evidently $\Nf^T$ and $\Ne^T$ are presheaves from the category of weakenings because
both normal and neutral forms respects weakenings. %


To give the interpretation of types, we need exponential objects in the presheaf
category. %
Given two presheaves $F$ and $G$, we form a presheaf exponential $F \hfunc G$ by
\begin{align*}
  F \hfunc G &: \WC^{op} \To \SetC \\
  (F \hfunc G)_{\;\Psi;\Gamma} &:= \forall\ \Phi;\Delta \To \Psi;\Gamma. F_{\;\Phi;\Delta} \to G_{\;\Phi;\Delta}
\end{align*}

We interpret types as follows:
\begin{align*}
  \intp{\_} &: \Typ \to \WC^{op} \To \SetC \\
  \intp{\Nat} &:= \Nf^\Nat \\
  \intp{\square T} &:= \Nf^{\square T} \\
  \intp{S \func T} &:= \intp{S} \hfunc \intp{T}
\end{align*}
Notice that the interpretation of $\square T$ is not even recursive. %
We directly interpret it as the presheaf of normal forms. %
According to the definition of $\Nf^{\square T}$, there are two possibilities:
\begin{enumerate}
\item either the term is already neutral, which there is nothing more we can learn, or
\item it is of the form $\boxit t^c$.
\end{enumerate}
In the latter case, we know that $t^c$ must be well-typed at layer $0$ as a closed
term, which gives us a ticket to its raw syntactic structure. %
This is more obvious when we interpret the terms to natural transformations. %
This interpretation does return a presheaf because all subcases return presheaves:
\begin{theorem} $ $
  \begin{itemize}
  \item If $\iscore T$, then $\intp{T}$ is a functor.
  \item If $\istype T$, then $\intp{T}$ is a functor.
  \end{itemize}
\end{theorem}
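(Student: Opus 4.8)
The plan is to prove both items by a single induction on the structure of the type, leaning on two facts that are already in hand: that $\Nf^{(\_)}$ is a presheaf on $\WC$, and that a presheaf exponential $F \hfunc G$ is again a presheaf whenever $F$ and $G$ are. Since the earlier lemma gives $\iscore T \Rightarrow \istype T$ and both items speak about the very same object $\intp T$, the first item is an immediate consequence of the second; I would therefore concentrate on the $\istype$ induction, whose cases are $\Nat$, $S \func T$, and $\square T$, and remark that the $\iscore$ induction is the same one minus the $\square$ case.

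For the base cases there is nothing to do beyond unfolding the definition. We have $\intp{\Nat} = \Nf^{\Nat}$ and $\intp{\square T} = \Nf^{\square T}$, and $\Nf^{(\_)}$ was already observed to be a presheaf $\WC^{op} \To \SetC$: its action on a dual-context weakening is the (admissible) weakening of normal forms, and the functor laws are inherited from the identity and composition laws for weakenings. So these two cases produce functors directly, with no appeal to the induction hypothesis.

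The only real content lies in the function case $\intp{S \func T} = \intp S \hfunc \intp T$. The induction hypothesis makes $\intp S$ and $\intp T$ presheaves, so it suffices to establish, once and for all, the lemma that $F \hfunc G$ is a presheaf whenever $F$ and $G$ are. I would supply the restriction maps explicitly: for a weakening $\tau : \Psi';\Gamma' \To \Psi;\Gamma$ and an element $f \in (F \hfunc G)_{\;\Psi;\Gamma}$, define $f \cdot \tau \in (F \hfunc G)_{\;\Psi';\Gamma'}$ by sending $\rho : \Phi;\Delta \To \Psi';\Gamma'$ and $a \in F_{\;\Phi;\Delta}$ to $f\,(\tau \circ \rho)\,(a)$, which typechecks because $\tau \circ \rho : \Phi;\Delta \To \Psi;\Gamma$ in $\WC$. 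The two functor laws then reduce to the category laws of $\WC$: $f \cdot \id = f$ holds because $\id \circ \rho = \rho$, and $(f \cdot \tau) \cdot \tau' = f \cdot (\tau \circ \tau')$ holds because $(\tau \circ \tau') \circ \rho = \tau \circ (\tau' \circ \rho)$.

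The main obstacle is not conceptual but bookkeeping. First, the elements of $(F \hfunc G)_{\;\Psi;\Gamma}$ are the \emph{natural} (Kripke-monotone) families, so the $\forall$ in the definition carries a side condition that $f(\rho)$ commute with the restrictions of $F$ and $G$; I must check that $f \cdot \tau$ inherits this naturality, which follows from that of $f$ together with associativity of composition. Second, a dual-context weakening packages a global weakening $\gamma$ and a local weakening $\tau$, so every composition above must be read componentwise, and I must ensure the two layers of weakening are tracked consistently throughout. None of this needs a new idea — it is exactly the cartesian-closed structure of a presheaf category — but it is where the care is spent.
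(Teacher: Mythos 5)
Your proof is correct and takes essentially the same route as the paper, which offers no detailed argument beyond the remark that the interpretation returns a presheaf because all subcases do: $\Nf^{\Nat}$ and $\Nf^{\square T}$ are presheaves of normal forms, and the exponential $\intp{S} \hfunc \intp{T}$ of presheaves is again a presheaf. Your additions—deriving the $\iscore$ item from the $\istype$ item via the lemma that $\iscore T$ implies $\istype T$, and spelling out the precomposition action $(f \cdot \tau)(\rho)(a) = f(\tau \circ \rho)(a)$ together with the functor laws from the category laws of $\WC$—are exactly the bookkeeping the paper leaves implicit.
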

If $a \in \intp{T}_{\;\Phi; \Delta}$ and $\gamma; \tau : \Psi; \Gamma \To \Phi; \Delta$,
we write $a[\gamma;\tau]$ for the functorial action of $\gamma;\tau$ on $a$.

Next we give semantics for dual-contexts. %
The interpretation proceeds in two steps. %
In the first step, we interpret global contexts in the category of global contexts and
global weakenings:
\begin{align*}
  \intp{\Phi}^0_\Psi &:= \{ \gamma \sep \gamma : \Psi \To_g \Phi \} \\
  \intp{\Phi}^1_\Psi &:= \{ \sigma \sep \typing[\Psi]{\sigma}{\Phi}\}
\end{align*}
The interpretation of global contexts is layered because in the interpretation of
terms, we need to interpret global contexts differently at different layers to ensure
the interpretation of terms does form a natural transformation. %
This treatment will become clear when we consider terms. %
At layer $0$, $\Phi$ is mapped to its Hom set, while at layer $1$, $\Phi$ is
interpreted as a presheaf of global substitutions. %
We also use $\sigma$ to range over $\intp{\Phi}^i_{\Psi}$ when $i$ is unknown. %
Then we interpret the local contexts:
\begin{align*}
  \intp{\cdot}_{\;\Psi;\Gamma} &:= \{ * \} \\
  \intp{\Delta, x : T}_{\;\Psi;\Gamma} &:= \intp{\Delta}_{\;\Psi;\Gamma}
                                              \times \intp{T}_{\;\Psi;\Gamma}
\end{align*}
where $\{ * \}$ denotes a fixed singleton set. %
We let $\rho$ to range over $\intp{\Delta}_{\;\Psi;\Gamma}$. %
The interpretation of local contexts is not layered because computation only occurs at
layer $1$. %
The interpretation of dual-contexts is also layered due to the layered interpretation
of global contexts:
\begin{align*}
  \intp{\Phi;\Delta}^i_{\;\Psi;\Gamma} &:= \intp{\Phi}^i_\Psi \times
                                       \intp{\Delta}_{\;\Psi;\Gamma} 
\end{align*}
Functoriality of previous two interpretations are immediate:
\begin{theorem}
  If $\iscore \Phi$, then $\intp{\Phi}^i$ is a functor for both $i$.
\end{theorem}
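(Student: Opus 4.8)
The plan is to treat the two cases separately, recognizing that $\intp{\Phi}^0$ and $\intp{\Phi}^1$ are contravariant functors (presheaves) on the category of global contexts and global weakenings, whose existence as a category was established above. In both cases the action on objects is already given, so the remaining work is to exhibit the action on morphisms and verify the identity and composition laws.

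For $i = 0$ I would observe that $\intp{\Phi}^0$ is nothing but the representable presheaf $\Psi \mapsto \{\gamma \sep \gamma : \Psi \To_g \Phi\}$, i.e. the contravariant Hom-functor of the category of global weakenings evaluated at $\Phi$. Its functorial action on a weakening $\gamma' : \Psi' \To_g \Psi$ is precomposition, $\gamma \mapsto \gamma \circ \gamma'$, which lands in $\intp{\Phi}^0_{\Psi'}$ precisely because global weakenings are closed under composition. The identity law $\gamma \circ \id = \gamma$ and the composition law $(\gamma \circ \gamma_1) \circ \gamma_2 = \gamma \circ (\gamma_1 \circ \gamma_2)$ are then exactly the unit and associativity laws of the category of global weakenings, so functoriality is immediate.

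For $i = 1$ the action on a weakening $\gamma : \Psi' \To_g \Psi$ sends a substitution $\sigma$ with $\typing[\Psi]{\sigma}{\Phi}$ to the substitution $\sigma[\gamma]$ obtained by applying the renaming $\gamma$ to every term recorded in $\sigma$. First I would check this is well-defined, i.e. $\typing[\Psi']{\sigma[\gamma]}{\Phi}$: since each component $t$ of $\sigma$ satisfies $\ltyping[\Psi][\cdot] 0 t T$, I need that weakening preserves layer-$0$ typing, $\ltyping[\Psi'][\cdot] 0 {t[\gamma]} T$, which is the monotonicity (renaming) lemma for terms, itself a special case of the global substitution theorem in which the substitution is a weakening. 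The identity law $\sigma[\id] = \sigma$ and the composition law $\sigma[\gamma_1][\gamma_2] = \sigma[\gamma_1 \circ \gamma_2]$ then reduce, component by component, to the corresponding laws $t[\id] = t$ and $t[\gamma_1][\gamma_2] = t[\gamma_1 \circ \gamma_2]$ for the renaming action on terms.

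The main obstacle, such as it is, lies entirely in the $i = 1$ case and amounts to establishing that terms themselves form a presheaf over global weakenings — the fusion law $t[\gamma_1][\gamma_2] = t[\gamma_1 \circ \gamma_2]$ in particular — since the substitution-level laws are obtained by applying these pointwise to the finitely many components of $\sigma$. These term-level renaming laws are proved by a routine induction on the typing derivation of $t$, with the only mildly delicate case being $\tletbox$, where the bound global variable forces the weakening to be lifted by $q(-)$ when passing under the binder; once this bookkeeping is discharged the functor laws follow directly.
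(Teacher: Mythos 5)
Your proposal is correct and matches the paper's treatment: the paper declares this functoriality ``immediate,'' and in the proof of the subsequent theorem on $\intp{\Phi;\Delta}^i$ it defines exactly the actions you give --- precomposition $\gamma' \circ \gamma$ at layer $0$ (the representable presheaf) and $\sigma[\gamma]$ at layer $1$, with well-definedness resting on weakening of layer-$0$ typing and the functor laws reducing to the pointwise renaming laws. One small remark: the terms stored in a layer-$1$ global substitution are typed at layer $0$ in an empty local context, and at layer $0$ neither $\tbox$ nor $\tletbox$ can occur, so the $\tletbox$ case of your term-level renaming induction is actually vacuous for the terms relevant to this theorem.
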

\begin{theorem}
  If $\istype \Delta$, then $\intp{\Delta}$ is a functor.
\end{theorem}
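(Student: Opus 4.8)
The plan is to unfold what it means for $\intp{\Delta}$ to be a functor $\WC^{op} \To \SetC$: its action on objects is already fixed by the defining equations, so what remains is to (i) supply its action on each weakening $\gamma;\tau : \Psi;\Gamma \To \Phi;\Gamma'$, producing a map $\intp{\Delta}_{\;\Phi;\Gamma'} \to \intp{\Delta}_{\;\Psi;\Gamma}$, and (ii) verify that this assignment preserves identities and, contravariantly, composition. I would carry this out by induction on the structure of $\Delta$, using inversion on the hypothesis $\istype \Delta$ at each step to recover the relevant $\istype$ facts about the component types.

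For the base case $\Delta = \cdot$, the interpretation $\intp{\cdot}_{\;\Psi;\Gamma} = \{*\}$ is the constant singleton presheaf; its action on any weakening is the unique map $\{*\} \to \{*\}$, and both functor laws hold trivially. For the step case $\Delta = \Delta', x : T$, inversion on $\istype{\Delta', x : T}$ yields $\istype{\Delta'}$ together with $\istype T$. The induction hypothesis then makes $\intp{\Delta'}$ a functor, while the earlier theorem on type interpretations (its $\istype$ branch, which applies precisely because $\istype T$ holds) makes $\intp{T}$ a functor. Since by definition $\intp{\Delta', x : T}_{\;\Psi;\Gamma} = \intp{\Delta'}_{\;\Psi;\Gamma} \times \intp{T}_{\;\Psi;\Gamma}$ is a pointwise cartesian product, I define the functorial action componentwise, reusing the paper's notation $\_[\gamma;\tau]$ for the actions of the two factors:
\[
(\rho, a)[\gamma;\tau] := \big(\rho[\gamma;\tau],\ a[\gamma;\tau]\big),
\]
where $\rho[\gamma;\tau]$ is supplied by the functor $\intp{\Delta'}$ and $a[\gamma;\tau]$ by the functor $\intp{T}$. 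This exhibits $\intp{\Delta', x : T}$ as the product presheaf $\intp{\Delta'} \times \intp{T}$ in $\SetC^{\WC^{op}}$, and a product of functors is again a functor: the identity law $(\rho,a)[\id] = (\rho,a)$ and the composition law follow componentwise from the corresponding laws for $\intp{\Delta'}$ and $\intp{T}$.

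I expect no genuine obstacle here. Because products in the presheaf category are computed pointwise, well-definedness of the action — that $(\rho,a)[\gamma;\tau]$ indeed lands in $\intp{\Delta}_{\;\Psi;\Gamma}$ and respects the two factors — is immediate once functoriality of $\intp{T}$ is in hand, which the hypothesis $\istype T$ guarantees. The only matter that deserves mild care is notational bookkeeping: a single weakening $\gamma;\tau$ simultaneously moves both the global component $\Psi$ and the local component $\Gamma$, so one must keep the two halves of the dual-context in step; but this is already packaged into the single morphism of \WC and introduces nothing substantive beyond a routine induction.
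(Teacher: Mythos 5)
Your proof is correct and takes essentially the same approach as the paper: the paper declares this functoriality ``immediate,'' precisely because $\intp{\Delta}$ is a pointwise product of the presheaves $\intp{T}$ (each of which is a functor by the preceding theorem, using its $\istype$ branch), and your structural induction on $\Delta$ with componentwise action $(\rho,a)[\gamma;\tau] := (\rho[\gamma;\tau], a[\gamma;\tau])$ is exactly the spelled-out version of that argument, consistent with the functorial action the paper later writes down for $\intp{\Phi;\Delta}^i$.
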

\begin{theorem}
  If $\iscore \Phi$ and $\istype \Delta$, then $\intp{\Phi;\Delta}^i$ is a functor.
\end{theorem}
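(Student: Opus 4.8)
The plan is to reduce the claim to the two functoriality theorems stated immediately above, since $\intp{\Phi;\Delta}^i$ is by definition the pointwise product $\intp{\Phi}^i_\Psi \times \intp{\Delta}_{\;\Psi;\Gamma}$, and the pointwise product of two presheaves over a common base is again a presheaf. The only genuine subtlety is that the two factors \emph{a priori} live over different base categories: $\intp{\Phi}^i$ is a presheaf on the category $\GWC$ of global contexts and global weakenings, indexed only by the global context $\Psi$, whereas $\intp{\Delta}$ is a presheaf on \WC itself, indexed by the full dual-context $\Psi;\Gamma$. So before forming the product I would first reconcile the two indexings so that both factors become presheaves on the same category \WC.

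First I would note that there is a projection functor $\pi \colon \WC \to \GWC$ that sends a dual-context $\Psi;\Gamma$ to its global part $\Psi$ and a weakening $\gamma;\tau$ to its global component $\gamma$. Functoriality of $\pi$ is immediate from the componentwise definition of identity and composition of dual-context weakenings. Reindexing $\intp{\Phi}^i$ along $\pi$ (i.e.\ precomposing the functor $\intp{\Phi}^i$ with $\pi$, which again yields a functor) turns it into a presheaf on \WC whose value at $\Psi;\Gamma$ is $\intp{\Phi}^i_\Psi$, and the definition then reads
\begin{align*}
  \intp{\Phi;\Delta}^i \;=\; (\pi^{*}\intp{\Phi}^i) \times \intp{\Delta}.
\end{align*}

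Next I would spell out the functorial action and check the two functor laws. Given a weakening $\gamma;\tau \colon \Psi';\Gamma' \To \Psi;\Gamma$ and an element $(\sigma, \rho) \in \intp{\Phi}^i_\Psi \times \intp{\Delta}_{\;\Psi;\Gamma}$, I set its action to be $(\sigma[\gamma],\, \rho[\gamma;\tau])$, where $\sigma[\gamma]$ is the functorial action supplied by the theorem for $\intp{\Phi}^i$ (applied to the global weakening $\gamma \colon \Psi' \To_g \Psi$) and $\rho[\gamma;\tau]$ is the action supplied by the theorem for $\intp{\Delta}$. Well-definedness is immediate because each coordinate lands in the correct fibre, so the pair lies in $\intp{\Phi}^i_{\Psi'} \times \intp{\Delta}_{\;\Psi';\Gamma'} = \intp{\Phi;\Delta}^i_{\;\Psi';\Gamma'}$. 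Preservation of identities holds because $\pi$ sends $\id_{\Psi;\Gamma}$ to $\id_\Psi$, so both coordinates collapse to the already-established identity laws of the two factors; preservation of composition holds for the same reason, combining the fact that $\pi$ preserves composition with the composition laws of $\intp{\Phi}^i$ and $\intp{\Delta}$.

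I do not expect any real obstacle here: the statement is a formal consequence of the two prior theorems together with the stability of functoriality under pointwise products and under reindexing. The one point that deserves explicit care, rather than being waved through, is the mismatch of base categories — making precise that the global factor only sees the $\gamma$ component of a dual weakening $\gamma;\tau$, which is exactly the information recorded by the projection $\pi$.
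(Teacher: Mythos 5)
Your proposal is correct and follows essentially the same route as the paper: the paper likewise defines the action componentwise as $(\sigma;\rho)[\gamma;\tau] := \sigma[\gamma];\rho[\gamma;\tau]$ (with $\gamma'\circ\gamma$ at layer $0$, which is exactly the functorial action of $\intp{\Phi}^0$), and its only remark --- that the local weakening has no effect on the global component --- is precisely what your projection functor $\pi$ makes explicit. Your packaging via $\pi^{*}\intp{\Phi}^i \times \intp{\Delta}$ is just a more categorical phrasing of the same argument.
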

\begin{proof}
  The functorial action is defined as follows: given $\gamma; \tau : \Psi'; \Gamma'
  \To \Psi ; \Gamma$, we have
  \begin{align*}
    (\gamma'; \rho  : \intp{\Phi; \Delta}^0_{\;\Psi;\Gamma})[\gamma;\tau]
    &:= \gamma' \circ \gamma; \rho [\gamma; \tau] \\
    (\sigma; \rho  : \intp{\Phi; \Delta}^1_{\;\Psi;\Gamma})[\gamma;\tau]
    &:= \sigma[\gamma]; \rho [\gamma; \tau]
  \end{align*}
  Notice that regardless of which layer, the local weakening does not have effect on
  the interpretation of the global context. 
\end{proof}

Last we interpret terms to natural transformations between presheaves. %
We need two auxiliary natural transformations which map $\Ne^T$ to $\intp{T}$ and
$\intp{T}$ to $\Nf^T$, respectively:
\begin{align*}
  \downarrow^T &: \intp{T} \To \Nf^T \\
  \downarrow^\Nat_{\;\Psi;\Gamma}(a) &:= a \\
  \downarrow^{\square T}_{\;\Psi;\Gamma}(a)
               &:= a \\
  \downarrow^{S \func T}_{\;\Psi;\Gamma}(a)
               &:= \lambda x. \downarrow^T_{\;\Psi;\Gamma, x : S}(a~(\id; p(\id)~,~{\uparrow^S_{\;\Psi;\Gamma, x : S}\!(x)}))
                 \tag{where $\id; p(\id) : \Psi; \Gamma, x{:} S \To \Psi; \Gamma$}\\[5pt]
  \uparrow^T &: \Ne\ T \To \intp{T} \\
  \uparrow^B_{\;\Psi;\Gamma}(v) &:= v \\
  \uparrow^{\square T}_{\;\Psi;\Gamma}(v)
               &:= v \\
  \uparrow^{S \func T}_{\;\Psi; \Gamma}(v)
               &:= (\gamma;\tau : \Phi;\Delta \To \Psi;\Gamma)(a)
                 \mapsto \uparrow^T_{\;\Phi;\Delta}(v[\gamma;\tau]\ \downarrow^S_{\;\Phi;\Delta}(a))
\end{align*}
In the following, we interpret terms as natural transformations:
\begin{align*}
  \intp{\_}
  &: \forall i \to \ltyping[\Phi][\Delta] i t T \to \intp{\Phi;\Delta}^i \To \intp{T} \\
  \intp{t}^i_{\;\Psi;\Gamma}
  &: \intp{\Phi;\Delta}_{\;\Psi;\Gamma} \to \intp{T}_{\;\Psi;\Gamma}
  \tag{expanded form}\\
  \intp{\ze}^i_{\;\Psi;\Gamma}(\sigma; \rho)
  &:= \ze \\  
  \intp{\su t}^i_{\;\Psi;\Gamma}(\sigma; \rho)
  &:= \su (\intp{t}^i_{\;\Psi;\Gamma}(\sigma; \rho)) \\
  \intp{u}^0_{\;\Psi;\Gamma}(\gamma; \rho)
  &:= \uparrow^{T}_{\;\Psi; \Gamma}(u[\gamma]) \\
  \intp{u}^1_{\;\Psi;\Gamma}(\sigma; \rho)
  &:= \intp{\sigma(u)}^0_{\;\Psi; \Gamma}(\id; *) \\
  \intp{x}^i_{\;\Psi;\Gamma}(\sigma; \rho)
  &:= \rho(x) \tag{lookup $x$ in $\rho$} \\
  \intp{\boxit t}^1_{\;\Psi;\Gamma}(\sigma; \rho)
  &:= \boxit {(t[\sigma])} \\
  \intp{\letbox u s t}^1_{\;\Psi;\Gamma}(\sigma; \rho)
  &:= \intp{t}^1_{\;\Psi;\Gamma}(\sigma, t^c/u; \rho)
    \tag{if $\intp{s}^1_{\;\Psi;\Gamma}(\sigma; \rho) = \boxit {t^c}$} \\
  \intp{\letbox u s t}^1_{\;\Psi;\Gamma}(\sigma; \rho)
  &:= \uparrow^{T}_{\;\Psi;\Gamma}(\letbox u v {\downarrow^T_{\;\Psi, u: S;\Gamma}(\intp{t}^1_{\;\Psi, u:
      S;\Gamma}(\sigma', u/u; \rho'))})
    \tag{if $\intp{s}^1_{\;\Psi;\Gamma}(\sigma; \rho) = v : \square S$ and $p(\id);
      \id : \Psi, u: S;\Gamma \To \Psi;\Gamma$ and $(\sigma'; \rho') := (\sigma; \rho)[p(\id); \id]$} \\
  \intp{\lambda x : S. t}^i_{\;\Psi;\Gamma}(\sigma;\rho)
            &:= (\gamma;\tau : \Phi;\Delta \To \Psi;\Gamma)(a :
              \intp{S}_{\;\Phi;\Delta}) \mapsto
              \intp{t}^i_{\;\Phi;\Delta} (\sigma'; (\rho', a))
              \tag{where $(\sigma'; \rho') := \sigma;\rho[\gamma; \tau]$}  \\
  \intp{t\ s}^i_{\;\Psi;\Gamma}(\sigma;\rho) &:= \intp{t}^i_{\;\Psi;\Gamma} (\sigma;\rho, \id_{\;\Psi;\Gamma}~,~
                                     \intp{s}^i_{\;\Psi;\Gamma}(\sigma;\rho))    
\end{align*}
This interpretation extends the interpretation of simply typed $\lambda$-calculus. %
The interpretation is also layered because the type theory itself is layered. %
All cases that are not modal are identical to the simply typed $\lambda$-calculus
regardless of layers. %
For example, in the variable case, we use the variable $x$ to look up the current
local environment $\rho$ to find a $\intp{T}_{\;\Psi;\Gamma}$. %
What are more interesting are the modal cases. %
The $\boxit t$ case is only available at layer $1$, because $\tbox$ is only well-typed
at this layer. %
Since a $\tbox$'ed term should be identified with its syntactic structure, we directly
apply the global substitution $\typing[\Psi]\sigma\Phi$ and obtain a closed term
$t[\sigma]$. %
In the $\tletbox$ case, we must first evaluate $s$. %
Since
$\intp{s}^1_{\;\Psi;\Gamma} \in \intp{\square S}^1_{\;\Psi;\Gamma} =
\Nf^{\square S}_{\;\Psi;\Gamma}$, there are two possible outcomes of this evaluation: it is either a
$\boxit{t^c}$, or a neutral form. %
In the first case, we have obtained a piece of code, so we can just extend $t^c$ for $u$
to the global substitution. %
In this case, we can employ other interpretations. %
For example, it is possible to inspect the syntactic structure of $t^c$. %
If we have the following form in our syntax:
\begin{align*}
  \isapp s t t'
\end{align*}
which inspects the structure of $s$. %
If $s$ is a function application, then we branch to $t$, otherwise $t'$. %
In this interpretation, we can clearly interpret this code:
\begin{align*}
  \intp{\isapp s t t'}^1_{\;\Psi;\Gamma}(\sigma; \rho)
  &:= \intp{t}^1_{\;\Psi;\Gamma}(\sigma; \rho)
    \tag{if $\intp{s}^1_{\;\Psi;\Gamma}(\sigma; \rho) = \boxit{(s^c\ t^c)}$} \\
  \intp{\isapp s t t'}^1_{\;\Psi;\Gamma}(\sigma; \rho)
  &:= \intp{t'}^1_{\;\Psi;\Gamma}(\sigma; \rho)
    \tag{if $\intp{s}^1_{\;\Psi;\Gamma}(\sigma; \rho) = \boxit{t^c}$ but $t^c$ is not
    an application}
\end{align*}
The elimination form of a $\square$ can be more complex, e.g. case analysis on the code
structure. %
However, though the semantic side is ready for a more complex interpretation, the
syntactic side is not rich enough to express the same level of complexity. %
We will need a more powerful core theory, which exactly leads us to an adaptation to
dependent type theory. %
We will develop this theory once we finish the development of simple types. 

In the second case of $\tletbox$, on the other hand, we are blocked by some neutral
term, so we can only continue the evaluation of $t$ with $u$ as is, i.e. by applying
the $p(\id)$ weakening to the global substitution and then append the result with $u/u$. %
We need the overall term to be some $\intp{T}_{\;\Psi;\Gamma}$, so we first reify the
evaluation of $t$ into a normal form, and then reflect the neutral back to the
abstract data in $\intp{T}_{\;\Psi;\Gamma}$. %

Finally, we consider the interpretation of $u$. %
This case is the most interesting one and is the essence of this layered
interpretation. %
It is possible to refer to $u$ at both layers. %
When we refer to $u$ at layer $1$, this means that we are actually bringing a term at
layer $0$ to layer $1$, so we must bring the dynamics of that term back alive. %
This is why we interpret the lookup of $u$ in $\sigma$ at layer $0$. %
The layer goes from $1$ to $0$ ensures the well-foundedness of the interpretation
regardless of what $\sigma(u)$ is. %
The evaluation environment of the recursive interpretation is independent of the input
one. %
If we refer to $u$ at layer $0$, that means $u$ remains neutral even after the
evaluation in the core type theory. %
In this case, we want to eventually return something representing a neutral term by
calling reflection. %
In addition, we must ensure the naturality of the interpretation. %
Combining both requirements, we come to the conclusion of using global weakenings only
for the interpretation of global contexts at layer $0$. %
$u[\gamma]$ still returns a global variable. %
This interpretation maintains monotonocity, which eventually
leads to naturality.

Next, we define the identity environment $\intp{\Gamma}_{\;\Psi;\Gamma}$ as 
\begin{align*}
  \uparrow^{\Gamma} &:  \intp{\Gamma}^1_{\;\Psi;\Gamma} \\
  \uparrow^{\cdot} &:= * \\
  \uparrow^{\Gamma, x : T} &:= (\uparrow^{\Psi; \Gamma} [\id; p(\id)], \uparrow^{T}_{\Psi; \Gamma, x : T}\!\!(x))
\end{align*}
Then we let
\begin{align*}
  \uparrow^{\Psi; \Gamma} &:  \intp{\Psi; \Gamma}^1_{\;\Psi;\Gamma} \\
  \uparrow^{\Psi; \Gamma} &:= \id; \uparrow^\Gamma
\end{align*}

This allows us to finally write down the normalization algorithm:
\begin{definition}
  A normalization by evaluation algorithm given $\ltyping 1 t T$ is
  \begin{align*}
    \nbe^T_{\;\Psi;\Gamma}(t) &:= \downarrow^T_{\;\Psi;\Gamma} (\intp{t}^1_{\;\Psi;\Gamma}(\uparrow^{\Psi;\Gamma}))
  \end{align*}
\end{definition}

$\nbe^T_{\;\Psi;\Gamma}(t)$ returns a $\Nf^T_{\;\Psi;\Gamma}$. %
The correctness of this algorithm is formulated by the following two target theorems:
\begin{theorem}[Completeness]\labeledit{thm:st:compl}
  If $\ltyequiv 1 t {t'} T$, then $\nbe^T_{\;\Psi;\Gamma}(t) = \nbe^T_{\;\Psi;\Gamma}(t')$.
\end{theorem}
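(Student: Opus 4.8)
The plan is to prove completeness by showing that the semantic interpretation is itself invariant under the equivalence judgment, and then to read off equal normal forms. The central lemma I would establish is that \emph{the model validates equivalence}: whenever $\ltyequiv 1 t {t'} T$ with $t, t'$ typed in some $\Phi; \Delta$, we have $\intp{t}^1_{\;\Psi;\Gamma}(\sigma;\rho) = \intp{t'}^1_{\;\Psi;\Gamma}(\sigma;\rho)$ for every object $\Psi;\Gamma$ of $\WC$ and every $(\sigma;\rho) \in \intp{\Phi;\Delta}^1_{\;\Psi;\Gamma}$. Granting this, completeness is immediate: instantiate at the identity environment $\uparrow^{\Psi;\Gamma}$ and apply the reification $\downarrow^T$, which, being a function, respects equality, so $\nbe^T_{\;\Psi;\Gamma}(t) = \nbe^T_{\;\Psi;\Gamma}(t')$.

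I would prove the lemma by induction on the derivation of $\ltyequiv 1 t {t'} T$. The PER rules are trivial, since equality of semantic values is already symmetric and transitive. The congruence rules follow from the induction hypotheses together with the fact that each semantic operation ($\tsucc$, application, $\tbox$, $\tletbox$) is a genuine function and hence respects equality of its arguments; for the $\tbox$ congruence I additionally invoke the earlier observation that layer-$0$ equivalence is syntactic equality, so the two boxed bodies coincide on the nose and thus so do their images $\boxit{(t[\sigma])}$ and $\boxit{(t'[\sigma])}$. For the function-type congruence and the $\eta$ rule, values live in an exponential $\intp{S} \hfunc \intp{T}$ and are compared pointwise. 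The genuine content therefore lies in the two $\beta$ rules and the $\eta$ rule.

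To handle the $\beta$ rules I would first prove two \emph{semantic substitution lemmas}, each by induction on the term substituted into. The local lemma states $\intp{t[s/x]}^1_{\;\Psi;\Gamma}(\sigma;\rho) = \intp{t}^1_{\;\Psi;\Gamma}(\sigma;(\rho, \intp{s}^1_{\;\Psi;\Gamma}(\sigma;\rho)))$, which discharges the function $\beta$ case after unfolding the interpretations of application and $\lambda$ at the identity weakening. The global lemma states that interpreting $t[s/u]$ agrees with interpreting $t$ in the global environment $\sigma$ extended by the captured code $s[\sigma]$ at $u$; combined with $\intp{\boxit s}^1_{\;\Psi;\Gamma}(\sigma;\rho) = \boxit{(s[\sigma])}$ — so that the $\tletbox$ clause takes its code branch and extends $\sigma$ with $s[\sigma]/u$ — this discharges the box $\beta$ case. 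The $\eta$ case reduces to showing that an element $f$ of $\intp{S} \hfunc \intp{T}$ equals the map $(\gamma;\tau)(a) \mapsto f[\gamma;\tau](a)$, which is exactly the naturality/monotonicity coherence carried by semantic functions in the presheaf model; here the presheaf structure does the work.

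I expect the global substitution lemma to be the main obstacle, because it is where the two layers cross. Boxed code is captured syntactically through $t[\sigma]$, yet the global variable $u$ is interpreted at layer $1$ by dropping to layer $0$ via $\intp{\sigma(u)}^0(\id;*)$, and the global environment is itself layered. The variable case of the lemma therefore forces an auxiliary fact relating the layer-$1$ and layer-$0$ interpretations of a piece of code — essentially that lifting a layer-$0$ term to layer $1$ and interpreting it agrees with interpreting it directly as code — and this must be shown to commute with the functorial (weakening) action so that naturality is preserved throughout the induction.
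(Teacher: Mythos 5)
Your proposal matches the paper's proof essentially step for step: the paper packages your central lemma as the layer-$1$ semantic equivalence judgment (pointwise equality of interpretations), proves it as a fundamental theorem by induction on the equivalence derivation using the same local and global substitution lemmas for the $\beta$ rules, naturality for $\eta$, and syntactic equality at layer $0$ for the $\tbox$ congruence, then concludes by instantiating at $\uparrow^{\Psi;\Gamma}$ and reifying. Your diagnosis of the layer-crossing difficulty is also exactly the paper's route, which isolates it as a separate lemma relating $\intp{t[\sigma]}^0$ to $\intp{t}^1$ used in the global-variable case of the global substitution lemma.
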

\begin{theorem}[Soundness]\labeledit{thm:st:sound}
  If $\ltyping 1 t T$, then $\ltyequiv 1 t {\nbe^T_{\;\Psi;\Gamma}(t)} T$.
\end{theorem}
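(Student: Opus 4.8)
The plan is to prove soundness by a Kripke gluing argument (a logical relation between syntax and semantics), the standard companion to the PER model underlying completeness (\Cref{thm:st:compl}). First I would define, by recursion on the derivation $\istype T$, a type-indexed family of gluing relations $t \circledR a$ between a layer-$1$ term $t$ with $\ltyping{1}{t}{T}$ and a semantic value $a \in \intp{T}_{\;\Psi;\Gamma}$, parametrised by the dual-context $\Psi;\Gamma$. At the base type and at $\square T$, where $\intp{\Nat} = \Nf^{\Nat}$ and $\intp{\square T} = \Nf^{\square T}$, I set $t \circledR a$ to mean $\ltyequiv{1}{t}{a}{T}$, i.e. the term is equivalent to the normal form $a$. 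At $S \func T$ I use the usual Kripke clause: $t \circledR f$ holds when for every weakening $\gamma;\tau : \Phi;\Delta \To \Psi;\Gamma$ and every glued pair $s \circledR b$ at $S$ over $\Phi;\Delta$, the application $t[\gamma;\tau]\ s \circledR f(\gamma;\tau)(b)$ at $T$. Because the global-substitution component of an environment stores code, I would simultaneously define a layer-$0$ code gluing relation; since layer-$0$ equivalence is syntactic equality (shown earlier), this relation essentially pins the syntactic code to its layer-$0$ interpretation, and it is exactly what feeds the interpretation of a global variable $\intp{u}^1_{\;\Psi;\Gamma} = \intp{\sigma(u)}^0_{\;\Psi;\Gamma}(\id;*)$.

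Second, I would establish the structural lemmas that make the relation usable: (i) monotonicity, that $t \circledR a$ is preserved by the functorial action of any weakening $\gamma;\tau$; (ii) closure under equivalence on the syntactic side; and (iii) the realizability pair, proved by mutual induction on $\istype T$, namely reflection, that $\ltyequiv{1}{t}{v}{T}$ for a neutral $v$ implies $t \circledR\, \uparrow^T_{\;\Psi;\Gamma}(v)$, and reification, that $t \circledR a$ implies $\ltyequiv{1}{t}{\downarrow^T_{\;\Psi;\Gamma}(a)}{T}$. The reification half is precisely the \emph{escape} property that turns gluing back into the term equivalence demanded by the theorem.

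Third, I would lift gluing to substitutions: a simultaneous substitution into $\Psi;\Gamma$ is glued to a semantic environment $(\sigma;\rho) \in \intp{\Phi;\Delta}^1_{\;\Psi;\Gamma}$ when its global part is code-glued componentwise to $\sigma$ and its local part is glued componentwise to $\rho$. The core is a Fundamental Theorem, proved by induction on $\ltyping{1}{t}{T}$: whenever a substitution is glued to an environment, $t$ under that substitution is glued to $\intp{t}^1_{\;\Psi;\Gamma}(\sigma;\rho)$. The non-modal cases are inherited from the simply typed $\lambda$-calculus via (i)--(iii). Finally I would instantiate the Fundamental Theorem with the identity substitution, which is glued to the reflected identity environment $\uparrow^{\Psi;\Gamma}$ by reflection applied to both local and global variables, obtaining $t \circledR \intp{t}^1_{\;\Psi;\Gamma}(\uparrow^{\Psi;\Gamma})$; reification then yields $\ltyequiv{1}{t}{\downarrow^T_{\;\Psi;\Gamma}(\intp{t}^1_{\;\Psi;\Gamma}(\uparrow^{\Psi;\Gamma}))}{T}$, which is exactly $\ltyequiv{1}{t}{\nbe^T_{\;\Psi;\Gamma}(t)}{T}$.

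The main obstacle is the $\tletbox$ case of the Fundamental Theorem, precisely where the two layers interact. The interpretation branches on whether $\intp{s}^1_{\;\Psi;\Gamma}$ is a $\boxit{t^c}$ or a neutral $v$. In the code branch, gluing of $s$ at $\square S$ gives $\ltyequiv{1}{s}{\boxit{t^c}}{\square S}$, the letbox $\beta$-rule rewrites $\letbox{u}{\boxit{t^c}}{t'}$ to the global substitution of $t^c$ for $u$ in $t'$, and I must extend the global-substitution gluing with $t^c$ code-glued to its layer-$0$ interpretation before applying the induction hypothesis for the body; this needs an unpacking lemma turning gluing of $\boxit{t^c}$ at $\square S$ into code gluing of $t^c$ at layer $0$. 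In the neutral branch I extend the global substitution with a fresh reflected variable $u$, appeal to reflection and reification together with the congruence rule for $\tletbox$, and discharge the body by the induction hypothesis under the $p(\id)$-weakened, $u/u$-extended environment. Lining up the weakening bookkeeping in these modal cases with naturality, so that code extracted at one world behaves coherently at every future world, is the delicate part; everything else reduces to the familiar simply typed development.
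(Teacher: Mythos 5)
Your proposal is correct and follows the same overall architecture as the paper: a Kripke gluing model with monotonicity, closure under equivalence, mutually-proved reflection/reification (the paper's \Cref{lem:st:glue-mon,lem:st:glue-resp-equiv,lem:st:ne-glue,lem:st:glue-nf}), layered semantic judgments quantifying over semantically well-typed global substitutions, a fundamental theorem, and instantiation at the reflected identity environment. The one genuine divergence is the clause for $\square T$. The paper does \emph{not} define $\glu{\square T}^1$ as bare equivalence to the normal form: when the value is $\boxit{t^c}$ it additionally stores the layer-$0$ semantic judgment $\lSemtyp[\Psi][\cdot] 0 {t^c} T$ inside the relation, remarking that this information ``is necessary when we prove the semantic rule for $\tletbox$.'' You instead use plain equivalence and recover that information on demand via your ``unpacking lemma.'' This recovery does go through, but it is worth making its dependencies explicit: from $a \in \Nf^{\square T}_{\;\Phi;\Delta}$ you get $\ltyping[\Phi][\Delta] 1 a {\square T}$ by definition of the presheaf of normal forms, inversion of the $\tbox$ rule gives $\ltyping[\Phi][\cdot] 0 {t^c} T$, and then you must invoke the \emph{layer-$0$ fundamental theorem} to obtain $\lSemtyp[\Phi][\cdot] 0 {t^c} T$. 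This is non-circular only because the system is stratified --- layer-$0$ typing derivations contain no layer-$1$ subderivations, so the layer-$0$ fundamental theorem can be proved in full before the layer-$1$ one; your proof must be organized in that order, whereas the paper's storing trick lets it run a single induction over both layers at once and makes the code branch of $\tletbox$ immediate. The paper's choice also foreshadows its contextual/pattern-matching extension, where the stored layer-$0$ judgment is further enriched into an inductively defined one precisely because the $\tmatc$ rule needs syntactic structure that equivalence-based recovery would discard; your leaner definition is adequate for the present theorem but would need to be revised there.
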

The completeness theorem states that if two terms are equivalent, then they have equal
normal forms. %
The soundness theorem states that a well-typed term is equivalent to its normal
form. %
These two theorems are only concerned about layer $1$, because, again, this is the
only layer where computation actually occurs. %
Starting the next section, we will develop properties of the interpretations and
establish the target theorems. 

\subsection{Properties of Interpretations}

To establish the completeness and soundness theorems, we need the following important
lemmas:
\begin{lemma}[Naturality]\labeledit{lem:st:t-intp-nat}
  If $\ltyping[\Phi][\Delta] i t T$, $\gamma;\tau : \Psi';\Gamma' \To \Psi;\Gamma$ and
  $\sigma;\rho \in \intp{\Phi;\Delta}^i_{\;\Psi;\Gamma}$, then
  $\intp{t}^i_{\;\Psi;\Gamma}(\sigma;\rho)[\gamma;\tau] = \intp{t}^i_{\;\Psi';\Gamma'}((\sigma;\rho)[\gamma;\tau])$.
\end{lemma}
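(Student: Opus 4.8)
The plan is to prove the statement by induction on the typing derivation $\ltyping[\Phi][\Delta] i t T$, treating the two layers largely separately. The key observation is that the layer-$0$ interpretation never invokes the layer-$1$ interpretation, so naturality at layer $0$ is self-contained: I would first prove the $i=0$ case by a direct structural induction on the layer-$0$ term, and only then attack $i=1$, where the layer-$0$ result becomes available as an external lemma. Throughout, the cases that produce or consume neutrals rely on the naturality of the reflection and reification maps $\uparrow^T$ and $\downarrow^T$; since these are declared as natural transformations, I would discharge their naturality squares first by the usual mutual induction on $T$ (the only nontrivial clause being the function type, where reflect and reify swap roles).

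For the layer-$0$ induction and for all non-modal constructors at layer $1$, the argument is the standard presheaf-model calculation. The constant and successor cases are immediate; the local-variable case $x$ holds because the functorial action on $\intp{\Delta}$ is defined pointwise, so looking $x$ up in $\rho[\gamma;\tau]$ agrees with $(\rho(x))[\gamma;\tau]$; and the $\lambda$ and application cases follow by feeding the induction hypothesis through the exponential $\hfunc$, exactly as for the simply typed $\lambda$-calculus, using that $(\sigma;\rho)[\gamma;\tau]$ restricts correctly along a further weakening. The $\boxit t$ case reduces to a purely syntactic fact: weakening the normal form $\boxit{(t[\sigma])}$ by $\gamma;\tau$ leaves the local weakening $\tau$ inert (the box holds a closed core term) and applies $\gamma$ to the body, so naturality here is exactly the fusion identity $(t[\sigma])[\gamma] = t[\sigma[\gamma]]$ between global substitution and global weakening.

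The main obstacle is the layer-$1$ global-variable case $u$, which is the heart of the layered interpretation and where naive structural induction breaks. Unfolding, $\intp{u}^1_{\;\Psi;\Gamma}(\sigma;\rho) = \intp{\sigma(u)}^0_{\;\Psi;\Gamma}(\id;*)$, and the recursive call is to the interpretation of $\sigma(u)$, which is an arbitrary layer-$0$ term \emph{not} a subterm of $u$; hence the induction hypothesis of the present lemma does not apply to it. I would resolve this precisely by the layering: since the $i=0$ case has already been proved independently, I may invoke layer-$0$ naturality on $\sigma(u)$ to rewrite $\intp{\sigma(u)}^0_{\;\Psi;\Gamma}(\id;*)[\gamma;\tau]$ as $\intp{\sigma(u)}^0_{\;\Psi';\Gamma'}(\gamma;*)$. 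On the other side, $\intp{u}^1_{\;\Psi';\Gamma'}((\sigma;\rho)[\gamma;\tau]) = \intp{\sigma(u)[\gamma]}^0_{\;\Psi';\Gamma'}(\id;*)$. Matching the two therefore requires one further auxiliary lemma, a global-weakening fusion statement for the layer-$0$ interpretation, $\intp{t[\gamma]}^0(\id;*) = \intp{t}^0(\gamma;*)$ (more generally $\intp{t[\gamma]}^0(\gamma';\rho) = \intp{t}^0(\gamma\circ\gamma';\rho)$), which I would prove by induction on the layer-$0$ term $t$, the only interesting clause again being the global variable.

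Finally, the $\letbox u s t$ case carries a case split on whether $\intp{s}^1_{\;\Psi;\Gamma}(\sigma;\rho)$ is a $\boxit{t^c}$ or a neutral. I would first show the split is stable under weakening: by the induction hypothesis for $s$ together with the fact that the functorial action sends boxes to boxes and neutrals to neutrals, evaluating $s$ in the weakened environment lands in the same branch. In the box branch the result follows from the induction hypothesis for $t$ under the extended global substitution; in the neutral branch it follows by combining naturality of $\uparrow^T$ and $\downarrow^T$, the induction hypothesis for $t$ in the context extended by $u:S$, and the weakening-commutation identity expressing that $\gamma;\tau$ commutes with the shift $p(\id);\id$ used to pass under the binder in \WC. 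With these pieces assembled the two sides coincide, completing the induction.
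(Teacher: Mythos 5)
Your proposal is correct and takes essentially the same route as the paper's proof: naturality of $\uparrow^T$ and $\downarrow^T$ established first by induction on $T$, then the global-weakening fusion lemma at layer $0$ (\Cref{lem:st:gsubst-nat-gen} and its corollary \Cref{lem:st:gsubst-nat}, whose only interesting case is the global variable), and finally the main induction with the identical case analysis for $u$ at both layers, $\boxit{t}$, and the two branches of $\tletbox$. The one point where you are more explicit than the paper—proving the $i=0$ statement independently before $i=1$ so that it can be invoked on $\sigma(u)$, which is not a subterm—is exactly the stratification that justifies the paper's informal ``by IH'' step in the layer-$1$ global-variable case.
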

That is, the order of applying a weakening does not change the result.

We also need two lemmas that describe how the interpretation of terms interacts with
global and local substitutions:
\begin{lemma}[Global substitutions]\labeledit{lem:st:glob-subst}
  $\intp{t}^i_{\;\Psi;\Gamma}(\sigma \circ \sigma';\rho) = \intp{t[\sigma]}^i_{\;\Psi;\Gamma}(\sigma';\rho)$
\end{lemma}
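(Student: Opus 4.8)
The plan is to prove the identity by induction on the typing derivation $\ltyping[\Phi][\Delta] i t T$, uniformly over both layers, reading $\sigma \circ \sigma'$ as the composite that lives in $\intp{\Phi}^i_\Psi$ (substitution composition at layer $1$, weakening composition at layer $0$, where $\sigma$ is itself a weakening). The non-modal cases — $\ze$, $\su t$, local variables $x$, $\lambda$, and application — are essentially the compositionality argument for the simply typed $\lambda$-calculus: each follows by unfolding the definitions and applying the induction hypothesis. The only wrinkle is the $\lambda$ case, where I would fix a future weakening $\gamma;\tau$ and an argument $a$, push the weakening through the environment, and reduce to the induction hypothesis after noting the purely syntactic identity $(\sigma \circ \sigma')[\gamma] = \sigma \circ (\sigma'[\gamma])$, i.e. that a weakening commutes through composition on the codomain side.

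Three cases carry the real content. For $\boxit{t_0}$ at layer $1$ the left side unfolds to $\boxit{t_0[\sigma \circ \sigma']}$ and the right to $\boxit{(t_0[\sigma])[\sigma']}$, so the case collapses to the syntactic substitution-composition law $t_0[\sigma \circ \sigma'] = (t_0[\sigma])[\sigma']$, which is exactly how global composition was defined. For $\letbox u s {t'}$ at layer $1$ I would first apply the induction hypothesis to the scrutinee $s$ so that it evaluates to the same value on both sides, keeping the computation in the same branch. In the $\boxit{t^c}$ branch both sides extend the global environment, and I would appeal to the induction hypothesis on $t'$ together with the bookkeeping identity $(\sigma, u/u) \circ (\sigma', t^c/u) = (\sigma \circ \sigma', t^c/u)$, which holds because $\sigma$ never mentions the fresh variable $u$.

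The genuinely subtle point is the global-variable case. At layer $0$ it is routine: $\intp{u}^0$ acts on $u[\gamma]$ and the case is weakening composition. At layer $1$, however, $\intp{u}^1$ drops to the layer-$0$ interpretation of the closed code $\sigma(u)$ against the \emph{identity} weakening, so the case forces me to reconcile $\intp{\cdot}^0$ with $\intp{\cdot}^1$. I therefore expect to need a preliminary reconciliation lemma: for every core (layer-$0$) term $t_0$ and global substitution $\sigma'$, one has $\intp{t_0}^1(\sigma';\rho) = \intp{t_0[\sigma']}^0(\id;\rho)$, proved by a separate induction on $t_0$ whose only nontrivial case is the global variable, and which uses that the layer-$0$ interpretation of closed code is invariant under enlarging the local environment (an instance of \cref{lem:st:t-intp-nat}). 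Granting this, combined with the definition $(\sigma \circ \sigma')(u) = \sigma(u)[\sigma']$, the $u$ case at layer $1$ closes.

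Finally, the neutral branch of $\letbox$ is where I anticipate the most careful calculation, and it is the main obstacle. Both sides again wrap the same neutral $v$ (by the induction hypothesis on $s$) inside the reify/reflect context $\uparrow^T(\letbox u v {\downarrow^T(-)})$, whose body interprets $t'$ in an environment first weakened by $p(\id);\id$ and then extended by $u/u$. Closing this requires combining the induction hypothesis on $t'$ with \cref{lem:st:t-intp-nat} and the syntactic identity $(\sigma,u/u) \circ (\sigma'[p(\id)], u/u) = ((\sigma \circ \sigma')[p(\id)], u/u)$, i.e. checking that weakening the composite global substitution and then reinstating the fresh variable agree on both sides. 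Once the two environments are shown equal, the surrounding context matches definitionally and the case is done.
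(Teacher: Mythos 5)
Your proposal is correct and follows essentially the same route as the paper: the paper likewise disposes of layer $0$ via the weakening-composition lemma (\Cref{lem:st:gsubst-nat-gen}), runs the main induction at layer $1$, uses your ``reconciliation lemma'' in the form of \Cref{lem:st:intp-0-1} (stated there with a general global weakening $\gamma$ rather than $\id$), and closes the $\tbox$, $\tletbox$, and $\lambda$ cases with exactly the substitution-composition identities you give. The one point to watch is that with your $\id$-specialized bridging lemma the $\lambda$ case is not closed by the IH alone---you must instantiate the IH at $\sigma'[\gamma]$ and combine it with layer-$0$ weakening naturality (\Cref{lem:st:gsubst-nat}) and the syntactic fact $t[\sigma'][\gamma]=t[\sigma'[\gamma]]$---which is precisely why the paper builds the extra $\gamma$ into the statement of \Cref{lem:st:intp-0-1}.
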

\begin{lemma}[Local substitutions]\labeledit{lem:st:loc-subst}
  Assume $x$ is the topmost local variable, \\
  then
  $\intp{t[s/x]}^i_{\;\Psi;\Gamma}(\sigma;\rho) =
  \intp{t}^i_{\;\Psi;\Gamma}(\sigma;(\rho, \intp{s}^i_{\;\Psi;\Gamma}(\sigma;\rho)))$
\end{lemma}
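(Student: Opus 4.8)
The plan is to argue by induction on the structure of $t$ (equivalently, on the derivation of $\ltyping[\Phi][\Delta, x{:}S] i t T$), carried out uniformly in the layer $i$, by unfolding the defining equations of both $t[s/x]$ and $\intp{\_}$. The heart of the statement is the case where $t$ is the topmost variable $x$: there $x[s/x]=s$, so the left-hand side is $\intp{s}^i_{\;\Psi;\Gamma}(\sigma;\rho)$, while the right-hand side looks $x$ up in the extended environment $(\rho, \intp{s}^i_{\;\Psi;\Gamma}(\sigma;\rho))$ and returns exactly that value. For any other local variable $y$, a global variable $u$, $\ze$, and $\su t'$ the substitution leaves the relevant subterm untouched (or commutes with $\su$), and since $\intp{u}^i$ and $\intp{\ze}^i$ ignore the local environment altogether, the two sides coincide after one use of the induction hypothesis. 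The $\boxit{t'}$ case at layer $1$ is equally direct: local substitution does not enter a box, so both sides reduce to $\boxit{(t'[\sigma])}$, which depends only on the global component $\sigma$.

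The first point requiring care is the $\lambda$-case. Since $(\lambda y. t')[s/x]=\lambda y.(t'[s/x])$ pushes the substitution under the binder, the variable $x$ is no longer topmost in $t'$, so the induction hypothesis must be stated generally enough to substitute for a variable sitting below additional binders; equivalently, I prove the statement for $t$ typed in $\Phi;(\Delta, x{:}S, \Delta')$ and read the displayed lemma off as the case $\Delta'=\cdot$. With this generalization the two functions produced by $\intp{\lambda y.t'}$ are compared pointwise at an arbitrary weakening $\gamma;\tau : \Psi';\Gamma' \To \Psi;\Gamma$ and argument $a$. After unfolding, matching the two sides reduces to the identity $\intp{s}^i(\sigma;\rho)[\gamma;\tau]=\intp{s}^i((\sigma;\rho)[\gamma;\tau])$, which is exactly Naturality (\Cref{lem:st:t-intp-nat}); the functorial action on the product environment distributes over its components, so everything lines up.

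The main obstacle is the $\letbox u {s_1}{t_1}$ case at layer $1$, where the interpretation branches on the outcome of evaluating the scrutinee. Because $(\letbox u {s_1}{t_1})[s/x]=\letbox u {s_1[s/x]}{t_1[s/x]}$, I first apply the induction hypothesis to $s_1$ to see that $\intp{s_1[s/x]}^1(\sigma;\rho)$ and $\intp{s_1}^1(\sigma;(\rho,\intp{s}^1(\sigma;\rho)))$ agree, so both sides select the same subcase. In the subcase where this common value is $\boxit{t^c}$, the body is interpreted under the global environment $\sigma, t^c/u$; in the neutral subcase it is interpreted under $\sigma[p(\id)], u/u$ after the weakening $p(\id);\id$. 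In either subcase, applying the induction hypothesis to $t_1$ reconstructs in the environment the value $\intp{s}^1(\sigma, t^c/u;\rho)$ (resp. $\intp{s}^1(\sigma[p(\id)], u/u; \rho[p(\id);\id])$), whereas the right-hand side carries $\intp{s}^1(\sigma;\rho)$ (resp. its weakening). The crux is therefore to show that extending the global environment by a fresh binding for $u$ leaves $\intp{s}^1$ unchanged, since $u$ does not occur in $s$. This weakening-invariance follows from \Cref{lem:st:glob-subst} applied to the global weakening that forgets $u$ (a weakening being a special substitution), together with Naturality (\Cref{lem:st:t-intp-nat}) to absorb the accompanying local weakening. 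Once these two interpretations of $s$ are identified, the $\uparrow^T$ and $\downarrow^T$ wrappers in the neutral subcase match verbatim and the case closes.

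I expect the genuine work to be concentrated in this last case: keeping the bookkeeping of the global-environment extension straight and establishing the weakening-invariance of $\intp{s}^1$ under a fresh global binding is the one place where the argument is not a pure unfolding of definitions. The $\lambda$-case is the other spot needing thought, but there Naturality does all the work once the statement is suitably generalized, and every remaining case is routine.
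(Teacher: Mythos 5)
Your proposal is correct and takes essentially the same approach as the paper: the paper, too, generalizes the statement to substitution for a variable sitting under additional binders $\Delta'$ (packaged there as an explicit $\inser(\rho,\Delta',x,a)$ operation with its own monotonicity lemma), proceeds by induction on the typing derivation, and closes the $\lambda$-case by naturality, recovering the stated lemma as the instance $\Delta'=\cdot$. Your elaboration of the $\tletbox$ case---deriving the invariance of $\intp{s}^1$ under a fresh global binding for $u$ from \Cref{lem:st:glob-subst} (viewing the forgetting weakening as a substitution) together with \Cref{lem:st:t-intp-nat}---is a correct filling-in of a case the paper's written proof leaves implicit.
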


Let us first tackle the naturality of the interpretation. %
The naturality of the interpretation relies on the naturality of both reification and
reflection.
\begin{lemma}
  Both $\uparrow^T$ and $\downarrow^T$ are natural.
\end{lemma}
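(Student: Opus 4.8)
The plan is to prove both claims by a simultaneous induction on the structure of the type $T$ (ranging over the interpretable types, i.e.\ those with $\istype T$), unfolding in each case what naturality concretely asserts: for a weakening $\gamma;\tau : \Phi;\Delta \To \Psi;\Gamma$, that $\downarrow^T_{\;\Phi;\Delta}(a[\gamma;\tau]) = (\downarrow^T_{\;\Psi;\Gamma}(a))[\gamma;\tau]$ for every $a \in \intp{T}_{\;\Psi;\Gamma}$, and dually $\uparrow^T_{\;\Phi;\Delta}(v[\gamma;\tau]) = (\uparrow^T_{\;\Psi;\Gamma}(v))[\gamma;\tau]$ for every $v \in \Ne^T_{\;\Psi;\Gamma}$. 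Since reification and reflection call one another at the smaller type components in the function case, I would run the two inductions together.

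The base cases $T = \Nat$ and $T = \square T'$ are immediate. There $\intp{T}$ is by definition $\Nf^T$, and $\downarrow^T$ is the identity while $\uparrow^T$ is the syntactic inclusion $\Ne^T \hookrightarrow \Nf^T$. Because $\Nf^T$ and $\Ne^T$ are presheaves whose functorial action is literally the weakening of the underlying syntax, and the inclusion preserves that syntax, both naturality squares commute definitionally.

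For the function type $T = S \func T'$, reflection is the easier half. Unfolding $\uparrow^{S\func T'}$ at a further weakening $\gamma';\tau' : \Xi;\Theta \To \Phi;\Delta$ and an argument $a$, both sides reduce---using functoriality of $\Ne^{S\func T'}$, i.e.\ $(v[\gamma;\tau])[\gamma';\tau'] = v[(\gamma;\tau)\circ(\gamma';\tau')]$, together with the definition of the functorial action on the exponential $\intp{S}\hfunc\intp{T'}$---to the common expression $\uparrow^{T'}_{\;\Xi;\Theta}(v[(\gamma;\tau)\circ(\gamma';\tau')]\ \downarrow^S_{\;\Xi;\Theta}(a))$, so the equation holds without even invoking the induction hypothesis.

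The reification clause for functions is the crux, and where the real bookkeeping lives. Weakening the defining $\lambda$ on the right pushes $\tau$ under the binder as $q(\tau)$; unfolding the left instead precomposes $a$ with $\id;p(\id)$ at the extended context. Matching the two comes down to (i) naturality of $a$ as an element of the exponential; (ii) the induction hypotheses for $\downarrow^{T'}$ and $\uparrow^S$, used to push the weakening past reification and to rewrite $\uparrow^S_{\;\Psi;\Gamma,x:S}(x)[\gamma;q(\tau)] = \uparrow^S_{\;\Phi;\Delta,x:S}(x)$ (the topmost variable is preserved by $q(\tau)$); and (iii) the weakening category laws $\tau \circ p(\id) = p(\tau) = p(\id)\circ q(\tau)$ already established for \WC, which certify that the two composite weakenings feeding into $a$ coincide. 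The main obstacle is exactly this de-Bruijn-style bookkeeping---keeping the ``push under the binder'' weakening $q(\tau)$ and the shift $p(\id)$ in sync so that both composites collapse to $\gamma;p(\tau)$; once the relevant weakening identities are in hand, the remaining steps are routine equational rewriting.
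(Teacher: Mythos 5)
Your proposal is correct and follows essentially the same route as the paper's proof: mutual induction on $T$, identity/inclusion base cases for $\Nat$ and $\square T'$, reflection at function type handled by functoriality of $\Ne$ alone, and reification handled by pushing the weakening under the binder as $q(\tau)$, the two induction hypotheses (including $\uparrow^S_{\;\Psi;\Gamma,x:S}(x)[\gamma;q(\tau)] = \uparrow^S_{\;\Phi;\Delta,x:S}(x)$), and the weakening law $\tau \circ p(\id) = p(\tau)$ so that both composites feeding $a$ collapse to $\gamma;p(\tau)$. If anything, you are slightly more careful than the paper, which invokes the naturality of $a$ as an element of the exponential only implicitly.
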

\begin{proof}
  We perform induction on $T$. %
  We only need to check the function case because the other two cases are identity. %
  Let $T = S \func T'$. %
  First assume $\gamma;\tau : \Phi; \Delta \To \Psi; \Gamma$ and $a \in \intp{S \func
    T'}_{\Psi; \Gamma}$, then we need to prove
  \begin{align*}
    \downarrow^{S \func T'}_{\;\Psi;\Gamma}(a)[\gamma;\tau]
    = \downarrow^{S \func T'}_{\;\Phi;\Delta}(a[\gamma;\tau])
  \end{align*}
  We compute the left hand side:
  \begin{align*}
    \downarrow^{S \func T'}_{\;\Psi;\Gamma}(a)[\gamma;\tau]
    &= (\lambda x. \downarrow^{T'}_{\;\Psi;\Gamma, x : S}(a(\id;
      p(\id)~,~{\uparrow^S_{\;\Psi;\Gamma, x : S}\!(x)})))[\gamma; \tau] \\
    &= \lambda x. \downarrow^{T'}_{\;\Psi;\Gamma, x : S}(a(\id;
      p(\id)~,~{\uparrow^S_{\;\Psi;\Gamma, x : S}\!(x)}))[\gamma; q(\tau)] \\
    &= \lambda x. \downarrow^{T'}_{\;\Psi;\Gamma, x : S}(a(\id;
      p(\id)~,~{\uparrow^S_{\;\Psi;\Gamma, x : S}\!(x)})[\gamma; q(\tau)])
      \byIH \\
    &= \lambda x. \downarrow^{T'}_{\;\Phi;\Delta, x : S}(a(\gamma;
      p(\tau)~,~{\uparrow^S_{\;\Phi;\Delta, x : S}\!(x)}))
      \tag{Notice by IH $\uparrow^S_{\;\Psi;\Gamma, x : S}\!(x)[\gamma; q(\tau)] =
      \uparrow^S_{\;\Psi;\Gamma, x : S}\!(x[\gamma; q(\tau)]) =
      \uparrow^S_{\;\Phi;\Delta, x : S}\!(x)$} \\
    &= \lambda x. \downarrow^{T'}_{\;\Phi;\Delta, x : S}(a((\gamma; \tau) \circ (\id;
      p(\id))~,~{\uparrow^S_{\;\Phi;\Delta, x : S}\!(x)}))
      \tag{$\gamma; p(\tau) = (\gamma \circ \id); (\tau \circ p(\id)) = (\gamma; \tau)
      \circ (\id; p(\id))$} \\
    &= \lambda x. \downarrow^{T'}_{\;\Phi;\Delta, x : S}(a[\gamma; \tau](\id;
      p(\id)~,~{\uparrow^S_{\;\Phi;\Delta, x : S}\!(x)}))
      \tag{functoriality of $\intp{S \func T'}$} \\
    &= \downarrow^{S \func T'}_{\;\Phi;\Delta}(a[\gamma;\tau])
  \end{align*}

  Now we check reflection. %
  Assume $v \in \Ne^{S \func T'}_{\;\Psi;\Gamma}$, then we compute
  \begin{align*}
    \uparrow^{S \func T'}_{\;\Psi;\Gamma}(v)[\gamma;\tau]
    &= (\gamma';\tau' : \Phi';\Delta' \To \Phi;\Delta)(a)
                 \mapsto \uparrow^{T'}_{\;\Phi';\Delta'}(v[\gamma;\tau \circ
      \gamma';\tau']\ \downarrow^S_{\;\Phi';\Delta'}(a)) \\
    &= (\gamma';\tau' : \Phi';\Delta' \To \Phi;\Delta)(a)
      \mapsto \uparrow^{T'}_{\;\Phi';\Delta'}(v[\gamma;\tau][\gamma';\tau']\
      \downarrow^S_{\;\Phi';\Delta'}(a))
      \tag{by composition} \\
    &= \uparrow^{S \func T'}_{\;\Phi;\Delta}(v[\gamma;\tau])
\end{align*}
\end{proof}

Next we show that at layer $0$, the component of global weakenings itself is natural.
\begin{lemma}\labeledit{lem:st:gsubst-nat-gen}
  If $\ltyping[\Phi][\Delta] 0 t T$, $\gamma' : \Psi' \To_g \Psi$, $\gamma : \Psi \To_g \Phi$ and
  $\rho \in \intp{\Delta}_{\;\Psi';\Gamma}$, then
  $\intp{t}^0_{\;\Psi';\Gamma}(\gamma \circ \gamma';\rho) = \intp{t[\gamma]}^0_{\;\Psi';\Gamma}(\gamma';\rho)$.
\end{lemma}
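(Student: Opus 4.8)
The plan is to proceed by induction on the typing derivation $\ltyping[\Phi][\Delta] 0 t T$. Because we are at layer $0$, the invariant $\iscore T$ forbids $\square$, so $\tbox$ and $\tletbox$ never occur and only the non-modal term formers arise: $\ze$, $\su$, the two kinds of variables, $\lambda$, and application. In each case I unfold the relevant clause of $\intp{\cdot}^0$ on both sides and reduce the goal to the functoriality of global-weakening application together with the induction hypotheses on the immediate subterms.

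The base cases are direct. For $t = \ze$ both sides compute to $\ze$. For a local variable $t = x$, applying a global weakening leaves it untouched, i.e. $x[\gamma] = x$, while $\intp{x}^0$ discards its global component and returns $\rho(x)$; hence both sides equal $\rho(x)$. For a global variable $t = u$, the left side is $\uparrow^{T}_{\;\Psi';\Gamma}(u[\gamma \circ \gamma'])$ and the right side is $\uparrow^{T}_{\;\Psi';\Gamma}(u[\gamma][\gamma'])$, so the case closes once we invoke the elementary composition law $u[\gamma \circ \gamma'] = u[\gamma][\gamma']$ for weakening application on variables. The inductive cases for $\su$ and application are immediate: I unfold the clause, push the two global weakenings inside, and apply the induction hypothesis to each subterm (for application, to both the function part and the argument part, noting that the identity weakening inserted by the application clause is unaffected).

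The genuinely load-bearing case is $t = \lambda x.\, t'$, where the interpretation yields an element of the exponential rather than a plain value, so I argue by function extensionality. Fix an arbitrary weakening $\delta; \tau : \Phi_0; \Gamma_0 \To \Psi'; \Gamma$ and an argument $a \in \intp{S}_{\;\Phi_0;\Gamma_0}$, renaming the bound weakening of the $\lambda$-clause to avoid a clash with the ambient $\gamma$. Unfolding both sides and using the functorial action on layer-$0$ environments, the left side feeds $\intp{t'}^0_{\;\Phi_0;\Gamma_0}$ the global weakening $(\gamma \circ \gamma') \circ \delta$ and the right side feeds $\intp{t'[\gamma]}^0_{\;\Phi_0;\Gamma_0}$ the global weakening $\gamma' \circ \delta$, both over the extended environment $(\rho[\delta;\tau], a) \in \intp{\Delta, x : S}_{\;\Phi_0;\Gamma_0}$. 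Applying the induction hypothesis to $t'$ with the global weakenings $\gamma' \circ \delta$ and $\gamma$ rewrites the left side's weakening to $\gamma \circ (\gamma' \circ \delta)$, and the two coincide by associativity of global-weakening composition.

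The main obstacle is precisely this $\lambda$ case: keeping the two levels of weakening bookkeeping straight—the ambient $\gamma, \gamma'$ versus the extensionality witness $\delta$—and confirming that the functorial action on $\intp{\Phi;\Delta}^0$ composes global weakenings on exactly the side needed for associativity to align the two environments. The only other ingredient I rely on, the composition law $u[\gamma \circ \gamma'] = u[\gamma][\gamma']$ for variables (and, more generally, that weakening application is functorial on core terms), is a routine structural fact about the weakening calculus established alongside the category \WC.
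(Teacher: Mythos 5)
Your proof is correct and follows essentially the same route as the paper's: induction on the layer-$0$ typing derivation, where the only case requiring a real argument about weakenings is the global-variable case, discharged by the composition law $u[\gamma \circ \gamma'] = u[\gamma][\gamma']$. The paper dismisses every other case as ``immediate by IH''; your detailed treatment of the $\lambda$ case --- applying the induction hypothesis at the composed weakening $\gamma' \circ \delta$ and closing with associativity, which is precisely what the generalization over $\gamma'$ in the statement exists to enable --- simply fills in that gloss rather than departing from the paper's approach.
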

\begin{proof}
  Since we are only concerned about the global weakenings and we are at layer $0$, so
  the only relevant case is the global variable case. %
  The other cases are immediate by IH. %
  Let $t = u$. %
  \begin{align*}
    \intp{u}^0_{\;\Psi';\Gamma}(\gamma \circ \gamma';\rho)
    &= \uparrow_{\;\Psi';\Gamma}(u[\gamma \circ \gamma']) \\
    &= \uparrow_{\;\Psi';\Gamma}(u[\gamma][\gamma']) \\
    &= \intp{u[\gamma]}^0_{\;\Psi';\Gamma}(\gamma';\rho)
  \end{align*}
\end{proof}
\begin{corollary}\labeledit{lem:st:gsubst-nat}
  If $\ltyping[\Phi][\Delta] 0 t T$, $\gamma : \Psi \To_g \Phi$ and
  $\rho \in \intp{\Delta}_{\;\Psi;\Gamma}$, then
  $\intp{t}^0_{\;\Psi;\Gamma}(\gamma;\rho) = \intp{t[\gamma]}^0_{\;\Psi;\Gamma}(\id;\rho)$.
\end{corollary}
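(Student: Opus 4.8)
The plan is to obtain \Cref{lem:st:gsubst-nat} immediately as the special case $\Psi' = \Psi$, $\gamma' = \id_\Psi$ of the generalized \Cref{lem:st:gsubst-nat-gen}. First I would instantiate that lemma's ``inner'' global weakening $\gamma'$ at the identity $\id_\Psi : \Psi \To_g \Psi$. Its conclusion then reads $\intp{t}^0_{\;\Psi;\Gamma}(\gamma \circ \id_\Psi;\rho) = \intp{t[\gamma]}^0_{\;\Psi;\Gamma}(\id_\Psi;\rho)$, and its hypothesis $\rho \in \intp{\Delta}_{\;\Psi';\Gamma}$ becomes exactly the hypothesis $\rho \in \intp{\Delta}_{\;\Psi;\Gamma}$ we are given, while $\ltyping[\Phi][\Delta] 0 t T$ and $\gamma : \Psi \To_g \Phi$ are carried over verbatim.

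The only remaining thing to check is that the left-hand side simplifies to the desired form, i.e. that $\gamma \circ \id_\Psi = \gamma$. This is the (right) unit law for composition in the category of global weakenings, which is available since global weakenings form a category, as recorded earlier in the development. Rewriting the left-hand side along this equation turns the instantiated conclusion into $\intp{t}^0_{\;\Psi;\Gamma}(\gamma;\rho) = \intp{t[\gamma]}^0_{\;\Psi;\Gamma}(\id;\rho)$, which is precisely the claim. So the corollary is a one-step consequence, and the only point that needs care is the orientation of the composition so that the identity really cancels on the correct side.

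There is essentially no obstacle at the level of the corollary itself: all the substantive work has already been discharged in \Cref{lem:st:gsubst-nat-gen}, whose only non-immediate case is the global-variable case $t = u$, handled by functoriality of weakening application on variables, namely $u[\gamma \circ \gamma'] = u[\gamma][\gamma']$. Should one prefer a self-contained argument, I would instead redo that same induction on the layer-$0$ derivation $\ltyping[\Phi][\Delta] 0 t T$ with $\gamma'$ fixed to $\id$ throughout: every non-modal constructor passes through by the induction hypothesis, there are no $\tbox$ or $\tletbox$ cases at layer $0$, and the variable case reduces once more to the unit law $u[\gamma][\id] = u[\gamma]$ together with the definition $\intp{u}^0_{\;\Psi;\Gamma}(\gamma;\rho) = \uparrow^T_{\;\Psi;\Gamma}(u[\gamma])$. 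Either route makes the corollary routine, so the delicate issue to get right is the matching of the identity against the unit law, not any new inductive reasoning.
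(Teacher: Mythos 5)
Your proposal is correct and follows exactly the paper's own route: the paper also derives the corollary by instantiating \Cref{lem:st:gsubst-nat-gen} at $\gamma' = \id$ and invoking the unit law $\gamma \circ \id = \gamma$. The self-contained inductive argument you sketch as an alternative is sound but unnecessary, since the generalized lemma already carries all the work.
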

\begin{proof}
  This is because $\gamma \circ \id = \gamma$. 
\end{proof}
This corollary allows us to always set the global weakening to $\id$ by pushing the
non-$\id$ one eagerly into the term. 

Now we show that $\intp{t}^i$ is also natural.
\begin{proof}[Proof of \Cref{lem:st:t-intp-nat}]
  We do induction on $\ltyping[\Phi][\Delta] i t T$. %
  We only focus on the modal cases:
  \begin{itemize}[label=Case]
  \item $i = 0$ and $t = u$. In this case,
    \begin{align*}
      \intp{u}^0_{\;\Psi;\Gamma}(\gamma';\rho)[\gamma;\tau]
      &= \uparrow_{\;\Psi;\Gamma}(u[\gamma'])[\gamma;\tau] \\
      &= \uparrow_{\;\Psi';\Gamma'}(u[\gamma'][\gamma;\tau])
      \tag{naturality of reflection} \\
      &= \uparrow_{\;\Psi';\Gamma'}(u[\gamma'][\gamma]) \\
      &= \uparrow_{\;\Psi';\Gamma'}(u[\gamma' \circ \gamma]) \\
      &= \intp{u}^0_{\;\Psi';\Gamma'}(\gamma' \circ \gamma;\rho[\gamma;\tau]) \\
      &= \intp{u}^0_{\;\Psi';\Gamma'}((\gamma';\rho)[\gamma;\tau])
    \end{align*}

  \item $i = 1$ and $t = u$. In this case,
    \begin{align*}
      \intp{u}^1_{\;\Psi;\Gamma}(\sigma;\rho)[\gamma;\tau]
      &= \intp{\sigma(u)}^0_{\;\Psi;\Gamma}(\id; *)[\gamma;\tau] \\
      &= \intp{\sigma(u)}^0_{\;\Psi';\Gamma'}((\id; *)[\gamma;\tau])
        \byIH \\
      &= \intp{\sigma(u)}^0_{\;\Psi';\Gamma'}(\gamma; *) \\
      &= \intp{\sigma(u)[\gamma]}^0_{\;\Psi';\Gamma'}(\id; *)
      \tag{by \Cref{lem:st:gsubst-nat}} \\
      &= \intp{\sigma[\gamma](u)}^0_{\;\Psi';\Gamma'}(\id; *) \\
      &= \intp{u}^1_{\;\Psi';\Gamma'}((\sigma;\rho)[\gamma;\tau])
    \end{align*}
    
  \item
    $i = 1$ and $t = \boxit t'$. 
    \begin{align*}
      \intp{\boxit t'}^1_{\;\Psi;\Gamma}(\sigma;\rho)[\gamma;\tau]
      &= \boxit {(t'[\sigma])}[\gamma;\tau] \\
      &= \boxit {(t'[\sigma][\gamma])} \\
      &= \boxit {(t'[\sigma[\gamma]])} \\
      &= \intp{\boxit t'}^1_{\;\Psi';\Gamma'}(\sigma[\gamma]; \rho[\gamma;\tau]) \\
      &= \intp{\boxit t'}^1_{\;\Psi';\Gamma'}((\sigma;\rho)[\gamma;\tau])
    \end{align*}
    
  \item 
    $i = 1$ and $t = \letbox u s t'$ and $\intp{s}^1_{\;\Psi;\Gamma}(\sigma; \rho) =
    \boxit {t^c}$. %
    Then we have
    \begin{align*}
      \intp{s}^1_{\;\Psi;\Gamma}(\sigma; \rho)[\gamma;\tau]
      = \intp{s}^1_{\;\Psi';\Gamma'}((\sigma; \rho)[\gamma;\tau])
      = \boxit {(t^c[\gamma])}
      \byIH
    \end{align*}
    Then
    \begin{align*}
      \intp{\letbox u s t'}^1_{\;\Psi;\Gamma}(\sigma;\rho)[\gamma;\tau]
      &= \intp{t'}^1_{\;\Psi;\Gamma}(\sigma,t^c/u;\rho)[\gamma;\tau] \\
      &= \intp{t'}^1_{\;\Psi';\Gamma'}((\sigma,t^c/u;\rho)[\gamma;\tau])
      \byIH \\
      &= \intp{t'}^1_{\;\Psi';\Gamma'}(\sigma[\gamma],t^c[\gamma]/u;\rho[\gamma;\tau])
      \\
      &= \intp{\letbox u s t'}^1_{\;\Psi;\Gamma}((\sigma;\rho)[\gamma;\tau])
    \end{align*}

  \item 
    $i = 1$ and $t = \letbox u s t'$ and $\intp{s}^1_{\;\Psi;\Gamma}(\sigma; \rho) = v$. %
    Then we have
    \begin{align*}
      \intp{s}^1_{\;\Psi;\Gamma}(\sigma; \rho)[\gamma;\tau]
      = \intp{s}^1_{\;\Psi';\Gamma'}((\sigma; \rho)[\gamma;\tau])
      = v[\gamma;\tau]
      \byIH
    \end{align*}
    And $(\sigma'; \rho') := (\sigma; \rho)[p(\id); \id]$:
    \begin{align*}
      & \intp{\letbox u s t'}^1_{\;\Psi;\Gamma}(\sigma;\rho)[\gamma;\tau] \\
      =~& \uparrow^{T}_{\;\Psi;\Gamma}(\letbox u v {\downarrow^T_{\;\Psi, u: S;\Gamma}(\intp{t'}^1_{\;\Psi, u:
          S;\Gamma}(\sigma', u/u; \rho'))})[\gamma;\tau] \\
      =~& \uparrow^{T}_{\;\Psi';\Gamma'}(\letbox u v {\downarrow^T_{\;\Psi, u: S;\Gamma}(\intp{t'}^1_{\;\Psi, u:
          S;\Gamma}(\sigma', u/u; \rho'))}[\gamma;\tau])
          \tag{naturality of reflection} \\
      =~& \uparrow^{T}_{\;\Psi';\Gamma'}(\letbox u {v[\gamma;\tau]} {\downarrow^T_{\;\Psi, u: S;\Gamma}(\intp{t'}^1_{\;\Psi, u:
          S;\Gamma}(\sigma', u/u; \rho'))[q(\gamma);\tau]}) \\
      =~& \uparrow^{T}_{\;\Psi';\Gamma'}(\letbox u {v[\gamma;\tau]}
          {\downarrow^T_{\;\Psi', u: S;\Gamma'}(\intp{t'}^1_{\;\Psi, u:
          S;\Gamma}(\sigma', u/u; \rho')[q(\gamma);\tau])})
          \tag{naturality of reification} \\
      =~& \uparrow^{T}_{\;\Psi';\Gamma'}(\letbox u {v[\gamma;\tau]}
          {\downarrow^T_{\;\Psi', u: S;\Gamma'}(\intp{t'}^1_{\;\Psi', u:
          S;\Gamma'}((\sigma', u/u; \rho')[q(\gamma);\tau]))})
          \byIH \\
      =~& \uparrow^{T}_{\;\Psi';\Gamma'}(\letbox u {v[\gamma;\tau]}
          {\downarrow^T_{\;\Psi', u: S;\Gamma'}(\intp{t'}^1_{\;\Psi', u:
          S;\Gamma'}((\sigma[p(\gamma)], u/u; \rho[p(\gamma);\tau])))}) \\
      =~& \uparrow^{T}_{\;\Psi';\Gamma'}(\letbox u {v[\gamma;\tau]}
          {\downarrow^T_{\;\Psi', u: S;\Gamma'}(\intp{t'}^1_{\;\Psi', u:
          S;\Gamma'}((\sigma[\gamma \circ p(\id)], u/u; \rho[\gamma \circ p(\id);\tau
          \circ \id])))}) \\
      =~& \uparrow^{T}_{\;\Psi';\Gamma'}(\letbox u {v[\gamma;\tau]}
          {\downarrow^T_{\;\Psi', u: S;\Gamma'}(\intp{t'}^1_{\;\Psi', u:
          S;\Gamma'}((\sigma[\gamma][p(\id)], u/u; \rho[\gamma; \tau][p(\id);\id])))}) \\
      =~& \intp{\letbox u s t'}^1_{\;\Psi';\Gamma'}((\sigma;\rho)[\gamma;\tau])
    \end{align*}
  \end{itemize}
\end{proof}

We have established the naturality of interpretation. %
Next we establish \Cref{lem:st:glob-subst}. %
\Cref{lem:st:gsubst-nat-gen} has shown the case where $i = 0$, so we are only
concerned about the case where $i = 1$. %
We first work on the following lemma which bridges the interpretation at both layers:
\begin{lemma}\labeledit{lem:st:intp-0-1}
  If $\ltyping[\Phi][\Delta] 0 t T$, $\typing[\Psi]\sigma\Phi$, $\gamma : \Psi' \To_g \Psi$ and $\rho \in
  \intp{\Delta}_{\;\Psi';\Gamma}$, then
  $\intp{t[\sigma]}^0_{\;\Psi';\Gamma}(\gamma; \rho) = \intp{t}^1_{\;\Psi';\Gamma}(\sigma[\gamma]; \rho)$.
\end{lemma}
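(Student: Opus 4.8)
The plan is to proceed by induction on the derivation of $\ltyping[\Phi][\Delta] 0 t T$. Because $t$ lives at layer $0$, the derivation can only use the non-modal constructors ($\ze$, $\tsucc$, local and global variables, abstraction, application); the $\tbox$ and $\tletbox$ cases are vacuous, which is exactly why interpreting $t$ at either layer makes sense. The cases for $\ze$, $\su{t'}$, local variables, and application are routine. The constant and local-variable cases are immediate since global substitution leaves them fixed and the layered interpretations agree on them; $\su{t'}$ is congruence over the induction hypothesis; and for application $t = t_1\ t_2$ the induction hypothesis gives that both operator and operand interpretations coincide across the two layers, so feeding the agreeing argument into the agreeing function (via the application clause of $\intp{\_}$) yields equal results.

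The genuinely interesting case is the global variable $t = u$, where the layer boundary is actually crossed. First I would note $u[\sigma] = \sigma(u)$, so the left-hand side is $\intp{\sigma(u)}^0_{\;\Psi';\Gamma}(\gamma;\rho)$. On the right, unfolding the layer-$1$ clause for a global variable gives $\intp{u}^1_{\;\Psi';\Gamma}(\sigma[\gamma];\rho) = \intp{(\sigma[\gamma])(u)}^0_{\;\Psi';\Gamma}(\id;\ast) = \intp{\sigma(u)[\gamma]}^0_{\;\Psi';\Gamma}(\id;\ast)$, using $(\sigma[\gamma])(u) = \sigma(u)[\gamma]$. Now \Cref{lem:st:gsubst-nat}, applied to the layer-$0$ term $\sigma(u)$ and the weakening $\gamma : \Psi' \To_g \Psi$, rewrites the left-hand side as $\intp{\sigma(u)[\gamma]}^0_{\;\Psi';\Gamma}(\id;\rho)$; and since global substitutions store layer-$0$ terms typed in the empty local context, $\sigma(u)$ mentions no local variable, so its interpretation is independent of the local environment and $\rho$ may be replaced by $\ast$. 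The two sides then coincide.

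For the abstraction case $t = \lambda x. t'$, I would unfold both sides into semantic functions and test them against an arbitrary weakening $\gamma_0;\tau_0 : \Phi_0;\Delta_0 \To \Psi';\Gamma$ and argument $a \in \intp{S}_{\;\Phi_0;\Delta_0}$. On the left, $(\lambda x. t')[\sigma] = \lambda x. (t'[\sigma])$ is interpreted at layer $0$, so the functorial action on $\intp{\Phi;\Delta}^0$ feeds the body the environment $(\gamma \circ \gamma_0;\,(\rho[\gamma_0;\tau_0], a))$; on the right the layer-$1$ functorial action feeds it $(\sigma[\gamma][\gamma_0];\,(\rho[\gamma_0;\tau_0], a))$. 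Using functoriality of the weakening action on global substitutions, $\sigma[\gamma][\gamma_0] = \sigma[\gamma \circ \gamma_0]$, the goal reduces to exactly the induction hypothesis for $t'$ (typed in $\Phi;\Delta, x{:}S$), instantiated with the same $\sigma$, the composed global weakening $\gamma \circ \gamma_0 : \Phi_0 \To_g \Psi$, and the extended local environment $(\rho[\gamma_0;\tau_0], a)$.

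I expect the global variable case to be the crux: it is the only place where the switch from the layer-$1$ to the layer-$0$ interpretation happens, and making it go through hinges on \Cref{lem:st:gsubst-nat} to absorb $\gamma$ into the term, together with the local-variable irrelevance of $\sigma(u)$ that lets $\rho$ collapse to $\ast$. The abstraction case is mechanical but requires the bookkeeping identity $\sigma[\gamma][\gamma_0] = \sigma[\gamma\circ\gamma_0]$ to line the two environments up with the induction hypothesis; the remaining cases demand nothing beyond direct appeals to the induction hypothesis.
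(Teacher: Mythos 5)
Your proposal is correct and follows essentially the same route as the paper's proof: induction on the layer-$0$ typing derivation, with the global-variable case discharged by \Cref{lem:st:gsubst-nat} plus the closedness of $\sigma(u)$ (so $\rho$ collapses to $\ast$), and the abstraction case handled by the identity $\sigma[\gamma][\gamma_0] = \sigma[\gamma\circ\gamma_0]$ to line up the environments with the induction hypothesis. The only difference is presentational (you meet in the middle from both sides rather than chaining equalities left to right), which does not change the argument.
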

\begin{proof}
  We proceed by induction on $\ltyping[\Phi][\Delta] 0 t T$. %
  We only consider significant cases:
  \begin{itemize}[label=Case]
  \item $t = u$, then
    \begin{align*}
      \intp{u[\sigma]}^0_{\;\Psi';\Gamma}(\gamma; \rho)
      &= \intp{\sigma(u)}^0_{\;\Psi';\Gamma}(\gamma; \rho) \\
      &= \intp{\sigma(u)[\gamma]}^0_{\;\Psi';\Gamma}(\id; \rho)
        \tag{by \Cref{lem:st:gsubst-nat}} \\
      &= \intp{\sigma(u)[\gamma]}^0_{\;\Psi';\Gamma}(\id; *)
        \tag{we know $\sigma(u)[\gamma]$ is a closed term} \\
      &= \intp{u}^1_{\;\Psi';\Gamma}(\sigma[\gamma]; \rho)
    \end{align*}
  \item $t = x$, then it is immediate because they both look up the same $\rho$. 
  \item $t = \lambda x. t'$, then
    \begin{align*}
      & \intp{\lambda x. t'[\sigma]}^0_{\;\Psi';\Gamma}(\gamma;\rho) \\
      =~& (\gamma';\tau : \Psi'';\Gamma'' \To \Psi';\Gamma)(a) \mapsto
          \intp{t'[\sigma]}^0_{\;\Psi'';\Gamma''} (\gamma \circ \gamma';
          (\rho[\gamma'; \tau], a)) \\
      =~& (\gamma';\tau : \Psi'';\Gamma'' \To \Psi';\Gamma)(a) \mapsto
          \intp{t'}^1_{\;\Psi'';\Gamma''} (\sigma[\gamma \circ \gamma'];
          (\rho[\gamma'; \tau], a))
          \byIH \\
      =~& (\gamma';\tau : \Psi'';\Gamma'' \To \Psi';\Gamma)(a) \mapsto
          \intp{t'}^1_{\;\Psi'';\Gamma''} (\sigma[\gamma][\gamma'];
          (\rho[\gamma'; \tau], a)) \\
      =~& \intp{\lambda x. t'}^1_{\;\Psi';\Gamma}(\sigma[\gamma];\rho) 
    \end{align*}
  \item $t = t'\ s$, immediate by IH. 
  \end{itemize}
\end{proof}

\begin{lemma}
  If $\ltyping[\Phi][\Delta] 1 t T$, $\typing[\Psi']{\sigma'}\Psi$, $\typing[\Psi]{\sigma}\Phi$ and
  $\rho \in \intp{\Delta}_{\;\Psi';\Gamma}$, then
  $\intp{t}^1_{\;\Psi';\Gamma}(\sigma \circ \sigma';\rho) = \intp{t[\sigma]}^1_{\;\Psi';\Gamma}(\sigma';\rho)$.
\end{lemma}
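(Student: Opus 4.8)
The plan is to proceed by induction on the derivation of $\ltyping[\Phi][\Delta] 1 t T$, in the same spirit as the proof of \Cref{lem:st:t-intp-nat}. The non-modal cases — $\ze$, $\tsucc$, local variables, $\lambda$, and application — all reduce to the induction hypotheses together with the definitional action of a global substitution on the corresponding constructor. The only one among these needing care is $\lambda$: after unfolding both interpretations at an arbitrary weakening $\gamma;\tau$ and argument $a$, one must rewrite $(\sigma \circ \sigma')[\gamma]$ as $\sigma \circ (\sigma'[\gamma])$ before the induction hypothesis for the body applies. I would isolate the identity $(\sigma \circ \sigma')[\gamma] = \sigma \circ (\sigma'[\gamma])$ as a small preliminary fact; it follows from the routine interaction $t[\sigma'][\gamma] = t[\sigma'[\gamma]]$ of substitution with weakening.

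The global variable case $t = u$ is where the layering is exploited. The left-hand side unfolds to $\intp{(\sigma \circ \sigma')(u)}^0_{\;\Psi';\Gamma}(\id;*) = \intp{\sigma(u)[\sigma']}^0_{\;\Psi';\Gamma}(\id;*)$, whereas the right-hand side is $\intp{\sigma(u)}^1_{\;\Psi';\Gamma}(\sigma';\rho)$. These coincide by \Cref{lem:st:intp-0-1} instantiated at $\gamma = \id$, once we observe that $\sigma(u)[\sigma']$ is closed, so that the choice of local environment ($*$ versus $\rho$) is immaterial. The case $t = \boxit{t'}$ is purely syntactic: both sides are boxed and the obligation becomes $t'[\sigma \circ \sigma'] = t'[\sigma][\sigma']$, the composition law for global substitutions from the substitution calculus.

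For $t = \letbox u s {t'}$ I would split on the value of $\intp{s}^1(\sigma \circ \sigma';\rho)$, following the two clauses of its interpretation; in either subcase the induction hypothesis for $s$ first shows that $\intp{s[\sigma]}^1(\sigma';\rho)$ produces the same value. If that value is $\boxit{t^c}$, the two sides reduce to $\intp{t'}^1(\sigma \circ \sigma',\,t^c/u;\rho)$ and $\intp{t'[\sigma,u/u]}^1(\sigma',\,t^c/u;\rho)$, and the induction hypothesis for $t'$ equates them provided we check the global-substitution identity $(\sigma,u/u) \circ (\sigma',t^c/u) = (\sigma \circ \sigma'),\,t^c/u$. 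This holds because $u$ does not occur in the codes of $\sigma$, so the trailing $t^c/u$ is inert on those entries, while $u[\sigma',t^c/u] = t^c$ supplies the last slot.

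The main obstacle is the subcase where $s$ evaluates to a neutral $v$. Here each side is assembled from $\uparrow^T$, $\downarrow^T$, the weakening $p(\id)$ opening a fresh global assumption $u : S$, and the interpretation of $t'$ over $\Psi',u:S$. To match them I would push $p(\id)$ inward using the naturality of reflection, of reification, and of $\intp{t'}$ (\Cref{lem:st:t-intp-nat}), exactly as in the corresponding case of that proof. The delicate bookkeeping is to reconcile $(\sigma \circ \sigma')[p(\id)]$ with $\sigma \circ (\sigma'[p(\id)])$ and to verify that extending both substitutions by $u/u$ preserves the composition identity, so that the induction hypothesis for $t'$ applies under the opened context. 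Once these weakening-composition identities are settled, the two nested applications of $\uparrow^T$ and $\downarrow^T$ agree termwise.
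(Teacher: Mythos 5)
Your proposal matches the paper's proof essentially step for step: induction on the typing derivation, the global-variable case discharged by \Cref{lem:st:intp-0-1} instantiated at $\gamma = \id$ together with the closedness observation, the $\boxit{t'}$ case by composition of global substitutions, the $\lambda$ case via $(\sigma \circ \sigma')[\gamma] = \sigma \circ (\sigma'[\gamma])$, and both $\tletbox$ subcases by the substitution-composition identities you state (e.g.\ $(\sigma[p(\id)], u/u) \circ (\sigma', t^c/u) = (\sigma \circ \sigma'), t^c/u$) followed by the induction hypothesis. The only inessential deviation is in the neutral $\tletbox$ subcase, where no appeal to naturality of reflection or reification is actually needed --- unlike in \Cref{lem:st:t-intp-nat}, no outer weakening is applied to the reflected neutral, so the composition identities you already identify, plus the induction hypothesis under the extended global context, suffice, which is exactly what the paper does.
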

\begin{proof}
  We proceed by induction on $\ltyping[\Phi][\Delta] 1 t T$. %
  We only consider the cases that are not immediate.
  \begin{itemize}[label=Case]
  \item $t = u$, we compute
    \begin{align*}
      \intp{u}^1_{\;\Psi';\Gamma}(\sigma \circ \sigma';\rho)
      &= \intp{(\sigma \circ \sigma')(u)}^0_{\;\Psi';\Gamma}(\id; *) \\
      &= \intp{\sigma(u)[\sigma']}^0_{\;\Psi';\Gamma}(\id; *) \\
      &= \intp{\sigma(u)}^1_{\;\Psi';\Gamma}(\sigma';\rho)
        \tag{by \Cref{lem:st:intp-0-1}}
    \end{align*}
    
  \item $t = \boxit t'$, we also compute
    \begin{align*}
      \intp{\boxit t'}^1_{\;\Psi';\Gamma}(\sigma \circ \sigma';\rho)
      &= \boxit {(t'[\sigma \circ \sigma'])} \\
      &= \boxit {(t'[\sigma][\sigma'])} \\
      &= \intp{\boxit t'[\sigma]}^1_{\;\Psi';\Gamma}(\sigma';\rho)
    \end{align*}

  \item $t = \letbox u s t'$ and assume
    $\intp{s}^1_{\;\Psi';\Gamma}(\sigma \circ \sigma'; \rho) = \boxit {t^c}$. %
    By IH, we have
    \begin{align*}
      \intp{s}^1_{\;\Psi';\Gamma}(\sigma \circ \sigma'; \rho) = \intp{s[\sigma]}^1_{\;\Psi';\Gamma}(\sigma'; \rho) = \boxit {t^c}
    \end{align*}
    Then we compute
    \begin{align*}
      \intp{\letbox u s t'}^1_{\;\Psi';\Gamma}(\sigma \circ \sigma';\rho)
      &= \intp{t'}^1_{\;\Psi';\Gamma}((\sigma \circ \sigma'), t^c/u;\rho) \\
      &= \intp{t'}^1_{\;\Psi';\Gamma}((\sigma[p(\id)], u/u) \circ (\sigma', t^c/u);\rho) \\
      &= \intp{t'[\sigma[p(\id)], u/u]}^1_{\;\Psi';\Gamma}(\sigma', t^c/u;\rho)
        \byIH \\
      &= \intp{\letbox u s t'[\sigma]}^1_{\;\Psi';\Gamma}(\sigma';\rho)
    \end{align*}
    We notice that
    \begin{align*}
      \letbox u s t'[\sigma]
      = \letbox u {s[\sigma]} {(t'[\sigma[p(\id)], u/u])}
    \end{align*}
    
  \item $t = \letbox u s t'$ and assume
    $\intp{s}^1_{\;\Psi';\Gamma}(\sigma \circ \sigma'; \rho) = v : \square S$. %
    By IH, we have
    \begin{align*}
      \intp{s}^1_{\;\Psi';\Gamma}(\sigma \circ \sigma'; \rho) = \intp{s[\sigma]}^1_{\;\Psi';\Gamma}(\sigma'; \rho) = v : \square S
    \end{align*}
    Then we compute
    \begin{align*}
      & \intp{\letbox u s t'}^1_{\;\Psi';\Gamma}(\sigma \circ \sigma';\rho) \\
      =~& \uparrow^{T}_{\;\Psi';\Gamma}(\letbox u v {\downarrow^T_{\;\Psi', u: S;\Gamma}(\intp{t}^1_{\;\Psi', u:
      S;\Gamma}((\sigma \circ \sigma')[p(\id)], u/u; \rho[p(\id); \id]))}) \\
      =~& \uparrow^{T}_{\;\Psi';\Gamma}(\letbox u v {\downarrow^T_{\;\Psi', u: S;\Gamma}(\intp{t}^1_{\;\Psi', u:
      S;\Gamma}(\sigma \circ (\sigma'[p(\id)]), u/u; \rho[p(\id); \id]))}) \\
      =~& \uparrow^{T}_{\;\Psi';\Gamma}(\letbox u v {\downarrow^T_{\;\Psi', u: S;\Gamma}(\intp{t}^1_{\;\Psi', u:
      S;\Gamma}((\sigma, u/u) \circ (\sigma'[p(\id)], u/u); \rho[p(\id); \id]))}) \\
      =~& \uparrow^{T}_{\;\Psi';\Gamma}(\letbox u v {\downarrow^T_{\;\Psi', u: S;\Gamma}(\intp{t[\sigma, u/u]}^1_{\;\Psi', u:
          S;\Gamma}(\sigma'[p(\id)], u/u; \rho[p(\id); \id]))})
          \byIH \\
      =~& \intp{\letbox u s t'[\sigma]}^1_{\;\Psi';\Gamma}(\sigma';\rho)
    \end{align*}
    
  \item $t = \lambda x. t'$, we compute
    \begin{align*}
      & \intp{\lambda x. t'}^i_{\;\Psi';\Gamma}(\sigma \circ \sigma';\rho) \\
      =~& (\gamma;\tau : \Psi'';\Gamma'' \To \Psi';\Gamma)(a) \mapsto
          \intp{t'}^i_{\;\Psi'';\Gamma''} ((\sigma \circ \sigma')[\gamma];
          (\rho[\gamma; \tau], a)) \\
      =~& (\gamma;\tau : \Psi'';\Gamma'' \To \Psi';\Gamma)(a) \mapsto
          \intp{t'}^i_{\;\Psi'';\Gamma''} ((\sigma \circ \sigma'[\gamma]);
          (\rho[\gamma; \tau], a))
          \tag{notice $(\sigma \circ \sigma')[\gamma] = \sigma \circ \sigma'[\gamma]$} \\
      =~& (\gamma;\tau : \Psi'';\Gamma'' \To \Psi';\Gamma)(a) \mapsto
          \intp{t'[\sigma]}^i_{\;\Psi'';\Gamma''} (\sigma'[\gamma];
          (\rho[\gamma; \tau], a))
          \byIH \\
      =~& \intp{\lambda x. t'[\sigma]}^i_{\;\Psi';\Gamma}(\sigma';\rho)
    \end{align*}
  \end{itemize}
\end{proof}

\subsection{Interactions with Local Substitutions}

We have one last theorem remains, which describes the interactions between evaluation
and local substitutions. %
A similar proof is done for the presheaf model
in~\citet{DBLP:journals/corr/abs-2206-07823}, so we only sketch the essence in this
section. 
\begin{definition}
  Given $\rho \in \intp{\Delta, \Delta'}_{\;\Psi;\Gamma}$ and
  $a \in \intp{T}_{\;\Psi;\Gamma}$, then
  $\inser(\rho, \Delta', x, a) \in \intp{\Delta, x : T, \Delta'}_{\;\Psi;\Gamma}$ is defined as
  follows:
  \begin{align*}
    \inser(\rho, \cdot, x, a) &:= \rho, a \\
    \inser((\rho, b), (\Delta'', y : S), x, a) &:= \inser(\rho, \Delta'', x, a), b
  \end{align*}
\end{definition}

The $\inser$ function is monotonic w.r.t. weakenings.
\begin{lemma}
  $\inser(\rho, \Delta', x, a)[\gamma; \tau] = \inser(\rho[\gamma; \tau], \Delta', x,
  a[\gamma; \tau])$
\end{lemma}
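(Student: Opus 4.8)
The plan is to proceed by a straightforward structural induction on the context $\Delta'$, which simultaneously decomposes the environment $\rho$, since by the definition of the interpretation of local contexts $\rho \in \intp{\Delta, \Delta'}_{\;\Psi;\Gamma}$ is a nested tuple whose outermost components correspond exactly to the variables of $\Delta'$. The one fact I would record at the outset is that the functorial action $[\gamma;\tau]$ on an environment is computed componentwise: from $\intp{\Delta, y : S}_{\;\Psi;\Gamma} := \intp{\Delta}_{\;\Psi;\Gamma} \times \intp{S}_{\;\Psi;\Gamma}$ we have $(\rho, b)[\gamma;\tau] = (\rho[\gamma;\tau], b[\gamma;\tau])$. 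Both sides of the claimed equation are then built from the same two operations, namely $\inser$ and the componentwise action, so the proof reduces to showing that these two commute.

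For the base case $\Delta' = \cdot$, I would simply unfold both sides. The left-hand side is $(\rho, a)[\gamma;\tau] = (\rho[\gamma;\tau], a[\gamma;\tau])$ by the componentwise action, while the right-hand side is $\inser(\rho[\gamma;\tau], \cdot, x, a[\gamma;\tau]) = (\rho[\gamma;\tau], a[\gamma;\tau])$ directly from the first defining equation of $\inser$. These coincide.

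For the inductive case $\Delta' = \Delta'', y : S$, the environment has the shape $\rho = (\rho_0, b)$ with $\rho_0 \in \intp{\Delta, \Delta''}_{\;\Psi;\Gamma}$ and $b \in \intp{S}_{\;\Psi;\Gamma}$. I would compute the left-hand side by first applying the second defining equation of $\inser$ to obtain $(\inser(\rho_0, \Delta'', x, a), b)$, then applying $[\gamma;\tau]$ componentwise to get $(\inser(\rho_0, \Delta'', x, a)[\gamma;\tau],\, b[\gamma;\tau])$, and finally rewriting the first component with the induction hypothesis to $(\inser(\rho_0[\gamma;\tau], \Delta'', x, a[\gamma;\tau]),\, b[\gamma;\tau])$. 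For the right-hand side, I would push $[\gamma;\tau]$ through $\rho = (\rho_0, b)$ first, giving $\inser((\rho_0[\gamma;\tau], b[\gamma;\tau]), \Delta'', x, a[\gamma;\tau])$, and then unfold $\inser$ once more to reach the same tuple. The two computations land on identical expressions, closing the case.

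I do not expect a genuine obstacle here: the statement is a routine naturality-style commutation, and the only point requiring a little care is keeping the two layers of structure straight, since the action on the interpretation of a context is entirely componentwise and never interacts with the insertion position, so $a[\gamma;\tau]$ simply rides along at the slot where $x$ is inserted. This monotonicity fact is exactly what is needed to make the local-substitution lemma (\Cref{lem:st:loc-subst}) go through when the substituted variable sits beneath additional binders, which is presumably why it is isolated at this point.
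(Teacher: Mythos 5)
Your proof is correct and follows essentially the same route as the paper: induction on $\Delta'$, using the componentwise functorial action on environments together with the defining equations of $\inser$, with the inductive hypothesis applied to the first component in the step case. The only difference is that you spell out the base case, which the paper dismisses as immediate.
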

\begin{proof}
  Induction on $\Delta'$.
  \begin{itemize}[label=Case]
  \item $\Delta' = \cdot$, immediate.
  \item $\Delta' = \Delta'', y : S$,
    \begin{align*}
      \inser((\rho, b), (\Delta'', y : S), x, a)[\gamma;\tau]
      &= (\inser(\rho, \Delta'', x, a), b)[\gamma;\tau] \\
      &= \inser(\rho, \Delta'', x, a)[\gamma;\tau], b[\gamma;\tau] \\
      &= \inser(\rho[\gamma;\tau], \Delta'', x, a[\gamma;\tau]), b[\gamma;\tau]
        \byIH \\
      &= \inser(\rho[\gamma;\tau], (\Delta'', y : S), x, a[\gamma;\tau])
    \end{align*}
  \end{itemize}
\end{proof}

\begin{lemma}
  If $\ltyping[\Phi][\Delta, x : S, \Delta'] i t T$ and $\ltyping[\Phi][\Delta,
  \Delta'] i s S$, then
  $\intp{t[s/x]}^i_{\;\Psi;\Gamma}(\sigma; \rho) = \intp{t}^i_{\;\Psi;\Gamma}(\sigma; \inser(\rho, \Delta', x,
  \intp{s}^i_{\;\Psi;\Gamma}(\sigma; \rho)))$
\end{lemma}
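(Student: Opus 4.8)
The plan is to induct on the derivation of $\ltyping[\Phi][\Delta, x : S, \Delta'] i t T$, keeping the suffix context $\Delta'$ general precisely so that the induction hypothesis stays applicable underneath binders; the special case $\Delta' = \cdot$ then recovers \Cref{lem:st:loc-subst}. Throughout I abbreviate $a := \intp{s}^i_{\;\Psi;\Gamma}(\sigma;\rho)$ and $\rho^\star := \inser(\rho, \Delta', x, a)$, and I first record the elementary fact that looking up $x$ in $\rho^\star$ returns $a$, while looking up any other variable of $\Delta, \Delta'$ in $\rho^\star$ returns the same value as in $\rho$; this follows by induction on $\Delta'$ straight from the definition of $\inser$.

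The routine cases come first. For $t = \ze$, $t = \su t'$, and $t = t_1\,t_2$ the claim is immediate from the induction hypotheses, since the same $\sigma;\rho$ and $\rho^\star$ are threaded to every subterm. The global-variable case $t = u$ is trivial, because neither $\intp{u}^0$ nor $\intp{u}^1$ consults the local environment and $u[s/x] = u$. The local-variable cases split into $t = x$, where $t[s/x] = s$ and both sides equal $a$ by the lookup fact, and $t = y$ with $y \neq x$, where $t[s/x] = y$ and the lookup fact gives $\rho^\star(y) = \rho(y)$. Finally $t = \boxit{t'}$ (only at layer $1$) is vacuous: the introduction rule types $t'$ in the empty local context, so $t'[s/x] = t'$ and $\intp{\boxit{t'}}^1(\sigma;\rho) = \boxit{(t'[\sigma])}$ never mentions $\rho$.

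The first genuinely interesting case is $t = \lambda y. t'$. Here $t'$ is typed in the context $\Delta, x : S, (\Delta', y : S')$, so the induction hypothesis applies with the \emph{extended} suffix $\Delta', y : S'$ -- exactly where generality in $\Delta'$ is used. After the semantic function receives a weakening $\gamma;\tau$ and an argument, I rewrite $a[\gamma;\tau] = \intp{s}^i((\sigma;\rho)[\gamma;\tau])$ by \Cref{lem:st:t-intp-nat} and push the weakening through $\inser$ via its monotonicity lemma, which realigns the inserted environment with the one produced by the induction hypothesis at the extended context; the two sides then coincide after these rewrites.

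The main obstacle is $t = \letbox u {s_1} {t_1}$, which occurs only at layer $1$ and itself splits on the value of the scrutinee. Applying the induction hypothesis to $s_1$ shows $\intp{s_1[s/x]}^1(\sigma;\rho) = \intp{s_1}^1(\sigma;\rho^\star)$, so both computations branch in the same way. In the box-subcase $\intp{s_1}^1(\sigma;\rho^\star) = \boxit{t^c}$ the global environment is extended to $\sigma, t^c/u$; I reapply the induction hypothesis to $t_1$ and must show $\intp{s}^i(\sigma, t^c/u; \rho) = \intp{s}^i(\sigma;\rho)$, i.e. that the dummy global binding $u$ does not affect $s$, which holds since $s$ does not mention $u$ (a global-weakening invariance derivable from \Cref{lem:st:glob-subst} and naturality). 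In the neutral-subcase the interpretation weakens the global context by $p(\id);\id$, reifies the branch, reflects the result, and binds $u$; here I combine the naturality of reflection and reification, \Cref{lem:st:t-intp-nat} applied to $\intp{s}$, and the monotonicity of $\inser$ to prove that $\inser(\rho,\Delta',x,a)$ weakened by $p(\id);\id$ equals $\inser$ of the weakened environment with $a$ correctly transported, after which the induction hypothesis on $t_1$ closes the case. This subcase carries the heaviest bookkeeping, and it is the reason the lemma must be stated for an arbitrary suffix $\Delta'$ rather than only for the topmost variable.
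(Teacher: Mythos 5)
Your proof is correct and takes essentially the same route as the paper's: induction on the typing derivation with the suffix $\Delta'$ kept general, the key binder case $\lambda$ being discharged by the induction hypothesis at the extended suffix together with naturality (\Cref{lem:st:t-intp-nat}) and monotonicity of $\inser$; you moreover work out the $\tletbox$, global-variable, and $\tbox$ cases that the paper's sketch (only $x$, $y \neq x$, and $\lambda$) leaves to the reader, correctly isolating the global-weakening invariance of $\intp{s}$ needed when the global context is extended by $t^c/u$ or $u/u$. One slip in the narration: it is the $\lambda$ case --- as you yourself say earlier --- and not the neutral $\tletbox$ subcase that forces stating the lemma for an arbitrary suffix $\Delta'$, since only $\lambda$ extends the local context, whereas $\tletbox$ merely extends the global one.
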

\begin{proof}
  We proceed by induction on $\ltyping[\Phi][\Delta, x : S, \Delta'] i t T$.
  \begin{itemize}[label=Case]
  \item $t = x$, immediate.
  \item $t = y$ and $x \neq y$, immediate.
  \item $t = \lambda y : S'. t'$, then
    \begin{align*}
      & \intp{\lambda y. t'[s/x]}^i_{\;\Psi';\Gamma}(\sigma;\rho) \\
      =~& (\gamma;\tau : \Psi';\Gamma' \To \Psi;\Gamma)(a) \mapsto
          \intp{t'[s/x]}^i_{\;\Psi';\Gamma'} (\sigma[\gamma];
          (\rho[\gamma; \tau], a)) \\
      =~& (\gamma;\tau : \Psi';\Gamma' \To \Psi;\Gamma)(a) \mapsto
          \intp{t'}^i_{\;\Psi';\Gamma'} (\sigma[\gamma];
          \inser((\rho[\gamma; \tau], a), (\Delta', y : S'), x, \intp{s}^i_{\;\Psi';\Gamma}(\sigma[\gamma];
          (\rho[\gamma; \tau], a))))
          \byIH \\
      =~& (\gamma;\tau : \Psi';\Gamma' \To \Psi;\Gamma)(a) \mapsto
          \intp{t'}^i_{\;\Psi';\Gamma'} (\sigma[\gamma];
          \inser(\rho[\gamma; \tau], \Delta', x, \intp{s}^i_{\;\Psi';\Gamma}(\sigma[\gamma];
          (\rho[\gamma; \tau], a))), a) \\
      =~& (\gamma;\tau : \Psi';\Gamma' \To \Psi;\Gamma)(a) \mapsto
          \intp{t'}^i_{\;\Psi';\Gamma'} (\sigma[\gamma];
          \inser(\rho[\gamma; \tau], \Delta', x, \intp{s}^i_{\;\Psi';\Gamma}((\sigma;
          \rho)[\gamma; \tau])), a)
          \tag{$s$ was locally weakening when inside of a $\lambda$} \\
      =~& (\gamma;\tau : \Psi';\Gamma' \To \Psi;\Gamma)(a) \mapsto
          \intp{t'}^i_{\;\Psi';\Gamma'} (\sigma[\gamma];
          \inser(\rho, \Delta', x, \intp{s}^i_{\;\Psi';\Gamma}(\sigma;
          \rho))[\gamma; \tau], a)
          \tag{monotonicity of $\inser$} \\
      =~& \intp{\lambda y. t'}^i_{\;\Psi;\Gamma}(\sigma; \inser(\rho, \Delta', x,
          \intp{s}^i(\sigma; \rho)))
    \end{align*}
  \end{itemize}
\end{proof}

Now we can conclude our target lemma:
\begin{proof}[Proof of \Cref{lem:st:loc-subst}]
  Let $\Delta'$ be empty and immediate from the previous lemma.
\end{proof}

\subsection{Semantic Judgments and Completeness}

Now we have established all desired properties of the interpretations. %
In this section, we establish \Cref{thm:st:compl} i.e. the completeness theorem. %
In the proof of the completeness theorem, we must establish the relation between
evaluations of two equivalent terms. %
Luckily, in the presheaf model, this is very easy and this relation can just be
defined as point-wise equality in the semantics as we will see very soon. %
More specifically, the semantic judgments must also be layered due to the behavioral
differences between layers.
\begin{mathpar}
  \inferrule*
  {\ltyping 0 s T \\ \ltyping 0 t T \\\\ s = t}
  {\lsemtyeq 0{s}{t}{T}}

  \inferrule*
  {\ltyping 1 s T \\ \ltyping 1 t T \\\\
    \forall (\sigma; \rho) \in \intp{\Psi; \Gamma}^1_{\;\Phi;\Delta}.
    \intp{s}^1_{\;\Phi;\Delta}(\sigma;\rho) = \intp{t}^1_{\;\Phi;\Delta}(\sigma;\rho)}
  {\lsemtyeq 1{s}{t}{T}}

  \inferrule*
  {\lsemtyeq i{t}{t}{T}}
  {\lsemtyp i t T}
\end{mathpar}

The semantic equivalence at layer $0$ is straightforward: it simply requires
the terms are syntactically equal. %
This makes much sense because terms at layer $0$ are just code and they do not have
dynamics. %
At layer $1$, due to dynamics, the semantic equivalence is more typical. %
It requires two terms are equal when evaluated in the same environment. %
Our target theorem is
\begin{theorem}[Fundamental]\labeledit{thm:st:comp-fund} $ $
  \begin{itemize}
  \item If $\ltyping i t T$, then $\lsemtyp i t T$.
  \item If $\ltyequiv i s t T$, then $\lsemtyeq i s t T$.
  \end{itemize}
\end{theorem}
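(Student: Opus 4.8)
The plan is to prove both statements simultaneously by a single mutual induction on the derivations of $\ltyping i t T$ and $\ltyequiv i s t T$. Since $\lsemtyp i t T$ is by definition the reflexive instance $\lsemtyeq i t t T$, the first bullet is merely a special case of the second, and presupposition (\Cref{lem:st:presup}) supplies the well-typedness side-conditions that are baked into the semantic judgments. The argument therefore reduces to showing, at layer $1$, that related terms evaluate to equal elements in every environment, and, at layer $0$, that related terms are syntactically identical.

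Layer $0$ is immediate: semantic equivalence there is literally syntactic equality together with well-typedness, so the typing rules are handled by reflexivity and the equivalence rules by the earlier observation that $\ltyequiv 0 t s T$ forces $t = s$. All the real content lives at layer $1$. For the typing rules there is nothing to check beyond totality of $\intp{t}^1$, giving $\intp{t}^1(\sigma;\rho) = \intp{t}^1(\sigma;\rho)$; the PER rules follow from symmetry and transitivity of equality in $\SetC$; and the congruence rules follow from the induction hypotheses together with the compositional shape of the interpretation (application is interpreted by semantic application and $\lambda$ by a pointwise function, so equal inputs yield equal outputs, using extensionality of the exponential in the $\lambda$ case).

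The modal and $\beta\eta$ rules are where the earlier lemmas are spent. For $\square$-congruence I would push the layer-$1$ goal down to the layer-$0$ premise: the induction hypothesis on $\ltyequiv 0 t t' T$ yields $t = t'$, hence $t[\sigma] = t'[\sigma]$ and $\intp{\boxit t}^1(\sigma;\rho) = \boxit{t[\sigma]} = \boxit{t'[\sigma]} = \intp{\boxit{t'}}^1(\sigma;\rho)$. For $\beta$ on functions I would invoke \Cref{lem:st:loc-subst}: unfolding $\intp{(\lambda x. t)\ s}^1(\sigma;\rho)$ with the identity weakening collapses it to $\intp{t}^1(\sigma; (\rho, \intp{s}^1(\sigma;\rho)))$, which is exactly the statement of the local-substitution lemma for $t[s/x]$. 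For $\beta$ on $\tletbox$ I would first compute $\intp{\boxit s}^1(\sigma;\rho) = \boxit{s[\sigma]}$, a genuine $\tbox$ of a core term since $s$ is closed at layer $0$; this selects the box branch and produces $\intp{t}^1(\sigma, s[\sigma]/u; \rho)$. Viewing $t[s/u]$ as $t[(\id, s/u)]$ and applying \Cref{lem:st:glob-subst} with the computation $(\id, s/u)\circ\sigma = (\sigma, s[\sigma]/u)$ then identifies this with $\intp{t[s/u]}^1(\sigma;\rho)$. Finally, $\eta$ is handled by extensionality of the exponential: evaluating $\intp{\lambda x. (t\ x)}^1(\sigma;\rho)$ on an arbitrary weakening $(\gamma;\tau)$ and argument $a$, unfolding the $\lambda$ and application clauses, and appealing to \Cref{lem:st:t-intp-nat} and the functorial action of $\intp{S \func T}$, recovers $\intp{t}^1(\sigma;\rho)(\gamma;\tau)(a)$.

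The main obstacle I anticipate is the $\tletbox$-congruence rule for $\ltyequiv 1 {\letbox u s t}{\letbox u {s'}{t'}}{T'}$. The induction hypothesis on the scrutinees gives $\intp{s}^1(\sigma;\rho) = \intp{s'}^1(\sigma;\rho)$, and the crucial point is that this single equality forces both interpretations into the \emph{same} case of the $\tletbox$ clause (box versus neutral). In the box case the goal reduces to the induction hypothesis for $t, t'$ evaluated under the extended global environment; in the neutral case it reduces to congruence of the reified/reflected neutral $\letbox u v {(\cdots)}$, again via the induction hypothesis on $t, t'$ together with naturality of reflection and reification. Making this case-split line up — and checking that the environment extensions and weakenings on the two sides agree so that the induction hypotheses genuinely apply — is the delicate bookkeeping; the remaining cases are routine once the interpretation lemmas are in place.
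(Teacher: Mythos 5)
Your proposal is correct and follows essentially the same route as the paper: the paper likewise reduces the theorem to verifying a semantic rule for each syntactic rule — $\tbox$-congruence via the layer-$0$ syntactic-equality judgment, function $\beta$ via \Cref{lem:st:loc-subst}, $\tletbox$ $\beta$ via the computation $(\id, s/u)\circ\sigma = (\sigma, s[\sigma]/u)$ and \Cref{lem:st:glob-subst}, $\eta$ via naturality and functoriality, and $\tletbox$-congruence via exactly the case-split alignment you identify (the IH equality $\intp{s}^1 = \intp{s'}^1$ forces both sides into the same branch) — and then concludes by induction on the derivations. Your observation that the typing half is trivial given totality of the interpretation (reflexivity of equality in $\SetC$) is also how the paper implicitly disposes of it, so there is no gap.
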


The completeness theorem is then a corollary of the fundamental theorem at layer $1$. %
Notice that at layer $0$, we do not really have much to do: the layer $0$ only
requires syntactic equality, which we have shown is implied by the syntactic
equivalence judgment. %
Therefore, we only focus on layer $1$.

Most PER and congruence rules are immediate because our underlying equivalence in the
semantics is equality. %
We will examine a few cases.

\subsubsection{Congruence Rules}

\begin{lemma}
  \begin{mathpar}
    \inferrule*
    {\iscore \Psi \\ \lsemtyeq[\Psi][\cdot]{0}{t}{t'}T}
    {\lsemtyeq{1}{\boxit t}{\boxit t'}{\square T}}
  \end{mathpar}
\end{lemma}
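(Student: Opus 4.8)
The plan is to unfold the definition of the layer-$1$ semantic equivalence and discharge its three components, exploiting the fact that layer-$0$ semantic equivalence is nothing more than syntactic equality paired with well-typedness. First I would read off the content of the hypothesis $\lsemtyeq[\Psi][\cdot]{0}{t}{t'}T$: by the rule defining layer-$0$ semantic equivalence it delivers $\ltyping[\Psi][\cdot] 0 t T$, $\ltyping[\Psi][\cdot] 0 {t'} T$, and the syntactic identity $t = t'$ all at once.

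Next I would settle the two typing obligations of the conclusion. The target judgment lives in an ambient context $\Psi;\Gamma$ whose well-formedness is presupposed, so $\istype \Gamma$ is available; combined with the premise $\iscore \Psi$ and the two layer-$0$ derivations just extracted, the introduction rule for $\tbox$ yields $\ltyping 1 {\boxit t}{\square T}$ and $\ltyping 1 {\boxit t'}{\square T}$.

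It then remains to verify the pointwise semantic equality. Fixing an arbitrary base object $\Phi;\Delta$ and $(\sigma;\rho) \in \intp{\Psi;\Gamma}^1_{\;\Phi;\Delta}$ --- so that in particular $\typing[\Phi]\sigma\Psi$ --- the interpretation clause for $\tbox$ computes $\intp{\boxit t}^1_{\;\Phi;\Delta}(\sigma;\rho) = \boxit{(t[\sigma])}$ and $\intp{\boxit t'}^1_{\;\Phi;\Delta}(\sigma;\rho) = \boxit{(t'[\sigma])}$. Because $t = t'$, the two substituted terms are literally equal, $t[\sigma] = t'[\sigma]$, so the two interpretations coincide; this discharges the third component and closes the argument.

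I do not anticipate a genuine obstacle here. The only conceptual point worth flagging is that layer $0$ carries no dynamics, so its semantic equivalence degenerates to syntactic equality, and syntactic equality is trivially stable under the global substitution $\sigma$ that the $\tbox$ clause threads into the code. The whole proof is therefore a one-step unfolding rather than an induction, which is exactly what makes this congruence case essentially immediate in the presheaf model.
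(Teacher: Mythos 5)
Your proposal is correct and follows essentially the same route as the paper's own proof: unfold the layer-$1$ semantic judgment, compute $\intp{\boxit t}^1_{\;\Phi;\Delta}(\sigma;\rho) = \boxit{(t[\sigma])}$ and likewise for $t'$, and conclude from the syntactic identity $t = t'$ delivered by the layer-$0$ hypothesis. The only difference is that you spell out the typing obligations of the conclusion explicitly, which the paper leaves implicit; this is a harmless (and slightly more careful) elaboration of the same one-step argument.
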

\begin{proof}
  Assume $(\sigma; \rho) \in \intp{\Psi; \Gamma}^1_{\;\Phi;\Delta}$,
  we compute
  \begin{align*}
    \intp{\boxit t}^1_{\;\Phi;\Delta}(\sigma; \rho)
    &= \boxit {(t[\sigma])} \\
    \intp{\boxit t'}^1_{\;\Phi;\Delta}(\sigma; \rho)
    &= \boxit {(t'[\sigma])} 
  \end{align*}
  Notice $t = t'$ by $\lsemtyeq[\Psi][\cdot]{0}{t}{t'}T$. %
  Then
  \begin{align*}
    \intp{\boxit t}^1_{\;\Phi;\Delta}(\sigma; \rho)
    = \intp{\boxit t'}^1_{\;\Phi;\Delta}(\sigma; \rho)
  \end{align*}
\end{proof}

\begin{lemma}
  \begin{mathpar}
    \inferrule*
    {\lsemtyeq 1 {s}{s'}{\square T} \\ \lsemtyeq[\Psi, u : T] 1 {t}{t'}{T'}}
    {\lsemtyeq 1 {\letbox u s t}{\letbox u{s'}{t'}} T'}
  \end{mathpar}
\end{lemma}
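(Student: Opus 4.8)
The plan is to unfold the definition of layer‑$1$ semantic equivalence and reason pointwise over evaluation environments. The two well‑typedness obligations of the conclusion, $\ltyping 1 {\letbox u s t}{T'}$ and $\ltyping 1 {\letbox u {s'}{t'}}{T'}$, follow immediately from the typing components packaged inside the two hypotheses together with the typing rule for $\tletbox$. For the semantic part, fix an arbitrary world $\Phi;\Delta$ and $(\sigma;\rho) \in \intp{\Psi;\Gamma}^1_{\;\Phi;\Delta}$. The first hypothesis $\lsemtyeq 1 s {s'}{\square T}$ gives $\intp{s}^1_{\;\Phi;\Delta}(\sigma;\rho) = \intp{s'}^1_{\;\Phi;\Delta}(\sigma;\rho)$; call this common value $d \in \intp{\square T}_{\;\Phi;\Delta} = \Nf^{\square T}_{\;\Phi;\Delta}$. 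Because both $\tletbox$ expressions dispatch on the \emph{same} $d$, they always enter the same branch of the interpretation, so no case mismatch can arise. By the grammar of normal forms $d$ is either $\boxit{t^c}$ or a neutral $v$, and I case split accordingly.

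In the box case $d = \boxit{t^c}$, both sides reduce to the first branch, so it suffices to show $\intp{t}^1_{\;\Phi;\Delta}(\sigma, t^c/u; \rho) = \intp{t'}^1_{\;\Phi;\Delta}(\sigma, t^c/u; \rho)$. This is an instance of the second hypothesis $\lsemtyeq[\Psi, u : T] 1 t {t'}{T'}$, provided $(\sigma, t^c/u; \rho) \in \intp{\Psi, u : T; \Gamma}^1_{\;\Phi;\Delta}$. The only non‑routine obligation is $\ltyping[\Phi][\cdot] 0 {t^c}{T}$: since $d = \boxit{t^c} \in \Nf^{\square T}_{\;\Phi;\Delta}$ we have $\ltyping[\Phi][\Delta] 1 {\boxit{t^c}}{\square T}$, and inverting the $\tbox$ typing rule yields exactly $\ltyping[\Phi][\cdot] 0 {t^c}{T}$, which is what the extension $\sigma, t^c/u$ requires.

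In the neutral case $d = v$, both sides reduce to the second branch, and with $(\sigma';\rho') := (\sigma;\rho)[p(\id);\id]$ they become $\uparrow^{T'}_{\;\Phi;\Delta}$ applied to $\letbox u v {\downarrow^{T'}_{\;\Phi, u : T;\Delta}(\cdots)}$, where the two ellipses are $\intp{t}^1_{\;\Phi, u : T;\Delta}(\sigma', u/u; \rho')$ and $\intp{t'}^1_{\;\Phi, u : T;\Delta}(\sigma', u/u; \rho')$ respectively. Since $\uparrow^{T'}$, the wrapper $\tletbox$ at the fixed neutral $v$, and $\downarrow^{T'}$ are all applied to otherwise identical data, it again suffices to equate the two body evaluations. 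For this I verify $(\sigma', u/u; \rho') \in \intp{\Psi, u : T; \Gamma}^1_{\;\Phi, u : T;\Delta}$: weakening gives $\typing[\Phi, u : T]{\sigma'}{\Psi}$ and $\rho' \in \intp{\Gamma}_{\;\Phi, u : T;\Delta}$ by functoriality, while $\ltyping[\Phi, u : T][\cdot] 0 u T$ holds by the global‑variable rule since $u : T \in \Phi, u : T$; hence $\typing[\Phi, u : T]{\sigma', u/u}{\Psi, u : T}$. The second hypothesis then equates the two body evaluations, and applying $\downarrow^{T'}$, forming $\letbox u v {\cdots}$, and applying $\uparrow^{T'}$ to the resulting equal bodies yields the desired equality.

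The main obstacle is not any single deep step but the careful bookkeeping of environment membership, especially in the neutral case: one must confirm that weakening by $p(\id)$ followed by the $u/u$ extension again lands in $\intp{\Psi, u : T; \Gamma}^1$ over the \emph{extended} world $\Phi, u : T; \Delta$, so that $\lsemtyeq[\Psi, u : T] 1 t {t'}{T'}$ is genuinely applicable there rather than only over $\Phi;\Delta$. Everything else is a direct unfolding of the definition of $\intp{\letbox u s t}^1$ together with the observation that equal semantic inputs to reification and reflection produce equal outputs.
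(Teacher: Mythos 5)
Your proposal is correct and follows essentially the same route as the paper's proof: unfold the layer-$1$ semantic judgment, case split on whether the common value of $s$ and $s'$ is $\boxit{t^c}$ or a neutral $v$, instantiate the second hypothesis with the extended environment ($(\sigma, t^c/u; \rho)$ over $\Phi;\Delta$ in the box case, the $p(\id)$-weakened environment extended with $u/u$ over $\Phi, u : T; \Delta$ in the neutral case), and conclude by congruence of reification, the $\tletbox$ neutral, and reflection. Your treatment is in fact slightly more careful than the paper's, which writes $(\sigma, u/u; \rho)$ in the neutral case where the weakened $(\sigma', u/u; \rho')$ is meant, and which leaves the typing-inversion and environment-membership obligations implicit.
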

\begin{proof}
  Assume $(\sigma; \rho) \in \intp{\Psi; \Gamma}^1_{\;\Phi;\Delta}$,
  we have 
  \begin{align*}
    \intp{s}^1_{\;\Phi;\Delta}(\sigma; \rho)
    = \intp{s'}^1_{\;\Phi;\Delta}(\sigma; \rho)
    \in \intp{\square T}_{\;\Phi;\Delta}
    = \Nf^{\square T}_{\;\Phi;\Delta}
  \end{align*}
  There are two possibilities: either $s$ and $s'$ evaluate to some $\boxit{t^c}$ or
  they evaluate to a neutral term $v$.
  \begin{itemize}[label=Case]
  \item $\intp{s}^1_{\;\Phi;\Delta}(\sigma; \rho)
    = \intp{s'}^1_{\;\Phi;\Delta}(\sigma; \rho) = \boxit {t^c}$.

    Moreover, we have
    \begin{align*}
      (\sigma, t^c/u; \rho) \in \intp{\Psi, u : T; \Gamma}^1_{\;\Phi;\Delta}
    \end{align*}
    due to which we have
    \begin{align*}
      \intp{t}^1_{\;\Phi;\Delta}(\sigma, t^c/u; \rho)
      = \intp{t'}^1_{\;\Phi;\Delta}(\sigma, t^c/u; \rho)
    \end{align*}
    This concludes our first case.
    
  \item $\intp{s}^1_{\;\Phi;\Delta}(\sigma; \rho)
    = \intp{s'}^1_{\;\Phi;\Delta}(\sigma; \rho) = v$.

    Then
    \begin{align*}
      (\sigma, u/u; \rho) \in \intp{\Psi, u : T; \Gamma}^1_{\;\Phi, u : T;\Delta}
    \end{align*}
    due to which we have
    \begin{align*}
      \intp{t}^1_{\;\Phi, u : T;\Delta}(\sigma, u/u; \rho)
      = \intp{t'}^1_{\;\Phi, u : T;\Delta}(\sigma, u/u; \rho)
    \end{align*}
    By congruence, we have
    \begin{align*}
      \uparrow^{T'}_{\;\Phi;\Delta}(\letbox u v {\downarrow^{T'}_{\;\Phi, u: T;\Delta}(\intp{t}^1_{\;\Phi, u:
      T;\Delta}(\sigma, u/u; \rho))})
      = \uparrow^{T'}_{\;\Phi;\Delta}(\letbox u v {\downarrow^{T'}_{\;\Phi, u: T;\Delta}(\intp{t'}^1_{\;\Phi, u:
      T;\Delta}(\sigma, u/u; \rho))})
    \end{align*}
    This concludes our goal.
  \end{itemize}
\end{proof}

\subsubsection{$\beta$ Rules}

\begin{lemma}
  \begin{mathpar}
    \inferrule*
    {\lsemtyp[\Psi][\Gamma, x : S] 1 t T \\ \lsemtyp 1 s S}
    {\lsemtyeq 1{(\lambda x. t)\ s}{t[s/x]}{T}}
  \end{mathpar}
\end{lemma}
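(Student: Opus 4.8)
The plan is to unfold the definition of layer-$1$ semantic equivalence and reduce the goal to a single pointwise equation, which I expect to fall out directly from the local substitution lemma. After discharging the two presuppositions---$\lsemtyp 1 {(\lambda x. t)\ s} T$ holds by the typing rules for abstraction and application, and $\lsemtyp 1 {t[s/x]} T$ holds by the local substitution theorem---it remains to fix an arbitrary $\Phi; \Delta$ together with an environment $(\sigma; \rho) \in \intp{\Psi; \Gamma}^1_{\;\Phi;\Delta}$ and establish
\begin{align*}
  \intp{(\lambda x. t)\ s}^1_{\;\Phi;\Delta}(\sigma; \rho) = \intp{t[s/x]}^1_{\;\Phi;\Delta}(\sigma; \rho).
\end{align*}

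First I would compute the left-hand side. By the interpretation of application it equals the semantic function $\intp{\lambda x. t}^1_{\;\Phi;\Delta}(\sigma; \rho)$ applied to the identity weakening $\id_{\;\Phi;\Delta}$ and to the argument $\intp{s}^1_{\;\Phi;\Delta}(\sigma; \rho)$. Unfolding the interpretation of the abstraction at this identity weakening, and collapsing the weakening components via the functoriality identity laws $\sigma[\id] = \sigma$ and $\rho[\id; \id] = \rho$, the left-hand side simplifies to
\begin{align*}
  \intp{t}^1_{\;\Phi;\Delta}(\sigma; (\rho, \intp{s}^1_{\;\Phi;\Delta}(\sigma; \rho))).
\end{align*}
For the right-hand side I would invoke \Cref{lem:st:loc-subst}: since $x$ is the topmost local variable of $\Gamma, x : S$, the lemma (with the trailing context empty) rewrites $\intp{t[s/x]}^1_{\;\Phi;\Delta}(\sigma; \rho)$ into exactly the same expression, and the two sides coincide.

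The computation is entirely routine once the definitions of application and abstraction are expanded; the one step deserving attention is the semantic $\beta$-step, namely that evaluating the interpreted abstraction at the \emph{identity} weakening is what returns $\sigma$ and $\rho$ unchanged, so that the result matches the conclusion of the substitution lemma verbatim. This is precisely where the identity laws of the functorial actions are needed, and it is the reason the semantic reduction lines up with $\intp{t[s/x]}$ on the nose. I do not anticipate any genuine obstacle here: the substantive content has already been front-loaded into \Cref{lem:st:loc-subst}, so this lemma is merely its specialization to a one-variable substitution combined with the definition of application.
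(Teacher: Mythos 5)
Your proof is correct and takes essentially the same route as the paper's: both reduce the goal to the pointwise equation $\intp{t}^1_{\;\Phi;\Delta}(\sigma; (\rho, \intp{s}^1_{\;\Phi;\Delta}(\sigma;\rho))) = \intp{t[s/x]}^1_{\;\Phi;\Delta}(\sigma; \rho)$ supplied by \Cref{lem:st:loc-subst}. The paper's proof simply compresses the steps you spell out explicitly (unfolding the interpretations of application and abstraction at the identity weakening and using the functorial identity laws), so your version is just a more detailed rendering of the same argument.
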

\begin{proof}
  Assume $(\sigma; \rho) \in \intp{\Psi; \Gamma}^1_{\;\Phi;\Delta}$,
  we have 
  \begin{align*}
    \intp{t}^1_{\;\Phi;\Delta}(\sigma; (\rho, \intp{s}^1_{\;\Phi;\Delta}))
    = \intp{t[s/x]}^1_{\;\Phi;\Delta}(\sigma; \rho)
  \end{align*}
  by~\Cref{lem:st:loc-subst}.
\end{proof}

\begin{lemma}
  \begin{mathpar}
    \inferrule*
    {\lsemtyp[\Psi][\cdot] 0 s T \\ \lsemtyp[\Psi, u : T] 1 {t}{T'}}
    {\lsemtyeq 1 {\letbox u {\boxit s} t}{t[s/u]}T}
  \end{mathpar}  
\end{lemma}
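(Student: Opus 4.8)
The plan is to unfold the definition of semantic equivalence at layer $1$ and discharge its three premises. The two typing premises are routine: from $\lsemtyp[\Psi][\cdot] 0 s T$ and $\lsemtyp[\Psi, u : T] 1 {t}{T'}$ we read off the syntactic judgments $\ltyping[\Psi][\cdot] 0 s T$ and $\ltyping[\Psi, u : T] 1 t T'$. Then $\letbox u {\boxit s} t$ is well-typed at layer $1$ by applying the $\square$-introduction rule — which needs $\istype\Gamma$, available from $\ltyping[\Psi, u : T] 1 t T'$ via \Cref{thm:st:ctx-wf} — to get $\ltyping 1 {\boxit s}{\square T}$, followed by the $\tletbox$-elimination rule; and $t[s/u]$ is well-typed at layer $1$ by the global substitution lemma for the typing judgment (the folklore substitution theorem) with the trailing global context empty. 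Both terms thus carry the type $T'$ of $t$.

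The heart of the argument is the pointwise equality of the two interpretations. I would fix an arbitrary $(\sigma; \rho) \in \intp{\Psi; \Gamma}^1_{\;\Phi;\Delta}$ and compute both sides to a common value. On the left, $\intp{\boxit s}^1_{\;\Phi;\Delta}(\sigma;\rho) = \boxit{(s[\sigma])}$ directly from the interpretation of $\tbox$, so the scrutinee lands in the \emph{code} branch of the $\tletbox$ clause. On the right, I would realize the single substitution as a simultaneous one, $t[s/u] = t[\id_\Psi, s/u]$ with $\typing[\Psi]{\id_\Psi, s/u}{\Psi, u : T}$, and invoke the global substitution lemma \Cref{lem:st:glob-subst} with $\id_\Psi, s/u$ pushed into the term. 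A short computation of the composite $(\id_\Psi, s/u) \circ \sigma$ shows it returns $\sigma(v)$ on each $v \in \Psi$ and $s[\sigma]$ on $u$, hence equals $\sigma, s[\sigma]/u$. Combining these gives
\begin{align*}
  \intp{\letbox u {\boxit s} t}^1_{\;\Phi;\Delta}(\sigma;\rho)
  &= \intp{t}^1_{\;\Phi;\Delta}(\sigma, s[\sigma]/u; \rho), \\
  \intp{t[s/u]}^1_{\;\Phi;\Delta}(\sigma;\rho)
  &= \intp{t}^1_{\;\Phi;\Delta}\big((\id_\Psi, s/u) \circ \sigma; \rho\big)
   = \intp{t}^1_{\;\Phi;\Delta}(\sigma, s[\sigma]/u; \rho),
\end{align*}
so the two interpretations coincide and the semantic equivalence follows.

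The main obstacle I anticipate is bookkeeping around the global substitution lemma rather than any genuine difficulty. \Cref{lem:st:glob-subst} is phrased in terms of composing full global substitutions, so the crux is to recognize $t[s/u]$ as the instance $t[\id_\Psi, s/u]$ and to compute $(\id_\Psi, s/u) \circ \sigma$ with the paper's composition convention, landing it on exactly $\sigma, s[\sigma]/u$ — precisely the environment that the code branch of $\tletbox$ produces. A secondary check is that $\sigma, s[\sigma]/u$ is a legitimate element of $\intp{\Psi, u : T}^1_{\;\Phi}$, i.e. that $s[\sigma]$ is well-typed at layer $0$ in $\Phi$; this is immediate from the global substitution theorem applied to $\ltyping[\Psi][\cdot] 0 s T$ and $\typing[\Phi]\sigma\Psi$.
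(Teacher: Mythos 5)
Your proposal is correct and follows essentially the same route as the paper: evaluate the left-hand side into the code branch of the $\tletbox$ interpretation to get $\intp{t}^1_{\;\Phi;\Delta}(\sigma, s[\sigma]/u; \rho)$, recognize $\sigma, s[\sigma]/u$ as $(\id, s/u) \circ \sigma$, and close the gap with \Cref{lem:st:glob-subst}. The only differences are cosmetic — you compute both sides to a common value rather than chaining equalities left to right, and you spell out the routine typing premises and the well-formedness of $\sigma, s[\sigma]/u$, which the paper leaves implicit.
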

\begin{proof}
  Assume $(\sigma; \rho) \in \intp{\Psi; \Gamma}^1_{\;\Phi;\Delta}$,
  we have 
  \begin{align*}
    \intp{\letbox u {\boxit s} t}^1_{\;\Phi;\Delta}(\sigma; \rho)
    &= \intp{t}^1_{\;\Phi;\Delta}(\sigma, s[\sigma]/u; \rho)
      \tag{$s$'s typing suggests that it is a core term} \\
    &= \intp{t}^1_{\;\Phi;\Delta}((\id, s/u) \circ \sigma; \rho) \\
    &= \intp{t[s/u]}^1_{\;\Phi;\Delta}(\sigma; \rho)
    \tag{by~\Cref{lem:st:glob-subst}} 
  \end{align*}
\end{proof}

\subsubsection{$\eta$ Rules}

\begin{lemma}
  \begin{mathpar}
    \inferrule*
    {\lsemtyp 1 {t}{S \func T}}
    {\lsemtyeq 1 t {\lambda x. (t\ x)}{S \func T}}
  \end{mathpar}
\end{lemma}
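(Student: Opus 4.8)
The plan is to unfold the layer-$1$ semantic judgment, which requires three things: the syntactic typing $\ltyping 1 t {S \func T}$, the syntactic typing $\ltyping 1 {\lambda x. (t\ x)}{S \func T}$, and the pointwise equality of the two interpretations. The first is exactly the typing premise hidden inside the hypothesis $\lsemtyp 1 {t}{S \func T}$ (recall that $\lsemtyp i t T$ abbreviates $\lsemtyeq i t t T$). The second I would obtain by a routine syntactic derivation: weaken $t$ into the context $\Gamma, x : S$, apply it to the variable $x$, and abstract, using the application and $\lambda$ rules. The real content lies in the third obligation, so I would spend the rest of the argument there.

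Fix $(\sigma; \rho) \in \intp{\Psi; \Gamma}^1_{\;\Phi;\Delta}$. Both $\intp{t}^1_{\;\Phi;\Delta}(\sigma;\rho)$ and $\intp{\lambda x. (t\ x)}^1_{\;\Phi;\Delta}(\sigma;\rho)$ live in $\intp{S \func T}_{\;\Phi;\Delta} = (\intp{S} \hfunc \intp{T})_{\;\Phi;\Delta}$, i.e. they are natural families of functions. Hence I would prove them equal by testing them against an arbitrary weakening $\gamma; \tau : \Phi'; \Delta' \To \Phi; \Delta$ and argument $a \in \intp{S}_{\;\Phi';\Delta'}$. Writing $f := \intp{t}^1_{\;\Phi;\Delta}(\sigma;\rho)$, the left-hand side evaluates simply to $f(\gamma;\tau)(a)$.

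For the right-hand side I would unfold the $\lambda$ clause and then the application clause of the interpretation. This yields $\intp{t}^1_{\;\Phi';\Delta'}(\sigma[\gamma]; (\rho[\gamma;\tau], a))$ applied to the identity weakening $\id_{\;\Phi';\Delta'}$ and to the argument $\intp{x}^1_{\;\Phi';\Delta'}(\sigma[\gamma]; (\rho[\gamma;\tau], a)) = a$, the last equality holding because $x$ is the topmost local variable and is read straight off the environment. Since $x$ does not occur in $t$, the extra component $a$ is irrelevant, so the inner interpretation reduces to $\intp{t}^1_{\;\Phi';\Delta'}(\sigma[\gamma]; \rho[\gamma;\tau])$, which by naturality (\Cref{lem:st:t-intp-nat}) equals $f[\gamma;\tau]$. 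Finally, because the functorial action on the exponential $\intp{S} \hfunc \intp{T}$ is precomposition of weakenings, $f[\gamma;\tau](\id_{\;\Phi';\Delta'})(a) = f((\gamma;\tau) \circ \id)(a) = f(\gamma;\tau)(a)$, matching the left-hand side.

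The step I expect to be the main obstacle is the innocuous-looking claim that the interpretation of $t$ ignores the freshly bound variable $x$, i.e. $\intp{t}^1_{\;\Phi';\Delta'}(\sigma[\gamma]; (\rho[\gamma;\tau], a)) = \intp{t}^1_{\;\Phi';\Delta'}(\sigma[\gamma]; \rho[\gamma;\tau])$. This is the semantic shadow of the fact that $t$ is weakened from $\Gamma$ into $\Gamma, x : S$, and it must be discharged by a small weakening lemma for the interpretation — either a direct structural induction on $t$ showing that $\intp{t}$ inspects only the environment components corresponding to its free variables, or an instance of \Cref{lem:st:loc-subst}. Everything else is the mechanical bookkeeping of the exponential's action being precomposition, together with naturality.
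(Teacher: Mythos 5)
Your proposal is correct and follows essentially the same route as the paper's proof: unfold the $\lambda$ and application clauses of the interpretation, discard the unused environment entry for $x$ (the paper labels this exact step ``due to weakening in $t$''), and conclude by naturality of $\intp{t}$ together with the fact that the exponential's functorial action is precomposition. One small caveat: the weakening step should be discharged by the direct structural induction you mention rather than by \Cref{lem:st:loc-subst}, since the extra environment component $a$ is an arbitrary semantic value and need not be the interpretation of any syntactic term.
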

\begin{proof}
  Assume $(\sigma; \rho) \in \intp{\Psi; \Gamma}^1_{\;\Phi;\Delta}$,
  we have 
  \begin{align*}
    \intp{\lambda x. t\ x}^1_{\;\Phi;\Delta}(\sigma; \rho)
    &= (\gamma;\tau : \Psi';\Gamma' \To \Psi;\Gamma)(a) \mapsto
          \intp{t\ x}^1_{\;\Psi';\Gamma'} (\sigma[\gamma];
          (\rho[\gamma; \tau], a)) \\
    &= (\gamma;\tau : \Psi';\Gamma' \To \Psi;\Gamma)(a) \mapsto
          \intp{t}^1_{\;\Psi';\Gamma'} (\sigma[\gamma];
          (\rho[\gamma; \tau], a), \id, a) \\
    &= (\gamma;\tau : \Psi';\Gamma' \To \Psi;\Gamma)(a) \mapsto
          \intp{t}^1_{\;\Psi';\Gamma'} ((\sigma; \rho)[\gamma; \tau], \id, a)
      \tag{due to weakening in $t$} \\
    &= (\gamma;\tau : \Psi';\Gamma' \To \Psi;\Gamma)(a) \mapsto
      \intp{t}^1_{\;\Psi';\Gamma'} (\sigma; \rho, \gamma; \tau \circ \id, a)
      \tag{functoriality of $\intp{t}$} \\
    &= (\gamma;\tau : \Psi';\Gamma' \To \Psi;\Gamma)(a) \mapsto
      \intp{t}^1_{\;\Psi';\Gamma'} (\sigma; \rho, \gamma; \tau, a) \\
    &= \intp{t}^1_{\;\Phi;\Delta}(\sigma; \rho)
  \end{align*}
\end{proof}

\subsubsection{Completeness}

With the semantic judgments verified, we establish the fundamental theorems
(\Cref{thm:st:comp-fund}). %
This subsequently implies the completeness theorem (\Cref{thm:st:compl}) because
$\uparrow^{\Psi; \Gamma} \in \intp{\Psi; \Gamma}^1_{\;\Psi;\Gamma}$. 

\subsection{Soundness}

In this section, we show the soundness theorem (\Cref{thm:st:sound}) of the
normalization algorithm. %
In the soundness proof, we use a \emph{gluing} model, which relates a syntactic term
with a natural transformation in the presheaf model. %
For a gluing relation $R$, we write $a \sim b \in R$ to denote $(a, b) \in R$. %

It is probably unsurprising that layers play some part in the definition of the gluing
model. %
However, it is quite surprising for us how much layers need to participate:
effectively, we have two sets of gluing models to expose the fact that this 2-layered
system effectively contains two languages. %
We will see more concretely once we look into the semantic judgments.

\subsubsection{Layer-$0$ Gluing Model} \labeledit{sec:st:glue-0}

In this section, we define the gluing model at layer $0$. %
At layer $0$, we only have access to the core types. %
This fact is very important when we interpret the types. %
We first define the following relation which glues terms and values of natural
numbers:
\begin{mathpar}
  \inferrule*
  {\ltyequiv 1 t \ze \Nat}
  {t \sim \ze \in \Nat_{\;\Psi;\Gamma}}

  \inferrule*
  {\ltyequiv 1 t {\su t'} \Nat \\ t' \sim w \in \Nat_{\;\Psi;\Gamma}}
  {t \sim \su w \in \Nat_{\;\Psi;\Gamma}}

  \inferrule*
  {\ltyequiv 1 t v \Nat}
  {t \sim v \in \Nat_{\;\Psi;\Gamma}}
\end{mathpar}
$\Nat_{\;\Psi;\Gamma}$ relates a term with a normal form that is of type $\Nat$. %
The gluing model at layer $0$ is defined as
\begin{align*}
  \glu{T}^0_{\;\Psi;\Gamma} &\subseteq \Exp \times \intp{T}_{\;\Psi;\Gamma} \\
  \glu{\Nat}^0_{\;\Psi;\Gamma} &:= \Nat_{\;\Psi;\Gamma} \\
  \glu{S \func T}^0_{\;\Psi;\Gamma} &:= \{(t, a) \sep \forall
                              \gamma; \tau : \Phi; \Delta \To \Psi;\Gamma, s \sim b \in
                              \glu{S}^0_{\;\Phi;\Delta}. t[\gamma; \tau]\ s \sim a(\gamma; \tau, b) \in
                              \glu{T}^0_{\;\Phi;\Delta} \} 
\end{align*}
Notice that $\glu{T}^0$ does not have a case for $\square$. %
We can make sure that $\glu{T}^0$ is well-defined by assuming $\iscore T$ whenever we
refer to the layer-$0$ gluing model.

\subsubsection{Properties of Gluing Model at Layer $0$}

We need to examine a few properties of the gluing model before proceeding to the
semantic judgments. %
Most these proofs are in fact simpler than those
in~\citet{DBLP:journals/corr/abs-2206-07823} because the $\Nat$ case is defined
through equivalence, so the only case that needs to look into is the function case. %

\begin{lemma}[Monotonicity]\labeledit{lem:st:glue-mon}
  If $t \sim a \in \glu{T}^0_{\;\Psi;\Gamma}$, given $\gamma;\tau : \Phi;\Delta \To \Psi;\Gamma$, then
  $t[\gamma;\tau] \sim a[\gamma;\tau] \in \glu{T}^0_{\;\Phi;\Delta}$. 
\end{lemma}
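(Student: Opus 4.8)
The plan is to prove the statement by induction on the type $T$, which by assumption satisfies $\iscore T$, so there are only two cases to consider: $T = \Nat$ and $T = S \func T'$. This mirrors the structure of the definition of $\glu{T}^0$ itself, which has no case for $\square$.

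First I would handle the base case $T = \Nat$. Here $\glu{\Nat}^0_{\;\Psi;\Gamma} = \Nat_{\;\Psi;\Gamma}$, so I must show that the auxiliary relation $\Nat_{\;\Psi;\Gamma}$ is closed under weakening. I would do a secondary induction on the derivation of $t \sim a \in \Nat_{\;\Psi;\Gamma}$, which itself has three cases (the $\ze$ case, the $\su$ case, and the neutral case). In each case the relation is defined via an equivalence judgment such as $\ltyequiv 1 t \ze \Nat$, and since typing and equivalence are stable under weakening $\gamma;\tau$, applying the weakening to both sides preserves membership; the $\su$ case additionally uses the inner induction hypothesis and the fact that $w[\gamma;\tau]$ is still a normal form. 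I would also note that the functorial action on $a \in \intp{\Nat}_{\;\Psi;\Gamma} = \Nf^{\Nat}_{\;\Psi;\Gamma}$ agrees with the syntactic weakening of the normal form, so $a[\gamma;\tau]$ is exactly the weakened normal form expected.

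Next I would treat the function case $T = S \func T'$. Assume $t \sim a \in \glu{S \func T'}^0_{\;\Psi;\Gamma}$ and fix $\gamma;\tau : \Phi;\Delta \To \Psi;\Gamma$; the goal is $t[\gamma;\tau] \sim a[\gamma;\tau] \in \glu{S \func T'}^0_{\;\Phi;\Delta}$. Unfolding the definition, I must show that for every further weakening $\gamma';\tau' : \Phi';\Delta' \To \Phi;\Delta$ and every $s \sim b \in \glu{S}^0_{\;\Phi';\Delta'}$, we have $t[\gamma;\tau][\gamma';\tau']\ s \sim a[\gamma;\tau](\gamma';\tau', b) \in \glu{T'}^0_{\;\Phi';\Delta'}$. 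The key step is to compose the weakenings: by functoriality $t[\gamma;\tau][\gamma';\tau'] = t[(\gamma;\tau)\circ(\gamma';\tau')]$, and likewise $a[\gamma;\tau](\gamma';\tau',b) = a((\gamma;\tau)\circ(\gamma';\tau'),b)$ by the definition of the functorial action on the exponential presheaf $\intp{S}\hfunc\intp{T'}$. The desired membership then follows immediately by instantiating the original hypothesis at the composite weakening $(\gamma;\tau)\circ(\gamma';\tau')$ together with $s \sim b \in \glu{S}^0_{\;\Phi';\Delta'}$.

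The main obstacle, such as it is, lies in making the composition-of-weakenings bookkeeping precise: one has to check that the functorial action on the exponential object commutes with composition in exactly the way required, so that the two applications of weakenings to $t$ and to $a$ collapse into a single application at the composite. Once this is verified, the function case is essentially just a reindexing of the hypothesis and requires no new content beyond functoriality of the presheaf interpretation, which is guaranteed by the earlier theorem that $\intp{T}$ is a functor whenever $\iscore T$.
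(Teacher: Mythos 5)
Your proposal is correct and follows essentially the same route as the paper: induction on $T$, with the $\Nat$ case handled by a nested induction on the relation $\Nat_{\;\Psi;\Gamma}$ (which the paper dismisses as ``immediate by nested induction'' and you usefully spell out), and the function case handled by instantiating the hypothesis at the composite weakening $(\gamma;\tau)\circ(\gamma';\tau')$ and collapsing the two weakening applications via functoriality. The composition bookkeeping you flag as the main obstacle is exactly the step the paper's proof performs between its hypotheses $H_2$ and $H_3$.
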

\begin{proof}
  We proceed by induction on $T$.
  \begin{itemize}[label=Case]
  \item $T = \Nat$, immediate by nested induction.
    
  \item $T = S \func T'$, then
    \begin{align*}
      H_1: &\ \ltyping 1{t}{S \func T'} \tag{by assumption} \\
      H_2: &\ \forall \gamma'; \tau' : \Phi';\Delta' \To \Psi;\Gamma, s \sim b \in
             \glu{S}^0_{\;\Phi';\Delta'}. t[\gamma';\tau']\ s \sim a(\gamma';\tau', b) \in \glu{T'}^0_{\;\vDelta'}
             \tag{by assumption} \\
      &\ \text{assume }\gamma';\tau' : \Phi';\Delta' \To \Phi;\Delta, s \sim b \in
        \glu{S}^0_{\;\Phi';\Delta'} \\
      H_3: &\ t[(\gamma;\tau) \circ (\gamma';\tau')]\ s \sim a((\gamma;\tau) \circ (\gamma';\tau'), b) \in
             \glu{T'}^0_{\;\Phi';\Delta'}
             \tag{by $H_2$} \\
      &\ t[\gamma;\tau][\gamma';\tau']\ s \sim a[\gamma;\tau](\gamma';\tau', b) \in
        \glu{T'}^0_{\;\Phi';\Delta'} \\
      &\ t[\gamma;\tau] \sim a[\gamma;\tau] \in \glu{S \func T'}^0_{\;\Phi;\Delta}
      \tag{by abstraction} 
    \end{align*}
  \end{itemize}
\end{proof}

\begin{lemma}\labeledit{lem:st:glue-resp-equiv}
  If $t \sim a \in \glu{T}^0_{\;\Psi;\Gamma}$ and
  $\ltyequiv 1{t}{t'}{T}$, then $t' \sim a \in \glu{T}^0_{\;\Psi;\Gamma}$.
\end{lemma}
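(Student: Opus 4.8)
The plan is to proceed by induction on the core type $T$, which by the standing assumption $\iscore T$ (imposed whenever we refer to the layer-$0$ gluing model) has only the two shapes $\Nat$ and $S \func T'$. The whole statement is an instance of the principle that $\glu{-}^0$ depends on its term component only up to layer-$1$ equivalence, so the driving tool throughout is that $\approx$ at layer $1$ is a partial equivalence (the PER rules supply symmetry and transitivity) together with its compatibility with the syntactic operations appearing in the definition of the relation.

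For the base case $T = \Nat$, I would do case analysis on the derivation of $t \sim a \in \Nat_{\;\Psi;\Gamma}$. Each of the three generating rules records a layer-$1$ equivalence between $t$ and the head of the value ($\ze$, $\su t''$, or a neutral $v$), while the recursive premise $t'' \sim w$ in the successor case never mentions $t$. Given $\ltyequiv 1 t {t'} \Nat$, I first use symmetry to get $\ltyequiv 1 {t'} t \Nat$ and then transitivity with the recorded equivalence to replace $t$ by $t'$ in it; re-applying the same generating rule, with the recursive premise untouched, yields $t' \sim a \in \Nat_{\;\Psi;\Gamma}$. No nested induction on the value $a$ is needed, since only the topmost equivalence refers to $t$.

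For the function case $T = S \func T'$, I would unfold the definition of $\glu{S \func T'}^0$. Fixing an arbitrary $\gamma;\tau : \Phi;\Delta \To \Psi;\Gamma$ and $s \sim b \in \glu{S}^0_{\;\Phi;\Delta}$, the goal is $t'[\gamma;\tau]\ s \sim a(\gamma;\tau, b) \in \glu{T'}^0_{\;\Phi;\Delta}$, whereas from $t \sim a$ I already have $t[\gamma;\tau]\ s \sim a(\gamma;\tau, b) \in \glu{T'}^0_{\;\Phi;\Delta}$. From $\ltyequiv 1 t {t'} {S \func T'}$ I would obtain $\ltyequiv[\Phi][\Delta] 1 {t[\gamma;\tau]}{t'[\gamma;\tau]}{S \func T'}$ by stability of equivalence under weakening, and then $\ltyequiv[\Phi][\Delta] 1 {t[\gamma;\tau]\ s}{t'[\gamma;\tau]\ s}{T'}$ by the application congruence rule, using $\ltyequiv[\Phi][\Delta] 1 s s S$, which holds because $s \sim b$ makes $s$ well-typed at layer $1$ (presupposition, \Cref{lem:st:presup}, plus reflexivity). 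The induction hypothesis at $T'$, applied to $t[\gamma;\tau]\ s \sim a(\gamma;\tau, b)$ and this equivalence, then swaps $t[\gamma;\tau]\ s$ for $t'[\gamma;\tau]\ s$, giving exactly the required membership.

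The only ingredient that is not pure bookkeeping is that layer-$1$ equivalence is preserved by weakening, i.e.\ $\ltyequiv 1 t {t'} T$ entails $\ltyequiv 1 {t[\gamma;\tau]}{t'[\gamma;\tau]} T$; I expect this to be the main point to pin down. It is, however, the equivalence-level counterpart of the substitution theorems already established, with weakenings being a special case of simultaneous substitutions, and is routine to state and prove alongside them.
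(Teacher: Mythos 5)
Your proof is correct and follows essentially the same route as the paper's: induction on $T$, with the $\Nat$ case handled by case analysis on the defining rules plus symmetry/transitivity of $\approx$, and the arrow case by weakening the equivalence, applying the application congruence rule, and invoking the induction hypothesis at $T'$. The only difference is that you spell out the ingredients the paper's terse ``by congruence'' step leaves implicit (stability of layer-$1$ equivalence under weakening, and reflexivity of $s$ via presupposition), which is a fair and accurate accounting of what that step actually needs.
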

\begin{proof}
  We prove by induction on $T$.
  \begin{itemize}[label=Case]
  \item $T = \Nat$, immediate by case analysis and then transitivity.
    
  \item $T = S \func T'$, then
    \begin{align*}
      &\ \forall \gamma;\tau : \Phi;\Delta \To \Psi;\Gamma,
        s \sim b \in \glu{S}^0_{\;\Phi;\Delta}. t[\gamma;\tau]\ s \sim a(\gamma;\tau,
        b) \in \glu{T'}^0_{\;\Phi;\Delta} \tag{by assumption} \\
      &\ \text{assume }\gamma;\tau : \Phi;\Delta \To \Psi;\Gamma,
        s \sim b \in \glu{S}^0_{\;\Phi;\Delta} \\
      &\ \ltyequiv[\Phi][\Delta] 1{t[\gamma;\tau]\ s}{t'[\gamma;\tau]\ s}{T'}
        \tag{by congruence} \\
      &\ t'[\gamma;\tau]\ s \sim a(\gamma;\tau, b) \in \glu{T'}^0_{\;\Phi;\Delta}
        \byIH \\
      &\ t' \sim a \in \glu{S \func T'}^0_{\;\Psi;\Gamma}
    \end{align*}
  \end{itemize}
\end{proof}

Next we show that if $t$ and $a$ are related, then $t$ is equivalent to $a$'s
reification. %
This lemma needs to be proved mutually with two lemmas:
\begin{lemma}\labeledit{lem:st:ne-glue}
  If $\ltyping 0 v T$, then $v \sim \uparrow^T_{\;\Psi;\Gamma}(v) \in
  \glu{T}^0_{\;\Psi;\Gamma}$. 
\end{lemma}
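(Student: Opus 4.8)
The plan is to prove \Cref{lem:st:ne-glue} by induction on the core type $T$ (we may assume $\iscore T$ whenever the layer-$0$ gluing model is invoked, so $T$ is built only from $\Nat$ and $\func$), carried out simultaneously with the mutually-proved reification lemma stating that $t \sim a \in \glu{T}^0_{\;\Psi;\Gamma}$ implies $\ltyequiv 1 t {\downarrow^T_{\;\Psi;\Gamma}(a)} T$. The two directions feed each other exactly at the function case, so the mutual induction on $T$ is essential; the $\square$ clause of reflection never arises because $T$ is core.

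For the base case $T = \Nat$, reflection is the identity, $\uparrow^\Nat_{\;\Psi;\Gamma}(v) = v$, so the goal is $v \sim v \in \glu{\Nat}^0_{\;\Psi;\Gamma} = \Nat_{\;\Psi;\Gamma}$. I would discharge this with the neutral rule for $\Nat_{\;\Psi;\Gamma}$, whose premise is $\ltyequiv 1 v v \Nat$. This follows by lifting $\ltyping 0 v \Nat$ to $\ltyping 1 v \Nat$ via the Lifting lemma and then applying reflexivity of $\approx$ (obtained from the congruence rules by structural induction on the typing of $v$).

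For the inductive case $T = S \func T'$, I would unfold the definition of $\glu{S \func T'}^0$: fixing $\gamma;\tau : \Phi;\Delta \To \Psi;\Gamma$ and $s \sim b \in \glu{S}^0_{\;\Phi;\Delta}$, the goal reduces to $v[\gamma;\tau]\ s \sim \uparrow^{T'}_{\;\Phi;\Delta}(v[\gamma;\tau]\ \downarrow^S_{\;\Phi;\Delta}(b)) \in \glu{T'}^0_{\;\Phi;\Delta}$, since the definition of $\uparrow^{S\func T'}$ evaluates its second argument to $\uparrow^{T'}(v[\gamma;\tau]\ \downarrow^S(b))$. The mutual reification lemma applied to $s \sim b$ gives $\ltyequiv 1 s {\downarrow^S_{\;\Phi;\Delta}(b)} S$; combined with reflexivity of $\approx$ on $v[\gamma;\tau]$ (which is a well-typed neutral after weakening) and the congruence rule for application, this yields $\ltyequiv[\Phi][\Delta] 1 {v[\gamma;\tau]\ s}{v[\gamma;\tau]\ \downarrow^S_{\;\Phi;\Delta}(b)}{T'}$. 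The induction hypothesis for reflection at the smaller type $T'$, applied to the neutral $v[\gamma;\tau]\ \downarrow^S_{\;\Phi;\Delta}(b)$, gives that this term glues with its own reflection; transporting this along the equivalence just established, via \Cref{lem:st:glue-resp-equiv}, produces the goal. Monotonicity (\Cref{lem:st:glue-mon}) and the fact that weakening preserves layer-$0$ typing of neutrals are the routine bookkeeping here.

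The main obstacle I anticipate is reconciling the layers in the function case. To apply the reflection induction hypothesis at $T'$, the application $v[\gamma;\tau]\ \downarrow^S_{\;\Phi;\Delta}(b)$ must itself be a \emph{layer-$0$} neutral, i.e. $\ltyping 0 {v[\gamma;\tau]\ \downarrow^S(b)}{T'}$, since the application typing rule forces both the function and the argument to sit at the same layer. Yet the term slot of the gluing relation only constrains terms up to layer-$1$ equivalence, and $\downarrow^S(b)$ is a priori merely a layer-$1$ normal form. The crux is therefore to argue that, because $S$ is core and $b$ ranges over the layer-$0$ model, its reification $\downarrow^S_{\;\Phi;\Delta}(b)$ is in fact a core (layer-$0$) term, so that the application stays at layer $0$. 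I expect this to require stating the mutual reification lemma so that it also records the layer-$0$ typability of the reified term on core types; once that refinement is in place the remaining steps are the standard NbE-gluing manipulations.
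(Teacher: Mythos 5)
Your induction has exactly the shape of the paper's proof: the same mutual induction on $T$ with \Cref{lem:st:glue-nf}, the same immediate $\Nat$ case, and the same function case (congruence on the application, the reflection IH at $T'$, then transport along \Cref{lem:st:glue-resp-equiv}). The gap is in your resolution of the ``main obstacle'' you raise at the end. Your proposed repair --- that for core $S$ the reification $\downarrow^S_{\;\Phi;\Delta}(b)$ is a layer-$0$ term, and that weakening preserves layer-$0$ typing of neutrals, so that $v[\gamma;\tau]\ \downarrow^S_{\;\Phi;\Delta}(b)$ sits at layer $0$ --- is false, and no strengthening of the mutual lemma can make it true. The quantification in $\glu{S \func T'}^0_{\;\Psi;\Gamma}$ ranges over \emph{all} weakenings $\gamma;\tau : \Phi;\Delta \To \Psi;\Gamma$ of $\WC$, and the target local context $\Delta$ is only required to satisfy $\istype{\Delta}$; it may contain $\square$-typed variables, whereas layer-$0$ typing forces a core local context by \Cref{thm:st:ctx-wf}. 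So already $v[\gamma;\tau]$ need not be layer-$0$ typable. Concretely, take $\Delta$ containing $x : \square\Nat$: the term $\letbox u x u$ is a neutral in $\Nf^{\Nat}_{\;\Psi;\Delta} = \intp{\Nat}_{\;\Psi;\Delta}$, it is glued to itself in $\glu{\Nat}^0_{\;\Psi;\Delta}$ (by reflexivity and the neutral rule of $\Nat_{\;\Psi;\Delta}$), yet it is not typable at layer $0$, since $\tletbox$ has no layer-$0$ rule. Such a pair $(s,b)$ can perfectly well be the related argument in your function case, so the application genuinely cannot be pushed down to layer $0$.

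The correct fix goes the opposite way: relax the lemma rather than strengthen the model. The layer-$0$ gluing model constrains its term component only through layer-$1$ typing and layer-$1$ equivalence ($\Nat_{\;\Psi;\Gamma}$, \Cref{lem:st:glue-resp-equiv}, and \Cref{lem:st:glue-nf} are all layer-$1$ statements), so the hypothesis the induction actually needs is $\ltyping 1 v T$ (with $\iscore T$, as always for $\glu{\cdot}^0$). With that reading everything closes: $v[\gamma;\tau]$ stays layer-$1$ typed under weakening, $\downarrow^S_{\;\Phi;\Delta}(b)$ is layer-$1$ typed by definition of $\Nf^S$, their application is a layer-$1$ neutral, and the IH at $T'$ applies; the call sites that supply genuinely layer-$0$ neutrals (such as $u[\gamma]$ in the layer-$0$ semantic variable rule) first pass through the Lifting lemma. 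This reading is also what makes coherent the paper's later remark that \Cref{lem:st:ne-glue} transfers to the layer-$1$ model ``with layer changed appropriately,'' and it agrees with $\glu{T}^0 = \glu{T}^1$ on core types (\Cref{lem:st:glu-0-to-1}). To be fair, the paper's own write-up applies the IH to $v[\gamma;\tau]\ \downarrow^S_{\;\Phi;\Delta}(b)$ without checking the layer-$0$ hypothesis it states, so you have spotted a real wrinkle in the statement --- but the repair is to weaken the typing premise to layer $1$, not to force reified values down to layer $0$.
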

\begin{lemma}\labeledit{lem:st:glue-nf}
  If $t \sim a \in \glu{T}^0_{\;\Psi;\Gamma}$, then
  $\ltyequiv 1{t}{\downarrow^T_{\;\Psi;\Gamma}(a)}{T}$. 
\end{lemma}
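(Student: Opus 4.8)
The plan is to prove \Cref{lem:st:ne-glue,lem:st:glue-nf} simultaneously by induction on the (core) type $T$. Since the layer-$0$ gluing model is only ever invoked when $\iscore T$, there are exactly two cases, $T = \Nat$ and $T = S \func T'$, and in particular no $\square$-case ever arises. The mutual induction is well-founded because in each function case the recursive appeals are always at the strictly smaller types $S$ and $T'$, so \Cref{lem:st:ne-glue} at $T$ calls \Cref{lem:st:glue-nf} at $S$ and itself at $T'$, while \Cref{lem:st:glue-nf} at $T$ calls \Cref{lem:st:ne-glue} at $S$ and itself at $T'$.

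For the base case $T = \Nat$, both reflection and reification act as the identity. For \Cref{lem:st:ne-glue} I would lift $\ltyping 0 v \Nat$ to layer $1$, take the reflexivity instance $\ltyequiv 1 v v \Nat$ (derivable from the congruence rules), and conclude $v \sim v \in \Nat_{\;\Psi;\Gamma}$ by the neutral rule of the $\Nat$-relation. For \Cref{lem:st:glue-nf} I would perform a nested induction on the derivation of $t \sim a \in \Nat_{\;\Psi;\Gamma}$: in the $\ze$ and neutral cases the required equivalence is exactly the premise, whereas in the $\su w$ case I would combine the nested IH $\ltyequiv 1 {t'} w \Nat$ with the congruence rule for $\su$ and transitivity against the premise $\ltyequiv 1 t {\su t'} \Nat$.

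For the function case $T = S \func T'$, \Cref{lem:st:ne-glue} unfolds to the obligation: for every weakening $\gamma;\tau$ and every $s \sim b \in \glu{S}^0_{\;\Phi;\Delta}$, show $v[\gamma;\tau]\ s \sim \uparrow^{T'}_{\;\Phi;\Delta}(v[\gamma;\tau]\ \downarrow^S_{\;\Phi;\Delta}(b)) \in \glu{T'}^0_{\;\Phi;\Delta}$. I would first invoke \Cref{lem:st:glue-nf} at $S$ to turn $s \sim b$ into $\ltyequiv 1 s {\downarrow^S_{\;\Phi;\Delta}(b)} S$, then invoke \Cref{lem:st:ne-glue} at $T'$ on the neutral $v[\gamma;\tau]\ \downarrow^S_{\;\Phi;\Delta}(b)$, and finally transport from $\downarrow^S_{\;\Phi;\Delta}(b)$ back to $s$ using the application congruence rule together with \Cref{lem:st:glue-resp-equiv}. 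Dually, for \Cref{lem:st:glue-nf} I would start from the $\eta$-law $\ltyequiv 1 t {\lambda x.\,(t\ x)}{S \func T'}$, so that by congruence for $\lambda$ it suffices to prove $\ltyequiv 1 {t\ x}{\downarrow^{T'}_{\;\Psi;\Gamma,x:S}(a(\id; p(\id), \uparrow^S_{\;\Psi;\Gamma,x:S}(x)))}{T'}$. Reflecting the fresh variable through \Cref{lem:st:ne-glue} at $S$ gives $x \sim \uparrow^S_{\;\Psi;\Gamma,x:S}(x) \in \glu{S}^0_{\;\Psi;\Gamma,x:S}$; feeding this pair and the weakening $\id;p(\id)$ into the assumption $t \sim a \in \glu{S \func T'}^0_{\;\Psi;\Gamma}$ yields a related pair to which \Cref{lem:st:glue-nf} at $T'$ applies, after which $t[\id;p(\id)]\ x$ is identified with $t\ x$.

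The main obstacle I anticipate is not any single calculation but the bookkeeping needed to keep everything well-typed at layer $0$. The neutrals built in the function case, such as $v[\gamma;\tau]\ \downarrow^S_{\;\Phi;\Delta}(b)$, must be shown to be layer-$0$ typeable, which relies on the fact that for a core type the reification $\downarrow^S(b)$ is $\square$-free and hence a genuine core term; similarly the variable reflection $x \sim \uparrow^S(x)$ presupposes that the extended context stays core. These are precisely the invariants guaranteed by \Cref{thm:st:ctx-wf,thm:st:typ-wf}, so the real effort is threading them through cleanly alongside the naturality of $\uparrow^{T}$ and $\downarrow^{T}$ and the monotonicity supplied by \Cref{lem:st:glue-mon}.
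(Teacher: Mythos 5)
Your proposal follows the paper's own argument essentially step for step: both prove \Cref{lem:st:ne-glue,lem:st:glue-nf} by mutual induction on the core type $T$, with the same two cases, the same well-foundedness pattern (each lemma at $S \func T'$ appeals to the other at $S$ and to itself at $T'$), and the same moves in the function case --- reflect the application $v[\gamma;\tau]\ \downarrow^S_{\;\Phi;\Delta}(b)$, then transport along $\ltyequiv 1 s {\downarrow^S_{\;\Phi;\Delta}(b)} S$ using congruence and \Cref{lem:st:glue-resp-equiv}; dually, reflect a fresh variable, instantiate the assumption at $\id;p(\id)$, and close with congruence for $\lambda$, the $\eta$ rule and transitivity. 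Your expansion of the $\Nat$ case (nested induction on the derivation of $t \sim a \in \Nat_{\;\Psi;\Gamma}$, with congruence of $\tsucc$ and transitivity in the successor case) is a correct elaboration of what the paper dismisses as ``immediate.''

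The one claim in your final paragraph that does not hold is that, for core $S$, the reification $\downarrow^S_{\;\Phi;\Delta}(b)$ is ``$\square$-free and hence a genuine core term.'' Normal forms of core type are \emph{not} core terms in general: $\Nf^S_{\;\Phi;\Delta}$ is defined by layer-$1$ typing, and a neutral such as $\letbox u v w$ inhabits $\Ne^{\Nat}$ while being typeable only at layer $1$, since $\tletbox$ has no layer-$0$ typing rule. So you cannot discharge the typing side conditions by keeping everything at layer $0$. The correct bookkeeping runs the other way: every clause of $\glu{T}^0_{\;\Psi;\Gamma}$ is phrased using layer-$1$ equivalence judgments, so the induction only ever needs the neutrals it constructs to be well-typed at layer $1$; the statement of \Cref{lem:st:ne-glue} that actually closes under the induction is the one with premise $\ltyping 1 v T$ (the layer-$0$ premise, where it is available, lifts into it by the Lifting lemma). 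With that reading, $v[\gamma;\tau]\ \downarrow^S_{\;\Phi;\Delta}(b)$ and the reflected variable $x$ in an arbitrary (non-core) ambient context $\Gamma$ are again legal inputs to the induction hypothesis. This is a wrinkle the paper itself glosses over --- its stated layer-$0$ premise is not what its own inductive step uses --- but your proposed justification, as written, is the one step of the argument that would fail.
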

\begin{proof}[Proof of \Cref{lem:st:ne-glue}]
  We proceed by induction on $T$.
  \begin{itemize}[label=Case]
  \item $T= \Nat$, immediate.
    
  \item $T = S \func T'$, then assume $\gamma;\tau : \Phi;\Delta \To \Psi;\Gamma$ and $s \sim b \in
    \glu{S}^0_{\;\Phi;\Delta}$,
    \begin{align*}
      \uparrow^{S \func T'}_{\;\Psi;\Gamma}(v)(\gamma;\tau, b)
      &= \uparrow^{T'}_{\;\Phi;\Delta}(v[\gamma;\tau]\ \downarrow^S_{\;\Phi;\Delta}(b)) 
    \end{align*}
    By IH, we have
    \begin{align*}
      v[\gamma;\tau]\ \downarrow^S_{\;\Phi;\Delta}(b) \sim \uparrow^{T'}_{\;\Phi;\Delta}(v[\gamma;\tau]\
      \downarrow^S_{\;\Phi;\Delta}(b))
      \in \glu{T'}^0_{\;\Phi;\Delta}
    \end{align*}
    We further obtain from \Cref{lem:st:glue-resp-equiv,lem:st:glue-nf} (IH)
    \begin{align*}
      v[\gamma;\tau]\ s \sim \uparrow^{S \func T'}_{\;\Psi;\Gamma}(v)(\gamma;\tau, b)
      \in \glu{T'}^0_{\;\Phi;\Delta}
    \end{align*}
    Therefore $v \sim \uparrow^{S \func T'}_{\;\Psi;\Gamma}(v) \in \glu{S \func T'}^0_{\;\Psi;\Gamma}$.
  \end{itemize}
\end{proof}

\begin{proof}[Proof of \Cref{lem:st:glue-nf}]
  We proceed by induction on $T$.
  \begin{itemize}[label=Case]
  \item $T = \Nat$, immediate.

  \item $T = S \func T'$, then
    \begin{align*}
      H_2: &\ \forall \gamma;\tau : \Phi;\Delta \To \Psi;\Gamma,
        s \sim b \in \glu{S}^0_{\;\Phi;\Delta}. t[\gamma;\tau]\ s \sim a(\gamma;\tau,
             b) \in \glu{T'}^0_{\;\Phi;\Delta} \tag{by assumption} \\
           &\ x \sim \uparrow^S_{\Psi; (\Gamma, x : S)}(x) \in \intp{S}_{\Psi; (\Gamma, x : S)}
             \tag{by \Cref{lem:st:ne-glue} (IH)} \\
           &\ t[\id;p(\id)]\ x \sim a(\id;p(\id), \uparrow^S_{\Psi; (\Gamma, x : S)}(x)) \in
             \glu{T'}^0_{\Psi; (\Gamma, x : S)}
             \tag{by $H_2$ and let $\gamma;\tau$ be $\id;p(\id) : \Psi; (\Gamma, x
             : S) \To \Psi; \Gamma$} \\
           &\ \ltyequiv[\Psi][(\Gamma, x : S)] 1{t\ x}{\downarrow^{T'}_{\Psi;
             (\Gamma, x : S)}(a(\id;p(\id), \uparrow^S_{\Psi; (\Gamma, x
             : S)}(x)))}{T'}
             \byIH \\
           &\ \ltyequiv 1{\lambda x.t\ x}
             {\lambda x.\downarrow^{T'}_{\Psi;
             (\Gamma, x : S)}(a(\id;p(\id), \uparrow^S_{\Psi; (\Gamma, x
             : S)}(x)))}{S \func T'}
             \tag{by congruence} \\
           &\ \ltyequiv 1{t}
             {\lambda x.\downarrow^{T'}_{\Psi;
             (\Gamma, x : S)}(a(\id;p(\id), \uparrow^S_{\Psi; (\Gamma, x
             : S)}(x)))}{S \func T'}
             \tag{by $\eta$ and transitivity}
    \end{align*}
  \end{itemize}
\end{proof}

\subsubsection{Semantic Judgments at Layer $0$}

We are about to define the semantic judgments at layer $0$. %
Before that, we first generalize the gluing of types to
local contexts. %
Notice that this generalization is also layered and assumes $\iscore \Delta$. 
\begin{align*}
  \glu{\Delta}^0_{\;\Psi;\Gamma} &\subseteq (\dtyping \delta \Delta) \times \intp{\Delta}_{\;\Psi;\Gamma} \\
  \glu{\cdot}^0_{\;\Psi;\Gamma} &:= \{(\cdot, *)\} \\
  \glu{\Delta, x : T}^0_{\;\Psi;\Gamma} &:= \{ ((\delta, t/x), (\rho, a)) \sep \delta \sim \rho \in \glu{\Delta}^0_{\;\Psi;\Gamma}
                                  \tand t \sim a \in \glu{T}^0_{\;\Psi;\Gamma} \}
\end{align*}

Then monotonicity also generalizes to local contexts:
\begin{lemma}[Monotonicity]\labeledit{lem:st:glue-context-mon}
  If $\delta \sim \rho \in \glu{\Delta}^0_{\;\Psi;\Gamma}$, given $\gamma;\tau : \Phi;\Delta' \To
  \Psi;\Gamma$,  then
  $\delta[\gamma;\tau] \sim \rho[\gamma;\tau] \in \glu{\Delta}^0_{\;\Phi;\Delta'}$. 
\end{lemma}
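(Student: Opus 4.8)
The plan is to induct on the structure of the local context $\Delta$, exactly mirroring the two-clause definition of $\glu{\Delta}^0$. Throughout I work under the standing assumption $\iscore \Delta$, so every component type appearing in $\Delta$ is a core type and the layer-$0$ gluing relation $\glu{T}^0$ on which the definition relies is well-defined. The fixed weakening $\gamma;\tau : \Phi;\Delta' \To \Psi;\Gamma$ is carried unchanged through the induction.

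For the base case $\Delta = \cdot$, the only pair in $\glu{\cdot}^0_{\;\Psi;\Gamma}$ is $(\cdot, *)$, and the functorial actions satisfy $\cdot[\gamma;\tau] = \cdot$ and $*[\gamma;\tau] = *$, so the conclusion $(\cdot, *) \in \glu{\cdot}^0_{\;\Phi;\Delta'}$ holds immediately. For the step case $\Delta = \Delta'', x : T$, the definition forces $\delta = \delta'', t/x$ and $\rho = (\rho'', a)$ with $\delta'' \sim \rho'' \in \glu{\Delta''}^0_{\;\Psi;\Gamma}$ and $t \sim a \in \glu{T}^0_{\;\Psi;\Gamma}$. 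I then use the fact that weakening acts pointwise on both a local substitution and an environment, namely $(\delta'', t/x)[\gamma;\tau] = \delta''[\gamma;\tau], t[\gamma;\tau]/x$ and $(\rho'', a)[\gamma;\tau] = (\rho''[\gamma;\tau], a[\gamma;\tau])$. Applying the induction hypothesis to the tail gives $\delta''[\gamma;\tau] \sim \rho''[\gamma;\tau] \in \glu{\Delta''}^0_{\;\Phi;\Delta'}$, and applying the already-established monotonicity of the type-level gluing model, \Cref{lem:st:glue-mon}, to the head gives $t[\gamma;\tau] \sim a[\gamma;\tau] \in \glu{T}^0_{\;\Phi;\Delta'}$. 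Reassembling these two facts according to the defining clause for $\glu{\Delta'', x : T}^0$ yields $\delta[\gamma;\tau] \sim \rho[\gamma;\tau] \in \glu{\Delta}^0_{\;\Phi;\Delta'}$, which is exactly the goal.

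There is no genuine obstacle here; the argument is a routine structural induction that simply threads \Cref{lem:st:glue-mon} through each context extension. The only point demanding minor care is the bookkeeping that the functorial action of $\gamma;\tau$ commutes with context extension on both the syntactic side (substitutions) and the semantic side (environments), so that the head and tail can be weakened independently and then recombined. Once that pointwise behaviour is recorded, the induction closes directly.
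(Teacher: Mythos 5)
Your proof is correct and follows exactly the paper's argument: the paper also proves this by induction on $\Delta$, invoking \Cref{lem:st:glue-mon} for the head type at each step. Your version merely spells out the pointwise action of the weakening on substitutions and environments, which the paper leaves implicit.
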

\begin{proof}
  Immediate by induction on $\Delta$ and \Cref{lem:st:glue-mon}.
\end{proof}

Finally, we can define the judgments:
\begin{definition}
  We define the semantic judgment for typing and global substitutions:
  \begin{align*}
    \lSemtyp 0 t T &:= \forall \gamma : \Phi \To_g \Psi \tand \delta \sim \rho \in
                     \glu{\Gamma}^0_{\;\Phi;\Delta}. t[\gamma][\delta] \sim
                     \intp{t}^0_{\;\Phi;\Delta}(\gamma;\rho) \in
                     \glu{T}^0_{\;\Phi;\Delta}
  \end{align*}
  \begin{mathpar}
    \inferrule*
    {\iscore \Psi}
    {\Semtyp[\Psi]{\cdot}{\cdot}}

    \inferrule*
    {\Semtyp[\Psi]{\sigma}{\Phi} \\ \lSemtyp[\Psi][\cdot] 0 {t}{T}}
    {\Semtyp[\Psi]{\sigma, t/u}{\Phi, u : T}}
  \end{mathpar}
\end{definition}

We can recover the syntactic judgment for global substitutions from the semantic judgment:
\begin{lemma}
  If $\Semtyp[\Psi]{\sigma}{\Phi}$, then $\typing[\Psi]{\sigma}{\Phi}$.
\end{lemma}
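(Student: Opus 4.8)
The plan is to induct on the derivation of $\Semtyp[\Psi]{\sigma}{\Phi}$, whose two rules mirror the two rules of the syntactic judgment $\typing[\Psi]{\sigma}{\Phi}$ exactly. In the base case $\sigma = \cdot$ and $\Phi = \cdot$, the sole premise of the semantic rule is $\iscore \Psi$, which is also the sole premise of the syntactic base rule, so $\typing[\Psi]{\cdot}{\cdot}$ is immediate. In the step case $\sigma = \sigma', t/u$ and $\Phi = \Phi', u : T$, the semantic derivation supplies $\Semtyp[\Psi]{\sigma'}{\Phi'}$ together with $\lSemtyp[\Psi][\cdot] 0 t T$; the induction hypothesis converts the former into $\typing[\Psi]{\sigma'}{\Phi'}$, so everything reduces to producing $\ltyping[\Psi][\cdot] 0 t T$ and applying the syntactic cons-rule.

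The delicate point --- and the step I expect to be the main obstacle --- is the \emph{layer} at which the typing of $t$ is recovered. One might hope to extract it from the packaged gluing condition by instantiating at the identity weakening $\id_\Psi : \Psi \To_g \Psi$ and the unique inhabitant $(\cdot, *) \in \glu{\cdot}^0_{\;\Psi;\cdot}$ of the empty local context, obtaining $t \sim \intp{t}^0_{\;\Psi;\cdot}(\id; *) \in \glu{T}^0_{\;\Psi;\cdot}$, and then appealing to \Cref{lem:st:glue-nf} together with presupposition (\Cref{lem:st:presup}). However, this route only yields $\ltyping 1 t T$: the $\Nat$-gluing clauses are phrased entirely in terms of layer-$1$ equivalences, and \Cref{lem:st:glue-nf} returns an equivalence $\ltyequiv 1 t {\downarrow^T(\ldots)} T$ living at layer $1$. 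That is strictly too weak, since the cons-rule of global substitutions demands each component be typed at layer $0$.

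The resolution is that the required typing is already available without any appeal to the gluing relation. The interpretation $\intp{t}^0_{\;\Psi;\cdot}$ occurring inside $\lSemtyp[\Psi][\cdot] 0 t T$ is defined only relative to a layer-$0$ typing derivation of $t$ --- there are no layer-$0$ clauses for $\tbox$ or $\tletbox$, since these are ill-typed at layer $0$ --- so $\intp{\cdot}^0$ is indexed by $\ltyping[\Psi][\cdot] 0 t T$ to begin with. Hence the semantic judgment $\lSemtyp[\Psi][\cdot] 0 t T$ presupposes, and carries, exactly the layer-$0$ derivation the cons-rule needs. Reading off this derivation and combining it with $\typing[\Psi]{\sigma'}{\Phi'}$ through the syntactic cons-rule yields $\typing[\Psi]{\sigma', t/u}{\Phi', u : T}$, closing the induction. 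The genuine content of the lemma is thus conceptual rather than computational: the gluing model on its own does not certify layer-$0$ well-typedness, and the witness one needs must instead be extracted from the layer-$0$ interpretation upon which the semantic judgment is built.
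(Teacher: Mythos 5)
Your proposal is correct and takes the same route as the paper: the paper's entire proof is the single word ``Induction,'' i.e.\ rule induction on $\Semtyp[\Psi]{\sigma}{\Phi}$, whose two rules mirror the syntactic ones exactly, just as you set it up. The detail you work out---that in the cons case the required $\ltyping[\Psi][\cdot] 0 t T$ cannot be recovered from the gluing relation (which, via \Cref{lem:st:glue-nf} and \Cref{lem:st:presup}, only certifies layer-$1$ typing) but is instead presupposed by $\lSemtyp[\Psi][\cdot] 0 {t}{T}$ itself, since $\intp{t}^0$ is defined only on layer-$0$ derivations---is precisely the point the paper leaves implicit.
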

\begin{proof}
  Induction. 
\end{proof}

\begin{lemma}[Monotonicity]
  If $\lSemtyp[\Psi] 0 {t}{T}$ and $\gamma : \Psi' \To_g \Psi$, then
  $\lSemtyp[\Psi'] 0 {t[\gamma]}{T}$.
\end{lemma}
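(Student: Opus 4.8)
The plan is to unfold both semantic judgments and reduce the goal to a single instantiation of the hypothesis along a composed weakening. Unfolding the conclusion $\lSemtyp[\Psi'] 0 {t[\gamma]}{T}$, I must show that for every $\gamma' : \Phi \To_g \Psi'$ and every $\delta \sim \rho \in \glu{\Gamma}^0_{\;\Phi;\Delta}$ we have $(t[\gamma])[\gamma'][\delta] \sim \intp{t[\gamma]}^0_{\;\Phi;\Delta}(\gamma';\rho) \in \glu{T}^0_{\;\Phi;\Delta}$. So I would fix such $\gamma'$, $\delta$, and $\rho$, keeping in mind that the local context $\Gamma$ and the gluing datum $\delta \sim \rho$ play no active role here, since monotonicity at this point concerns only the global weakening.

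The key move is to form the composite global weakening $\gamma \circ \gamma' : \Phi \To_g \Psi$ and instantiate the hypothesis $\lSemtyp[\Psi] 0 {t}{T}$ at this weakening together with the same $\delta \sim \rho$. This yields
\[
t[\gamma \circ \gamma'][\delta] \sim \intp{t}^0_{\;\Phi;\Delta}(\gamma \circ \gamma';\rho) \in \glu{T}^0_{\;\Phi;\Delta}.
\]
It then remains only to identify both sides of this relation with the corresponding sides of the goal. On the syntactic side, functoriality of global-weakening substitution gives $t[\gamma \circ \gamma'] = (t[\gamma])[\gamma']$, so the left components agree. On the semantic side, \Cref{lem:st:gsubst-nat-gen} (applicable since the layer-$0$ typing of $t$ accompanies the semantic judgment) gives exactly $\intp{t}^0_{\;\Phi;\Delta}(\gamma \circ \gamma';\rho) = \intp{t[\gamma]}^0_{\;\Phi;\Delta}(\gamma';\rho)$, so the right components agree as well. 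Rewriting by these two equalities turns the instantiated hypothesis into the desired statement.

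The proof is essentially bookkeeping, so the only thing to be careful about is the direction of composition: because a global weakening acts contravariantly on terms, the composite $\gamma \circ \gamma'$ must be read so that the given $\gamma$ is the factor peeled off first, which is precisely what both the functoriality law $t[\gamma \circ \gamma'] = (t[\gamma])[\gamma']$ and the statement of \Cref{lem:st:gsubst-nat-gen} require. I expect no genuine obstacle beyond lining up these conventions; the result is a routine corollary of the generalized naturality of global weakenings established earlier.
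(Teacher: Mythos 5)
Your proof is correct and is exactly the argument the paper intends: the paper's proof reads ``immediate by composition of global weakenings,'' and your write-up simply spells this out—instantiate the hypothesis at the composite $\gamma \circ \gamma'$, then identify the two sides via $t[\gamma \circ \gamma'] = t[\gamma][\gamma']$ on the syntax and \Cref{lem:st:gsubst-nat-gen} on the semantics. The composition direction and the reuse of the same $\delta \sim \rho$ are both handled correctly, so there is nothing to add.
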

\begin{proof}
  immediate by composition of global weakenings. 
\end{proof}

\begin{lemma}[Monotonicity]
  If $\Semtyp[\Psi]{\sigma}{\Phi}$ and $\gamma : \Psi' \To_g \Psi$, then
  $\Semtyp[\Psi']{\sigma[\gamma]}{\Phi}$.
\end{lemma}
\begin{proof}
  Induction and use monotonicity of the semantic typing at layer $0$.
\end{proof}

The semantic typing judgment is also split into two layers. %
The global substitution judgment is a generalization of the semantic judgment at layer
$0$, which will be used to define the semantic judgment at layer $1$. %
However, before doing that, we must first give the layer-$1$ gluing model.

\subsubsection{Layer-$1$ Gluing Model}

Unlike all previous layered definitions, where we can more or less let the definition
parameterized by the layer, in this layered gluing model, we can see that this 2-layered
modal type theory effectively operates on two different languages. %
Specifically our layer-$1$ model actually relies on the layer-$0$ semantic judgment,
and thus we have this incredibly long detour to arrive at the final definition. %
We refer to the layer-$0$ semantic judgment when we define the semantic for $\square$:
\begin{mathpar}
  \inferrule*
  {\ltyequiv 1 t{\boxit {t^c}}{\square T} \\ \lSemtyp[\Psi][\cdot] 0 {t^c} T}
  {t \sim {\boxit {t^c}} \in \square T_{\;\Psi;\Gamma}}

  \inferrule*
  {\ltyequiv 1 t v{\square T}}
  {t \sim v \in {\square T}_{\;\Psi;\Gamma}}
\end{mathpar}
There are two ways for $\square T$ to relate a term and a value. %
The first case is the more interesting one. %
When we know $t$ is related to $\boxit {t^c}$, we must be aware of the fact that it is
also a well-typed term in the semantics. %
This information is necessary when we prove the semantic rule for $\tletbox$. %
Now we give the gluing model at layer $1$:
\begin{align*}
  \glu{T}^1_{\;\Psi;\Gamma} &\subseteq \Exp \times \intp{T}_{\;\Psi;\Gamma} \\
  \glu{\Nat}^1_{\;\Psi;\Gamma} &:= \Nat_{\;\Psi;\Gamma} \\
  \glu{\square T}^1_{\;\Psi;\Gamma} &:= {\square T}_{\;\Psi;\Gamma} \\ 
  \glu{S \func T}^1_{\;\Psi;\Gamma} &:= \{(t, a) \sep \forall
                                      \gamma; \tau : \Phi; \Delta \To \Psi;\Gamma, s \sim b \in
                                      \glu{S}^1_{\;\Phi;\Delta}. t[\gamma; \tau]\ s \sim a(\gamma; \tau, b) \in
                                      \glu{T}^1_{\;\Phi;\Delta} \} 
\end{align*}
A few previous properties for the layer-$0$ gluing model continue to hold for the
layer-$1$ model,
e.g. \Cref{lem:st:glue-mon,lem:st:glue-resp-equiv,lem:st:ne-glue,lem:st:glue-nf}, with
layer changed from $0$ to $1$ appropriately. %
In fact, $\glu{T}^0$ is subsumed by $\glu{T}^1$:
\begin{lemma}\labeledit{lem:st:glu-0-to-1}
  If $\iscore T$, then $\glu{T}^0_{\;\Psi;\Gamma} = \glu{T}^1_{\;\Psi;\Gamma}$. 
\end{lemma}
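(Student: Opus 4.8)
The plan is to prove the identity by induction on the derivation of $\iscore T$ (equivalently, on the structure of $T$, where the hypothesis forbids the $\square$ constructor). The two gluing models $\glu{-}^0$ and $\glu{-}^1$ differ only in that the layer-$1$ model carries an additional clause for $\square$; the whole point of assuming $\iscore T$ is to guarantee that this extra clause is never reached, since a core type is built solely from $\Nat$ and the function arrow $\func$. Consequently the induction has exactly two cases, and in each the two definitions coincide. This assumption also makes the statement well-posed, as $\glu{T}^0$ is itself only defined when $\iscore T$.

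First I would treat the base case $T = \Nat$. Here both $\glu{\Nat}^0_{\;\Psi;\Gamma}$ and $\glu{\Nat}^1_{\;\Psi;\Gamma}$ are defined to be the same relation $\Nat_{\;\Psi;\Gamma}$, so the two sets are literally equal and there is nothing further to check.

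Next I would handle the case $T = S \func T'$. By inversion on $\iscore{S \func T'}$ we obtain $\iscore S$ and $\iscore{T'}$, which licenses the induction hypothesis at both subformulas: for every $\Phi;\Delta$ we have $\glu{S}^0_{\;\Phi;\Delta} = \glu{S}^1_{\;\Phi;\Delta}$ and $\glu{T'}^0_{\;\Phi;\Delta} = \glu{T'}^1_{\;\Phi;\Delta}$. Comparing the two function clauses, both quantify over the same weakenings $\gamma;\tau : \Phi;\Delta \To \Psi;\Gamma$, the premise ranges over pairs $s \sim b$ lying in $\glu{S}^0_{\;\Phi;\Delta}$ resp. $\glu{S}^1_{\;\Phi;\Delta}$, and the conclusion demands membership in $\glu{T'}^0_{\;\Phi;\Delta}$ resp. $\glu{T'}^1_{\;\Phi;\Delta}$. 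By the two induction hypotheses these quantification domains and target relations agree, so the defining condition on a pair $(t,a)$ is identical at both layers, whence $\glu{S \func T'}^0_{\;\Psi;\Gamma} = \glu{S \func T'}^1_{\;\Psi;\Gamma}$.

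I expect no genuine obstacle: the argument is purely definitional once the induction is arranged, and the only point requiring care is to exploit $\iscore T$ through inversion on $\func$, so that the $\square$ clause of the layer-$1$ model truly never participates. The payoff, to be used subsequently, is that on core types the two gluing models are interchangeable, so every property already established for $\glu{-}^0$ transports verbatim to $\glu{-}^1$ whenever the type in question is core.
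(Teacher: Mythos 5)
Your proposal is correct and follows exactly the paper's route: the paper's proof is simply ``Induction on $\iscore T$,'' and your case analysis ($\Nat$ literally identical, $S \func T'$ by inversion plus the induction hypotheses at all dual-contexts making the two defining conditions coincide) is the natural unfolding of that induction. No gaps; the point you flag about quantifying the induction hypothesis over arbitrary $\Phi;\Delta$ is the one detail worth being careful about, and you handle it properly.
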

\begin{proof}
  Induction on $\iscore T$.
\end{proof}
This lemma states that if we do not use $\square$ types, then our programs essentially
are just running in simply typed $\lambda$-calculus, so $\square$ types are the
\emph{only} addition we get by operating at layer $1$. 

With the gluing model, we can define the generalization of the gluing model to local
contexts and the semantic judgment:
\begin{align*}
  \glu{\Delta}^1_{\;\Psi;\Gamma} &\subseteq (\dtyping \delta \Delta) \times \intp{\Delta}_{\;\Psi;\Gamma} \\
  \glu{\cdot}^1_{\;\Psi;\Gamma} &:= \{(\cdot, *)\} \\
  \glu{\Delta, x : T}^1_{\;\Psi;\Gamma} &:= \{ ((\delta, t/x), (\rho, a)) \sep \delta \sim \rho \in \glu{\Delta}_{\;\Psi;\Gamma}
                                  \tand t \sim a \in \glu{T}^1_{\;\Psi;\Gamma} \}
\end{align*}

\begin{definition}
  \begin{align*}
    \lSemtyp 1 t T &:= \forall \Semtyp[\Phi]\sigma\Psi \tand \delta \sim \rho \in
                     \glu{\Gamma}^1_{\;\Phi;\Delta}. t[\sigma;\delta] \sim \intp{t}^1_{\;\Phi;\Delta}(\sigma;\rho) \in \glu{T}^1_{\;\Phi;\Delta}
  \end{align*}
\end{definition}
Notice that the semantic judgment at layer $1$ universally quantifies over a semantic
judgment of global substitutions. %
This quantification informs us that terms in the substitutions are semantically
well-typed at layer $0$, hence necessary for the rule of global variables.

\subsubsection{Semantic Rules}

We examine all the semantic judgments:

\begin{lemma}
  \begin{mathpar}
    \inferrule
    {\iscore \Psi \\ \iscore \Gamma}
    {\lSemtyp 0{\ze}{\Nat}}

    \inferrule
    {\iscore \Psi \\ \istype \Gamma}
    {\lSemtyp 1{\ze}{\Nat}}

    \inferrule
    {\lSemtyp i{t}{\Nat}}
    {\lSemtyp i{\su t}{\Nat}}
  \end{mathpar}
\end{lemma}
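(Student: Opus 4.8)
The plan is to unfold the definition of the semantic typing judgment at each layer and reduce all three rules to a direct check of the natural-number gluing relation $\Nat_{\;\Psi;\Gamma}$, which is legitimate since $\glu{\Nat}^0_{\;\Psi;\Gamma} = \glu{\Nat}^1_{\;\Psi;\Gamma} = \Nat_{\;\Psi;\Gamma}$ by definition. For each rule I fix arbitrary data: a global weakening $\gamma : \Phi \To_g \Psi$ (at layer $0$) or a semantic global substitution $\Semtyp[\Phi]\sigma\Psi$ (at layer $1$), together with a witness $\delta \sim \rho \in \glu{\Gamma}^i_{\;\Phi;\Delta}$, and I show that the substituted term glues to the evaluated term in $\Nat_{\;\Phi;\Delta}$.

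For the two $\ze$ cases I first note that $\ze$ is invariant under both substitution and evaluation: $\ze[\gamma][\delta] = \ze$ (respectively $\ze[\sigma;\delta] = \ze$) by the clauses defining substitution application, and $\intp{\ze}^i_{\;\Phi;\Delta}(\ldots) = \ze$ by the interpretation clause for $\ze$. Hence it suffices to establish $\ze \sim \ze \in \Nat_{\;\Phi;\Delta}$, which follows from the first constructor of the gluing relation once we supply $\ltyequiv[\Phi][\Delta] 1 {\ze}{\ze}{\Nat}$. The latter is exactly the congruence rule for $\ze$, whose premises $\iscore \Phi$ and $\istype \Delta$ are the ambient context well-formedness facts propagated by $\delta \sim \rho$; in the layer-$0$ rule the relation is still phrased through layer-$1$ equivalence, which is unproblematic because $\ze$ lifts to layer $1$. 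The successor rule I argue uniformly in $i$: the induction hypothesis $\lSemtyp i t \Nat$, instantiated at the same data, yields $t[\ldots] \sim \intp{t}^i_{\;\Phi;\Delta}(\ldots) \in \Nat_{\;\Phi;\Delta}$, and since both operations push structurally through the constructor, giving $\su t[\ldots] = \su{(t[\ldots])}$ and $\intp{\su t}^i_{\;\Phi;\Delta}(\ldots) = \su{(\intp{t}^i_{\;\Phi;\Delta}(\ldots))}$, the goal becomes $\su{(t[\ldots])} \sim \su w \in \Nat_{\;\Phi;\Delta}$ with $w := \intp{t}^i_{\;\Phi;\Delta}(\ldots)$. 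I close this with the second constructor, taking $t' := t[\ldots]$: its recursive premise $t' \sim w \in \Nat_{\;\Phi;\Delta}$ is precisely the induction hypothesis.

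The only point needing a little care is the equivalence premise of the successor constructor, namely $\ltyequiv[\Phi][\Delta] 1 {\su{(t[\ldots])}}{\su{(t[\ldots])}}{\Nat}$. This reduces, by the congruence rule for $\su$, to $\ltyequiv[\Phi][\Delta] 1 {t[\ldots]}{t[\ldots]}{\Nat}$, i.e. reflexivity of $\approx$ on $t[\ldots]$ at layer $1$. Rather than invoking a separate reflexivity lemma, I obtain this from the gluing witness itself: whichever of the three $\Nat$-constructors produced the induction hypothesis carries as a premise an equivalence $\ltyequiv[\Phi][\Delta] 1 {t[\ldots]}{e}{\Nat}$ for some $e$, and applying symmetry and then transitivity from the PER rules yields $\ltyequiv[\Phi][\Delta] 1 {t[\ldots]}{t[\ldots]}{\Nat}$. (If one prefers, presupposition, \Cref{lem:st:presup}, extracts $\ltyping[\Phi][\Delta] 1 {t[\ldots]}{\Nat}$ from the same premise.) Everything else is definitional unfolding, so I expect no genuine obstacle: the content of the lemma is entirely in lining up the structural substitution/evaluation clauses with the matching $\Nat$-gluing constructor.
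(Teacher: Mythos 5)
Your proof is correct and coincides with the paper's approach: the paper's own proof of this lemma is literally ``Immediate,'' and your definitional unfolding---reducing both layers to the constructors of $\Nat_{\;\Phi;\Delta}$, with the needed reflexivity $\ltyequiv[\Phi][\Delta] 1 {\su{(t[\ldots])}}{\su{(t[\ldots])}}{\Nat}$ recovered via symmetry and transitivity from the equivalence premise carried by the gluing witness---is exactly the routine detail the paper leaves implicit. No gap.
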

\begin{proof}
  Immediate.
\end{proof}

\begin{lemma}
  \begin{mathpar}
    \inferrule
    {\iscore \Psi \\ \iscore \Gamma \\ u : T \in \Psi}
    {\lSemtyp 0{u}{T}}
  \end{mathpar}

  \begin{mathpar}
    \inferrule
    {\iscore \Psi \\ \istype \Gamma \\ u : T \in \Psi}
    {\lSemtyp 1{u}{T}}
  \end{mathpar}
\end{lemma}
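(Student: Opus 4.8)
The plan is to prove the two rules separately, in each case by unfolding the corresponding semantic judgment and evaluating both the syntactic and the semantic side of the gluing relation on a bare global variable. In both cases the computation collapses almost immediately, and the content of the proof is reduced to one of the lemmas already established about the gluing models.

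For the layer-$0$ rule I would fix an arbitrary global weakening $\gamma : \Phi \To_g \Psi$ together with a pair $\delta \sim \rho \in \glu{\Gamma}^0_{\;\Phi;\Delta}$, and compute the two sides of the required relation $u[\gamma][\delta] \sim \intp{u}^0_{\;\Phi;\Delta}(\gamma;\rho) \in \glu{T}^0_{\;\Phi;\Delta}$. On the syntactic side, $u[\gamma]$ is again a global variable and a local substitution leaves global variables untouched, so $u[\gamma][\delta] = u[\gamma]$. On the semantic side, the interpretation of a global variable at layer $0$ is by definition $\uparrow^T_{\;\Phi;\Delta}(u[\gamma])$. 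Hence the goal reduces to $u[\gamma] \sim \uparrow^T_{\;\Phi;\Delta}(u[\gamma]) \in \glu{T}^0_{\;\Phi;\Delta}$, which is exactly \Cref{lem:st:ne-glue} applied to the neutral $u[\gamma]$. Its side condition $\ltyping[\Phi][\Delta]{0}{u[\gamma]}{T}$ follows from $u : T \in \Psi$ transported along $\gamma$, together with $\iscore{\Phi}$ (built into well-formed global weakenings) and $\iscore{\Delta}$ (presupposed whenever we refer to the layer-$0$ context gluing).

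For the layer-$1$ rule I would fix $\Semtyp[\Phi]{\sigma}{\Psi}$ and $\delta \sim \rho \in \glu{\Gamma}^1_{\;\Phi;\Delta}$. The key observation is that $\sigma(u)$ is typed at layer $0$ in the empty local context, hence closed, so $u[\sigma;\delta] = \sigma(u)[\delta] = \sigma(u)$, while the layer-$1$ interpretation of a global variable is by definition $\intp{\sigma(u)}^0_{\;\Phi;\Delta}(\id;*)$. The goal is thus $\sigma(u) \sim \intp{\sigma(u)}^0_{\;\Phi;\Delta}(\id;*) \in \glu{T}^1_{\;\Phi;\Delta}$. From $\Semtyp[\Phi]{\sigma}{\Psi}$ and $u : T \in \Psi$ I would extract the layer-$0$ semantic judgment $\lSemtyp[\Phi][\cdot]{0}{\sigma(u)}{T}$ and instantiate it at the identity weakening and the unique element $(\cdot,*) \in \glu{\cdot}^0_{\;\Phi;\Delta}$; since $\sigma(u)[\id] = \sigma(u)$ and the empty local substitution acts trivially on the closed term, this yields precisely $\sigma(u) \sim \intp{\sigma(u)}^0_{\;\Phi;\Delta}(\id;*) \in \glu{T}^0_{\;\Phi;\Delta}$. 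Finally, since $\iscore{\Psi}$ gives $\iscore{T}$, \Cref{lem:st:glu-0-to-1} identifies $\glu{T}^0_{\;\Phi;\Delta}$ with $\glu{T}^1_{\;\Phi;\Delta}$, delivering the goal.

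The only genuinely delicate point, and the main obstacle, is the layer-$1$ case: one must recognise that the semantic judgment at layer $1$ was deliberately defined to quantify over a \emph{semantic} global substitution $\Semtyp[\Phi]{\sigma}{\Psi}$ rather than a merely syntactic one, and that it is exactly this packaging which supplies the layer-$0$ semantic witness for $\sigma(u)$ needed here. Matching the layer-$1$ interpretation clause $\intp{u}^1 = \intp{\sigma(u)}^0(\id;*)$ against that witness, and then bridging the two gluing models through \Cref{lem:st:glu-0-to-1}, is the crux; the surrounding manipulations (closedness of $\sigma(u)$ and the trivial action of $\id$ and of the empty local substitution) are routine.
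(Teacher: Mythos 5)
Your proposal is correct and follows essentially the same route as the paper's own proof: the layer-$0$ case collapses to \Cref{lem:st:ne-glue} applied to the neutral $u[\gamma]$, and the layer-$1$ case extracts the packaged judgment $\lSemtyp[\Phi][\cdot]{0}{\sigma(u)}{T}$ from $\Semtyp[\Phi]{\sigma}{\Psi}$, instantiates it trivially (closedness of $\sigma(u)$, identity weakening, empty environment), and bridges the models with \Cref{lem:st:glu-0-to-1}. Your write-up is in fact more explicit than the paper's about the instantiation details, but the decomposition and key lemmas are identical.
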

\begin{proof}
  We consider each rule by $i$.
  \begin{itemize}[label=Case]
  \item $i = 0$, then we assume $\gamma : \Phi \To_g \Psi$ and
    $\delta \sim \rho \in \glu{\Gamma}^0_{\;\Phi;\Delta}$. %
    We should prove
    \begin{align*}
      u[\gamma] \sim \intp{u}^0_{\;\Phi;\Delta}(\gamma; \rho) \in \glu{T}^0_{\;\Phi;\Delta}
    \end{align*}
    But we know
    \begin{align*}
      \intp{u}^0_{\;\Phi;\Delta}(\gamma; \rho)
      = \uparrow^{T}_{\;\Phi; \Delta}(u[\gamma])
    \end{align*}
    and thus the goal holds by \Cref{lem:st:ne-glue}.

  \item $i = 1$. then we assume $\Semtyp[\Phi]\sigma\Psi$ and
    $\delta \sim \rho \in \glu{\Gamma}^1_{\;\Phi;\Delta}$. %
    Then we should prove
    \begin{align*}
      \sigma(u) \sim \intp{\sigma(u)}^0_{\;\Phi;\Delta}(\id; *) \in
      \glu{T}^0_{\;\Phi;\Delta} = \glu{T}^1_{\;\Phi;\Delta}
      \tag{by \Cref{lem:st:glu-0-to-1}}
    \end{align*}
    By $\sigma(u)$'s typing judgment, we know it is closed
    \begin{align*}
      \sigma(u) \sim \intp{\sigma(u)}^1_{\;\Phi;\cdot}(\id; *) \in \glu{T}^0_{\;\Phi;\cdot}
    \end{align*}
    We get this from the semantic judgment $\lSemtyp[\Phi][\cdot] 0{\sigma(u)}{T}$.
  \end{itemize}  
\end{proof}

\begin{lemma}
  \begin{mathpar}
    \inferrule
    {\iscore \Psi \\ \iscore \Gamma \\ x : T \in \Gamma}
    {\lSemtyp{0}{x}{T}}

    \inferrule
    {\iscore \Psi \\ \istype \Gamma \\ x : T \in \Gamma}
    {\lSemtyp{1}{x}{T}}
  \end{mathpar}
\end{lemma}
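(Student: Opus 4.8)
The plan is to handle both rules by unfolding the definition of the semantic judgment at the relevant layer and observing that, on a local variable, a global substitution (respectively a global weakening) acts as the identity, so the substituted term collapses to precisely the component that the context gluing relation already controls.

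For the layer-$0$ rule I would fix an arbitrary $\gamma : \Phi \To_g \Psi$ together with a pair $\delta \sim \rho \in \glu{\Gamma}^0_{\;\Phi;\Delta}$ and compute both sides. On the syntactic side, $x[\gamma] = x$ since global substitutions leave local variables untouched, and $x[\delta] = \delta(x)$ by definition, so $x[\gamma][\delta] = \delta(x)$. On the semantic side, $\intp{x}^0_{\;\Phi;\Delta}(\gamma;\rho) = \rho(x)$ directly from the interpretation of variables. The goal therefore reduces to
\begin{align*}
  \delta(x) \sim \rho(x) \in \glu{T}^0_{\;\Phi;\Delta}.
\end{align*}

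I would establish this by a short induction on the derivation of $x : T \in \Gamma$ (equivalently, on the structure of $\Gamma$). In the base case $x$ is the topmost binding, and the definition of $\glu{\Delta, x : T}^0$ supplies exactly $\delta(x) \sim \rho(x) \in \glu{T}^0$ for the head components; in the step case we discard the heads of both $\delta$ and $\rho$ and invoke the inductive hypothesis on the shorter context. The layer-$1$ rule is identical once we observe that $x[\sigma;\delta] = x[\sigma][\delta] = \delta(x)$ (again because global substitutions fix local variables) and $\intp{x}^1_{\;\Phi;\Delta}(\sigma;\rho) = \rho(x)$, so that the desired $\delta(x) \sim \rho(x) \in \glu{T}^1_{\;\Phi;\Delta}$ follows from the analogous projection out of $\glu{\Gamma}^1$.

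I do not expect any genuine obstacle here: all the content lies in the two computations showing that the substituted variable equals $\delta(x)$ and that the interpreted variable equals $\rho(x)$, after which the conclusion is just a pointwise projection from the context gluing relation. The only point deserving care is to carry out that projection by induction over the membership $x : T \in \Gamma$ rather than merely asserting it, since the gluing relation on local contexts is defined inductively on the binding structure.
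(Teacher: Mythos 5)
Your proposal is correct and matches the paper's proof, which simply observes that these rules are immediate because the variable case looks up the evaluation environment, and the context gluing relation $\glu{\Gamma}^i$ guarantees pointwise that $\delta(x) \sim \rho(x) \in \glu{T}^i$. Your additional care in unfolding $x[\gamma][\delta] = \delta(x)$ (resp.\ $x[\sigma][\delta] = \delta(x)$) and performing the projection by induction on $x : T \in \Gamma$ is exactly the detail the paper leaves implicit.
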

\begin{proof}
  These two rules are immediate because they simply lookup the evaluation environment
  which maintains the gluing relation. 
\end{proof}

\begin{lemma}
  \begin{mathpar}  
    \inferrule
    {\lSemtyp[\Psi][\Gamma, x : S]{i}{t}{T}}
    {\lSemtyp i{\lambda x. t}{S \func T}}
  \end{mathpar}  
\end{lemma}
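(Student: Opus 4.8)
The plan is to establish both layers at once, since the layer-$0$ and layer-$1$ semantic judgments and their associated function-gluing relations have identical shape; I will carry out layer $1$ explicitly and remark that layer $0$ is obtained verbatim by replacing the semantic global substitution $\sigma$ with a global weakening $\gamma$ and $\glu{\cdot}^1$ with $\glu{\cdot}^0$. So fix $\Semtyp[\Phi]{\sigma}{\Psi}$ and $\delta \sim \rho \in \glu{\Gamma}^1_{\;\Phi;\Delta}$; unfolding the definition of $\lSemtyp 1{\lambda x. t}{S \func T}$, the goal is
\[
(\lambda x. t)[\sigma;\delta] \sim \intp{\lambda x. t}^1_{\;\Phi;\Delta}(\sigma;\rho) \in \glu{S \func T}^1_{\;\Phi;\Delta}.
\]
I would then unfold the function case of the gluing relation: it suffices to fix an arbitrary weakening $\gamma;\tau : \Phi';\Delta' \To \Phi;\Delta$ together with $s \sim b \in \glu{S}^1_{\;\Phi';\Delta'}$ and prove
\[
\bigl((\lambda x. t)[\sigma;\delta]\bigr)[\gamma;\tau]\ s \sim \intp{\lambda x. t}^1_{\;\Phi;\Delta}(\sigma;\rho)(\gamma;\tau, b) \in \glu{T}^1_{\;\Phi';\Delta'}.
\]
By the definition of the interpretation of $\lambda$ together with the functorial action $(\sigma;\rho)[\gamma;\tau] = (\sigma[\gamma];\rho[\gamma;\tau])$, the right-hand side computes to $\intp{t}^1_{\;\Phi';\Delta'}(\sigma[\gamma]; (\rho[\gamma;\tau], b))$, which is exactly the witness I intend to extract from the hypothesis.

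The heart of the argument is to invoke the premise $\lSemtyp[\Psi][\Gamma, x : S]{1}{t}{T}$ at a suitably weakened and extended environment. I would apply monotonicity of semantic global substitutions to turn $\Semtyp[\Phi]{\sigma}{\Psi}$ into $\Semtyp[\Phi']{\sigma[\gamma]}{\Psi}$, and the layer-$1$ analogue of the context monotonicity lemma \Cref{lem:st:glue-context-mon} to turn $\delta \sim \rho$ into $\delta[\gamma;\tau] \sim \rho[\gamma;\tau] \in \glu{\Gamma}^1_{\;\Phi';\Delta'}$; pairing the latter with the given $s \sim b$ yields $(\delta[\gamma;\tau], s/x) \sim (\rho[\gamma;\tau], b) \in \glu{\Gamma, x : S}^1_{\;\Phi';\Delta'}$ directly from the definition of the extended-context gluing. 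Instantiating the premise with this data produces
\[
t[\sigma[\gamma]; (\delta[\gamma;\tau], s/x)] \sim \intp{t}^1_{\;\Phi';\Delta'}(\sigma[\gamma]; (\rho[\gamma;\tau], b)) \in \glu{T}^1_{\;\Phi';\Delta'},
\]
whose right-hand side already coincides with the target witness computed above.

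It remains to reconcile the two left-hand sides, and this substitution bookkeeping is the step I expect to be the main obstacle. Pushing $\sigma;\delta$ under the binder, weakening by $\gamma;\tau$, and then $\beta$-reducing the application to $s$ shows that $\bigl((\lambda x. t)[\sigma;\delta]\bigr)[\gamma;\tau]\ s$ is equivalent at layer $1$ to $t[\sigma[\gamma]; (\delta[\gamma;\tau], s/x)]$; this combines the $\beta$ rule for functions with the commutation of global and local substitutions past weakenings supplied by the substitution calculus. Having established this equivalence, I would conclude by transporting along it with \Cref{lem:st:glue-resp-equiv}, replacing $t[\sigma[\gamma]; (\delta[\gamma;\tau], s/x)]$ by $\bigl((\lambda x. t)[\sigma;\delta]\bigr)[\gamma;\tau]\ s$ on the left while keeping the same semantic witness. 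I emphasize that this $\beta$ step is legitimate even for the layer-$0$ instance, because the equivalence used by the gluing relation is \emph{always} layer-$1$ equivalence (as in the definition of $\Nat_{\;\Psi;\Gamma}$), so no dynamics are illicitly introduced at layer $0$. This completes the function case and hence the lemma.
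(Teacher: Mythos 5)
Your proposal is correct and follows essentially the same route as the paper's proof: unfold the function-case gluing, use the context monotonicity lemma (\Cref{lem:st:glue-context-mon}) to weaken $\delta \sim \rho$, pair with $s \sim b$ to build $\glu{\Gamma, x:S}^1$, instantiate the premise, and transport along the $\beta$-equivalence via \Cref{lem:st:glue-resp-equiv}. If anything, you are slightly more careful than the paper, which silently writes $\sigma$ where $\sigma[\gamma]$ (justified by monotonicity of $\Semtyp$) is needed, and your observation that the gluing relation always uses layer-$1$ equivalence—so the $\beta$ step is legitimate even in the $i=0$ instance—makes the "the case of $i=0$ is the same" claim explicit.
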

\begin{proof}
  We only consider the case where $i = 1$ because the case of $i = 0$ is the same. %
  Assume $\Semtyp[\Phi]\sigma\Psi$,
  $\delta \sim \rho \in \glu{\Gamma}^1_{\;\Phi;\Delta}$,
  $\gamma; \tau : \Phi'; \Delta' \To \Phi; \Delta$ and
  $s \sim b \in \glu{S}^1_{\;\Phi';\Delta'}$ %
  Then we should prove that
  \begin{align*}
    (\lambda x. (t[\sigma; (\delta[p(\id)], x/x)][\gamma; q(\tau)]))\ s \sim
    \intp{t}^1_{\;\Phi';\Delta'}(\sigma[\gamma];(\rho[\gamma;\tau], b)) \in \glu{T}^1_{\;\Phi';\Delta'}
  \end{align*}

  We need to construct $\glu{\Gamma, x : S}^1_{\;\Phi';\Delta'}$ from what we have. We
  claim
  \begin{align*}
    (\delta[\gamma;\tau], s/x) \sim (\rho[\gamma;\tau], b) \in \glu{\Gamma, x : S}^1_{\;\Phi';\Delta'}
  \end{align*}
  because $\delta[\gamma;\tau] \sim \rho[\gamma;\tau] \in \glu{\Gamma}^1_{\;\Phi';\Delta'}$
  by \Cref{lem:st:glue-context-mon}. Therefore
  \begin{align*}
    t[\sigma; (\delta[\gamma;\tau], s/x)] \sim \intp{t}^1_{\;\Phi';\Delta'}(\sigma;
    (\rho[\gamma;\tau], b)) \in \glu{T}^1_{\;\Phi';\Delta'}
  \end{align*}
  Comparing the goal and what we have, we use \Cref{lem:st:glue-resp-equiv} at layer
  $1$ to $\beta$ reduce the left hand side and achieve our goal.
\end{proof}

\begin{lemma}
  \begin{mathpar}
    \inferrule
    {\lSemtyp i{t}{S \func T} \\ \lSemtyp i{s}{S}}
    {\lSemtyp i{t\ s}{T}}
  \end{mathpar}
\end{lemma}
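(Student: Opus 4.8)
The plan is to treat layer $1$ in detail, since the layer-$0$ case is entirely analogous (only the ambient notion of semantic substitution differs, exactly as in the preceding lemmas for $\lambda$ and the variable cases). First I would unfold the definition of $\lSemtyp 1 {t\ s} T$: fix an arbitrary semantic global substitution $\Semtyp[\Phi]\sigma\Psi$ together with a gluing pair $\delta \sim \rho \in \glu{\Gamma}^1_{\;\Phi;\Delta}$, so that the goal becomes
\begin{align*}
  (t\ s)[\sigma;\delta] \sim \intp{t\ s}^1_{\;\Phi;\Delta}(\sigma;\rho) \in \glu{T}^1_{\;\Phi;\Delta}.
\end{align*}
Here $(t\ s)[\sigma;\delta] = (t[\sigma;\delta])\,(s[\sigma;\delta])$ by the definition of substitution, and by the interpretation clause for application $\intp{t\ s}^1_{\;\Phi;\Delta}(\sigma;\rho) = \intp{t}^1_{\;\Phi;\Delta}(\sigma;\rho)(\id, \intp{s}^1_{\;\Phi;\Delta}(\sigma;\rho))$.

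Next I would instantiate the two hypotheses at this same $\sigma$ and $\delta \sim \rho$. From $\lSemtyp 1 t {S \func T}$ I obtain $t[\sigma;\delta] \sim \intp{t}^1_{\;\Phi;\Delta}(\sigma;\rho) \in \glu{S \func T}^1_{\;\Phi;\Delta}$; unfolding the definition of the function gluing relation, this says that for every weakening $\gamma;\tau : \Phi';\Delta' \To \Phi;\Delta$ and every $s' \sim b \in \glu{S}^1_{\;\Phi';\Delta'}$ we have $(t[\sigma;\delta])[\gamma;\tau]\,s' \sim \intp{t}^1_{\;\Phi;\Delta}(\sigma;\rho)(\gamma;\tau, b) \in \glu{T}^1_{\;\Phi';\Delta'}$. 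From $\lSemtyp 1 s S$ I obtain the argument relation $s[\sigma;\delta] \sim \intp{s}^1_{\;\Phi;\Delta}(\sigma;\rho) \in \glu{S}^1_{\;\Phi;\Delta}$.

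The key step is then to specialize the function gluing property to the identity weakening $\id : \Phi;\Delta \To \Phi;\Delta$ and to the argument pair just obtained, taking $s' := s[\sigma;\delta]$ and $b := \intp{s}^1_{\;\Phi;\Delta}(\sigma;\rho)$. This yields
\begin{align*}
  (t[\sigma;\delta])[\id]\,(s[\sigma;\delta]) \sim \intp{t}^1_{\;\Phi;\Delta}(\sigma;\rho)(\id, \intp{s}^1_{\;\Phi;\Delta}(\sigma;\rho)) \in \glu{T}^1_{\;\Phi;\Delta}.
\end{align*}
Since the identity weakening acts trivially on terms, $(t[\sigma;\delta])[\id] = t[\sigma;\delta]$, so the left-hand side is precisely $(t\ s)[\sigma;\delta]$ and the right-hand side is precisely $\intp{t\ s}^1_{\;\Phi;\Delta}(\sigma;\rho)$ by the interpretation clause recalled above, closing the goal.

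I do not expect a genuine obstacle here: the whole argument is a direct application of the function-gluing definition at the identity weakening, mirroring the application case in the standard presheaf model of the simply typed $\lambda$-calculus. The only points requiring care are bookkeeping — verifying that the identity weakening really acts as the identity on $t[\sigma;\delta]$ (functoriality of the presheaf of terms) and that the value fed to the function is literally $\intp{s}^1_{\;\Phi;\Delta}(\sigma;\rho)$ rather than a weakened variant, which is guaranteed because we instantiate at $\id$ instead of a nontrivial $\gamma;\tau$.
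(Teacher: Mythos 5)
Your proof is correct and is exactly what the paper's one-line proof ("immediate by applying the premises") amounts to: instantiate the function-gluing relation from the first premise at the identity weakening with the argument pair from the second premise, and observe that this matches both the substitution clause $(t\ s)[\sigma;\delta] = (t[\sigma;\delta])\,(s[\sigma;\delta])$ and the interpretation clause $\intp{t\ s}^1_{\;\Phi;\Delta}(\sigma;\rho) = \intp{t}^1_{\;\Phi;\Delta}(\sigma;\rho)(\id, \intp{s}^1_{\;\Phi;\Delta}(\sigma;\rho))$. No gaps; the bookkeeping points you flag (identity weakening acting trivially, feeding the unweakened value) are handled correctly.
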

\begin{proof}
  Immediate by applying the premises.
\end{proof}

Finally, we consider the modal components. %
Recall that they only live at layer $1$.
\begin{lemma}
  \begin{mathpar}
    \inferrule
    {\istype \Gamma \\ \lSemtyp[\Psi][\cdot] 0 t T}
    {\lSemtyp{1}{\boxit t}{\square T}}
  \end{mathpar}
\end{lemma}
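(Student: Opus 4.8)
The plan is to unfold the layer-$1$ semantic judgment and collapse everything onto the gluing relation for $\square$. So I would fix an arbitrary $\Semtyp[\Phi]{\sigma}{\Psi}$ and $\delta \sim \rho \in \glu{\Gamma}^1_{\;\Phi;\Delta}$, with the goal $(\boxit t)[\sigma;\delta] \sim \intp{\boxit t}^1_{\;\Phi;\Delta}(\sigma;\rho) \in \glu{\square T}^1_{\;\Phi;\Delta}$. First I would compute both sides: since a global substitution enters a $\tbox$ while a local substitution is blocked by it, $(\boxit t)[\sigma;\delta] = (\boxit t)[\sigma][\delta] = \boxit{(t[\sigma])}$, and by definition $\intp{\boxit t}^1_{\;\Phi;\Delta}(\sigma;\rho) = \boxit{(t[\sigma])}$ as well. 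As $\glu{\square T}^1_{\;\Phi;\Delta} = \square T_{\;\Phi;\Delta}$, the goal becomes $\boxit{(t[\sigma])} \sim \boxit{(t[\sigma])} \in \square T_{\;\Phi;\Delta}$, which I would discharge with the first introduction rule of the $\square T$ gluing relation, taking $t^c := t[\sigma]$.

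This leaves two obligations. The first is the reflexivity $\ltyequiv[\Phi][\Delta]{1}{\boxit{(t[\sigma])}}{\boxit{(t[\sigma])}}{\square T}$, which reduces to well-typedness: recovering $\typing[\Phi]{\sigma}{\Psi}$ from $\Semtyp[\Phi]{\sigma}{\Psi}$ and applying the syntactic global substitution theorem to $\ltyping[\Psi][\cdot]{0}{t}{T}$ yields $\ltyping[\Phi][\cdot]{0}{t[\sigma]}{T}$, whence the $\tbox$ congruence rule (applied to reflexivity of the well-typed $t[\sigma]$ at layer $0$) gives the equivalence. The second, and genuinely substantive, obligation is the semantic one, $\lSemtyp[\Phi][\cdot]{0}{t[\sigma]}{T}$.

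For this I would establish a \emph{semantic global substitution} property: from $\lSemtyp[\Psi][\cdot]{0}{t}{T}$ and $\Semtyp[\Phi]{\sigma}{\Psi}$, conclude $\lSemtyp[\Phi][\cdot]{0}{t[\sigma]}{T}$. Unfolding, for an arbitrary $\gamma : \Phi' \To_g \Phi$ I must show $t[\sigma][\gamma] \sim \intp{t[\sigma]}^0_{\;\Phi';\Delta'}(\gamma; *) \in \glu{T}^0_{\;\Phi';\Delta'}$, the local data being trivial as the local context of $t$ is empty. Here I would use the routine identity $t[\sigma][\gamma] = t[\sigma[\gamma]]$ together with \Cref{lem:st:intp-0-1}, which rewrites $\intp{t[\sigma]}^0_{\;\Phi';\Delta'}(\gamma; *) = \intp{t}^1_{\;\Phi';\Delta'}(\sigma[\gamma]; *)$, and \Cref{lem:st:glu-0-to-1}, which identifies $\glu{T}^0 = \glu{T}^1$; by monotonicity of semantic global substitutions we also have $\Semtyp[\Phi']{\sigma[\gamma]}{\Psi}$.

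The main obstacle is exactly this last property, because the layer-$0$ semantic judgment is phrased over global \emph{weakenings} rather than full substitutions, so $\lSemtyp[\Psi][\cdot]{0}{t}{T}$ cannot be instantiated with $\sigma[\gamma]$ directly; the gluing relation for $t[\sigma]$ must instead be reconstructed over the structure of $t$. I expect to close the gap by an induction on the derivation $\ltyping[\Psi][\cdot]{0}{t}{T}$, which is available as the premise underlying the box rule. The only interesting case is the global variable $t = u$: there $u[\sigma] = \sigma(u)$, and the required gluing is supplied directly by the component judgment $\lSemtyp[\Phi][\cdot]{0}{\sigma(u)}{T}$ packaged in $\Semtyp[\Phi]{\sigma}{\Psi}$, after matching the layer-$0$ and layer-$1$ interpretations through \Cref{lem:st:intp-0-1} and $\glu{T}^0 = \glu{T}^1$. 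The $\lambda$, application, numeral, and local-variable cases then follow from the induction hypothesis and monotonicity, precisely as in the layer-$0$ semantic rules already verified, completing the proof.
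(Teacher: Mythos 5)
Your proposal is correct, and its skeleton is exactly the paper's: unfold the layer-$1$ judgment, compute both $(\boxit t)[\sigma;\delta]$ and $\intp{\boxit t}^1_{\;\Phi;\Delta}(\sigma;\rho)$ to $\boxit{(t[\sigma])}$, and close by the $\tbox$ case of the gluing relation. Where you genuinely depart from the paper is in what happens to the second premise of that gluing rule. The paper's entire proof is the computation plus the remark that $\lSemtyp[\Psi][\cdot]{0}{t}{T}$ ``is remembered by the gluing model,'' silently identifying the hypothesis with what the rule demands. You correctly observe that the rule, instantiated at the world $\Phi;\Delta$, demands $\lSemtyp[\Phi][\cdot]{0}{t[\sigma]}{T}$ --- the judgment for the \emph{substituted} term over the \emph{new} global context --- and that this is not an instance of $\lSemtyp[\Psi][\cdot]{0}{t}{T}$, since the layer-$0$ judgment quantifies only over global weakenings into $\Psi$ and cannot be specialized along $\sigma$. (That the stored data must indeed live at $\Phi$ is confirmed by the paper's own $\tletbox$ lemma, which uses it to form $\Semtyp[\Phi]{\sigma, t^c/u}{\Psi, u : T}$.) Your semantic global substitution lemma is precisely the missing bridge, and your sketch of it goes through: induction on $\ltyping[\Psi][\cdot]{0}{t}{T}$, the global-variable case supplied by the component judgments packaged in $\Semtyp[\Phi]{\sigma}{\Psi}$, the remaining cases replaying the already-verified layer-$0$ semantic rules, with \Cref{lem:st:intp-0-1} and \Cref{lem:st:glu-0-to-1} mediating between the two layers. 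Two remarks. First, as you acknowledge, this induction consumes the syntactic derivation, so the lemma should be stated with $\ltyping[\Psi][\cdot]{0}{t}{T}$ among its premises; this is harmless, since that derivation is in scope when the lemma is invoked inside the fundamental theorem. Second, there is a cheaper route to the same fact: layer $0$ never mentions layer $1$, so the layer-$0$ fundamental theorem can be established first; then $\Semtyp[\Phi]{\sigma}{\Psi}$ yields $\typing[\Phi]{\sigma}{\Psi}$, the syntactic global substitution theorem yields $\ltyping[\Phi][\cdot]{0}{t[\sigma]}{T}$, and the layer-$0$ fundamental theorem applied to this derivation yields $\lSemtyp[\Phi][\cdot]{0}{t[\sigma]}{T}$ with no new induction. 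Either way, you have made explicit a step that the paper's own proof glosses over.
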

\begin{proof}
  Assume $\Semtyp[\Phi]\sigma\Psi$,
  $\delta \sim \rho \in \glu{\Gamma}_{\;\Phi;\Delta}$. %
  Then we should prove
  \begin{align*}
    \boxit {(t[\sigma])} \sim \boxit {(t[\sigma])} \in \glu{\square T}^1_{\;\Phi;\Delta}
  \end{align*}
  This holds by reflexivity and the definition of the gluing model. %
  Notice that $\lSemtyp[\Psi][\cdot] 0 t T$ is remembered by the gluing model of
  $\glu{\square T}$.
\end{proof}

\begin{lemma}  
  \begin{mathpar}
    \inferrule
    {\lSemtyp 1 {s}{\square T} \\ \lSemtyp[\Psi, u : T] 1 {t}{T'}}
    {\lSemtyp 1 {\letbox u s t} T'}
  \end{mathpar}
\end{lemma}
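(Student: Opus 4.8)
The plan is to unfold the layer-$1$ semantic judgment and establish the conclusion directly. I fix an arbitrary semantic global substitution $\Semtyp[\Phi]{\sigma}{\Psi}$ and a glued local environment $\delta \sim \rho \in \glu{\Gamma}^1_{\;\Phi;\Delta}$, writing $T'$ for the result type, and I must show $(\letbox{u}{s}{t})[\sigma;\delta] \sim \intp{\letbox{u}{s}{t}}^1_{\;\Phi;\Delta}(\sigma;\rho) \in \glu{T'}^1_{\;\Phi;\Delta}$. Applying the first premise $\lSemtyp{1}{s}{\square T}$ to $\sigma,\delta,\rho$ gives $s[\sigma;\delta] \sim \intp{s}^1_{\;\Phi;\Delta}(\sigma;\rho) \in \glu{\square T}^1_{\;\Phi;\Delta} = {\square T}_{\;\Phi;\Delta}$, and by the definition of the $\square$-gluing relation there are exactly two cases, which match precisely the two clauses in the definition of $\intp{\letbox{u}{s}{t}}^1$.

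In the first case $\intp{s}^1_{\;\Phi;\Delta}(\sigma;\rho) = \boxit{t^c}$ with $\ltyequiv[\Phi][\Delta]{1}{s[\sigma;\delta]}{\boxit{t^c}}{\square T}$ and, crucially, the relation also stores $\lSemtyp[\Phi][\cdot]{0}{t^c}{T}$. This layer-$0$ semantic typing is exactly what lets me form the extended semantic global substitution $\Semtyp[\Phi]{(\sigma, t^c/u)}{\Psi, u : T}$. Instantiating the second premise $\lSemtyp[\Psi, u : T]{1}{t}{T'}$ with this substitution and the unchanged $\delta \sim \rho$ yields $t[(\sigma, t^c/u);\delta] \sim \intp{t}^1_{\;\Phi;\Delta}((\sigma, t^c/u);\rho) \in \glu{T'}^1_{\;\Phi;\Delta}$, whose right-hand component is by definition exactly $\intp{\letbox{u}{s}{t}}^1_{\;\Phi;\Delta}(\sigma;\rho)$ in this case. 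It then remains to rewrite the term component: the congruence rule for $\tletbox$ with $s[\sigma;\delta]\approx\boxit{t^c}$ followed by the $\beta$ rule for $\tletbox$, together with the folklore substitution lemma (a closed layer-$0$ term commutes past the local substitution, and composing global substitutions collapses $(\sigma[p(\id)],u/u)$ with $t^c/u$ down to $(\sigma, t^c/u)$), gives $(\letbox{u}{s}{t})[\sigma;\delta] \approx t[(\sigma, t^c/u);\delta]$. A final appeal to \Cref{lem:st:glue-resp-equiv} (at layer $1$) transports the gluing along this equivalence and closes the case.

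In the second case $\intp{s}^1_{\;\Phi;\Delta}(\sigma;\rho) = v$ is neutral with $\ltyequiv[\Phi][\Delta]{1}{s[\sigma;\delta]}{v}{\square T}$, and the semantic value is $\uparrow^{T'}_{\;\Phi;\Delta}(\letbox{u}{v}{w})$ where $w := \downarrow^{T'}_{\;\Phi, u : T;\Delta}(\intp{t}^1_{\;\Phi, u : T;\Delta}(\sigma[p(\id)], u/u;\rho[p(\id);\id]))$. Here I first build the weakened data: $\Semtyp[\Phi, u : T]{(\sigma[p(\id)], u/u)}{\Psi, u : T}$ from monotonicity of semantic global substitutions along $p(\id):\Phi, u : T\To_g\Phi$ together with the layer-$0$ semantic rule for the global variable $u$, and $\delta[p(\id);\id]\sim\rho[p(\id);\id]\in\glu{\Gamma}^1_{\;\Phi, u : T;\Delta}$ from \Cref{lem:st:glue-context-mon}. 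Instantiating the second premise with these produces a gluing whose right component is $\intp{t}^1_{\;\Phi, u : T;\Delta}(\sigma[p(\id)], u/u;\rho[p(\id);\id])$; applying the layer-$1$ version of \Cref{lem:st:glue-nf} reifies it and gives $\ltyequiv[\Phi, u : T][\Delta]{1}{t[(\sigma[p(\id)],u/u);\delta]}{w}{T'}$ (using that the fresh global weakening acts vacuously on $\delta$). Since $\letbox{u}{v}{w}$ falls under the neutral-form grammar clause and is well typed at $T'$, the layer-$1$ version of \Cref{lem:st:ne-glue} gives $\letbox{u}{v}{w} \sim \uparrow^{T'}_{\;\Phi;\Delta}(\letbox{u}{v}{w}) \in \glu{T'}^1_{\;\Phi;\Delta}$. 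Feeding the congruence rule for $\tletbox$ with $s[\sigma;\delta]\approx v$ and the body equivalence just obtained shows $(\letbox{u}{s}{t})[\sigma;\delta] \approx \letbox{u}{v}{w}$, and one more use of \Cref{lem:st:glue-resp-equiv} transports the gluing onto $(\letbox{u}{s}{t})[\sigma;\delta]$, which is the goal.

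I expect the neutral case to be the main obstacle. It requires carefully threading the weakening $p(\id);\id$ through both the semantic global substitution and the glued local environment so that the instantiation of the second premise reproduces exactly the $\sigma'$ and $\rho'$ appearing in the definition of $\intp{\letbox{u}{s}{t}}^1$, verifying that $\letbox{u}{v}{w}$ is well typed as a neutral so that reflection applies, and matching the $\tletbox$ produced by substitution to $\letbox{u}{v}{w}$ through congruence. By contrast, the box case is comparatively routine, its only delicate point being the substitution-commutation bookkeeping needed to identify the $\beta$-reduct with $t[(\sigma, t^c/u);\delta]$.
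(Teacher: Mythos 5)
Your proposal is correct and follows essentially the same route as the paper's proof: case analysis on the stored $\square T$-gluing, extending the semantic global substitution with $t^c/u$ and $\beta$-reducing in the box case, and weakening by $p(\id);\id$, reifying the body via \Cref{lem:st:glue-nf}, reflecting the neutral $\tletbox$ via \Cref{lem:st:ne-glue}, and transporting along equivalence via \Cref{lem:st:glue-resp-equiv} in the neutral case. The only difference is that you spell out the substitution bookkeeping (collapsing $(\sigma[p(\id)],u/u)$ composed with $t^c/u$, and the local/global commutation) that the paper compresses into ``perform $\beta$ reduction on the left,'' which is a faithful elaboration rather than a different argument.
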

\begin{proof}
  Assume $\Semtyp[\Phi]\sigma\Psi$,
  $\delta \sim \rho \in \glu{\Gamma}_{\;\Phi;\Delta}$. %
  Then we know
  \begin{align*}
    s[\sigma;\delta] \sim \intp{s}^1_{\;\Phi;\Delta}(\sigma;\rho) \in \glu{\square T}^1_{\;\Phi;\Delta}
  \end{align*}
  There are two ways that this can hold:
  \begin{itemize}[label=Case]
  \item $\intp{s}^1_{\;\Phi;\Delta}(\sigma;\rho) = \boxit{t^c}$ and we know $\lSemtyp[\Psi][\cdot]
    0 {t^c} T$. %
    Then we have
    \begin{align*}
      & \Semtyp[\Phi]{\sigma, t^c/u}{\Psi, u : T} \\
      & t[\sigma,t^c/u;\delta] \sim \intp{t}^1_{\;\Phi;\Delta}((\sigma,t^c/u);\rho)
        \in \glu{T'}^1_{\;\Phi;\Delta}
        \tag{from premise} 
    \end{align*}
    We have the goal by \Cref{lem:st:glue-resp-equiv} and perform $\beta$ reduction on
    the left. 
    
  \item $\intp{s}^1_{\;\Phi;\Delta}(\sigma;\rho) = v$ for some neutral $v$ and we know
    $H1: \ltyequiv[\Phi][\Delta] 1{s[\sigma;\delta]}{v}{\square T}$. %
    Then
    \begin{align*}
      & u \sim \uparrow^T_{\;\Psi;u:T;\cdot}(u) \in \glu{T}^0_{\;\Psi;u:T;\cdot}
        \tag{by \Cref{lem:st:ne-glue}} \\
      & \lSemtyp[\Phi, u : T][\cdot] 0 {u} T
        \tag{by monotonicity} \\
      & \Semtyp[\Phi, u : T]{\sigma[p(\id)], u/u}{\Psi, u : T}
        \tag{by monotonicity} \\
      & t[(\sigma[p(\id)], u/u); \delta[p(\id);\id]]
        \sim \intp{t}^1_{\;\Phi, u : T;\Delta}((\sigma[p(\id)], u/u);
        \rho[p(\id);\id])
        \in \glu{T'}^1_{\;\Phi, u : T;\Delta}
        \tag{by premise} \\
      H2:\ & \ltyequiv[\Phi; u : T] [\Delta] 1 {t[(\sigma[p(\id)], u/u);
        \delta[p(\id);\id]]}
        {\downarrow^{T'}_{\;\Phi, u : T;\Delta}(\intp{t}^1_{\;\Phi, u : T;\Delta}((\sigma[p(\id)], u/u);
        \rho[p(\id);\id]))}{T'}
        \tag{by \Cref{lem:st:glue-nf}}
    \end{align*}
    We let $E := \letbox u v {\downarrow^{T'}_{\;\Phi, u : T;\Delta}(\intp{t}^1_{\;\Phi, u : T;\Delta}((\sigma[p(\id)], u/u);
      \rho[p(\id);\id]))}$ and notice that $E$ is neutral and well-typed. %
    By \Cref{lem:st:ne-glue}, we have
    \begin{align*}
      E \sim \uparrow^{T'}_{\;\Phi;\Delta}(E) \in \glu{T'}^1_{\;\Phi;\Delta}
    \end{align*}
    From $H1$ and $H2$, we have
    \begin{align*}
      \ltyequiv[\Phi][\Delta] 1 {\letbox u s t [\sigma;\delta]}{E}{T'}
    \end{align*}
    from which we obtain our goal via \Cref{lem:st:glue-resp-equiv}.
  \end{itemize}
  
\end{proof}

Now we have finished all the semantic rules.

\subsubsection{Fundamental Theorems and Soundness}

\begin{theorem}[Fundamental] $ $
  If $\ltyping i t T$, then $\lSemtyp i t T$.
\end{theorem}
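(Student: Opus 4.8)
The plan is to prove this by straightforward induction on the derivation of $\ltyping i t T$, discharging each case by the matching semantic rule established in the preceding subsection. All the substantive work has in fact already been carried out there: for every syntactic typing rule there is a corresponding semantic lemma (for $\ze$ and $\su t$, for the global variable $u$, for the local variable $x$, for $\lambda$-abstraction and application, and—at layer $1$ only—for $\boxit t$ and $\letbox u s t$). Consequently this theorem is essentially the assembly of these lemmas into a single induction.

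Concretely I would proceed case by case on the last rule of the derivation. In a base case such as $t = x$ or $t = u$ the conclusion is immediate from the corresponding semantic lemma, with no appeal to the induction hypothesis. In a rule with premises—take application $t = t'\ s$ as the representative example—I would invoke the induction hypothesis on each premise to obtain $\lSemtyp i {t'} {S \func T}$ and $\lSemtyp i s S$, then feed these into the semantic application lemma to conclude $\lSemtyp i {t'\ s} T$. The $\su$, $\lambda$, and $\letbox$ cases follow exactly the same template: recurse into the premises, then apply the homonymous semantic rule.

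The layering must be threaded with a little care, but it causes no real friction. Layer-$0$ derivations only ever invoke the non-modal rules, so their cases are closed by the layer-$0$ semantic lemmas. The two modal rules occur only at layer $1$; of these, $\boxit t$ is the single case in which the induction hypothesis is taken at a layer different from the conclusion: its premise $\ltyping[\Psi][\cdot] 0 t T$ yields $\lSemtyp[\Psi][\cdot] 0 t T$ by the induction hypothesis, which is precisely what the layer-$1$ semantic box lemma consumes. The $\letbox$ case simply uses the induction hypotheses $\lSemtyp 1 s {\square T}$ and $\lSemtyp[\Psi, u : T] 1 t {T'}$ as the two premises of the semantic $\letbox$ lemma.

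I expect no genuine obstacle to remain in this theorem itself; the difficulty lay entirely upstream, in establishing the individual semantic rules—in particular the semantic $\letbox$ lemma, whose proof splits on whether $s$ evaluates to a $\boxit{t^c}$ or to a neutral term, and the semantic global-variable lemma, which crosses between the two gluing models by appeal to \Cref{lem:st:glu-0-to-1}. Once the fundamental theorem is in hand, the soundness theorem (\Cref{thm:st:sound}) follows by instantiating the layer-$1$ semantic judgment at the identity semantic global substitution and the identity gluing environment; the latter is related to $\uparrow^{\Psi;\Gamma}$ by \Cref{lem:st:ne-glue} extended pointwise to contexts, and \Cref{lem:st:glue-nf} then equates $t$ with $\downarrow^T$ of its interpretation, i.e. with $\nbe^T_{\;\Psi;\Gamma}(t)$.
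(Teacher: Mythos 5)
Your proposal is correct and coincides with the paper's own proof, which is stated simply as ``Induction'' precisely because, as you observe, all the substantive work lies in the preceding semantic-rule lemmas (one per typing rule, including the layer-crossing $\tbox$ case whose induction hypothesis is taken at layer $0$ and the $\tletbox$ case whose two premises are taken at layer $1$). Your closing remarks on deriving soundness also match the paper, which uses the two auxiliary lemmas $\Semtyp[\Phi]\id\Phi$ and $\id \sim \uparrow^{\Gamma} \in \glu{\Gamma}^1_{\;\Psi;\Gamma}$ and then concludes by \Cref{lem:st:glue-nf}.
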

\begin{proof}
  Induction.
\end{proof}

We would like to extract the soundness proof by setting $i = 1$. %
After that we need to also make sure we can eliminate the universal quantifications.
\begin{lemma}
  $\Semtyp[\Phi]\id\Phi$
\end{lemma}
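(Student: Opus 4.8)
The plan is to prove $\Semtyp[\Phi]\id\Phi$ by induction on the structure of the global context $\Phi$ (which we may assume satisfies $\iscore \Phi$, as guaranteed for valid contexts by \Cref{thm:st:ctx-wf}). The whole argument amounts to unfolding how the identity global substitution decomposes over a context extension and matching that decomposition against the two formation rules for $\Semtyp[\Psi]{\sigma}{\Phi}$; no new semantic reasoning is required beyond the monotonicity lemma for semantic global substitutions and the layer-$0$ semantic rule for global variables, both established above.

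For the base case $\Phi = \cdot$, the identity substitution is $\cdot$, and the empty rule gives $\Semtyp[\cdot]{\cdot}{\cdot}$ directly from $\iscore \cdot$. For the inductive case $\Phi = \Phi', u : T$ with $\iscore T$ and $\iscore{\Phi'}$, I would use the standard decomposition $\id_{\Phi', u:T} = \id_{\Phi'}[p(\id)], u/u$, where $p(\id) : \Phi', u:T \To_g \Phi'$ is the global weakening dropping the last variable. The induction hypothesis supplies $\Semtyp[\Phi']{\id}{\Phi'}$, and applying the monotonicity lemma for semantic global substitutions with $\gamma = p(\id)$ yields $\Semtyp[\Phi', u:T]{\id[p(\id)]}{\Phi'}$, which is precisely the first premise of the extension rule. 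For the second premise I need $\lSemtyp[\Phi', u:T][\cdot] 0 {u} T$, which is exactly an instance of the semantic typing rule for global variables at layer $0$: its side conditions $\iscore{\Phi', u:T}$, $\iscore \cdot$, and $u : T \in \Phi', u:T$ are all immediate. Feeding both premises into the extension rule produces $\Semtyp[\Phi', u:T]{\id[p(\id)], u/u}{\Phi', u:T}$, and rewriting along the decomposition closes the case.

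The only point demanding care — and the closest thing to an obstacle — is the bookkeeping around the identity substitution: confirming that $\id_{\Phi', u:T}$ genuinely equals $\id_{\Phi'}[p(\id)], u/u$ under the chosen representation of substitutions and weakenings, so that the conclusion of the extension rule lands on the intended target on the nose. Once that definitional identity is pinned down, everything else is a direct appeal to previously proved results, and the induction goes through routinely.
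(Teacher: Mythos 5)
Your proof is correct, and all the ingredients you invoke (the monotonicity lemma for semantic global substitutions, the layer-$0$ semantic rule for global variables, and the two formation rules for $\Semtyp[\Psi]{\sigma}{\Phi}$) are established in the paper before this lemma, so the argument goes through. However, it takes a slightly different route from the paper's. The paper's one-line proof (``use the semantic rule for global variables at layer $0$'') exploits the fact that the rule
$\inferrule{\iscore \Psi \\ \iscore \Gamma \\ u : T \in \Psi}{\lSemtyp 0{u}{T}}$
applies to a variable at \emph{any} position of the context, not just the topmost one. So one can keep the ambient context fixed at $\Phi$ throughout and induct only on the prefix of $\Phi$ being covered: each entry $u_i/u_i$ of $\id$ is semantically well-typed as $\lSemtyp[\Phi][\cdot] 0 {u_i}{T_i}$ directly, and the extension rule chains these together. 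No monotonicity and no decomposition of $\id$ across a weakening are needed. Your version instead shrinks the ambient context with the induction ($\Semtyp[\Phi']{\id}{\Phi'}$ as the IH) and then transports it back up along $p(\id) : \Phi', u:T \To_g \Phi'$ via monotonicity, which additionally forces you to verify the bookkeeping identity $\id_{\Phi', u:T} = \id_{\Phi'}[p(\id)], u/u$ — the point you rightly flag as the delicate one (it holds here because weakening a named global variable by an unrelated binding leaves it unchanged). Both arguments are sound; the paper's is more economical, while yours is the more generic pattern that would survive, e.g., a de Bruijn representation where the decomposition step is genuinely doing work.
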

\begin{proof}
  Use the semantic rule for global variables at layer $0$. 
\end{proof}

\begin{lemma}
  $\id \sim \uparrow^{\Gamma} \in \glu{\Gamma}^1_{\;\Psi;\Gamma}$
\end{lemma}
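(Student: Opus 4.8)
The plan is to prove this by induction on the local context $\Gamma$ (for arbitrary $\Psi$), exactly as in the standard NbE development for the simply typed $\lambda$-calculus. Both the identity local substitution $\id$ and the identity environment $\uparrow^{\Gamma}$ are built by recursion on $\Gamma$, and the local-context gluing model $\glu{\Gamma}^1$ is defined by recursion on the same structure, so the three constructions line up component-by-component and the induction goes through cleanly.

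For the base case $\Gamma = \cdot$, the identity substitution is $\cdot$ and $\uparrow^{\cdot} = *$, so $(\cdot, *) \in \glu{\cdot}^1_{\;\Psi;\cdot}$ holds immediately by the definition of the gluing model on the empty context.

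For the step case $\Gamma = \Gamma', x : T$, unfolding the definition of $\glu{\Gamma', x : T}^1_{\;\Psi;\Gamma', x:T}$ splits the goal into two subgoals: (i) the front part of the identity is related to $\uparrow^{\Gamma'}$ weakened into the extended context, and (ii) the self-assignment $x/x$ is related to $\uparrow^{T}_{\;\Psi;\Gamma', x:T}(x)$. For (i), the front component of $\id$ on $\Gamma', x:T$ is $\id_{\Gamma'}[\id; p(\id)]$ and the front component of $\uparrow^{\Gamma', x:T}$ is $\uparrow^{\Gamma'}[\id; p(\id)]$; the induction hypothesis gives $\id \sim \uparrow^{\Gamma'} \in \glu{\Gamma'}^1_{\;\Psi;\Gamma'}$, and applying the monotonicity lemma for local contexts (the layer-$1$ analogue of \Cref{lem:st:glue-context-mon}) along the weakening $\id; p(\id) : \Psi;\Gamma', x:T \To \Psi;\Gamma'$ transports this relation into the context $\Psi;\Gamma', x:T$, which is precisely (i). For (ii), the variable $x$ is a neutral form well typed at layer $1$ (its type $T$ is valid because $\istype \Gamma$), so the layer-$1$ analogue of \Cref{lem:st:ne-glue} gives $x \sim \uparrow^{T}_{\;\Psi;\Gamma', x:T}(x) \in \glu{T}^1_{\;\Psi;\Gamma', x:T}$. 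Assembling (i) and (ii) via the definition of the local-context gluing model closes the step case.

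The only point that deserves care — rather than being a genuine obstacle — is the bookkeeping that the front components of $\id$ and of $\uparrow^{\Gamma', x:T}$ are indeed $\id$ and $\uparrow^{\Gamma'}$ weakened by $\id; p(\id)$, so that monotonicity applies on the nose; this is just the coherence of the weakening calculus set up earlier. One must also take the layer-$1$ forms of both monotonicity and the neutral-gluing lemma rather than their layer-$0$ statements, which is justified by the remark that those properties carry over verbatim to layer $1$.
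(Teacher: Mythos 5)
Your proof is correct and matches the paper's approach: the paper's own proof is the one-line remark ``Apply \Cref{lem:st:ne-glue} for local variables,'' which, once unfolded, is exactly your induction on $\Gamma$ using the layer-$1$ neutral-gluing lemma for the variable component and layer-$1$ monotonicity (\Cref{lem:st:glue-context-mon}) to transport the induction hypothesis along the weakening $\id; p(\id)$. You have simply spelled out the bookkeeping that the paper leaves implicit.
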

\begin{proof}
  Apply \Cref{lem:st:ne-glue} for local variables.
\end{proof}

This concludes our soundness proof.
\begin{proof}[Proof of \Cref{thm:st:sound}]
  From the fundamental theorems, we have $\lSemtyp i t T$, and then due to the
  previous two lemmas
  $t \sim \intp{t}^1_{\;\Psi;\Gamma}(\uparrow^{\Psi;\Gamma}) \in
  \glu{T}^1_{\;\Psi;\Gamma}$. %
  We get the goal by \Cref{lem:st:glue-nf}.
\end{proof}

At this point, we have shown the completeness and soundness of the NbE algorithm.

\section{Layered Contextual Modal Type Theory}\labeledit{sec:contextual}

In the previous sections, we have developed layered modal type theory and its
normalization by evaluation proof using a presheaf model. %
Though our semantics already supports certain code inspection operations, these
operations can only handle closed code. %
This problem has already been addressed by \emph{contextual types} proposed
by~\citet{nanevski_contextual_2008}. %
Instead of clearing out all local assumptions in the $\square$ introduction rule,
contextual types allow to specify local assumptions that can be depended on. %
To support full case analysis on code, we introduce contextual types to our 2-layered
modal type theory, forming layered contextual modal type theory. %
In this section, we adapt our previous development to contextual types and pattern
matching, and in the next section, we adapt our previous NbE proof to show that with
contextual types, the type theory remains normalizing.

\subsection{Examples}

As briefly discussed in the previous section, we can support an \tisapp construct,
which branches based on whether the received argument is a code of function
application or not. %
This construct can be implemented by a more general pattern matching construct, which
branches into multiple cases based on a code like this:
\begin{align*}
  & \matc t \\
  & |\; ?u~?u' \STo \cdots \\
  & |\; \su ?u \STo \cdots \\
  & |\; \_ \STo \cdots
\end{align*}
A pattern variable $u$ following a question mark captures a code in a pattern. %
This style might have received some influence from Ltac, the tactic language in Coq. %
In this example, the first pattern matches a function application. %
We use $u$ for the code of the function and $u'$ for the code of the argument. %
If $t$ is a code of function application, then we fall into this case and execute the
body, which potentially uses $u$ and $u'$. %
In the second branch, the pattern matches a successor and use $u$ to capture the code
of the predecessor. %
At last, we use a wildcard to capture all remaining cases, which includes cases like
$\ze$ or variable $x$. %

Depending on the type of $t$, we know ahead of time that certain branches are not
possible. %
For example, if $t$ has type $\square \Nat$, then we know it cannot be introduced by a
$\lambda$ abstraction. %
In this case, we cannot match it against a pattern of $\lambda$. %
Similarly, if $t$ is a code of a function type, then both $\ze$ and $\su$ cases are
not possible and therefore these cases cannot exist. %
Though it is possible to know ahead of time when an introduction is not possible,
elimination forms are always possible. %
Function applications can return any type as long as it is the return type of a
function. %

To confess, in the previous discussion, we omit one detail: if we allow pattern of the form
$\lambda x. ?u$, what should be the type of $u$? %
Clearly it is not closed because $u$ can at least refer to $x$. %
This is when contextual types becomes useful. %
If $t$ is a closed code of a function and of type $\square (S \func T)$. %
Then its function body behind a $\lambda$ abstraction should have type $\cont[x :
S]{T}$. %
The $x : S$ to the left of the turnstile denotes the open context the underlying term
is open with respect to. %
In general, this context can be arbitrarily long. %
Consider the following pattern matching, where $t$ has type $\cont{S \func T}$:
\begin{align*}
  & \matc t \\
  & |\; \lambda x. ?u \STo (\matc {\boxit u} \sep ?u'~?u'' \STo \cdots \sep \_ \STo \cdots) \\
  & |\; \_ \STo \cdots
\end{align*}
In this pattern matching, we first test whether $t$ is a $\lambda$ abstraction. %
If it is the case, we use $u$ to capture the code of the body, which has type
$\cont[\Gamma, x : S] T$. %
Then we use another pattern matching to test whether $u$ is a function application or
not. %
The context in the type grows due to $\lambda$, while function applications do not
introduce new binder, so the context of $u'$ and $u''$ does not grow. %

\subsection{Syntax and Typing Judgments}

Following the same principle as in~\Cref{sec:st}, we define the syntax of the type
theory as follows:
\begin{alignat*}{2}
  S, T &:=&&\ \Nat \sep \cont T \sep S \func T
  \tag{Types, \Typ} \\
  \Gamma, \Delta &:= &&\ \cdot \sep \Gamma, x : T
                        \tag{Local contexts}\\
  \delta &:=&& \cdot \sep \delta, t/x \tag{Local substitutions} \\
  s, t &:=&&\ x \sep u^\delta \tag{Terms, $\Exp$} \\
  & && \sep \ze \sep \su t
  \tag{natural numbers} \\
  & && \sep \boxit t \sep \letbox u s t \sep \matc t\ \vect{\branch}
       \tag{box}\\
  & &&\sep \lambda x. t \sep s\ t \tag{functions}  \\
  \branch & := &&\ \var x \STo t \sep \ze \STo t \sep \su{?u} \STo t \sep \lambda x. ?u \STo t
               \sep ?u~?u' \STo t
               \tag{Branches} \\
  \Phi, \Psi &:= &&\ \cdot \sep \Phi, u : (\judge T)
  \tag{Global contexts}
\end{alignat*}
Instead of $\square T$, we now use $\cont T$, contextual types, to represent the type
of code. %
The local context $\Gamma$ in $\cont T$ is the open variables that can be used to
construct a code of $T$. %
We let $\square T := \cont[\cdot] T$ to recover the previous $\square$ type. %
To introduce a contextual type, we use $\boxit t$ where $t$ is the potentially open
term. %
For elimination, in addition to $\tletbox$, we use the pattern matching construct that
we introduced in the previous example section. %
There are only five possible patterns in a pattern matching, each representing one
possible syntax in the core type theory. %
These five cases are exhaustive already due to the layered nature of our type theory:
it is not possible to refer to modal components at layer $0$, which is guaranteed by
our typing judgments. %
In reality, the branches of a well-typed pattern matching expression must be less than
five because we forbid incompatible introduction forms according to the type
information as explained in the previous example section. %
Finally, we do not support nested patterns like $\lambda x. ?u~?u'$ in our language. %
Fortunately, all nested patterns can be converted into nested pattern matchings as we
did in a previous example.

Due to the introduction of local contexts in the $\square$ types, our global contexts
deviate from the previous form, having to also keep track of a local context for each
binding. %
Because of this extra local context, we can no longer simply refer to a global
variable $u$. %
Instead, we must specify how this local context is filled in using a local
substitution $\delta$. %
When $u$ is being substituted, we also need to apply this local substitution to obtain
a well-typed term in the current context. %
Before capturing the intuitions formally using typing judgments, we also need two
predicates on types as in \Cref{sec:st} to classify types in the core type theory and
those in the extended type theory. %
Since our core type theory remains the same, the predicate $\iscore T$ is also
unchanged. %
When we write the generalization $\iscore \Phi$, we mean that all types occurring in
$\Phi$ including those in respective local contexts are core types. %
We only need to amend the predicate $\istype T$:
\begin{mathpar}
  \inferrule
  { }
  {\istype \Nat}

  \inferrule
  {\istype S \\ \istype T}
  {\istype{S \func T}}

  \inferrule
  {\iscore \Gamma \\ \iscore T}
  {\istype{\cont T}}
\end{mathpar}
In the case of contextual types, we require $\iscore \Gamma$, because $\cont T$
represents an open code in the \emph{core type theory}, so $\Gamma$ must not have
access to another contextual type. %
$\istype \Gamma$ is the natural generalization of $\istype T$, which is used at layer
$1$ to describe the local contexts. 

Next we give the typing judgments. %
There are four kinds of typing judgments in the contextual system. %
$\ltyping i t T$ is the usual typing judgment for terms. %
$\ltyping i \delta \Delta$ is the typing judgment for local substitutions. %
$\ltyping 1 \branch {\judge[\Delta] T \STo T'}$ is the typing judgment for pattern $\branch$
knowing that the type of the scrutinee is $\cont[\Delta] T$ and the return type is
$T'$. %
Last, $\ltyping 1{\vect\branch}{\judge[\Delta] T \STo T'}$ is a generalization of
$\ltyping 1 \branch {\judge[\Delta] T \STo T'}$ and the \emph{covering} judgment for all branches of a
pattern matching expression. %
We only show the modal rules because they are the only changed rules:
\begin{mathpar}
  \inferrule
  {\istype \Gamma \\ \ltyping[\Psi][\Delta] 0 t T}
  {\ltyping{1}{\boxit t}{\cont[\Delta] T}}

  \inferrule
  {\ltyping 1 {s}{\cont[\Delta] T} \\ \ltyping 1 {\vect \branch}{\judge[\Delta] T \STo T'}}
  {\ltyping 1 {\matc s\ \vect\branch} T'}

  \inferrule
  {\iscore \Psi \\ \iscore \Gamma}
  {\ltyping 0 {\cdot}{\cdot}}

  \inferrule
  {\iscore \Psi \\ \istype \Gamma}
  {\ltyping 1 {\cdot}{\cdot}}

  \inferrule
  {\ltyping i {\delta}{\Delta} \\ \ltyping i {t}{T}}
  {\ltyping i {\delta, t/x}{\Delta, x : T}}

  \inferrule
  {\ltyping i \delta \Delta \\ u : (\judge[\Delta] T) \in \Psi}
  {\ltyping{i}{u^\delta}{T}}
\end{mathpar}

In the rule for pattern matching, we need typing rules for all branches:
\begin{mathpar}
  \inferrule
  {\iscore \Delta \\ \ltyping 1 t T'}
  {\ltyping 1 {\var x \STo t}{\judge[\Delta] T \STo T'}}
  
  \inferrule
  {\iscore \Delta \\ \ltyping 1 t T'}
  {\ltyping 1 {\ze \STo t}{\judge[\Delta] \Nat \STo T'}}

  \inferrule
  {\ltyping[\Psi, u : (\judge[\Delta]\Nat)] 1 t T'}
  {\ltyping 1 {\su ?u \STo t}{\judge[\Delta] \Nat \STo T'}}

  \inferrule
  {\ltyping[\Psi, u : (\judge[\Delta, x : S]T)] 1 t T'}
  {\ltyping 1 {\lambda x. ?u \STo t}{\judge[\Delta] S \func T \STo T'}}

  \inferrule
  {\forall \iscore S.~ \ltyping[\Psi, u : (\judge[\Delta]S \func T), u' : (\judge[\Delta]S)] 1 t T'}
  {\ltyping 1 {?u~?u' \STo t}{\judge[\Delta] T \STo T'}}
\end{mathpar}
At last, $\ltyping 1 {\vect \branch}{\judge[\Delta] T \STo T'}$ includes \emph{all
  compatible} rules from the $\ltyping 1 \branch {\judge[\Delta] T \STo T'}$ judgment
above according to $T$. %
We write $\vect\branch(t)$ for the lookup of a branch. %
For example, $\vect\branch(t~s)$ returns a branch $?u~?u' \STo t'$ in $\vect\branch$ because
$t~s$ is a function application. %
We use the typing information to ensure this lookup is only performed legally.

Many rules just generalize their previous corresponding rules. %
In the introduction rule for $\square$, $t$ is typed with $\Delta$ instead of an empty
local context. %
When we refer to $u$, we need to specify a local substitution, so we must also move
the judgment for local substitutions up earlier than before. %
Notice that now we also distinguish local substitutions by layers, as opposed to
before, where local substitutions only live at layer $1$. %
When we refer to $u$ at layer $0$, $u$ effectively is a slot waiting to be substituted
by an open code at layer $0$. %
Therefore, the local substitution $\delta$ for the open variables must only contain
code at layer $0$ as well. %
When we refer to $u$ at layer $1$, we are evaluating or embedding the code represented
by $u$ in the extended type theory. %
Since we are operating at layer $1$, it does no harm to let the terms in $\delta$ to
also live at layer $1$. %
In this case, due to the confluence property of the system, contextual types behave no
differently from functions. %
In the special case where $\delta$ is the identity local substitution, we very often
omit it for brevity, i.e. $u^\id = u$. 

At last we have the elimination rule using pattern matching. %
After typing the scrutinee $s$, we also need to type all branches. %
What branches are possible depends on $T$. %
As explained before, there are impossible cases in $\vect\branch$ based
on $T$. %
The typing judgment requires $\vect\branch$ to contain all possible branches. %
For each branch, we proceed as expected, and push global variables to the global
context as needed when code is captured by pattern variables. %
The judgment for all branches $\ltyping 1 {\vect \branch}{\judge[\Delta] T \STo T'}$
is defined as:
\begin{mathpar}
  \inferrule
  {\forall b \in \vect \branch ~.~ \ltyping 1 {\branch}{\judge[\Delta] \Nat \STo T'}
    \\
    \branch_\ze = \ze \STo t \text{ for some } t \\
    \branch_\tsucc = \su ?u \STo t \text{ for some } t \\
    \branch_\tapp = ?u~?u' \STo t \text{ for some } t \\
    \forall x : \Nat \in \Delta ~.~ b_x = \var x \STo t \text{ for some } t \\
  \vect\branch \text{ is a permutation of } \{ \branch_\ze, \branch_\tsucc,
  \branch_\tapp, \branch_x \text{ for all } x : \Nat \in \Delta \}}
  {\ltyping 1{\vect\branch}{\judge[\Delta] \Nat \STo T'}}

  \inferrule
  {\forall b \in \vect \branch ~.~ \ltyping 1 {\branch}{\judge[\Delta] S \func T \STo T'}
    \\
    \branch_\lambda = \lambda x. ?u \STo t \text{ for some } t \\
    \branch_\tapp = ?u~?u' \STo t \text{ for some } t \\
    \forall x : S \func T \in \Delta ~.~ b_x = \var x \STo t \text{ for some } t \\
  \vect\branch \text{ is a permutation of } \{ \branch_\lambda,
  \branch_\tapp, \branch_x \text{ for all } x : S \func T \in \Delta \}}
  {\ltyping 1{\vect\branch}{\judge[\Delta] S \func T \STo T'}}
\end{mathpar}

There are two issues worth noting in the typing of pattern matching. %
The first one is the case of function applications. %
The premise is a universal quantification over all possible $\iscore S$. %
This is necessary due to the lack of expressive power of our simple type system. %
In a function application, the argument type is hidden and we can not learn it from
the return type. %
In order to make sure the body is well-typed for all argument types, we effectively
introduce certain parametricity in the meta-theory to ensure that $t$ is oblivious to
the concrete argument type $S$ given $\iscore S$. %
This treatment only exists due to simple types. %
With strong type systems, e.g. System F-style type variables or dependent types, this
issue can be handled by unification or other methods, c.f. \citet{Jang:POPL22}. %

The second problem is the lack of recursion on code. %
Indeed, the type theory we are presenting now cannot define a recursive function on
code structure. %
This is, again, because of our simple type system. %
Consider a possible recursive operation which inspects a code and recurses on the
code of the function if the scrutinee is a function application. %
This operation is not definable in our current type theory because the type of code
has changed. %
If the original scrutinee has type $\cont[\Delta] T$, then the recursive call is on a
term of type $\cont[\Delta]{S \func T}$ for some $S$. %
Nevertheless, the semantics is ready for recursion on code. %
With stronger types, this feature becomes possible, and indeed we will support it when
we adapt our layered modal type theory to dependent types. 

\subsection{Neutral Forms and Equivalence Rules}

Inheriting the previous development, our equivalence rules are also layered. %
The same as before, we have PER rules, congruence rules, $\beta$ rules and $\eta$
rules. %
The PER and $\eta$ rules stay the same so we omit them here. %
The congruence rules are simply derived from the typing judgments above, so we also
omit most of them here. %
The congruence rules for local substitutions are:
\begin{mathpar}
  \inferrule
  {\wf 0 \Psi \\ \wf i \Gamma}
  {\ltyequiv i {\cdot}{\cdot}{\cdot}}

  \inferrule
  {\ltyequiv i {\delta}{\delta'}{\Delta} \\ \ltyequiv i {t}{t'}{T}}
  {\ltyequiv i {\delta, t/x}{\delta', t'/x}{\Delta, x : T}}
\end{mathpar}
In this rule, the equivalence relation simply propagates down recursively. %
Yet, depending on $i$, this rule behaves differently. %
At layer $0$, since all terms are identified by their syntactic structures,
equivalence between local substitutions also implies identical syntax. %
However, at layer $1$, due to dynamics, we allow all terms inside to reduce or expand
freely.

The following are the $\beta$ rules for pattern matching on contextual types. %
The rules are very intuitive, as they simply look up the branches according to
the code and perform reduction accordingly:
\begin{mathpar}
  \inferrule
  {\ltyequiv i \delta{\delta'} \Delta \\ u : (\judge[\Delta] T) \in \Psi}
  {\ltyequiv{i}{u^\delta}{u^{\delta'}}{T}}
  
  \inferrule
  {x : T \in \Delta \\ \ltyping 1 {\vect \branch}{\judge[\Delta] T \STo T'} \\
    \vect\branch(x) = \var x \STo t}
  {\ltyequiv 1 {\matc {\boxit x}\ \vect\branch}{t} T'}  

  \inferrule
  {\ltyping 1 {\vect \branch}{\judge[\Delta] \Nat \STo T'} \\
    \vect\branch(\ze) = \ze \STo t}
  {\ltyequiv 1 {\matc {\boxit \ze}\ \vect\branch}{t} T'}  

  \inferrule
  {\ltyping[\Psi][\Delta] 0 s\Nat \\ \ltyping 1 {\vect \branch}{\judge[\Delta] \Nat \STo T'}
    \\ \vect\branch(\su s) = \su ?u \STo t}
  {\ltyequiv 1 {\matc {\boxit {(\su s)}}\ \vect\branch}{t[s/u]}{T'}}

  \inferrule
  {\ltyping[\Psi][\Delta, x : S] 0 s T \\ \ltyping 1 {\vect \branch}{\judge[\Delta] S \func T \STo
      T'} \\ \vect\branch(\lambda x. s) = \lambda x. ?u \STo t}
  {\ltyequiv 1 {\matc {\boxit {(\lambda x. s)}}\ \vect\branch}{t[s/u]}{T'}}

  \inferrule
  {\ltyping[\Psi][\Delta] 0 t{S \func T} \\ \ltyping[\Psi][\Delta] 0 s S \\
    \ltyping 1 {\vect \branch}{\judge[\Delta] T \STo T'} \\
    \vect\branch(t~s) = ?u~?u' \STo t}
  {\ltyequiv 1 {\matc {\boxit {(t~s)}}\ \vect\branch}{t[t/u, s/u']}{T'}}
\end{mathpar}
Here we rely on global substitutions which are about to be defined, but otherwise the
$\beta$ rules should follow intuition very closely. %
However, there is one tiny detail lying in this set of rules which might be difficult
to notice at the first glance: what if we match against a global variable, i.e. what
$\matc {\boxit{u^\delta}}\ \vect\branch$ should be reduced to?

This problem is raised by~\citet{kavvos_intensionality_2021}, where the author points
out that one example for the $\tisapp$ function by~\citet{gabbay_denotation_2013}
seems to break confluence:
\begin{align*}
  \letbox u {\boxit{(t~s)}} {\tisapp~(\boxit{u})}
\end{align*}
In our settings, we would use pattern matching to implement the $\tisapp$ function. %
Then \citet{kavvos_intensionality_2021} points out two possible equivalence chains:
\begin{align*}
  \letbox u {\boxit{(t~s)}} {\tisapp~(\boxit{u})}
  \approx \tisapp~(\boxit{(t~s)})
  \approx \tru
\end{align*}
There is another chain:
\begin{align*}
  \letbox u {\boxit{(t~s)}} {\tisapp~(\boxit{u})}
  \approx \letbox u {\boxit{(t~s)}}\fal
  \approx \fal
\end{align*}
The first equivalence makes more sense and is expected, so let us look into why the
second case goes wrong. %
It appears that in the second case, $\tisapp~(\boxit u)$ reduces to $\fal$, which
voids the later substitution of $t~s$ for $u$. %
Though there could be problems in~\citet{gabbay_denotation_2013} that exhibit this
issue, in our settings, $\tisapp~(\boxit{u}) \approx \fal$ would correspond to that
global variables have their own branch in a pattern matching, which returns \fal. %
This, however, is not the case in our system. %
In fact, in our system, $\matc {\boxit {u^\delta}} \ \vect\branch$ is blocked and
considered neutral. %
This choice is the same as considering $\matc x \ \vect\branch$ neutral, as they are both
waiting for a substitution to supply an actual input for the computation to continue. %
The only difference is that $\matc x \ \vect\branch$ is waiting for a local substitution
that could give $\tbox$ of some open code, while $\matc {\boxit {u^\delta}} \ \vect\branch$
is waiting for a global substitution that actually supplies the code for $u$. %
Therefore, by adding pattern matching as an elimination form, we in
fact need to change our definition of neutral forms in a slightly surprising way:
\begin{alignat*}{2}
    v &:= &&\ x \sep u^\theta \sep v\ w \sep \letbox u v w \sep \matc v \ \vect\nbranch \sep \matc {\boxit{u^\delta}} \ \vect\nbranch \tag{Neutral form
             ($\Ne$)} \\
  \nbranch &:= && \ \var x \STo w \sep \ze \STo w \sep \su{?u} \STo w \sep \lambda x. ?u \STo w
               \sep ?u~?u' \STo w
              \tag{Normal branches} \\
  \theta & := && \cdot \sep \theta, w/x
                 \tag{Normal local substitutions}
\end{alignat*}
The definition of normal form stays the same as before. %
This definition of neutral forms also implies that $\tletbox$ in fact cannot be
faithfully encoded by pattern matching. %
If we define
\begin{align*}
  \letbox u s t := \matc {s} \sep ?u \STo t
\end{align*}
namely letting bodies of all branches be $t$, this encoding will not behave the same as the
original $\tletbox$. %
When $s$ is some $\boxit{u'}$, the original $\tletbox$ will reduce and substitute $u'$
for $u$, while with this encoding, the evaluation will get stuck. %

Due to this difference in computational behavior, we let both elimination forms to
coexist. %
In our layered view of type theory, this is perfectly fine because our core theory is
unaffected. %
Coexistence only enriches the extended type theory. %
We use $\tletbox$ primarily for the execution and the composition of code and use
pattern matching for intensional analysis. %
Without layers, both $\tletbox$ and pattern matching are also available inside of
$\tbox$, which makes the behavior of the overall type theory overwhelmingly difficult
to control. %

\subsection{Full Set of Rules for Completeness}

In this section, we simply list the rest of the equivalence rules for the sake of
complete definition of the system. %
We have the following PER rules:
\begin{mathpar}
  \inferrule*
  {\ltyequiv i s t T}
  {\ltyequiv i t s T}

  \inferrule*
  {\ltyequiv i s t T \\ \ltyequiv i t u T}
  {\ltyequiv i s u T}

  \inferrule*
  {\ltyequiv i {\delta}{\delta'}\Delta}
  {\ltyequiv i {\delta'}{\delta} \Delta}

  \inferrule*
  {\ltyequiv[\Psi][\Gamma_0] i \delta \delta' {\Gamma_1} \\ \ltyequiv[\Psi][\Gamma_1] i {\delta'}{\delta''}{\Gamma_2}}
  {\ltyequiv[\Psi][\Gamma_0] i \delta {\delta''} {\Gamma_1}}
\end{mathpar}

Then we have the following congruence rules:
\begin{mathpar}
  \inferrule*
  {\wf 0 \Psi \\ \wf i \Gamma}
  {\ltyequiv i {\cdot}{\cdot}{\cdot}}

  \inferrule*
  {\ltyequiv i {\delta}{\delta'}{\Delta} \\ \ltyequiv i {t}{t'}{T}}
  {\ltyequiv i {\delta, t/x}{\delta', t'/x}{\Delta, x : T}}

  \inferrule*
  {\ltyequiv i \delta{\delta'} \Delta \\ u : (\judge[\Delta] T) \in \Psi}
  {\ltyequiv{i}{u^\delta}{u^{\delta'}}{T}}
  
  \inferrule*
  {\wf 0 \Psi \\ \wf i \Gamma \\  x : T \in \Gamma}
  {\ltyequiv i x x T}
  
  \inferrule*
  {\wf 0 \Psi \\ \wf i \Gamma}
  {\ltyequiv{i}{\ze}{\ze}{\Nat}}

  \inferrule*
  {\ltyequiv{i}{t}{t'}{\Nat}}
  {\ltyequiv{i}{\su t}{\su t'}{\Nat}}

  \inferrule
  {\ltyequiv i {s_0}{s_0'} T \\ \ltyequiv[\Psi][\Gamma, x : \Nat, y : T]i{s_1}{s_1'}T
    \\ \ltyequiv i t {t'}\Nat}
  {\ltyequiv i{\recn T{s_0}{x~y. s_1}t}{\recn T{s_0'}{x~y. s_1'}{t'}}T}

  \inferrule*
  {\ltyequiv[\Psi][\Gamma, x : S]{i}{t}{t'}{T}}
  {\ltyequiv{i}{\lambda x. t}{\lambda x. t'}{S \func T}}

  \inferrule*
  {\ltyequiv{i}{t}{t'}{S \func T} \\ \ltyequiv{i}{s}{s'}{S}}
  {\ltyequiv{i}{t\ s}{t'\ s'}{T}}

  \inferrule*
  {\wf 0 \Psi \\ \ltyequiv[\Psi][\cdot]{0}{t}{t'}T}
  {\ltyequiv{1}{\boxit t}{\boxit t'}{\square T}}

  \inferrule*
  {\ltyequiv 1 {s}{s'}{\square T} \\ \ltyequiv[\Psi, u : T] 1 {t}{t'}{T'}}
  {\ltyequiv 1 {\letbox u s t}{\letbox u{s'}{t'}} T'}
\end{mathpar}

\subsection{Reexamination of Syntactic Properties}

After establishing the 2-layered contextual modal type theory, we now reexamine the
useful properties of the judgments in \Cref{sec:st}. %
Most of these properties pleasantly remain stable in our contextual system. %
We simply enumerate them and sketch their proofs.

\begin{lemma}
  If $\iscore T$ ($\iscore \Gamma$, resp.), then $\istype T$ ($\istype \Gamma$, resp.).
\end{lemma}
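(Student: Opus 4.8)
The plan is to prove both statements by a straightforward induction, observing that the context-level claim reduces immediately to the type-level one, exactly as in the simply typed case of \Cref{sec:st}.

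First I would establish the type-level claim, $\iscore T \Rightarrow \istype T$, by induction on the derivation of $\iscore T$. The predicate $\iscore$ is generated by only two rules, the base rule for $\Nat$ and the function rule, and crucially there is no rule producing $\iscore{\cont T}$, since core types carry no modality. In the base case the goal $\istype \Nat$ holds by the base rule of $\istype$. In the function case, $\iscore{S \func T}$ arises from $\iscore S$ and $\iscore T$, so the induction hypotheses yield $\istype S$ and $\istype T$, and the function rule of $\istype$ gives $\istype{S \func T}$. Note that the newly amended contextual rule for $\istype$—which requires $\iscore \Gamma$ and $\iscore T$ as premises—is never invoked here, precisely because $\iscore$ never introduces a contextual type; this is why adding contextual types leaves the statement undisturbed.

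For the context-level claim, $\iscore \Gamma \Rightarrow \istype \Gamma$, I would induct on the structure of $\Gamma$. The empty context is immediate. For an extended context of the form $\Gamma, x : T$, unfolding the generalized premise $\iscore{(\Gamma, x : T)}$ gives $\iscore \Gamma$ together with $\iscore T$; the induction hypothesis supplies $\istype \Gamma$ and the type-level claim just proved supplies $\istype T$, whence $\istype{(\Gamma, x : T)}$.

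I do not anticipate any genuine obstacle: the argument has the same shape as the original lemma in \Cref{sec:st}, and the only new feature—contextual types—appears solely on the $\istype$ side and never on the $\iscore$ side, so the extension is entirely inert for this particular lemma. The only point requiring attention is to unfold the generalized predicates $\iscore \Gamma$ and $\istype \Gamma$ pointwise over the entries of the context, so that the two inductions interleave cleanly with the type-level case feeding the context-level one.
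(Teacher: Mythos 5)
Your proposal is correct and matches the paper's proof, which is simply ``Induction'': a structural induction on $\iscore T$ (where the $\cont{}$ case of $\istype$ is never needed because $\iscore$ has no modal rule), followed by a pointwise induction on $\Gamma$ that feeds off the type-level claim. The level of detail you supply is exactly the routine unfolding the paper leaves implicit.
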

\begin{proof}
  Induction.
\end{proof}

\begin{theorem}[Validity]\labeledit{thm:ct:validity}
  If $\ltyping i t T$, then $\iscore \Psi$ and
  \begin{itemize}
  \item if $i = 0$, then $\iscore \Gamma$ and $\iscore T$;
  \item if $i = 1$, then $\istype \Gamma$ and $\istype T$.
  \end{itemize}
\end{theorem}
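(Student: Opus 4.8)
The plan is to prove this by \emph{mutual} induction over all four typing judgments of the contextual system --- term typing, local-substitution typing, single-branch typing, and the covering judgment --- since they are defined simultaneously. For each judgment I would carry along the matching validity invariant: for $\ltyping i \delta \Delta$, that $\iscore \Psi$ holds together with $\iscore \Gamma, \iscore \Delta$ (if $i = 0$) or $\istype \Gamma, \istype \Delta$ (if $i = 1$); and for the branch and covering judgments $\ltyping 1 {\vect\branch}{\judge[\Delta] T \STo T'}$, that $\iscore \Psi$, $\istype \Gamma$, and $\istype T'$ hold.

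First I would dispatch the base cases. The axiom rules for variables, $\ze$, and the empty local substitutions all carry the context side conditions ($\iscore \Psi$, and $\iscore \Gamma$ or $\istype \Gamma$) as explicit premises, so those parts are immediate. For the local-variable rule, $x : T \in \Gamma$ together with the context predicate yields $\iscore T$ (layer $0$) or $\istype T$ (layer $1$) directly, since $\iscore \Gamma$ / $\istype \Gamma$ asserts the predicate of every type in $\Gamma$. The step case $\delta, t/x$ for local substitutions propagates by applying the IH to $\delta$ and to $t$: the validity of $t$ supplies the predicate on its type, which is exactly what is appended to $\Delta$.

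The inductive cases for $\lambda$ and application propagate routinely, using that $\istype{S \func T}$ holds iff $\istype S$ and $\istype T$. The modal cases carry the real content. For $\boxit t$ at layer $1$, the IH on $\ltyping[\Psi][\Delta] 0 t T$ gives $\iscore \Delta$ and $\iscore T$, whence the formation rule for $\cont[\Delta] T$ yields $\istype{\cont[\Delta] T}$; the context fact $\istype \Gamma$ is a premise. For the global variable $u^\delta$ with $u : (\judge[\Delta] T) \in \Psi$, the IH on $\delta$ supplies $\iscore \Psi$ and the appropriate context predicate; I would then use the generalized reading of $\iscore \Psi$ --- that every type recorded in $\Psi$, including those under each local context, is core --- to read off $\iscore T$, and at layer $1$ lift it to $\istype T$ via the subsumption lemma. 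For $\matc s\ \vect\branch$, the IH on the scrutinee $\ltyping 1 s {\cont[\Delta] T}$ gives $\istype \Gamma$ (and, by inverting $\istype{\cont[\Delta] T}$, also $\iscore \Delta$ and $\iscore T$ if needed), while the IH on the covering judgment gives $\istype T'$, which is precisely the type of the conclusion.

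The main obstacle, I expect, is keeping the mutual induction honest across the branch and covering judgments, especially the function-application branch whose premise is universally quantified over all $\iscore S$. There I would instantiate the quantifier at an arbitrary core type (e.g.\ $\Nat$), apply the IH to the body $t$ in the extended global context $\Psi, u : (\judge[\Delta]{S \func T}), u' : (\judge[\Delta]S)$ to obtain $\istype T'$ together with the core-well-formedness of that extended context, and then strip the two added bindings to recover $\iscore \Psi$. Everything else is a direct propagation of the core and type well-formedness predicates through the structure of the derivation.
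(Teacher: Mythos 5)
Your proposal is correct and follows essentially the same route as the paper, which proves this statement by mutual induction together with the corresponding validity statements for local substitutions and branches, relying on the context well-formedness premises built into the rules, subsumption of $\iscore$ by $\istype$, and the formation rule for $\cont[\Delta]T$. Your handling of the universally quantified application-branch premise (instantiating at a concrete core type such as $\Nat$ and stripping the added global bindings) and your slightly leaner branch invariant (recovering $\iscore\Delta$ and $\iscore T$ from the scrutinee instead of carrying them) are only cosmetic variations on the paper's mutual statement.
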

\begin{theorem}[Validity]
  If $\ltyping i \delta \Delta$, then $\iscore \Psi$ and
  \begin{itemize}
  \item if $i = 0$, then $\iscore \Gamma$ and $\iscore \Delta$;
  \item if $i = 1$, then $\istype \Gamma$ and $\istype \Delta$.
  \end{itemize}
\end{theorem}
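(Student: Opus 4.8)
The plan is to argue by induction on the derivation of $\ltyping i \delta \Delta$, closely mirroring the term-level validity theorem (\Cref{thm:ct:validity}) and leaning on it for the single nontrivial ingredient. The local substitution judgment has only three rules, so the case analysis is short.

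First I would dispatch the two base cases, corresponding to the empty substitution at each layer. For the layer-$0$ rule concluding $\ltyping 0 \cdot \cdot$, the premises already supply $\iscore \Psi$ and $\iscore \Gamma$, and the target context $\Delta = \cdot$ satisfies $\iscore \cdot$ vacuously. The layer-$1$ rule is symmetric: its premises give $\iscore \Psi$ and $\istype \Gamma$, and $\istype \cdot$ holds trivially. No appeal to any earlier result is needed here.

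The only rule requiring work is the extension rule, with premises $\ltyping i \delta \Delta$ and $\ltyping i t T$ and conclusion $\ltyping i {\delta, t/x}{\Delta, x : T}$. Applying the induction hypothesis to the first premise yields $\iscore \Psi$ together with the layer-indexed conditions on $\Gamma$ and on $\Delta$. Applying the term-level validity theorem (\Cref{thm:ct:validity}) to the second premise supplies the matching condition on the newly added type $T$, namely $\iscore T$ when $i = 0$ and $\istype T$ when $i = 1$ (it also re-derives the $\Psi$ and $\Gamma$ conditions, which are carried through unchanged). Since the context predicates satisfy $\iscore{\Delta, x : T}$ exactly when $\iscore \Delta$ and $\iscore T$, and likewise for $\istype$, these facts combine to give $\iscore{\Delta, x : T}$ when $i = 0$ and $\istype{\Delta, x : T}$ when $i = 1$, which is precisely the required conclusion.

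There is no serious obstacle once the term-level theorem is available. The one subtlety worth flagging is that the two validity statements are genuinely mutually dependent: the term rule for $u^\delta$ carries a local-substitution premise $\ltyping i \delta \Delta$, while the substitution extension rule carries a term premise $\ltyping i t T$. Strictly, then, both statements should be established by a single simultaneous induction over the two judgment forms; having presented \Cref{thm:ct:validity} first, I simply invoke it here, with the understanding that in the formal development the two are proved together.
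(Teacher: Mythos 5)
Your proposal is correct and takes essentially the same route as the paper, which proves all of the validity theorems (for terms, local substitutions, and branches) at once by a single mutual induction---exactly the simultaneous induction you flag in your closing paragraph. Your three-rule case analysis, with the term-level theorem invoked in the extension case, is precisely the local-substitution component of that mutual induction, spelled out in detail.
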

\begin{theorem}[Validity]
  If $\ltyping 1 \branch {\judge[\Delta] T \STo T'}$, then
  $\iscore \Psi$, $\istype \Gamma$, $\iscore \Delta$, $\iscore T$ and $\istype{T'}$. 
\end{theorem}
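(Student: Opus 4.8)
The plan is to do case analysis on the final rule of the derivation of $\ltyping 1 \branch {\judge[\Delta] T \STo T'}$, of which there are exactly five (one per branch shape), and in each case to apply the already-proved term-validity theorem (\Cref{thm:ct:validity}) to the typing premise for the branch body. Since every body $t$ is typed at layer $1$, \Cref{thm:ct:validity} uniformly yields two of the five conclusions, $\istype \Gamma$ and $\istype{T'}$, in all cases, and additionally a core-ness judgment $\iscore{\Psi'}$ for the (possibly extended) global context $\Psi'$ under which $t$ is typed. The remaining obligations $\iscore \Psi$, $\iscore \Delta$ and $\iscore T$ are then read off per case, mostly by inversion on $\iscore{\Psi'}$.

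In the two non-binding branches, $\var x \STo t$ and $\ze \STo t$, the global context is unchanged ($\Psi' = \Psi$) so $\iscore \Psi$ is immediate, and $\iscore \Delta$ is an explicit premise of the rule. For $\iscore T$: the zero branch fixes $T = \Nat$, where $\iscore \Nat$ holds trivially, while the variable branch matches a code variable of the scrutinee type, so that $x : T \in \Delta$, whence $\iscore T$ follows by inverting $\iscore \Delta$ at $x$. In the successor and lambda branches the body is typed under $\Psi$ extended by a global binding carrying a contextual type, so inversion on $\iscore{\Psi'}$ simultaneously gives back $\iscore \Psi$ and forces the bound contextual type to be core: the successor binding $u : (\judge[\Delta]\Nat)$ yields $\iscore \Delta$ (and $T = \Nat$ is core), and the lambda binding $u : (\judge[\Delta, x : S]T)$ yields $\iscore{(\Delta, x : S)}$, hence $\iscore \Delta$, $\iscore S$ and $\iscore T$, which combine to $\iscore{(S \func T)}$ — exactly the scrutinee type in this case.

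The one genuinely delicate case is the application branch $?u~?u' \STo t$, whose premise is \emph{universally quantified} over all core $S$, namely $\forall \iscore S.\ \ltyping[\Psi, u : (\judge[\Delta]S \func T), u' : (\judge[\Delta]S)] 1 t T'$. To extract any concrete typing to feed into \Cref{thm:ct:validity}, I would instantiate the quantifier at a fixed core witness, say $S = \Nat$, and then invert the resulting $\iscore{(\Psi, u : (\judge[\Delta]\Nat \func T), u' : (\judge[\Delta]\Nat))}$ to obtain $\iscore \Psi$, $\iscore \Delta$ and $\iscore{(\Nat \func T)}$, the last of which gives $\iscore T$. The soundness of this step rests on the observation that none of the five target judgments depends on the choice of $S$, so instantiating at $\Nat$ loses nothing.

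I expect the main obstacle to be precisely this handling of the universally-quantified application rule: one must check both that the chosen witness for $S$ is core (so that the premise may be legitimately instantiated) and that the conclusions are genuinely $S$-independent. A secondary subtlety, easy to overlook, is the variable branch, where deriving $\iscore T$ presupposes that the rule records the matched variable's membership $x : T \in \Delta$, after which $\iscore \Delta$ closes the case. Everything else is routine inversion and reuse of \Cref{thm:ct:validity}.
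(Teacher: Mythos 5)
Your proposal is correct in substance and is essentially the paper's proof: the paper establishes all three validity statements (terms, local substitutions, branches) by a single mutual induction, and your five-way case analysis with appeals to \Cref{thm:ct:validity} on the body premises is exactly the branch component of that induction. Your two "delicate" points are also handled correctly: instantiating the universally quantified application premise at the core witness $S = \Nat$ (legitimate since $\iscore \Nat$, and the conclusions are $S$-independent), and observing that the $\var x \STo t$ case only yields $\iscore T$ if the rule carries the membership premise $x : T \in \Delta$ — which the paper's first statement of that rule omits but its later restatement (in the global substitution proof) includes, so your reading matches the intent.

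One caveat on framing: term validity is not "already proved" independently of this theorem. The pattern-matching rule for terms has the branch judgment as a premise, and its validity case needs $\istype{T'}$, which only branch validity supplies; so the two statements cannot be sequenced, only proved simultaneously. Your proof remains correct, but the appeals to \Cref{thm:ct:validity} on the branch bodies must be read as invocations of the mutual induction hypothesis (legitimate, since those typings are strict subderivations of the branch derivation), not as citations of a free-standing, previously closed theorem.
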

\begin{proof}
  Mutual induction.
\end{proof}

The lifting of layer from $0$ to $1$ is mutually proved between terms and local substitutions.
\begin{lemma}[Lifting] $ $
  \begin{itemize}
  \item If $\ltyping 0 t T$, then $\ltyping 1 t T$. 
  \item If $\ltyping 0 \delta \Delta$, then $\ltyping 1 \delta \Delta$. 
  \end{itemize}
\end{lemma}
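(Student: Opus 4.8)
The plan is to prove both statements simultaneously by mutual induction on the derivations of $\ltyping 0 t T$ and $\ltyping 0 \delta \Delta$. The key structural observation is that at layer $0$ no genuinely modal construct is well typed: the grammar admits $\tbox$, $\tletbox$, and $\matc{}$ only at layer $1$ (their typing rules fix $i = 1$, and $\cont[\Delta] T$ is never a core type), so when inducting on a layer-$0$ derivation the only term formers that can occur are $x$, $u^\delta$, $\ze$, $\su t$, $\lambda x. t$, and $s\ t$. Consequently the induction has relatively few cases, and the single point where the layers genuinely interact is the global-variable former $u^\delta$, whose premise carries a layer-$0$ local substitution that must itself be lifted.

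For the non-modal structural formers ($\su t$, $\lambda x. t$, and $s\ t$) the typing rules are uniform in the layer, so I would simply invoke the induction hypothesis on the subderivations and reassemble the same rule instance at layer $1$; in the $\lambda$ case the context is extended by $x : S$, and validity (\Cref{thm:ct:validity}) at layer $0$ guarantees $\iscore S$, so the extended context stays well formed. The base cases $\ze$ and $x$ differ between the layers only in the well-formedness premise on the local context: the layer-$0$ rules require $\iscore \Gamma$ whereas the layer-$1$ rules require $\istype \Gamma$. Here I would appeal to the preceding lemma that $\iscore \Gamma$ implies $\istype \Gamma$ to discharge the stronger premise, while $\iscore \Psi$ is shared by both layers and needs no adjustment. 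For the global variable $u^\delta$ with $u : (\judge[\Delta] T) \in \Psi$ the rule is again uniform in the layer, so I would apply the substitution half of the mutual induction hypothesis to obtain $\ltyping 1 \delta \Delta$ from $\ltyping 0 \delta \Delta$ and then re-apply the $u^\delta$ rule at layer $1$.

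The local-substitution half proceeds analogously. The empty substitution $\cdot$ at layer $0$ requires $\iscore \Gamma$, which I convert to $\istype \Gamma$ exactly as above to match the layer-$1$ rule; the extension $\delta, t/x$ uses the substitution hypothesis on $\delta$ together with the term hypothesis on $t$, the latter being precisely where the two halves of the mutual induction feed into one another.

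The proof contains no serious obstacle: the only subtlety is bookkeeping, namely that term lifting and substitution lifting must be carried out by a single mutual induction because $u^\delta$ consumes a substitution while $\delta, t/x$ consumes a term. The one genuinely content-bearing step is the upgrade of the local-context constraint from $\iscore \Gamma$ to $\istype \Gamma$, which is already packaged as a separately stated lemma, so no new reasoning about the core/extended classification of types is required.
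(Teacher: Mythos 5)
Your proof is correct and takes essentially the same approach as the paper, which proves both statements by a single mutual induction (the paper's proof is just that one line). The details you supply—upgrading $\iscore \Gamma$ to $\istype \Gamma$ at the leaf rules ($x$, $\ze$, and the empty substitution) via the preceding lemma, and invoking the substitution half of the mutual induction hypothesis in the $u^\delta$ case—are exactly the steps the paper leaves implicit.
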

\begin{proof}
  Both lifting lemmas are proved by mutual induction. 
\end{proof}
Similarly, the following lemmas are also proved mutually:
\begin{lemma} $ $
  \begin{itemize}
  \item If $\ltyequiv 0 t s T$, then $t = s$.
  \item If $\ltyequiv 0 {\delta}{\delta'} \Delta$, then $\delta = \delta'$. 
  \end{itemize}
\end{lemma}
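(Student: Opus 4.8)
The plan is to prove both statements simultaneously by mutual induction on the derivations of $\ltyequiv 0 t s T$ and $\ltyequiv 0 {\delta}{\delta'} \Delta$. The crucial observation, inherited from the corresponding single-statement lemma in \Cref{sec:st}, is that at layer $0$ the equivalence judgment is generated \emph{only} by the PER rules (symmetry and transitivity) and the congruence rules: the $\beta$ and $\eta$ rules, as well as the modal congruences for $\tbox$, $\tletbox$, and pattern matching, are all confined to layer $1$ and therefore never occur in a layer-$0$ derivation. Consequently there is no genuinely computational step to analyze, and each remaining rule either immediately exhibits a syntactic equality or reduces to strictly smaller equivalences discharged by the induction hypotheses.

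For the PER rules the argument is routine. For symmetry, the premise $\ltyequiv 0 s t T$ gives $s = t$ by the induction hypothesis, and since syntactic equality is symmetric we conclude $t = s$; transitivity turns the two premises $t = s$ and $s = u$ into $t = u$ by transitivity of equality. The local-substitution versions of these rules are handled identically.

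For the congruence rules I would apply the induction hypotheses componentwise and then reassemble under the same constructor. The base cases---$\ze$, a local variable $x$, and the empty substitution $\cdot$---are reflexivity witnesses and hold on the nose, while for the successor, abstraction, application, and recursor rules the induction hypotheses turn each premise into a syntactic equality of the corresponding subterms, so applying the common constructor to equal arguments yields equal results. The cons rule $\delta, t/x \approx \delta', t'/x$ is treated the same way, invoking both induction hypotheses to obtain $\delta = \delta'$ and $t = t'$.

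The one case that genuinely forces the mutual formulation, and hence the main point of interest, is the global-variable congruence $\ltyequiv 0 {u^\delta}{u^{\delta'}} T$. Its premise is the local-substitution equivalence $\ltyequiv 0 \delta {\delta'} \Delta$, so the term statement cannot be closed without appealing to the substitution statement: invoking the induction hypothesis for local substitutions gives $\delta = \delta'$, whence $u^\delta = u^{\delta'}$. This is precisely the new ingredient relative to \Cref{sec:st}---because contextual types attach a local substitution to every global-variable occurrence, term equivalence at layer $0$ now depends on substitution equivalence, which is exactly why the two statements must be established together rather than in sequence. I do not expect any real difficulty beyond this bookkeeping, since the absence of $\beta$ and $\eta$ at layer $0$ removes the only rules that could ever relate syntactically distinct terms.
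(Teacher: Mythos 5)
Your proposal is correct and follows essentially the same route as the paper, which proves the two statements by mutual induction on the derivations while observing that the $\beta$, $\eta$, and modal rules are confined to layer $1$, so only PER and congruence rules can appear at layer $0$. Your identification of the global-variable congruence $\ltyequiv 0 {u^\delta}{u^{\delta'}} T$ as the case forcing mutuality is exactly the reason the paper states the two lemmas together.
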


\begin{lemma}[Presupposition]\labeledit{lem:ct:presup} $ $
  \begin{itemize}
  \item If $\ltyequiv i{t}{t'}T$, then $\ltyping i t T$ and $\ltyping i {t'}T$.
  \item If $\ltyequiv i{\delta}{\delta'}\Delta$, then $\ltyping i \delta {\Delta}$ and
    $\ltyping i {\delta'}\Delta$. 
  \item If $\ltyequiv 1 {\branch}{\branch'}{\judge[\Delta] T \STo T'}$, then$\ltyping 1 \branch
    {\judge[\Delta] T \STo T'}$ and $\ltyping 1{\branch'}{\judge[\Delta] T \STo T'}$.
  \end{itemize}
\end{lemma}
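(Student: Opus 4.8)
The plan is to proceed by simultaneous mutual induction on the three equivalence
derivations, generalizing the one-line argument for \Cref{lem:st:presup} to the three
mutually defined judgments of the contextual system. Throughout I freely invoke
\Cref{thm:ct:validity} to recover the well-formedness side-conditions ($\iscore\Psi$,
$\istype\Gamma$, $\iscore\Delta$, and so on) whenever a typing rule demands them, and I
rely on the global- and local-substitution lemmas together with weakening, which are
re-established in this section exactly as in \Cref{sec:st}.

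The routine part covers the PER and congruence rules. For symmetry and transitivity the
claim is immediate: the induction hypotheses give every component of each premise
well-typed, and the two sides demanded by the conclusion are already among those
components. Each congruence rule is nothing but the ``equivalence image'' of a typing
rule, so I apply the induction hypothesis to every equivalence premise, read off the two
well-typed sides, and reassemble each side by the matching typing rule (for terms, for
local substitutions, and for branches). The only care needed is to feed the
side-conditions obtained from \Cref{thm:ct:validity} into the rules that carry them.

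The interesting cases are the $\beta$ and $\eta$ rules, where the two sides are
syntactically distinct. In every $\beta$ rule the redex side is well-typed directly: its
premises are typing derivations rather than equivalences, so the corresponding
introduction/elimination rule reconstructs it; for instance in pattern matching
$\boxit{(\cdots)}$ carries a contextual type at layer $1$ and $\vect\branch$ supplies the
covering $\judge[\Delta] T \STo T'$, whence $\matc{\boxit{(\cdots)}}\ \vect\branch$ has
type $T'$. The reduct side is obtained from a substitution lemma: the function $\beta$
rule uses local substitution, while the $\tletbox$ $\beta$ rule (inherited from
\Cref{sec:st}) and all the pattern-matching $\beta$ rules use global substitution, since
the captured pattern variables are global. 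For the $\eta$ rule the reduct
$\lambda x. (t\ x)$ is typed by weakening $t$ into the context extended with $x : S$,
applying it to the fresh variable, and abstracting.

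I expect the pattern-matching $\beta$ rules, and among them the application case, to be
the main obstacle. First, the branch typing for $?u~?u' \STo t$ quantifies universally
over all $\iscore S$, so to type the reduct I must instantiate it at the specific
argument type $S$ fixed by the scrutinee through $\ltyping[\Psi][\Delta] 0 t{S \func
T}$. Second, because $u$ and $u'$ carry the contextual types $\judge[\Delta]{S \func T}$
and $\judge[\Delta]{S}$, the reduct is produced by substituting the captured function and
argument for these \emph{global} pattern variables in the branch body, so the argument
hinges on the contextual global-substitution lemma. Finally, I must confirm that the
looked-up branch $\vect\branch(t\ s)$ actually has the anticipated shape and typing; this
is exactly what the covering judgment $\ltyping 1 {\vect\branch}{\judge[\Delta] T \STo
T'}$ delivers, so no separate coverage argument is required.
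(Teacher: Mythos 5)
Your proposal is correct and follows essentially the same route as the paper, whose entire proof reads ``Induction. Here we will need the substitution lemmas.'' Your sketch is precisely the fleshed-out version of that: mutual induction over the three equivalence judgments, congruence/PER cases by reassembly, the $\beta$ cases via the global and local substitution lemmas (with the $?u~?u'$ branch instantiated at the scrutinee's argument type), and $\eta$ via weakening.
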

\begin{proof}
  Induction. Here we will need the substitution lemmas. 
\end{proof}

Next we define how local substitutions are applied to terms:
\begin{align*}
  x[\delta] &:= \delta(x) \tag{lookup $x$ in $\delta$} \\
  u^{\delta'}[\delta] &:= u^{\delta' \circ \delta}  \\
  \ze[\delta] &:= \ze \\
  \su t [\delta] &:= \su{(t[\delta])} \\
  \boxit t [\delta] &:= \boxit t \\
  \letbox u s t [\delta] &:= \letbox u{s[\delta]}{(t[\delta])}\\
  \matc s \ \vect\branch [\delta] &:= \matc {s[\delta]}\ (\vect\branch[\delta]) \\
  \lambda x. t [\delta] &:= \lambda x. (t[\delta, x/x]) \\
  t\ s [\delta] &:= (t[\delta]) \ (s[\delta]) \\[5pt]
  \cdot \circ \delta &:= \cdot \\
  (\delta', t/x) \circ \delta &:= (\delta' \circ \delta) , t[\delta]/x \\[5pt]
  \var x \STo t[\delta] &:= \var x \STo (t[\delta]) \\
  \ze \STo t [\delta] &:= \ze \STo (t[\delta]) \\
  \su{?u} \STo t [\delta] &:= \su{?u} \STo (t[\delta]) \\
  \lambda x. ?u \STo t [\delta] &:= \lambda x. ?u \STo (t[\delta]) \\
  ?u~?u' \STo t[\delta] &:= ?u~?u' \STo (t[\delta])
\end{align*}

The local substitution lemma is a bit more cumbersome to establish than in
\Cref{sec:st}, because now it is possible to apply local substitutions at layer $0$ in
the process of global substitutions (we will see how this happens in the next
section). %
Moreover, global variables store local substitutions, so the composition must also be
mutually defined at the same time. %
It becomes a bit more complex at layer $1$ where we need to go under pattern matching. %

Let us examine the properties layer by layer.
\begin{theorem}[Local substitution]
  If $\ltyping[\Psi][\Delta] 0 t T$ and $\ltyping 0 \delta \Delta$, then $\ltyping 0{t[\delta]} T$.
\end{theorem}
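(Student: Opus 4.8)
The plan is to prove the statement by induction on the derivation of $\ltyping[\Psi][\Delta] 0 t T$; however, a naive single induction will not close, because the global-variable case rewrites $u^{\delta'}$ to $u^{\delta' \circ \delta}$, whose well-typedness demands that the \emph{composite} $\delta' \circ \delta$ itself be a well-typed local substitution --- and composition is defined by recursion back through term substitution. I would therefore strengthen the goal to the following mutual statement, proved by simultaneous induction on $t$ and on $\delta'$:
\begin{itemize}
\item if $\ltyping[\Psi][\Delta] 0 t T$ and $\ltyping 0 \delta \Delta$, then $\ltyping 0 {t[\delta]} T$;
\item if $\ltyping[\Psi][\Delta] 0 {\delta'}{\Delta'}$ and $\ltyping 0 \delta \Delta$, then $\ltyping 0 {\delta' \circ \delta}{\Delta'}$.
\end{itemize}

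At layer $0$ the term grammar collapses to the core calculus together with the contextual global variables $u^\delta$ --- there are no $\tbox$, $\tletbox$, or $\tmatc$ formers, since those type only at layer $1$ --- so only a few cases arise. For a local variable $t = x$ with $x : T \in \Delta$, we have $x[\delta] = \delta(x)$, and inversion on $\ltyping 0 \delta \Delta$ supplies exactly $\ltyping 0 {\delta(x)} T$. The cases $\ze$, $\su {t'}$, and $t'\ s$ are routine appeals to the induction hypothesis, discharging the context-wellformedness side conditions via the Validity theorem for local substitutions (which yields $\iscore \Gamma$ from $\ltyping 0 \delta \Delta$). For $t = \lambda x. t'$ we unfold $(\lambda x. t')[\delta] = \lambda x. (t'[\delta, x/x])$; applying the induction hypothesis to $t'$ needs the extended substitution $\ltyping[\Psi][\Gamma, x : S] 0 {\delta, x/x}{\Delta, x : S}$, obtained by weakening $\delta$ into $\Gamma, x : S$ and typing the fresh slot with the variable rule. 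The global-variable case $t = u^{\delta'}$ is where the two halves meet: from $\ltyping[\Psi][\Delta] 0 {\delta'}{\Delta''}$ and $u : (\judge[\Delta'']{T}) \in \Psi$, the composition half gives $\ltyping 0 {\delta' \circ \delta}{\Delta''}$, and re-applying the global-variable rule yields $\ltyping 0 {u^{\delta' \circ \delta}} T$, which is exactly $u^{\delta'}[\delta]$. For the composition statement, the empty case is immediate, and for $(\delta', t/x) \circ \delta = (\delta' \circ \delta,\ t[\delta]/x)$ I combine the composition induction hypothesis on $\delta'$ with the term induction hypothesis on $t$.

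The main obstacle --- and the reason this lemma is more cumbersome than its counterpart in \Cref{sec:st} --- is precisely this entanglement: because contextual global variables carry a delayed local substitution, term substitution can no longer be proved in isolation but must be interleaved with substitution composition. A secondary point of care is the $\lambda$ case, whose identity extension $\delta, x/x$ forces a local weakening lemma to retype the components of $\delta$ in the enlarged context $\Gamma, x : S$; I would establish this weakening (equivalently, stability of layer-$0$ typing under the renaming $\Gamma \hookrightarrow \Gamma, x : S$) as a preliminary structural lemma, in keeping with the observation that the syntactic properties of \Cref{sec:st} transfer to the contextual setting. Beyond these two points no case is delicate, since at layer $0$ every modal and pattern-matching former is absent.
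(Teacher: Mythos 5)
Your proposal is correct and matches the paper's own proof: the paper likewise proves this theorem by mutual induction with a Local composition theorem ($\ltyping[\Psi][\Delta] 0 {\delta'}{\Delta''}$ and $\ltyping[\Psi][\Delta'] 0 \delta \Delta$ imply $\ltyping[\Psi][\Delta'] 0 {\delta' \circ \delta}{\Delta''}$), dispatching the global-variable case $u^{\delta'}$ via the composition half and treating the variable, $\ze$, $\tsucc$, $\lambda$, and application cases as routine. Your write-up is in fact slightly more explicit than the paper's, e.g.\ in spelling out the weakening needed for the $\lambda$ case, which the paper dismisses as folklore.
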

\begin{theorem}[Local composition]
  If $\ltyping[\Psi][\Delta] 0 {\delta'}{\Delta''}$ and
  $\ltyping[\Psi][\Delta'] 0 \delta \Delta$, then
  $\ltyping[\Psi][\Delta'] 0 {\delta' \circ \delta}{\Delta''}$.
\end{theorem}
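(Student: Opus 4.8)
The plan is to prove this together with the \emph{Local substitution} theorem stated immediately above, by mutual induction. The two statements cannot be separated: the defining clause $u^{\delta'}[\delta] := u^{\delta' \circ \delta}$ for applying a local substitution means that the substitution lemma, in the $t = u^{\delta'}$ case, must appeal to the well-typedness of $\delta' \circ \delta$ --- exactly this composition theorem --- while the cons case of composition (below) must in turn appeal to the substitution lemma to type $t[\delta]$. I would therefore run a single mutual induction whose well-founded measure is the structural size of the object being acted upon: the term $t$ for the substitution statement and the substitution $\delta'$ for the composition statement. Every recursive appeal then strictly decreases this measure, since $\delta'$ is a proper subterm of $u^{\delta'}$ and each $t$ occurring in $\delta'$ is a proper subterm of $\delta'$.

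For the composition statement itself I would induct on the derivation of $\ltyping[\Psi][\Delta] 0 {\delta'}{\Delta''}$, equivalently on the structure of $\delta'$. In the base case $\delta' = \cdot$ and $\Delta'' = \cdot$, the definition gives $\cdot \circ \delta = \cdot$, so I must establish $\ltyping[\Psi][\Delta'] 0 {\cdot}{\cdot}$ by the empty-substitution rule; its side conditions $\iscore \Psi$ and $\iscore \Delta'$ are supplied by the validity theorem for local substitutions applied to the second hypothesis $\ltyping[\Psi][\Delta'] 0 {\delta}{\Delta}$.

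In the cons case $\delta' = (\delta'_0, t/x)$ with $\Delta'' = \Delta''_0, x : T$, inversion of the substitution-typing rule yields $\ltyping[\Psi][\Delta] 0 {\delta'_0}{\Delta''_0}$ and $\ltyping[\Psi][\Delta] 0 {t}{T}$. The inductive hypothesis for composition, on the smaller $\delta'_0$, gives $\ltyping[\Psi][\Delta'] 0 {\delta'_0 \circ \delta}{\Delta''_0}$, and the mutual inductive hypothesis for the Local substitution theorem, on the term $t$ (structurally smaller than $\delta'$), gives $\ltyping[\Psi][\Delta'] 0 {t[\delta]}{T}$. Applying the cons rule for substitution typing then produces
\begin{align*}
  \ltyping[\Psi][\Delta'] 0 {(\delta'_0 \circ \delta,\, t[\delta]/x)}{\Delta''_0, x : T},
\end{align*}
which is exactly $\ltyping[\Psi][\Delta'] 0 {\delta' \circ \delta}{\Delta''}$ by the defining equation $(\delta'_0, t/x) \circ \delta = (\delta'_0 \circ \delta,\, t[\delta]/x)$.

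Every individual case is thus a routine combination of inversion, one typing rule, and the validity theorem for the well-formedness premises; the only genuinely delicate point --- and the step I expect to require the most care --- is the bookkeeping of the mutual induction. I must verify that the chosen size metric decreases uniformly across both statements so that the cycle ``substitution on $u^{\delta'}$ $\to$ composition on $\delta'$ $\to$ substitution on the terms inside $\delta'$'' is genuinely well-founded, rather than letting the two lemmas fall into a circular dependency. Once the metric is pinned down as structural size in the way described, the argument goes through cleanly.
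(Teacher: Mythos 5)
Your proposal is correct and matches the paper's own proof: the paper likewise proves local substitution and local composition by mutual induction, with the cons case of composition invoking the substitution lemma on each term and the $u^{\delta'}$ case of substitution invoking composition on the structurally smaller $\delta'$. Your explicit treatment of the well-founded measure simply makes precise what the paper leaves implicit ("using the local composition theorem on $\delta'$ as the smaller argument").
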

\begin{proof}
  We do a mutual induction on $\ltyping[\Psi][\Delta] 0 t T$ and
  $\ltyping[\Psi][\Delta] 0 {\delta'}{\Delta''}$.

  It is easier to consider the local composition theorem first, where we simply invoke
  local substitution for all terms. %
  Now we consider the cases of $t$.
  \begin{itemize}[label=Case]
  \item $t = x$, and we know $x : T \in \Delta$. %
    Then the goal is immediate.
  \item $t = u^{\delta'}$ and $u : (\judge[\Delta'] T) \in \Phi$,
    $\ltyping[\Psi][\Delta]0{\delta'}{\Delta'}$. %
    By using the local composition theorem on $\delta'$ as the smaller argument, we
    have
    \begin{align*}
      \ltyping 0{\delta' \circ \delta}{\Delta'}
    \end{align*}
    which concludes the goal.
  \item $t = \ze$ and $t = \su t'$ are immediate.
    
  \item $t = \lambda x. t'$ and $t = t'~s$ are folklore and identical to the proof in
    \Cref{sec:st}. 
  \end{itemize}
\end{proof}

The case for layer $1$ is slightly more complex because we need to consider pattern
matching. %
Luckily, pattern matching on code only grows the global contexts so we simply
propagate the local substitutions recursively. %
\begin{theorem}[Local substitution]
  If $\ltyping[\Psi][\Delta] 1 t T$ and $\ltyping 1 \delta \Delta$, then $\ltyping 1{t[\delta]} T$.
\end{theorem}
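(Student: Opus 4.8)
The plan is to prove the statement by induction on the derivation of $\ltyping[\Psi][\Delta] 1 t T$, but strengthened into a \emph{simultaneous} induction that also carries the layer-$1$ local composition lemma—if $\ltyping[\Psi][\Delta] 1 {\delta'}{\Delta''}$ and $\ltyping[\Psi][\Delta'] 1 \delta \Delta$ then $\ltyping[\Psi][\Delta'] 1 {\delta' \circ \delta}{\Delta''}$—together with the corresponding statements for single branches $\ltyping 1 \branch {\judge[\Delta_0] T \STo T'}$ and for covering families $\ltyping 1 {\vect\branch}{\judge[\Delta_0] T \STo T'}$. Bundling these is forced by the contextual setting: the case $t = u^{\delta'}$ reduces by the defining equation $u^{\delta'}[\delta] = u^{\delta' \circ \delta}$ and so appeals to composition, while composition $(\delta', s/x) \circ \delta = (\delta' \circ \delta),\, s[\delta]/x$ appeals back to substitution on the stored term $s$. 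Since each such appeal is on a strict subderivation of the object being transported, the whole bundle is well-founded by recursion on the size of that derivation.

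The non-modal cases mirror the layer-$0$ proof. For $t = x$ I read $\delta(x)$ off the typing of $\delta$; for $\ze$, $\su$, application and the recursor the substitution pushes through and the induction hypotheses apply componentwise. For $t = u^{\delta'}$ with $u : (\judge[\Delta''] T) \in \Psi$ I invoke the composition lemma to obtain $\ltyping 1 {\delta' \circ \delta}{\Delta''}$ and re-apply the global-variable rule. For $t = \lambda x. t'$ the body is substituted under $\delta, x/x$, so I first weaken $\delta$ along the local extension to $\Gamma, x : S$ and append $x/x$, yielding $\ltyping[\Psi][\Gamma, x : S] 1 {\delta, x/x}{\Delta, x : S}$, and then apply the hypothesis to $t'$.

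The modal cases are where the contextual version departs from \Cref{sec:st}. For $t = \boxit t'$ of type $\cont[\Delta_0]{T_0}$ the defining equation $\boxit t'[\delta] = \boxit t'$ leaves the boxed term untouched, so the only task is to re-derive the introduction rule in the new ambient context: its premise $\ltyping[\Psi][\Delta_0] 0 {t'}{T_0}$ is unchanged and the side condition $\istype \Gamma$ follows from the validity of $\delta$ (\Cref{thm:ct:validity}). For $t = \letbox u s t'$ and $t = \matc s\ \vect\branch$ the substitution propagates into the subterms and the branch bodies, which are, however, typed in a global context enlarged by the captured variables (e.g.\ $\Psi, u : (\judge[\Delta_0] T)$). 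Thus before invoking the hypothesis on them I must transport $\delta$ along this extension using a routine global-weakening lemma, $\ltyping[\Psi] 1 \delta \Delta \Rightarrow \ltyping[\Psi, u : (\judge[\Delta_0] T)] 1 \delta \Delta$, which holds because no term of $\delta$ mentions the fresh global variable. The pattern contexts $\Delta_0$ are themselves untouched by $\delta$, so the branch-typing and covering conditions survive verbatim, and the application branch's universal quantifier over $\iscore S$ is simply carried through, applying the hypothesis for each $S$.

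The main obstacle is the bookkeeping across the two context zones: $\delta$ acts only on the ambient local context, while the global context silently grows under $\tletbox$ and under each capturing branch. The cleanest route is therefore to prove the two weakening lemmas for $\delta$ (one in each zone) up front and to state the simultaneous induction over terms, composition, branches and covering with a single shared measure. Once that scaffolding is fixed, every case becomes a matter of matching the defining equation of $[\delta]$ against the corresponding typing rule, exactly as at layer $0$ but with composition threaded through the $u^{\delta'}$ case.
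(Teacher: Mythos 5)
Your proposal is correct and follows essentially the same route as the paper: the paper likewise proves this theorem by mutual induction bundling local substitution for terms, local composition, substitution into single branches, and substitution into covering families, with the $u^{\delta'}$ case discharged via composition, the $\tbox$ case trivial since $[\delta]$ does not propagate under it, and the pattern-matching case pushed down to the branch statements where the local context is unchanged. The only difference is presentational—you make explicit the global-weakening of $\delta$ needed when branch bodies live in an enlarged global context, which the paper leaves implicit.
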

\begin{theorem}[Local composition]
  If $\ltyping[\Psi][\Delta] 1 {\delta'}{\Delta''}$ and
  $\ltyping[\Psi][\Delta'] 1 \delta \Delta$, then
  $\ltyping[\Psi][\Delta'] 1 {\delta' \circ \delta}{\Delta''}$.
\end{theorem}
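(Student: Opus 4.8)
The plan is to prove both layer-$1$ statements by a single mutual induction on the typing derivations of $t$ and of $\delta'$, extended so that the same induction simultaneously covers branches and the covering judgment, mirroring the layer-$0$ argument but now accounting for the three modal constructs $\boxit{\cdot}$, $\tletbox$ and pattern matching. As at layer $0$, it is cleanest to dispatch local composition first: for $(\delta', t/x) \circ \delta = (\delta' \circ \delta), t[\delta]/x$ I invoke the composition hypothesis on the strictly smaller $\delta'$ to retype $\delta' \circ \delta$, and the substitution theorem on the component $t$ to retype $t[\delta]$. Conversely, in the substitution proof the global-variable case calls composition on a subderivation, so the two statements feed each other only on structurally smaller derivations and the induction is well founded on derivation size.

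For the substitution theorem I would walk through the cases of $t$. The variable case $t = x$ is a lookup in $\delta$. The global-variable case $t = u^{\delta'}$ is the one that genuinely needs the mutual hypothesis: since $u^{\delta'}[\delta] = u^{\delta' \circ \delta}$, I apply local composition to the subderivation of $\delta'$ to retype $\delta' \circ \delta$ against $\Delta_0$ where $u : (\judge[\Delta_0]{T}) \in \Psi$, and reapply the global-variable rule. The box case is pleasantly trivial: $\boxit{t'}[\delta] = \boxit{t'}$ does not propagate $\delta$, so the premise typing $t'$ at layer $0$ in its own local context is reused verbatim, and the only fresh obligation, $\istype \Gamma$ for the post-substitution ambient context $\Gamma$, is supplied by validity of $\delta$. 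The $\tletbox$ case follows the layer-$1$ pattern of \Cref{sec:st}: recurse on $s$, and on $t'$ after weakening $\delta$ into the global context extended by $u$ (global weakening, since a local substitution never discards assumptions), then reassemble. The remaining functional cases $t = \lambda x. t'$ and $t = t'~s$ are folklore and identical to \Cref{sec:st}.

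The real work, and the step I expect to be the main obstacle, is the pattern-matching case $t = \matc s \vect\branch$, whose substitution is $\matc{s[\delta]}(\vect\branch[\delta])$. After retyping $s$ by IH, I must re-derive $\ltyping[\Psi][\Gamma]{1}{\vect\branch[\delta]}{\judge[\Delta_0]{T_0} \STo T}$. Since $\delta$ acts only on branch \emph{bodies} and leaves the patterns untouched, I would first show each individual branch is preserved: the $\var x$ and $\ze$ bodies are retyped by a direct body-IH, whereas $\su{?u}$, $\lambda x. ?u$ and $?u~?u'$ push pattern variables into $\Psi$, so I weaken $\delta$ into the matching extended global context before applying the body-IH, and the application branch additionally threads its universal quantifier over $\iscore S$ through unchanged. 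Finally the covering side-conditions, that $\vect\branch$ is a permutation of the prescribed set of constructor and variable branches dictated by $T_0$ and $\Delta_0$, concern only the \emph{shapes} of the patterns and the scrutinee context $\Delta_0$, both untouched by $[\delta]$, so the covering judgment is preserved intact. The genuine friction is bookkeeping: tracking the several distinct global-context extensions, one per branch, and confirming that each weakening of $\delta$ lines up with the exact global context in which the corresponding body was originally typed.
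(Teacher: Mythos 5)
Your proposal is correct and follows essentially the same route as the paper: a mutual induction over terms, local substitutions, branches, and the covering judgment, with composition dispatched via the substitution theorem on each component, the $\tbox$ case trivial because $\delta$ does not propagate under it, and the pattern-matching case reduced to the branch cases where only bodies are substituted (after weakening $\delta$ into the extended global context) while patterns and covering conditions are untouched. If anything, you are more explicit than the paper about the global-weakening bookkeeping, which the paper glosses over with ``since the local context does not change, we can still apply $\delta$ directly.''
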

\begin{theorem}
  If $\ltyping[\Psi][\Delta] 1 \branch {\judge[\Delta'] T \STo T'}$ and
  $\ltyping 1 \delta \Delta$, then
  $\ltyping 1 {\branch[\delta]} {\judge[\Delta'] T \STo T'}$. 
\end{theorem}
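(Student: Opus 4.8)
The plan is to prove this branch statement \emph{simultaneously} with the two preceding layer-$1$ theorems (\textbf{Local substitution} for terms and \textbf{Local composition}), since the term-level local substitution theorem must descend through $\matc s \vect\branch$ and therefore calls back into the branch statement. For the branch statement itself the argument is a case analysis on the last rule of the derivation of $\ltyping[\Psi][\Delta] 1 \branch {\judge[\Delta'] T \STo T'}$, i.e. on which of the five branch shapes $\branch$ is. In every case the definition of $\branch[\delta]$ rewrites only the body: if $\branch = (p \STo t)$ then $\branch[\delta] = (p \STo t[\delta])$ with the pattern head $p$ left untouched. So in each case I re-apply the identical branch-typing rule, transport the non-typing side conditions (e.g. $\iscore \Delta'$) verbatim, and reduce the goal to typing the substituted body $t[\delta]$ at $T'$.

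For the two non-binding shapes $\var x \STo t$ and $\ze \STo t$ the premise is exactly $\ltyping[\Psi][\Delta] 1 t {T'}$, and the desired conclusion $\ltyping 1 {t[\delta]} {T'}$ is an immediate instance of the layer-$1$ Local substitution theorem on the body $t$ (structurally smaller than $\branch$, hence available in the simultaneous induction) together with the hypothesis $\ltyping 1 \delta \Delta$.

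The three binding shapes $\su{?u} \STo t$, $\lambda x. ?u \STo t$ and $?u~?u' \STo t$ go the same way but need one extra move, because their premises type $t$ in an \emph{enlarged} global context: respectively $\Psi, u : (\judge[\Delta']\Nat)$, $\Psi, u : (\judge[\Delta', x : S]T)$, and $\Psi, u : (\judge[\Delta']{S \func T}), u' : (\judge[\Delta']S)$. Before applying Local substitution to the body I first globally weaken the hypothesis $\ltyping 1 \delta \Delta$ into the matching enlarged global context; this leaves $\delta$ unchanged as syntax, so the substituted body is still $t[\delta]$, and Local substitution then gives it type $T'$ in that enlarged context, reconstructing exactly the premise the branch rule needs. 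For $?u~?u'$ the premise is quantified over all $\iscore S$, so I fix an arbitrary such $S$ and run the weaken-then-substitute argument uniformly, rebuilding the universally quantified premise.

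The only genuinely non-mechanical ingredient, and the step I expect to be the main obstacle, is this global weakening: I must know that extending $\Psi$ by a fresh contextual binding preserves well-typedness of a layer-$1$ local substitution \emph{and} does not alter $\delta$ syntactically, so that $\branch[\delta]$ really does have body $t[\delta]$. This is a routine global weakening lemma --- the contextual analogue of the weakening facts used freely in \Cref{sec:st} --- but it has to be in place for the mutual induction to close. Finally, the all-branches judgment $\ltyping 1 {\vect\branch}{\judge[\Delta'] T \STo T'}$ follows from this single-branch result because substitution leaves every pattern head, and hence the coverage side conditions, intact.
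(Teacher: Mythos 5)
Your proposal is correct and follows essentially the same route as the paper: the paper also proves this statement by mutual induction with the layer-$1$ local substitution and composition theorems, does case analysis on the branch shape, and for the binding shapes ($\su{?u}$, $\lambda x.\,?u$, $?u~?u'$) observes that only the global context grows while the local context is unchanged, so $\delta$ can be applied to the body directly by the IH. The one difference is presentational: you make explicit the global weakening of $\ltyping 1 \delta \Delta$ into the enlarged global context (rightly flagging it as the load-bearing lemma), whereas the paper leaves that step implicit in the phrase ``we can still apply $\delta$ directly to $t$ by IH.''
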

\begin{theorem}
  If $\ltyping[\Psi][\Delta] 1 {\vect\branch} {\judge[\Delta'] T \STo T'}$ and
  $\ltyping 1 \delta \Delta$, then
  $\ltyping 1 {\vect\branch[\delta]} {\judge[\Delta'] T \STo T'}$. 
\end{theorem}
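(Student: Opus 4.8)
The plan is to treat this as the covering-judgment case of the same mutual induction that establishes the term-level and branch-level local substitution theorems, and to observe that a local substitution $\delta$ touches only the ambient context $\Delta$ (which appears as the local context in each branch's typing obligation) together with the branch \emph{bodies}; it never touches the box-internal context $\Delta'$, the matched type $T$, the return type $T'$, or the pattern heads. Consequently the entire covering \emph{structure} is preserved by substitution, and the only genuine typing work is discharged by the preceding branch-level theorem.

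Concretely, first I would invert the derivation of $\ltyping[\Psi][\Delta] 1 {\vect\branch}{\judge[\Delta'] T \STo T'}$ according to which of the two covering rules applies, i.e. whether $T = \Nat$ or $T = S \func T''$. This yields three kinds of premises: the per-branch obligation $\ltyping[\Psi][\Delta] 1 {\branch}{\judge[\Delta'] T \STo T'}$ for every $\branch \in \vect\branch$; the shape conditions naming a zero branch $\branch_\ze$, a successor branch $\branch_\tsucc$ (resp. a $\lambda$-branch $\branch_\lambda$), an application branch $\branch_\tapp$, and a variable branch $b_x$ for each variable $x$ of the matching type in $\Delta'$; and the statement that $\vect\branch$ is a permutation of exactly this collection. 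For each $\branch \in \vect\branch$ I would then apply the preceding branch-level local substitution theorem together with the hypothesis $\ltyping[\Psi][\Gamma] 1 \delta \Delta$ to obtain $\ltyping[\Psi][\Gamma] 1 {\branch[\delta]}{\judge[\Delta'] T \STo T'}$, re-establishing the per-branch premises for the substituted list over the new ambient context $\Gamma$.

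It then remains to check that the shape and permutation bookkeeping survives substitution, and this is the only step requiring care, although it is ultimately routine. By the defining equations for $\branch[\delta]$, substitution rewrites only the body after $\STo$ and leaves the pattern head intact, so $\branch_\ze[\delta]$ is still a zero branch, $\branch_\tsucc[\delta]$ still a successor branch, $\branch_\tapp[\delta]$ still an application branch, and each $b_x = \var x \STo t$ becomes $\var x \STo t[\delta]$, still a variable branch for the \emph{same} $x$. Because $\delta$ acts on the ambient context and never on $\Delta'$, the set of variables of $\Delta'$ that must be covered is unchanged, so $\vect\branch[\delta]$ is a permutation of the correspondingly substituted required collection. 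Reassembling with the same covering rule yields $\ltyping[\Psi][\Gamma] 1 {\vect\branch[\delta]}{\judge[\Delta'] T \STo T'}$. The main obstacle, such as it is, is purely this bookkeeping: confirming that substitution commutes with the pattern-head and permutation constraints, which holds precisely because the substituted context $\Delta$ is disjoint from the box-internal context $\Delta'$ that governs covering. No genuine induction on $\vect\branch$ is needed, since the covering rules do not nest; this case simply consumes the branch-level and term-level substitution lemmas proved alongside it.
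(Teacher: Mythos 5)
Your proposal is correct and follows essentially the same route as the paper: a mutual induction in which the covering case simply applies the branch-level local substitution theorem to each branch, with the observation that substitution only rewrites branch bodies and so preserves pattern heads, shape conditions, and the permutation constraint. The paper leaves this bookkeeping implicit (its proof text only details the individual branch cases), whereas you spell it out, but there is no substantive difference in approach.
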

\begin{proof}
  We also do mutual induction here. %
  For the local substitutions, the extra cases are $\tbox$ and pattern matching. %
  The other cases are pretty much identical to the previous theorems. %
  The introduction rule of $\square$ is easy because we do not even propagate the
  local substitution. %
  The case of pattern matching is also simple because we just push the local
  substitutions downward and eventually hit the cases for branches. %
  \begin{itemize}[label=Case]
  \item $\branch = \var x \STo t$, then we know $\ltyping[\Psi][\Delta] 1 t T'$. %
    By IH, we have $\ltyping 1 {t[\delta]} T'$ which concludes the goal. 
  \item $\branch = \ze \STo t$, then the same argument as above applies. 
  \item $\branch = \su ?u \STo t$, then $\ltyping[\Psi, u :
    (\judge[\Delta']\Nat)][\Delta] 1 t T'$. %
    Since the local context does not change, we can still apply $\delta$ directly to
    $t$ by IH and obtain the goal. 
  \item $\branch = \lambda x. ?u \STo t$ and $\branch = ?u~?u' \STo t$ are similar to the
    previous case. 
  \end{itemize}
\end{proof}

\subsection{Global Simultaneous Substitutions}

The final missing piece in our syntactic development of this type theory is global
simultaneous substitutions. %
We simply adapt our previous work here and fix the differences.
\begin{align*}
  \sigma := \cdot \sep \sigma, t/u \tag{Global substitutions}
\end{align*}
\begin{mathpar}
  \inferrule*
  {\iscore \Psi}
  {\typing[\Psi]{\cdot}{\cdot}}

  \inferrule*
  {\typing[\Psi]{\sigma}{\Phi} \\ \ltyping 0 {t}{T}}
  {\typing[\Psi]{\sigma, t/u}{\Phi, u : (\judge T)}}
\end{mathpar}
The step case of typing rules now allow $t$ to be typed in a specified local context
according to the binding. %
The equivalence follows similarly.
\begin{mathpar}
  \inferrule*
  {\iscore \Psi}
  {\tyequiv[\Psi]{\cdot}{\cdot}{\cdot}}

  \inferrule*
  {\tyequiv[\Psi]{\sigma}{\sigma'}{\Phi} \\ \ltyequiv 0 {t}{t'}{T}}
  {\tyequiv[\Psi]{\sigma, t/u}{\sigma', t'/u}{\Phi, u : (\judge T)}}
\end{mathpar}

Next we define the application of global substitutions:
\begin{align*}
  x[\sigma] &:= x \\
  u^\delta[\sigma] &:= \sigma(u)[\delta[\sigma]] \tag{lookup $u$ in $\sigma$} \\
  \ze[\sigma] &:= \ze \\
  \su t [\sigma] &:= \su{(t[\sigma])} \\
  \boxit t [\sigma] &:= \boxit{(t[\sigma])} \\
  \letbox u s t[\sigma] &:= \letbox u {s[\sigma]}{(t[\sigma])} \\
  \matc s \ \vect\branch [\sigma] &:= \matc {s[\sigma]} \ (\vect\branch[\sigma]) \\
  \lambda x. t [\sigma] &:= \lambda x. (t[\sigma]) \\
  t~s [\sigma] &:= (t[\sigma])~(s[\sigma]) \\[5pt]
  \cdot[\sigma] &:= \cdot \\
  (\delta', t/x) [\sigma] &:= \delta'[\sigma] , t[\sigma]/x \\[5pt]
  \var x \STo t[\sigma] &:= \var x \STo (t[\sigma]) \\
  \ze \STo t [\sigma] &:= \ze \STo (t[\sigma]) \\
  \su{?u} \STo t [\sigma] &:= \su{?u} \STo (t[\sigma, u/u]) \\
  \lambda x. ?u \STo t [\sigma] &:= \lambda x. ?u \STo (t[\sigma, u/u]) \\
  ?u~?u' \STo t[\sigma] &:= ?u~?u' \STo (t[\sigma, u/u, u'/u'])
\end{align*}

Most cases proceed in expected ways. %
The case of global variables is probably the most interesting one. %
We first lookup the global substitution and obtain an open term $\sigma(u)$. %
In order to obtain a legal term in the current context, we must apply the local
substitution $\delta$. %
But before that, we need to recursively apply $\sigma$ to all terms in $\delta$ so
that $\delta$ lives in the correct global context. %
At last, we apply $\delta[\sigma]$ following the previous definition and obtain a term
in the right global and local contexts. %
This intuition can be justified by the following theorems:
\begin{theorem}[Global substitution] $ $
  \begin{itemize}
  \item If $\ltyping[\Phi] i t T$ and $\typing[\Psi]\sigma\Phi$, then $\ltyping
    i{t[\delta]} T$.
  \item If $\ltyping[\Phi] i {\delta}{\Delta}$ and
    $\typing[\Psi]\sigma\Phi$, then
    $\ltyping i {\delta[\sigma]}{\Delta}$.
  \item If $\ltyping[\Phi] 1 \branch {\judge[\Delta] T \STo T'}$ and
  $\typing[\Psi]\sigma\Phi$, then
  $\ltyping 1 {\branch[\delta]} {\judge[\Delta] T \STo T'}$. 
  \item If $\ltyping[\Phi] 1 {\vect\branch} {\judge[\Delta] T \STo T'}$ and
  $\typing[\Psi]\sigma\Phi$, then
  $\ltyping 1 {\vect\branch[\delta]} {\judge[\Delta] T \STo T'}$. 
  \end{itemize}
\end{theorem}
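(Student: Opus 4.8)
The plan is to prove all four statements simultaneously by a single mutual induction on the four typing derivations $\ltyping[\Phi] i t T$, $\ltyping[\Phi] i \delta \Delta$, $\ltyping[\Phi] 1 \branch {\judge[\Delta] T \STo T'}$ and $\ltyping[\Phi] 1 {\vect\branch} {\judge[\Delta] T \STo T'}$, in each case applying the global substitution $\sigma$ with $\typing[\Psi]\sigma\Phi$. The four parts are genuinely interdependent, which forces the single induction: the term statement invokes the local-substitution statement in the global-variable case and the branch-list statement in the pattern-matching case, the branch-list statement dispatches to the single-branch statement, and each branch reduces to the term statement on its body. All non-modal structural cases — variables $x$, $\ze$, $\su t$, $\lambda x. t$, application $t\ s$, and the cons case of local substitutions — go through exactly as in \Cref{sec:st}: $\sigma$ is pushed through the constructor by its defining equation and the relevant induction hypotheses are applied directly, with \Cref{thm:ct:validity} supplying the side conditions $\iscore\Psi$, $\iscore\Gamma$/$\istype\Gamma$ as needed.

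The crux is the global-variable case $t = u^\delta$, where $u : (\judge[\Delta'] T) \in \Phi$ and $\ltyping[\Phi] i \delta {\Delta'}$, and by definition $u^\delta[\sigma] = \sigma(u)[\delta[\sigma]]$. From $\typing[\Psi]\sigma\Phi$ and the typing rule for global substitutions we know $\ltyping[\Psi][\Delta'] 0 {\sigma(u)} T$, that is, the stored code is well typed at layer $0$ in precisely the local context $\Delta'$ recorded for $u$. The induction hypothesis for local substitutions gives $\ltyping[\Psi] i {\delta[\sigma]} {\Delta'}$. When $i = 0$ I apply the layer-$0$ local substitution theorem to obtain $\ltyping[\Psi] 0 {\sigma(u)[\delta[\sigma]]} T$. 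When $i = 1$ there is a layer mismatch, since $\sigma(u)$ lives at layer $0$ while $\delta[\sigma]$ lives at layer $1$; here I first use the Lifting lemma to promote $\ltyping[\Psi][\Delta'] 0 {\sigma(u)} T$ to layer $1$ and then apply the layer-$1$ local substitution theorem. This is exactly the interaction the definition $u^\delta[\sigma] := \sigma(u)[\delta[\sigma]]$ was designed to support, and it is why the local substitution and local composition theorems had to be established beforehand.

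The remaining modal cases all concern binders that introduce fresh global variables: $\letbox u s t$ and the branch bodies $\su{?u}\STo t$, $\lambda x.?u \STo t$ and $?u~?u'\STo t$. Since the substitution definition extends $\sigma$ with the identity on the new variables (as in $t[\sigma, u/u]$), I need the standard weakening-and-extension fact that $\typing[\Psi]\sigma\Phi$ lifts to $\typing[\Psi, u : (\judge[\Delta] S)]{\sigma, u/u}{\Phi, u : (\judge[\Delta] S)}$, which follows from monotonicity of global-substitution typing under the dropping weakening together with the layer-$0$ typing rule for $u^{\id}$. With the extended substitution in hand, each body is handled by the term induction hypothesis; $\boxit{t'}$ and $\matc s \vect\branch$ are then immediate from the IH on $t'$ at layer $0$ and on $s$ together with the branch-list IH. For the branch-list statement, $\sigma$ acts purely structurally: it descends into every body without adding, deleting, or reclassifying any branch, so the side conditions of the covering judgment — the presence of each required constructor branch and the permutation condition — are preserved verbatim, reducing part four to part three branchwise.

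I expect the main obstacle to be the global-variable case: arranging the mutual induction so that the nested local substitution $\delta[\sigma]$ is available at the correct layer and over the correct local context $\Delta'$ before the local substitution theorem is invoked, and handling the layer-$0$-to-layer-$1$ promotion cleanly via the Lifting lemma. The only other delicate bookkeeping is the weakening-and-extension of $\sigma$ under the global binders, which I will isolate as an auxiliary lemma and verify to respect the contextual bindings $(\judge[\Delta] S)$; everything else is routine propagation that mirrors the simply typed development of \Cref{sec:st}.
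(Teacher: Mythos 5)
Your proposal is correct and follows essentially the same route as the paper: a mutual induction over the four judgments, with the global-variable case resolved by looking up $\sigma(u)$ at layer $0$, applying the induction hypothesis to get $\delta[\sigma]$ well typed, and then invoking the local substitution theorem, and with $\sigma$ extended by identity mappings ($u/u$) under the pattern-variable binders. If anything, your treatment of the $i=1$ global-variable case is more careful than the paper's, which states the conclusion at layer $0$ and leaves the lifting from layer $0$ to layer $1$ implicit, whereas you invoke the Lifting lemma explicitly before applying the layer-$1$ local substitution theorem.
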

\begin{proof}
  We do mutual induction. %
  \begin{itemize}[label=Case]
  \item $t = u^\delta$
    \begin{mathpar}
      \inferrule
      {\ltyping[\Phi] i \delta \Delta \\ u : (\judge[\Delta] T) \in \Phi}
      {\ltyping[\Phi]{i}{u^\delta}{T}}
    \end{mathpar}
    \begin{align*}
      & \ltyping[\Psi][\Delta] 0 {\sigma(u)} T \tag{lookup of $\sigma$} \\
      & \ltyping[\Phi] i{\delta[\sigma]} \Delta
        \byIH \\
      & \ltyping 0 {\sigma(u)[\delta[\sigma]]} T
        \tag{by local substitution}
    \end{align*}
    
  \item $t = \boxit{t'}$
    \begin{mathpar}
      \inferrule
      {\istype \Gamma \\ \ltyping[\Phi][\Delta] 0{t'}T'}
      {\ltyping[\Phi]{1}{\boxit {t'}}{\cont[\Delta]{T'}}}
    \end{mathpar}
    \begin{align*}
      & \ltyping[\Psi][\Delta] 0{t'[\sigma]}T'
        \byIH \\
      & \ltyping{1}{\boxit {t'[\sigma]}}{\cont[\Delta]{T'}}
    \end{align*}
    
  \item $t = \matc s \ \vect\branch$, then we simply push $\sigma$ inwards by IHs.
    
  \item $\branch = \var x \STo t$,
    \begin{mathpar}
      \inferrule
      {\iscore \Delta \\ x : T \in \Delta \\ \ltyping[\Phi] 1 t T'}
      {\ltyping[\Phi] 1 {\var x \STo t}{\judge[\Delta] T \STo T'}}
    \end{mathpar}
    \begin{align*}
      & \ltyping 1{t[\sigma]} T'
        \byIH \\
      & \ltyping 1 {\var x \STo (t[\sigma])}{\judge[\Delta] T \STo T'}
    \end{align*}
    
  \item $\branch = \lambda x. ?u \STo t$, 
    \begin{mathpar}
      \inferrule
      {\ltyping[\Phi, u : (\judge[\Delta, x : S]T)] 1 t T'}
      {\ltyping[\Phi] 1 {\lambda x. ?u \STo t}{\judge[\Delta] S \func T \STo T'}}
    \end{mathpar}
    \begin{align*}
      & \typing[\Psi, u : (\judge[\Delta, x : S]T)]{\sigma,u/u}{\Phi, u :
        (\judge[\Delta, x : S]T)}
      \\
      & \ltyping[\Psi, u : (\judge[\Delta, x : S]T)] 1{t[\sigma,u/u]}T'
        \byIH \\
      & \ltyping 1 {\lambda x. ?u \STo (t[\sigma,u/u])}{\judge[\Delta] S \func T \STo T'}
    \end{align*}
    
  \item $\branch = ?u~?u' \STo t$,
    \begin{mathpar}
      \inferrule
      {\forall \iscore S.~ \ltyping[\Phi, u : (\judge[\Delta]S \func T), u' : (\judge[\Delta]S)] 1 t T'}
      {\ltyping[\Phi] 1 {?u~?u' \STo t}{\judge[\Delta] T \STo T'}}
    \end{mathpar}
    \begin{align*}
      & \typing[\Psi, u : (\judge[\Delta]S \func T), u' :
        (\judge[\Delta]S)]{\sigma,u/u,u'/u'}{\Phi, u : (\judge[\Delta]S \func T), u' :
        (\judge[\Delta]S)}
      \\
      & \ltyping[\Psi, u : (\judge[\Delta]S \func T), u' : (\judge[\Delta]S)]
        1{t[\sigma,u/u, u'/u']}T'
        \byIH \\
      & \ltyping 1 {?u~?u' \STo (t[\sigma,u/u,u'/u'])}{\judge[\Delta] T \STo T'}
    \end{align*}
  \end{itemize}
\end{proof}

We could also define the substitutions between dual contexts, but they are define in
an identical way to the development in \Cref{sec:st}, so we omit the details here for
brevity. %
In the next section, we move on to the adaptation of the presheaf model, and show that
the 2-layered contextual modal type theory is normalizing by evaluation.

\section{Presheaf Model for Contextual Types}\labeledit{sec:prescont}

Having developed the syntactic theory of the 2-layered contextual modal type theory,
the next step following the previous development is to show its normalization by using
a presheaf model. %
The adaptation is largely moderate. %
We only need to pay more attention to the interpretation of pattern matching and make
adjustments to the models in the completeness and soundness proofs.

\subsection{Weakenings and Presheaves}

Due to the significant changes in global contexts, our definition of global weakenings
should also be adjusted:
\begin{mathpar}
  \inferrule
  { }
  {\varepsilon: \cdot \To_g \cdot}

  \inferrule
  {\gamma : \Psi \To_g \Phi \\ \iscore \Gamma \\ \iscore T}
  {q(\gamma) : \Psi, u : (\judge T) \To_g \Phi, u : (\judge T)}

  \inferrule
  {\gamma : \Psi \To_g \Phi \\ \iscore \Gamma \\ \iscore T}
  {p(\gamma) : \Psi, u : (\judge T) \To_g \Phi}
\end{mathpar}

We only add the extra premises of $\iscore \Gamma$. %
We still refer to the category of dual contexts and weakenings as $\WC$. 

Next we move on to fix the interpretations:
\begin{align*}
  \intp{\_} &: \Typ \to \WC^{op} \To \SetC \\
  \intp{\Nat} &:= \Nf^\Nat \\
  \intp{\cont T} &:= \Nf^{\cont T} \\
  \intp{S \func T} &:= \intp{S} \hfunc \intp{T}
\end{align*}
Again the fix is moderate. %
The interpretation of a contextual type is the presheaf to the set of normal forms of
that contextual type. %
Since $\Nf^T$ in general is a presheaf, so all properties about this interpretation
remain true. %
The interpretations of global, local and dual contexts remain the same, as well as
their properties including functoriality. %
The definitions of reification and reflection only require minimal case to handle
contextual types instead:
\begin{align*}
  \downarrow^T &: \intp{T} \To \Nf^T \\
  \downarrow^{\cont[\Delta] T}_{\;\Psi;\Gamma}(a)
               &:= a \\
  \uparrow^T &: \Ne\ T \To \intp{T} \\
  \uparrow^{\cont[\Delta] T}_{\;\Psi;\Gamma}(v)
               &:= v
\end{align*}
Additionally, we need to a version of reification for local substitutions, because our
neutral form $u^\theta$ requires $\theta$ to be the normal local substitutions,
i.e. to only contain normal forms.
\begin{align*}
  \downarrow^{\cdot}_{\;\Psi;\Gamma}(*)
  &:= \cdot \\
  \downarrow^{\Delta, x : T}_{\;\Psi;\Gamma}(\rho, a)
  &:= \downarrow^{\Delta}_{\;\Psi;\Gamma}(\rho), \downarrow^{T}_{\;\Psi;\Gamma}(a)/x
\end{align*}

\begin{lemma}\labeledit{lem:ct:reif-lsubst}
  With $\rho \in \intp{\Delta}_{\;\Psi; \Gamma}$, let
  $\theta := \uparrow^\Delta_{\;\Psi; \Gamma}(\rho)$. %
  Then we have $\ltyping 1 {\theta}\Delta$ and $\theta$ is a normal local substitution.
\end{lemma}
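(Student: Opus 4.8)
The plan is to proceed by induction on the local context $\Delta$ — equivalently, on the structure of $\rho$ — following the recursive definition of the reification operator $\downarrow^\Delta$ on local substitutions given immediately above. Both conjuncts of the conclusion, the well-typedness $\ltyping 1 \theta \Delta$ and the normality of $\theta$, will be carried through the same induction simultaneously.

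In the base case $\Delta = \cdot$, we have $\rho = *$ and $\theta = \downarrow^\cdot_{\;\Psi;\Gamma}(*) = \cdot$. The empty local substitution is typed at layer $1$ by the rule requiring $\iscore \Psi$ and $\istype \Gamma$, and $\cdot$ is vacuously a normal local substitution; so this case is immediate once those context side-conditions are supplied.

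For the step case $\Delta = \Delta', x : T$ we have $\rho = (\rho', a)$ and $\theta = \theta', \downarrow^T_{\;\Psi;\Gamma}(a)/x$ with $\theta' = \downarrow^{\Delta'}_{\;\Psi;\Gamma}(\rho')$. The single observation that carries the whole lemma is that reification on types is a natural transformation $\downarrow^T : \intp{T} \To \Nf^T$, so that $\downarrow^T_{\;\Psi;\Gamma}(a) \in \Nf^T_{\;\Psi;\Gamma}$. Unfolding the definition $\Nf^T_{\;\Psi;\Gamma} = \{ w \mid \ltyping 1 w T \}$ then yields for free both that $\downarrow^T_{\;\Psi;\Gamma}(a)$ is a normal form and that $\ltyping 1 {\downarrow^T_{\;\Psi;\Gamma}(a)}{T}$. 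Combining the latter with the induction hypothesis $\ltyping 1 {\theta'}{\Delta'}$ and applying the step rule for local substitutions gives $\ltyping 1 \theta \Delta$; and since $\theta'$ is normal by the induction hypothesis and $\downarrow^T_{\;\Psi;\Gamma}(a)$ is a normal form, the grammar $\theta := \cdot \mid \theta, w/x$ shows that $\theta$ is a normal local substitution.

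I do not anticipate a genuine obstacle: the argument is a direct structural induction whose entire weight rests on the typing of the reification operator, which is already packaged into the statement that $\downarrow^T$ lands in $\Nf^T$. The only point requiring care is the bookkeeping of the context well-formedness conditions ($\iscore \Psi$, $\istype \Gamma$, and $\istype \Delta$) needed to fire the local-substitution typing rules; all of these follow from the presupposition that $\rho$ inhabits $\intp{\Delta}_{\;\Psi;\Gamma}$, since that interpretation is only defined over well-formed dual contexts with $\istype \Delta$.
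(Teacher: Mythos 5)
Your proposal is correct and takes essentially the same route as the paper: the paper's proof is simply ``induction on $\Delta$, noting that $\downarrow^T_{\;\Psi;\Gamma}(a)$ always returns normal forms,'' which is precisely the structural induction you spell out, with the key step being that $\downarrow^T$ lands in $\Nf^T_{\;\Psi;\Gamma}$ and hence delivers both normality and layer-$1$ well-typedness at once. You also correctly read the $\uparrow^\Delta$ in the statement as the reification $\downarrow^\Delta$ defined immediately above the lemma, which is the intended reading.
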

\begin{proof}
  We do induction on $\Delta$. %
  Note that $\downarrow^{T}_{\;\Psi;\Gamma}(a)$ always returns normal forms.
\end{proof}

Finally, we fix the interpretation of terms as natural transformations.
\begin{align*}
  \intp{\_}
  &: \forall i \to \ltyping[\Phi][\Delta] i t T \to \intp{\Phi;\Delta}^i \To \intp{T} \\
  \intp{u^\delta}^0_{\;\Psi;\Gamma}(\gamma; \rho)
  &:= \uparrow^{T}_{\;\Psi; \Gamma}(u[\gamma]^{\theta})
    \tag{where $\ltyping[\Phi][\Delta]0\delta{\Delta'}$ and $\theta := \downarrow^{\Delta'}_{\;\Psi; \Gamma}(\intp{\delta}^0_{\;\Psi;
    \Gamma}(\gamma; \rho))$}\\
  \intp{u^\delta}^1_{\;\Psi;\Gamma}(\sigma; \rho)
  &:= \intp{\sigma(u)}^0_{\;\Psi; \Gamma}(\id; \intp{\delta}^1_{\;\Psi;\Gamma}(\sigma;
    \rho)) \\
  \intp{\boxit t}^1_{\;\Psi;\Gamma}(\sigma; \rho)
  &:= \boxit {(t[\sigma])} \\
  \intp{\matc{t}\ \vect\branch}^1_{\;\Psi;\Gamma}(\sigma; \rho)
  &:= \tmatc(s, \vect\branch)_{\;\Psi; \Gamma}(\sigma; \rho)
    \tag{if $\intp{t}^1_{\;\Psi;\Gamma}(\sigma; \rho) = \boxit s$} \\
  \intp{\matc{t}\ \vect\branch}^1_{\;\Psi;\Gamma}(\sigma; \rho)
  &:= \uparrow^{T}_{\;\Psi; \Gamma}(\matc v\ \vect\nbranch)
    \tag{if $\intp{t}^1_{\;\Psi;\Gamma}(\sigma; \rho) = v : \square S$
    and $\vect\nbranch :=\tnfbranch(\vect\branch)_{\;\Psi; \Gamma}(\sigma;\rho)$}
  \\[5pt]
  \intp{\_}
  &: \forall i \to \ltyping[\Phi][\Delta] i \delta \Delta' \to \intp{\Phi;\Delta}^i
    \To \intp{\Delta'} \\
  \intp{\cdot}^i_{\;\Psi;\Gamma}(\_)
  &:= * \\
  \intp{\delta, t/x}^i_{\;\Psi;\Gamma}(\sigma;\rho)
  &:= (\intp{\delta}^i_{\;\Psi;\Gamma}(\sigma;\rho), \intp{t}^i_{\;\Psi;\Gamma}(\sigma;\rho))
  \\[5pt]
  \tmatc
  &: \ltyping[\Psi][\Delta'] 0 t T \to
    \ltyping[\Phi][\Delta] 1{\vect\branch}{\judge[\Delta'] T \STo T'} \to
    \intp{\Phi;\Delta}^1_{\;\Psi; \Gamma} \to \intp{T}_{\;\Psi; \Gamma} \\
  \tmatc(x, \vect\branch)_{\;\Psi; \Gamma}(\sigma; \rho)
  &:=
    \intp{t}^1_{\;\Psi; \Gamma}(\sigma; \rho)
    \tag{where $\var x \STo t := \vect\branch(x)$} \\
  \tmatc(\ze, \vect\branch)_{\;\Psi; \Gamma}(\sigma; \rho)
  &:=
    \intp{t}^1_{\;\Psi; \Gamma}(\sigma; \rho)
    \tag{where $\ze \STo t := \vect\branch(\ze)$} \\
  \tmatc(\su s, \vect\branch)_{\;\Psi; \Gamma}(\sigma; \rho)
  &:=
    \intp{t}^1_{\;\Psi; \Gamma}(\sigma, s/u; \rho)
    \tag{where $\su ?u \STo t := \vect\branch(\su s)$} \\
  \tmatc(\lambda x. s, \vect\branch)_{\;\Psi; \Gamma}(\sigma; \rho)
  &:=
    \intp{t}^1_{\;\Psi; \Gamma}(\sigma, s/u; \rho)
    \tag{where $\lambda x. ?u \STo t := \vect\branch(\lambda x. s)$} \\
  \tmatc(t'~s, \vect\branch)_{\;\Psi; \Gamma}(\sigma; \rho)
  &:=
    \intp{t}^1_{\;\Psi; \Gamma}(\sigma, t'/u, s/u; \rho)
    \tag{where $?u~?u' \STo t := \vect\branch(t'~s)$} \\
  \tmatc(u^\delta, \vect\branch)_{\;\Psi; \Gamma}(\sigma; \rho)
  &:= \uparrow^{T'}_{\;\Psi; \Gamma}(\matc {\boxit{u^\delta}}\ \vect\nbranch)
    \tag{where $\vect\nbranch :=\tnfbranch(\vect\branch)_{\;\Psi; \Gamma}(\sigma;\rho)$} \\[5pt]
  \tnfbranch &: \ltyping[\Phi][\Delta] 1{\branch}{\judge[\Delta'] T \STo T'} \to
               \intp{\Phi;\Delta}^1_{\;\Psi; \Gamma} \to
               \ltyping 1{\nbranch}{\judge[\Delta'] T \STo T'} \\
  \tnfbranch(\var x \STo t)_{\;\Psi; \Gamma}(\sigma; \rho)
  &:= \var x \STo \downarrow^{T'}_{\;\Psi; \Gamma}(\intp{t}^1_{\;\Psi; \Gamma}(\sigma; \rho)) \\
  \tnfbranch(\ze \STo t)_{\;\Psi; \Gamma}(\sigma; \rho)
  &:= \ze \STo \downarrow^{T'}_{\;\Psi; \Gamma}(\intp{t}^1_{\;\Psi; \Gamma}(\sigma;
    \rho)) \\
  \tnfbranch(\su ?u \STo t)_{\;\Psi; \Gamma}(\sigma; \rho)
  &:= \su ?u \STo \downarrow^{T'}_{\;\Psi, u: (\judge[\Delta']\Nat);
    \Gamma}(\intp{t}^1_{\;\Psi, u: (\judge[\Delta']\Nat); \Gamma}(\sigma', u/u;
    \rho'))
    \tag{where $p(\id); \id : \Psi, u: (\judge[\Delta']\Nat);\Gamma \To \Psi;\Gamma$ and $(\sigma'; \rho')
    := (\sigma; \rho)[p(\id); \id]$} \\
  \tnfbranch(\lambda x. ?u \STo t)_{\;\Psi; \Gamma}(\sigma; \rho)
  &:= \lambda x. ?u \STo \downarrow^{T'}_{\;\Psi, u: (\judge[\Delta', x : S]T);
    \Gamma}(\intp{t}^1_{\;\Psi, u: (\judge[\Delta', x : S]T); \Gamma}(\sigma', u/u;
    \rho'))
    \tag{where $S$ is the type of $x$, $p(\id); \id : \Psi, u:
    (\judge[\Delta', x : S]T);\Gamma \To \Psi;\Gamma$ and $(\sigma'; \rho') := (\sigma; \rho)[p(\id); \id]$} \\
  \tnfbranch(?u~?u' \STo t)_{\;\Psi; \Gamma}(\sigma; \rho)
  &:= ?u~?u' \STo \downarrow^{T'}_{\;\Psi, u: (\judge[\Delta']{S \func T'}), u':
    (\judge[\Delta']S); \Gamma}(\intp{t}^1_{\;\Psi, u: (\judge[\Delta']{S \func T'}),
    u': (\judge[\Delta']S); \Gamma}(\sigma', u/u, u'/u'; \rho')) 
    \tag{where $S$ is a parameterized core type, }  \\
  \tag{$p(p(\id)); \id : \Psi, u: (\judge[\Delta']{S \func T'}), u': (\judge[\Delta']S);\Gamma \To \Psi;\Gamma$ and
    $(\sigma'; \rho') := (\sigma; \rho)[p(p(\id)); \id]$}
\end{align*}

The interpretation of terms has grown substantially more complex than the previous
development. %
We only write down the modal cases because the other cases are folklore and remain
unchanged. %
Let us break down what is changed after the introduction of contextual types and
pattern matching by the following lemma. %
\begin{lemma}[Well-definedness] $ $
  \begin{itemize}
  \item If $\ltyping[\Phi][\Delta] i t T$ and $(\sigma;\rho) \in \intp{\Phi;\Delta}^i_{\;\Psi; \Gamma}$,
    then
    $\intp{t}^i_{\;\Psi; \Gamma}(\sigma;\rho) \in \intp{T}_{\;\Psi; \Gamma}$.
  \item If $\ltyping[\Phi][\Delta] i \delta \Delta'$ and
    $(\sigma;\rho) \in \intp{\Phi;\Delta}^i_{\;\Psi; \Gamma}$, then
    $\intp{\delta}^i_{\;\Psi; \Gamma}(\sigma;\rho) \in \intp{\Delta'}_{\;\Psi; \Gamma}$.
  \end{itemize}
\end{lemma}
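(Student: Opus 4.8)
The plan is to prove both clauses simultaneously by mutual induction on the typing derivations, enlarging the mutual bundle with two auxiliary claims that track the helper functions: that $\tmatc(s,\vect\branch)_{\;\Psi;\Gamma}(\sigma;\rho) \in \intp{T}_{\;\Psi;\Gamma}$ whenever $s$ is a well-typed layer-$0$ code of type $T$ in the scrutinee context and $\ltyping[\Phi][\Delta] 1 {\vect\branch}{\judge[\Delta'] T \STo T'}$, and that $\tnfbranch(\branch)_{\;\Psi;\Gamma}(\sigma;\rho)$ is a well-typed normal branch for $\judge[\Delta'] T \STo T'$. The non-modal cases (local variables, $\ze$, $\su$, $\lambda$, application, and the nil/cons cases of local substitutions) are exactly as in the simply typed $\lambda$-calculus and the model of \Cref{sec:presheaf}; they follow immediately from the induction hypotheses and functoriality of the interpretations, so I would concentrate on the modal cases.

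For a global variable $u^\delta$ at layer $0$, the induction hypothesis gives $\intp{\delta}^0_{\;\Psi;\Gamma}(\gamma;\rho) \in \intp{\Delta'}_{\;\Psi;\Gamma}$, so by \Cref{lem:ct:reif-lsubst} its reification $\theta := \downarrow^{\Delta'}_{\;\Psi;\Gamma}(\dots)$ is a well-typed normal local substitution; hence $u[\gamma]^{\theta}$ is a well-typed neutral form and $\uparrow^{T}_{\;\Psi;\Gamma}(u[\gamma]^{\theta}) \in \intp{T}_{\;\Psi;\Gamma}$. At layer $1$ the induction hypothesis again gives $\intp{\delta}^1_{\;\Psi;\Gamma}(\sigma;\rho) \in \intp{\Delta'}_{\;\Psi;\Gamma}$; since $\sigma \in \intp{\Phi}^1_{\Psi}$ is a well-typed global substitution, $\sigma(u)$ is a well-typed layer-$0$ term in context $\Delta'$ over $\Psi$, so $(\id; \intp{\delta}^1_{\;\Psi;\Gamma}(\sigma;\rho)) \in \intp{\Psi;\Delta'}^0_{\;\Psi;\Gamma}$ and the layer-$0$ clause applied to $\sigma(u)$ discharges the goal. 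This appeal is legitimate because the interpretation is defined by well-founded recursion in which the layer strictly drops from $1$ to $0$ at a global-variable lookup and layer-$0$ interpretation is otherwise structural; the induction follows the same ordering. The case $\boxit t$ is settled by the global substitution theorem just established: $t[\sigma]$ is a well-typed layer-$0$ term in $\Delta'$, so $\boxit{(t[\sigma])} \in \Nf^{\cont[\Delta']{T}}_{\;\Psi;\Gamma} = \intp{\cont[\Delta']{T}}_{\;\Psi;\Gamma}$.

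The crux, which I expect to be the main obstacle, is pattern matching $\matc{t}\ \vect\branch$, where I split on $\intp{t}^1_{\;\Psi;\Gamma}(\sigma;\rho) \in \Nf^{\cont[\Delta']{T}}_{\;\Psi;\Gamma}$. If it equals $\boxit s$, then $s$ is a well-typed layer-$0$ code of type $T$ in $\Delta'$, and I proceed by case analysis on the shape of $s$ via the $\tmatc$ auxiliary claim. The covering conditions baked into $\ltyping 1 {\vect\branch}{\judge[\Delta'] T \STo T'}$ guarantee that the relevant lookup succeeds; the delicate point is verifying that each extended global environment, e.g. $(\sigma, s'/u)$ for $\su s'$ or $(\sigma, t'/u, s'/u')$ for $t'~s'$, genuinely inhabits the corresponding $\intp{\cdot}^1$. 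This requires inverting the layer-$0$ derivation of $s$ (using \Cref{thm:ct:validity} to know the recovered subterm types are core) so that the captured subterms are seen to be well-typed layer-$0$ codes at precisely the contextual types prescribed by the branch premises; in the application case I must additionally instantiate the universally quantified $\iscore S$ with the argument type read off from $s$'s derivation. Once the environment is validated, the induction hypothesis on the branch body gives membership in $\intp{T'}$.

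If instead $\intp{t}^1_{\;\Psi;\Gamma}(\sigma;\rho)$ is a neutral $v$, I invoke the $\tnfbranch$ auxiliary claim: each normal branch is produced by interpreting its body in the environment weakened by $p(\id)$ and extended with the fresh pattern variables, then reifying via $\downarrow^{T'}$. By that claim these are well-typed normal branches, so $\matc{v}\ \vect\nbranch$ is a well-typed neutral form and $\uparrow^{T'}_{\;\Psi;\Gamma}(\matc{v}\ \vect\nbranch) \in \intp{T'}_{\;\Psi;\Gamma}$. The neutral sub-case $\tmatc(u^\delta, \vect\branch)$ is handled identically, using that $\matc{\boxit{u^\delta}}\ \vect\nbranch$ is a neutral form by the revised grammar of $\Ne$. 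The principal difficulty throughout is matching the contextual typing of the codes captured by patterns against the global-context extensions demanded by the branch rules, which is exactly where the layered discipline and the covering judgment must cooperate.
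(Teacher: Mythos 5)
Your proposal is correct and follows essentially the same route as the paper: induction on the typing derivations, handling layer-$0$ global variables via \Cref{lem:ct:reif-lsubst} and reflection of the neutral $u[\gamma]^\theta$, handling layer-$1$ global variables by dropping to the layer-$0$ clause on $\sigma(u)$ in the environment $(\id; \intp{\delta}^1_{\;\Psi;\Gamma}(\sigma;\rho))$, dispatching $\boxit t$ by the global substitution theorem, and reducing pattern matching to the well-typedness of $\tmatc$ and $\tnfbranch$. Your auxiliary claims for $\tmatc$ and $\tnfbranch$ (including the inversion of the scrutinee's layer-$0$ derivation and the instantiation of the quantified $\iscore S$ in the application branch) are exactly what the paper compresses into the remark that the case ``is rather simple by following the types of $\tmatc$ and $\tnfbranch$''; you have simply made explicit what the paper leaves implicit.
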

\begin{proof}
  
  \begin{itemize}[label=Case]
  \item $t = u^\delta$ and $i = 0$,
    first, due to contextual types, all global variables must keep track of local
    substitutions.
    \begin{mathpar}
      \inferrule
      {\ltyping[\Phi][\Delta] 0 \delta \Delta' \\ u : (\judge[\Delta'] T) \in \Phi}
      {\ltyping[\Phi][\Delta]{0}{u^\delta}{T}}
    \end{mathpar}

    Since we are at layer $0$, we know $\gamma : \Psi \To_g \Phi$, hence
    \begin{align*}
      & \intp{\delta}^0_{\;\Psi; \Gamma}(\gamma;\rho) \in \intp{\Delta'}_{\;\Psi;
        \Gamma}
        \byIH \\
      & \ltyping 0 {\downarrow^{\Delta'}_{\;\Psi; \Gamma}(\intp{\delta}^0_{\;\Psi;
        \Gamma}(\gamma;\rho))} \Delta'
        \tag{by \Cref{lem:ct:reif-lsubst}} \\
      & \ltyping 0 {u[\gamma]^\theta} T
        \tag{where ${u[\gamma]^\theta}$ is neutral} \\
      & \uparrow^T_{\;\Psi; \Gamma}({u[\gamma]^\theta}) \in \intp{T}_{\;\Psi; \Gamma}
    \end{align*}
    
  \item $t = u^\delta$ and $i = 1$,
    \begin{mathpar}
      \inferrule
      {\ltyping[\Phi][\Delta] 1 \delta \Delta' \\ u : (\judge[\Delta'] T) \in \Phi}
      {\ltyping[\Phi][\Delta]{1}{u^\delta}{T}}
    \end{mathpar}
    Similar to before, we decrease the layer from $1$ to $0$ to perform further
    interpretation of terms. %
    However, the recursive interpretation is done in a non-empty environment.
    \begin{align*}
      & \intp{\delta}^1_{\;\Psi; \Gamma}(\gamma;\rho) \in \intp{\Delta'}_{\;\Psi;
        \Gamma}
        \byIH \\
      & \ltyping[\Psi][\Delta'] 0{\sigma(u)}T \\
      & (\id; \intp{\delta}^1_{\;\Psi; \Gamma}(\gamma;\rho)) \in \intp{\Psi; \Delta'}^0_{\;\Psi;
        \Gamma} \\
      & \intp{\sigma(u)}^0_{\;\Psi; \Gamma}(\id; \intp{\delta}^1_{\;\Psi;\Gamma}(\sigma;
    \rho)) \in \intp{T}_{\;\Psi; \Gamma}
    \end{align*}
    
  \item $t = \boxit t'$, we do not have to check much as the global substitution
    property directly applies.
    
  \item $t = \matc s \ \vect\branch$, this case is rather simple by following the types
    of $\tmatc$ and $\tnfbranch$ functions. 
  \end{itemize}
\end{proof}

Now we have finished the adjustments to the presheaf model. %
We restate the definition of the NbE algorithm and its completeness and soundness
theorems. %
Next, we move on to fixing the proofs of various properties, as well as the
completeness and soundness proofs.
\begin{definition}
  A normalization by evaluation algorithm given $\ltyping 1 t T$ is
  \begin{align*}
    \nbe^T_{\;\Psi;\Gamma}(t) &:= \downarrow^T_{\;\Psi;\Gamma} (\intp{t}^1_{\;\Psi;\Gamma}(\uparrow^{\Psi;\Gamma}))
  \end{align*}
\end{definition}

\begin{theorem}[Completeness]\labeledit{thm:ct:compl}
  If $\ltyequiv 1 t {t'} T$, then $\nbe^T_{\;\Psi;\Gamma}(t) = \nbe^T_{\;\Psi;\Gamma}(t')$.
\end{theorem}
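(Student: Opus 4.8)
The plan is to replay the completeness argument of \Cref{thm:st:compl} from the non-contextual setting, adapting each ingredient to contextual types and pattern matching. First I would re-introduce the layered semantic equivalence judgments: at layer $0$, $\lsemtyeq 0 s t T$ is just syntactic equality (since code has no dynamics), while at layer $1$, $\lsemtyeq 1 s t T$ asserts $\intp{s}^1_{\;\Phi;\Delta}(\sigma;\rho) = \intp{t}^1_{\;\Phi;\Delta}(\sigma;\rho)$ for every environment $(\sigma;\rho) \in \intp{\Psi;\Gamma}^1_{\;\Phi;\Delta}$. Because global variables now carry local substitutions and pattern matching is an elimination form, I would additionally define semantic equivalences for local substitutions $\delta$ and for (collections of) branches $\vect\branch$, the latter indexed by the scrutinee type $\cont[\Delta]T$ and the return type $T'$ so as to mirror the typing judgment $\ltyping 1 {\vect\branch}{\judge[\Delta]T \STo T'}$.

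Before the fundamental theorem I would re-establish, for the contextual model, the three structural lemmas that drove the earlier proof: naturality of $\intp{\cdot}$ (the analogue of \Cref{lem:st:t-intp-nat}), and the global and local substitution lemmas (the analogues of \Cref{lem:st:glob-subst} and \Cref{lem:st:loc-subst}). The new cases are $u^\delta$, where naturality must now also commute with $\intp{\delta}$ and with the reified $\theta := \downarrow^{\Delta'}(\intp{\delta})$, and $\matc{t}\ \vect\branch$, where naturality must be checked separately for the two clauses of the interpretation (the $\tmatc$ clause on a $\boxit s$ outcome and reflection of $\matc v\ \vect\nbranch$ on a neutral outcome), relying on naturality of $\tmatc$ and $\tnfbranch$. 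These are largely mechanical once the right auxiliary equations for $\tmatc$ and $\tnfbranch$ are stated.

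With those in hand I would prove the fundamental theorem $\ltyequiv i s t T \Rightarrow \lsemtyeq i s t T$ (with the variants for substitutions and branches) by mutual induction on the equivalence derivation, analogously to \Cref{thm:st:comp-fund}. Most PER and congruence cases collapse to reflexivity and transitivity of equality in the model; the $u^\delta$ congruence follows since $\intp{u^\delta}^1 = \intp{\sigma(u)}^0(\id; \intp{\delta}^1)$ depends on $\delta$ only through $\intp{\delta}^1$, which the induction hypothesis equates. The genuinely new congruence case is pattern matching, where I would evaluate the scrutinee, split on whether it yields $\boxit{s^c}$ or a neutral $v$, and in the neutral case use that semantically equal branches reify to equal normal branches under $\tnfbranch$. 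The $\beta$ rules for pattern matching are the crux: for each of the five code shapes ($\var x$, $\ze$, $\su s$, $\lambda x. s$, $t'~s$) I would show that $\intp{\matc{\boxit{s^c}}\ \vect\branch}^1(\sigma;\rho)$ equals the interpretation of the matching branch body with the captured subterms pushed into the global substitution, invoking the contextual global substitution lemma exactly as the $\tletbox$ $\beta$ rule did. Finally, completeness is a one-line corollary: since $\uparrow^{\Psi;\Gamma} \in \intp{\Psi;\Gamma}^1_{\;\Psi;\Gamma}$, instantiating $\lsemtyeq 1 t {t'} T$ at the identity environment gives $\intp{t}^1_{\;\Psi;\Gamma}(\uparrow^{\Psi;\Gamma}) = \intp{t'}^1_{\;\Psi;\Gamma}(\uparrow^{\Psi;\Gamma})$, and applying $\downarrow^T_{\;\Psi;\Gamma}$ yields $\nbe^T_{\;\Psi;\Gamma}(t) = \nbe^T_{\;\Psi;\Gamma}(t')$.

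The hard part will be the pattern-matching $\beta$ rules. They force a case analysis on the syntactic shape of the normal core term produced by evaluating the scrutinee, where covering (the typing of $\vect\branch$) is exactly what guarantees the relevant branch exists and the $\tmatc$ lookup is legal; and the application branch must be handled uniformly in the universally quantified argument type $\iscore S$ that is baked into its typing rule. The neutral pattern-match case is the second delicate point, since it is what the revised notion of neutral form ($\matc{\boxit{u^\delta}}\ \vect\nbranch$) was introduced to support, and its well-definedness — that $\tnfbranch$ respects semantic equivalence of branches and that reflection of equal neutrals gives equal results — must be threaded carefully.
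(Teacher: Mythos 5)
Your proposal follows essentially the same route as the paper: layered semantic equivalence judgments (syntactic equality at layer $0$, pointwise equality of interpretations at layer $1$) extended to local substitutions and branches, the structural lemmas (naturality and the substitution lemmas, extended mutually to $\tmatc$ and $\tnfbranch$), a fundamental theorem by mutual induction whose new work lies in the pattern-matching congruence and $\beta$ cases handled via the contextual global substitution lemma, and completeness obtained by instantiating at $\uparrow^{\Psi;\Gamma}$. You also correctly identify the two delicate points the paper addresses — the blocked $\matc{\boxit{u^\delta}}$ neutral case and the universally quantified $\iscore S$ in the application branch — so the plan is sound as it stands.
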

\begin{theorem}[Soundness]\labeledit{thm:ct:sound}
  If $\ltyping 1 t T$, then $\ltyequiv 1 t {\nbe^T_{\;\Psi;\Gamma}(t)} T$.
\end{theorem}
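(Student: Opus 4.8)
The plan is to reprise the gluing argument of \Cref{sec:st}, adapting each ingredient to account for contextual types, layered local substitutions, and pattern matching. As before, the soundness theorem (\Cref{thm:ct:sound}) will follow from a fundamental theorem stating that every well-typed term at layer $1$ satisfies a semantic typing judgment $\lSemtyp 1 t T$ built from a gluing relation $\glu{T}^1$ between syntactic terms and values in the presheaf model. Instantiating that judgment with the identity global substitution and the identity environment $\uparrow^{\Psi;\Gamma}$, and then applying the contextual analogue of \Cref{lem:st:glue-nf}, yields $\ltyequiv 1 t {\nbe^T_{\;\Psi;\Gamma}(t)} T$.

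First I would rebuild the two-layer gluing model. The layer-$0$ model $\glu{T}^0$ is defined exactly as in \Cref{sec:st:glue-0} on the core types $\Nat$ and $S \func T$, and I would reuse the layer-$0$ semantic judgment $\lSemtyp 0 t T$ and the gluing of local contexts. The only new wrinkle at layer $0$ is that local substitutions are now first-class and every global variable carries one, so I must also glue a local substitution $\delta$ against an environment $\rho$ and verify the semantic rule for $u^\delta$ by combining this gluing with \Cref{lem:st:ne-glue}, which now produces the neutral $u[\gamma]^\theta$. At layer $1$ the gluing relation for the contextual type $\cont[\Delta] T$ generalises the $\square T$ clause: a term $t$ glues with $\boxit{t^c}$ when $\ltyequiv 1 t {\boxit{t^c}}{\cont[\Delta] T}$ and $\lSemtyp[\Psi][\Delta] 0 {t^c} T$, recording that the open code is semantically well-typed in its own local context $\Delta$; and $t$ glues with a neutral $v$ when $\ltyequiv 1 t v {\cont[\Delta] T}$. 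I would then re-establish monotonicity, respect for equivalence (\Cref{lem:st:glue-resp-equiv}), neutral-glues (\Cref{lem:st:ne-glue}), and glue-implies-normal-form (\Cref{lem:st:glue-nf}) for both layers, essentially verbatim from \Cref{sec:st}.

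The substantive new work is the semantic rule for pattern matching, $\ltyping 1 {\matc s \ \vect\branch} T'$. From the premise I obtain $s[\sigma;\delta] \sim \intp{s}^1(\sigma;\rho) \in \glu{\cont[\Delta] T}^1$ and case-split on the two gluing clauses. When $\intp{s}^1(\sigma;\rho) = \boxit{t^c}$, the gluing relation hands me $\lSemtyp[\Psi][\Delta] 0 {t^c} T$, i.e. the captured code together with its layer-$0$ semantic typing. I would then perform an inner induction on this layer-$0$ semantic derivation to expose the head constructor of $t^c$ (a variable, $\ze$, $\su$, a $\lambda$, an application, or a stuck global variable $u^\delta$) and show that $\tmatc$ selects the branch whose body is reached by the matching $\beta$ rule; the layer-$0$ semantic typing of the captured subterms supplies exactly the glued global substitutions demanded by the branch premise, and \Cref{lem:st:glue-resp-equiv} bridges the $\beta$ step. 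When instead $t^c$ is a stuck $u^\delta$, or when $\intp{s}^1(\sigma;\rho)$ is already neutral, the result must glue with the neutral $\matc{\boxit{u^\delta}}\ \vect\nbranch$ (respectively $\matc v \ \vect\nbranch$); I would handle this with \Cref{lem:st:ne-glue} after reifying each branch body through $\tnfbranch$, exactly as the stuck $\tletbox$ case was treated in \Cref{sec:st}.

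I expect the main obstacle to be the application branch. Because the argument type of a code application is not recoverable from $T$, the typing rule quantifies universally over all $\iscore S$, so the inner induction must produce, for an application head $t'~s$ with $t' : \cont[\Delta]{S \func T}$ and $s : \cont[\Delta] S$, a glued global substitution $\sigma, t'/u, s/u'$ that lands in the semantic context required by the branch premise \emph{uniformly} in $S$. Threading this parametricity through the inner induction, while simultaneously preserving naturality and the invariant that captured pattern variables glue at layer $0$, is the delicate point; the remaining constructors ($\ze$, $\su$, $\lambda$, and the local-variable branch) are routine once the application case is settled, and the congruence, PER, $\beta$, and $\eta$ rules transfer from \Cref{sec:st} with only bookkeeping changes for the new local-substitution-carrying variables.
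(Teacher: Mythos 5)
Your overall skeleton---a fundamental theorem for a layered gluing model, instantiated at the identity global substitution and the identity environment $\uparrow^{\Psi;\Gamma}$, and closed off by the contextual analogue of \Cref{lem:st:glue-nf}---is the same as the paper's. But there is a genuine gap in the one genuinely new case, pattern matching, and it sits exactly where the paper departs from \Cref{sec:st}. You say you will ``reuse the layer-$0$ semantic judgment'' of \Cref{sec:st}, i.e.\ the pointwise judgment $\lSemtypPrime 0 t T$ stating that $t[\gamma][\delta]$ glues with $\intp{t}^0_{\;\Phi;\Delta}(\gamma;\rho)$ for all related environments; yet later you plan to ``perform an inner induction on this layer-$0$ semantic derivation to expose the head constructor of $t^c$'' and to use ``the layer-$0$ semantic typing of the captured subterms.'' The pointwise judgment supports none of this: it is a universally quantified statement, not an inductive derivation, so there is nothing to induct on; and, fatally for the case you yourself flag as the main obstacle, the pointwise semantic typing of an application $t'~s$ cannot be decomposed into pointwise semantic typings of $t'$ and of $s$---semantic application is not invertible---so the extended global substitution $\sigma, t'/u, s/u'$ required by the branch premise cannot be shown semantically well typed. (The universal quantification over $\iscore S$ is actually the easy part: one instantiates the branch premise at the particular $S$ obtained by inversion of the code's typing.)

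The paper makes precisely this observation (``$\lSemtypPrime 0 t T$ does not provide sufficient information to support pattern matching'') and resolves it in \Cref{sec:prescont:semlayer0} by replacing the layer-$0$ semantic judgment with an \emph{inductively defined} judgment $\lSemtyp 0 t T$, whose rules record, constructor by constructor, the semantic typing of all subterms together with the pointwise property; it is this inductive judgment that is stored both in the layer-$1$ gluing of $\cont[\Delta]{T}$ and in the semantic judgment $\Semtyp[\Phi]{\sigma}{\Psi}$ for global substitutions, and the pattern-matching case then proceeds by case analysis on that derivation, much as you envisage. Alternatively, your plan could be repaired without the inductive judgment: the gluing clause records $\ltyequiv 1 t {\boxit{t^c}}{\cont[\Delta]{T}}$, so presupposition (\Cref{lem:ct:presup}) and inversion of the $\tbox$ typing rule recover the syntactic typing of $t^c$ and of its subterms, after which the layer-$0$ fundamental theorem re-manufactures their semantic typings before you extend $\sigma$. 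Either device must be made explicit and proved; as written, your proposal invokes operations---induction over, and inversion of, a pointwise judgment---that its own definitions do not supply.
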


\subsection{Properties of Interpretations}

Following the steps before, we reestablish several important properties that are
useful for later proofs. %
\begin{lemma}
  $\uparrow^{\Delta'}$ is natural, i.e. for
  $\gamma; \tau : \Phi; \Delta \To \Psi;\Gamma$, we have
  $\uparrow^{\Delta'}_{\;\Psi;\Gamma}(\rho)[\gamma; \tau] =
  \uparrow^{\Delta'}_{\;\Phi;\Delta}(\rho[\gamma; \tau])$. 
\end{lemma}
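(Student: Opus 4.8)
The plan is to prove naturality by a direct induction on the reified context $\Delta'$, following the recursive, component-wise definition of $\uparrow^{\Delta'}$ (equivalently the local-substitution reification $\downarrow^{\Delta'}$ given just above). I will lean on two facts. First, weakening acts on a syntactic local substitution component-wise, so that $(\theta, w/x)[\gamma;\tau] = (\theta[\gamma;\tau], w[\gamma;\tau]/x)$. Second, the naturality of reification on \emph{types}, namely $\downarrow^{T}_{\;\Psi;\Gamma}(a)[\gamma;\tau] = \downarrow^{T}_{\;\Phi;\Delta}(a[\gamma;\tau])$, whose proof from \Cref{sec:presheaf} carries over unchanged because the only new case, the contextual type $\cont[\Delta'']{T}$, is interpreted by the identity and hence trivially respects weakening.

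For the base case $\Delta' = \cdot$ both sides collapse to the empty local substitution $\cdot$: the interpretation $\intp{\cdot}$ is a singleton, so $\rho = *$ and $\rho[\gamma;\tau] = *$, while $\uparrow^{\cdot}$ always yields $\cdot$, which the weakening action fixes. For the step case $\Delta' = \Delta'', x : T$ I write $\rho = (\rho'', a)$ with $\rho'' \in \intp{\Delta''}_{\;\Psi;\Gamma}$ and $a \in \intp{T}_{\;\Psi;\Gamma}$. Unfolding the definition and pushing the weakening inward along the component-wise action gives $\uparrow^{\Delta'', x : T}_{\;\Psi;\Gamma}(\rho'', a)[\gamma;\tau] = \uparrow^{\Delta''}_{\;\Psi;\Gamma}(\rho'')[\gamma;\tau], \downarrow^{T}_{\;\Psi;\Gamma}(a)[\gamma;\tau]/x$. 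Applying the induction hypothesis to the prefix and type-level naturality of $\downarrow^{T}$ to the head rewrites this as $\uparrow^{\Delta''}_{\;\Phi;\Delta}(\rho''[\gamma;\tau]), \downarrow^{T}_{\;\Phi;\Delta}(a[\gamma;\tau])/x$, which is exactly $\uparrow^{\Delta'', x : T}_{\;\Phi;\Delta}((\rho'', a)[\gamma;\tau])$, once we observe that the functorial action on $\intp{\Delta'', x : T}$ is the product action $(\rho'', a)[\gamma;\tau] = (\rho''[\gamma;\tau], a[\gamma;\tau])$.

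There is no genuine obstacle here; the statement is routine given the earlier machinery. The only point demanding care is the bookkeeping of which naturality fact each component invokes—the induction hypothesis for the recursively reified prefix, and the previously proved type-level naturality of $\downarrow^{T}$ for the head—together with the observation that the new contextual case of $\downarrow^{T}$ does not disturb that type-level naturality, since it is the identity.
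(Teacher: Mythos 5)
Your proof is correct and takes essentially the same route as the paper, which disposes of this lemma in one line as ``a natural consequence of the naturality of $\uparrow^T$'' (i.e.\ the type-level reification, written $\downarrow^T$ in the component-wise definition); your induction on $\Delta'$ with the product functorial action and per-component appeal to type-level naturality is exactly that argument made explicit. Your side remark that the only new case of $\downarrow^T$ --- the contextual type, interpreted as the identity on $\Nf^{\cont[\Delta]{T}}$ --- trivially respects weakening is also the correct justification for reusing the earlier type-level naturality proof.
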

\begin{proof}
  This is a natural consequence of the naturality of $\uparrow^T$. 
\end{proof}

\begin{lemma}\labeledit{lem:ct:gsubst-nat-gen} $ $
  \begin{itemize}
  \item If $\ltyping[\Phi][\Delta] 0 t T$, $\gamma' : \Psi' \To_g \Psi$,
    $\gamma : \Psi \To_g \Phi$ and $\rho \in \intp{\Delta}_{\;\Psi';\Gamma}$, then
    $\intp{t}^0_{\;\Psi';\Gamma}(\gamma \circ \gamma';\rho) =
    \intp{t[\gamma]}^0_{\;\Psi';\Gamma}(\gamma';\rho)$.
    
  \item If $\ltyping[\Phi][\Delta] 0 \delta \Delta'$, $\gamma' : \Psi' \To_g \Psi$,
    $\gamma : \Psi \To_g \Phi$ and $\rho \in \intp{\Delta}_{\;\Psi';\Gamma}$, then
    $\intp{\delta}^0_{\;\Psi';\Gamma}(\gamma \circ \gamma';\rho) =
    \intp{\delta[\gamma]}^0_{\;\Psi';\Gamma}(\gamma';\rho)$.
  \end{itemize}
\end{lemma}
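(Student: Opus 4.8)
The plan is to prove both statements simultaneously by mutual induction on the typing derivations $\ltyping[\Phi][\Delta] 0 t T$ and $\ltyping[\Phi][\Delta] 0 \delta \Delta'$, paralleling the structure of \Cref{lem:st:gsubst-nat-gen} but now threading the local-substitution component through the global-variable case. Because we work at layer $0$, the only terms that can occur are core terms: local variables $x$, global variables $u^\delta$, $\ze$, $\su t'$, $\lambda x. t'$, and applications $t'~s$; the modal formers $\tbox$, $\tletbox$ and $\tmatc$ are ruled out by typing. For every case other than the global variable, the two interpretations agree at once by applying the term and local-substitution induction hypotheses to the immediate subterms, exactly as in the simply typed argument.

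The one genuinely new case is $t = u^{\delta}$. First I would unfold the left-hand side using the layer-$0$ interpretation of global variables, obtaining $\uparrow^{T}_{\;\Psi';\Gamma}(u[\gamma \circ \gamma']^{\theta_1})$, where $\theta_1 := \downarrow^{\Delta'}_{\;\Psi';\Gamma}(\intp{\delta}^0_{\;\Psi';\Gamma}(\gamma \circ \gamma';\rho))$. On the right-hand side the weakening action on the term yields $(u^{\delta})[\gamma] = (u[\gamma])^{\delta[\gamma]}$, whose interpretation is $\uparrow^{T}_{\;\Psi';\Gamma}((u[\gamma])[\gamma']^{\theta_2})$ with $\theta_2 := \downarrow^{\Delta'}_{\;\Psi';\Gamma}(\intp{\delta[\gamma]}^0_{\;\Psi';\Gamma}(\gamma';\rho))$. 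Two facts then close the gap: functoriality of variable renaming, $u[\gamma \circ \gamma'] = (u[\gamma])[\gamma']$, which matches the neutral heads; and the mutual induction hypothesis for $\delta$, which gives $\intp{\delta}^0_{\;\Psi';\Gamma}(\gamma \circ \gamma';\rho) = \intp{\delta[\gamma]}^0_{\;\Psi';\Gamma}(\gamma';\rho)$ and hence $\theta_1 = \theta_2$, so the two reflections coincide.

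For the local-substitution statement, the empty case $\delta = \cdot$ is immediate, and the extension case $\delta = \delta', t/x$ follows by pairing: the interpretation distributes over the cons as $(\intp{\delta'}^0, \intp{t}^0)$, so I would apply the local-substitution induction hypothesis to $\delta'$ and the term induction hypothesis to $t$, then reassemble using $(\delta', t/x)[\gamma] = \delta'[\gamma], t[\gamma]/x$.

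I expect the main subtlety, rather than a real obstacle, to lie in the bookkeeping of the global-variable case: one must check that the weakening action on $u^\delta$ is indeed $(u[\gamma])^{\delta[\gamma]}$ (a renamed neutral head together with the weakened local substitution), and that the reified $\theta$ depends on $\delta$ only through its layer-$0$ interpretation, so that invoking the local-substitution induction hypothesis legitimately equates $\theta_1$ with $\theta_2$. Everything else is a direct transcription of the simply typed argument, with the extra layer of mutuality supplying the only new ingredient.
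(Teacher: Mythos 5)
Your proposal is correct and matches the paper's own proof: both isolate the global-variable case $t = u^\delta$ as the only interesting one, unfold the layer-$0$ interpretation to a reflection of $u[\gamma\circ\gamma']^{\theta}$, and close it using functoriality of renaming on the head together with the mutual induction hypothesis on $\delta$ to equate the two reified local substitutions. The remaining cases (and the cons/empty cases for $\delta$) are dispatched by the induction hypotheses exactly as you describe.
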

\begin{proof}
  Only the case of global variables is worth looking into. %
  If $t = u^\delta$ and $\ltyping[\Phi][\Delta] 0 \delta \Delta'$. 
  \begin{align*}
    \intp{u^\delta}^0_{\;\Psi';\Gamma}(\gamma \circ \gamma';\rho)
    &= \uparrow_{\;\Psi';\Gamma}(u[\gamma \circ \gamma']^\theta)
      \tag{where $\theta := \downarrow^{\Delta'}_{\;\Psi';\Gamma}(\intp{\delta}^0_{\;\Psi';\Gamma}(\gamma \circ \gamma';\rho))$} \\
    &= \uparrow_{\;\Psi';\Gamma}(u[\gamma][\gamma']^\theta)
      \tag{notice $\theta =
      \downarrow^{\Delta'}_{\;\Psi';\Gamma}(\intp{\delta[\gamma]}^0_{\;\Psi';\Gamma}(\gamma';\rho))$
      by IH} \\
    &= \intp{u[\gamma]^{\delta[\gamma]}}^0_{\;\Psi';\Gamma}(\gamma';\rho) \\
    &= \intp{u^\delta[\gamma]}^0_{\;\Psi';\Gamma}(\gamma';\rho)
  \end{align*}
\end{proof}

\begin{corollary}\labeledit{lem:ct:gsubst-nat}
  If $\ltyping[\Phi][\Delta] 0 t T$, $\gamma : \Psi \To_g \Phi$ and
  $\rho \in \intp{\Delta}_{\;\Psi;\Gamma}$, then
  $\intp{t}^0_{\;\Psi;\Gamma}(\gamma;\rho) = \intp{t[\gamma]}^0_{\;\Psi;\Gamma}(\id;\rho)$.
\end{corollary}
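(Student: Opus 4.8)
The plan is to derive this corollary as an immediate specialization of the preceding general lemma (\Cref{lem:ct:gsubst-nat-gen}). That lemma already establishes, for any $\ltyping[\Phi][\Delta] 0 t T$, any $\gamma' : \Psi' \To_g \Psi$, any $\gamma : \Psi \To_g \Phi$, and any $\rho \in \intp{\Delta}_{\;\Psi';\Gamma}$, the equation
\begin{align*}
  \intp{t}^0_{\;\Psi';\Gamma}(\gamma \circ \gamma';\rho)
  = \intp{t[\gamma]}^0_{\;\Psi';\Gamma}(\gamma';\rho).
\end{align*}
The corollary is simply the case where the outer weakening is the identity, so first I would instantiate the lemma with $\Psi' := \Psi$ and $\gamma' := \id_\Psi : \Psi \To_g \Psi$.

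With this choice the hypothesis $\rho \in \intp{\Delta}_{\;\Psi';\Gamma}$ becomes exactly the corollary's hypothesis $\rho \in \intp{\Delta}_{\;\Psi;\Gamma}$, and the instantiated equation reads
\begin{align*}
  \intp{t}^0_{\;\Psi;\Gamma}(\gamma \circ \id;\rho)
  = \intp{t[\gamma]}^0_{\;\Psi;\Gamma}(\id;\rho).
\end{align*}
Next I would invoke the fact that global weakenings form a category (established earlier when we verified composition and identity of global weakenings), so that $\id$ is a right unit for composition: $\gamma \circ \id = \gamma$. Rewriting the left-hand side along this equality collapses it to $\intp{t}^0_{\;\Psi;\Gamma}(\gamma;\rho)$, which is precisely the desired left-hand side of the corollary.

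There is essentially no obstacle here; the only thing to be careful about is that the unit law being used is the one for the category of \emph{global} weakenings (not local weakenings or dual-context weakenings), and that it applies definitionally to the functorial action on the first component of the environment at layer $0$. This mirrors exactly the reasoning used for the analogous \Cref{lem:st:gsubst-nat} in the simply typed development, where the same corollary followed from its general lemma by the identity $\gamma \circ \id = \gamma$. The practical payoff, as in that earlier setting, is that we may always normalize an environment's global component to $\id$ by eagerly pushing any nontrivial global weakening into the term itself.
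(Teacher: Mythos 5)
Your proposal is correct and matches the paper's own argument exactly: the corollary follows from \Cref{lem:ct:gsubst-nat-gen} by instantiating the outer weakening to $\id$ and applying the right unit law $\gamma \circ \id = \gamma$, which is precisely the one-line justification the paper gives for the analogous \Cref{lem:st:gsubst-nat} in the simply typed setting (the contextual version is stated without proof for the same reason).
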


Now we reestablish the naturality of interpretations:
\begin{lemma}[Naturality]\labeledit{lem:st:t-intp-nat} $ $
  \begin{itemize}
  \item If $\ltyping[\Phi][\Delta] i t T$, $\gamma;\tau : \Psi';\Gamma' \To \Psi;\Gamma$ and
    $\sigma;\rho \in \intp{\Phi;\Delta}^i_{\;\Psi;\Gamma}$, then
    $\intp{t}^i_{\;\Psi;\Gamma}(\sigma;\rho)[\gamma;\tau] =
    \intp{t}^i_{\;\Psi';\Gamma'}((\sigma;\rho)[\gamma;\tau])$.
  \item If $\ltyping[\Phi][\Delta] i \delta \Delta'$, $\gamma;\tau : \Psi';\Gamma' \To \Psi;\Gamma$ and
    $\sigma;\rho \in \intp{\Phi;\Delta}^i_{\;\Psi;\Gamma}$, then
    $\intp{\delta}^i_{\;\Psi;\Gamma}(\sigma;\rho)[\gamma;\tau] =
    \intp{\delta}^i_{\;\Psi';\Gamma'}((\sigma;\rho)[\gamma;\tau])$.
  \item If $\ltyping[\Psi][\Delta'] 0 t T$ and
    $\ltyping[\Phi][\Delta] 1{\vect\branch}{\judge[\Delta'] T \STo T'}$,
    $\sigma;\rho \in \intp{\Phi;\Delta}^1_{\;\Psi;\Gamma}$, and
    $\gamma;\tau : \Psi';\Gamma' \To \Psi;\Gamma$, then
    $\tmatc(t, \vect\branch)_{\;\Psi;\Gamma}(\sigma; \rho)[\gamma; \tau]
    = \tmatc(t[\gamma], \vect\branch)_{\;\Psi';\Gamma'}((\sigma; \rho)[\gamma; \tau])$.
  \item If $\ltyping[\Phi][\Delta] 1{\branch}{\judge[\Delta'] T \STo T'}$,
    $\sigma;\rho \in \intp{\Phi;\Delta}^1_{\;\Psi;\Gamma}$, and
    $\gamma;\tau : \Psi';\Gamma' \To \Psi;\Gamma$, then
    $\tnfbranch(\branch)_{\;\Psi;\Gamma}(\sigma; \rho)[\gamma; \tau] =
    \tnfbranch(\branch)_{\;\Psi';\Gamma'}((\sigma; \rho)[\gamma; \tau])$. 
  \end{itemize}
\end{lemma}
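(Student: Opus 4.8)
The plan is to prove all four statements simultaneously by mutual induction on the respective typing derivations, reusing the skeleton of the naturality argument from \Cref{sec:st}. For the non-modal term formers ($\ze$, $\su t$, local variables, $\lambda x. t$, and application) the interpretation clauses are unchanged from the simply-typed development, so these cases are literal transcriptions of the earlier proof; I would also dispatch the local-substitution statement almost entirely this way, since $\intp{\delta}^i$ is defined componentwise and its naturality reduces immediately to the term induction hypothesis applied to each entry. The three remaining groups of cases --- the global variable $u^\delta$, the box and match term formers, and the two auxiliary functions $\tmatc$ and $\tnfbranch$ --- are where contextual types genuinely intervene.

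First I would treat $t = u^\delta$. At layer $0$ we have $\intp{u^\delta}^0_{\;\Psi;\Gamma}(\gamma;\rho) = \uparrow^{T}_{\;\Psi;\Gamma}(u[\gamma]^{\theta})$ with $\theta := \downarrow^{\Delta'}_{\;\Psi;\Gamma}(\intp{\delta}^0_{\;\Psi;\Gamma}(\gamma;\rho))$, so pushing a weakening through needs three ingredients: naturality of reflection to commute it past $\uparrow^{T}$, naturality of reification together with the induction hypothesis for $\intp{\delta}^0$ to rewrite the weakened $\theta$ as the reified substitution at the smaller stage, and \Cref{lem:ct:gsubst-nat} to absorb the composed global weakening into $u[\gamma]$. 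At layer $1$ the interpretation drops a layer via $\intp{\sigma(u)}^0_{\;\Psi;\Gamma}(\id; \intp{\delta}^1_{\;\Psi;\Gamma}(\sigma;\rho))$, which I would close using the induction hypothesis on $\intp{\delta}^1$ and the layer-$0$-to-layer-$1$ bridging computation already carried out in the original proof. The $\boxit{t'}$ case is immediate, since $\intp{\boxit{t'}}^1(\sigma;\rho) = \boxit{(t'[\sigma])}$ and the weakening acts solely through $\sigma[\gamma]$.

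For the two match term cases I would reduce to the auxiliary statements: when the scrutinee evaluates to a box the goal becomes naturality of $\tmatc$, and when it evaluates to a neutral $v$ it becomes naturality of reflection applied to $\matc v\ \vect\nbranch$ together with naturality of $\tnfbranch$. In the $\tmatc$ induction the dispatch on the syntactic head of the code is stable under the weakening $t \mapsto t[\gamma]$ (a variable stays a variable, $\su s$ stays $\su{(s[\gamma])}$, an application stays an application, and so on), so in each structural branch the body is interpreted in a global environment extended by the captured code, and the result follows from the term induction hypothesis once one checks that this extension commutes with the external weakening. The genuinely neutral clause $\tmatc(u^\delta, \vect\branch)$ returns $\uparrow^{T'}_{\;\Psi;\Gamma}(\matc{\boxit{u^\delta}}\ \vect\nbranch)$ and is again handed off to naturality of reflection and the $\tnfbranch$ induction hypothesis.

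The main obstacle will be the $\tnfbranch$ cases for the binding patterns $\su ?u$, $\lambda x. ?u$, and $?u~?u'$. Each reifies a body that has been interpreted in an environment pre-weakened by $p(\id);\id$ (respectively $p(p(\id));\id$) to account for the fresh global assumptions the pattern introduces. To commute the external $\gamma;\tau$ through, I must show that weakening by the pattern-opening $p(\id);\id$ followed by $\gamma;\tau$ agrees with weakening by $\gamma;\tau$ followed by the pattern-opening, with a $q(\gamma)$ surfacing on the global side exactly where the new binder is traversed --- precisely the bookkeeping performed in the neutral $\tletbox$ case of the original naturality proof, now replicated for one and two fresh global variables and threaded through naturality of reification. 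Once this commutation is checked, the remaining steps are routine applications of the term and substitution induction hypotheses.
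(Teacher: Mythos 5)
Your proposal is correct and follows essentially the same route as the paper's proof: mutual induction over the four statements, with the global-variable, box, and match cases reduced to naturality of reflection/reification, the layer-bridging weakening lemma, and the induction hypotheses, and with the $\tnfbranch$ binding-pattern cases handled by the same $p(\id);\id$ versus $q(\gamma);\tau$ commutation used in the neutral $\tletbox$ case of the earlier system. The only cosmetic deviation is that in the layer-$0$ global-variable case the paper absorbs the composed weakening by plain functoriality of syntactic weakening on the neutral $u[\gamma']^{\theta}$ rather than by invoking \Cref{lem:ct:gsubst-nat}, which is the interpretation-level statement you cite; the substance is unaffected.
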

The statement of naturality is much more verbose than before because of both
contextual types and pattern matching on code.
\begin{proof}
  We do mutual induction on the first statements of all four statements. %
  \begin{itemize}[label=Case]
  \item $i = 0$ and $t = u^\delta$ and $\ltyping[\Phi][\Delta] 0 \delta \Delta'$. %
    In this case,
    \begin{align*}
      \intp{u^\delta}^0_{\;\Psi;\Gamma}(\gamma';\rho)[\gamma;\tau]
      &= \uparrow_{\;\Psi;\Gamma}(u[\gamma']^\theta)[\gamma;\tau]
        \tag{where $\theta := \downarrow^{\Delta'}_{\;\Psi;\Gamma}(\intp{\delta}^0_{\;\Psi;\Gamma}(\gamma';\rho))$}\\
      &= \uparrow_{\;\Psi';\Gamma'}(u[\gamma']^\theta[\gamma;\tau])
        \tag{naturality of reflection} \\
      &= \uparrow_{\;\Psi';\Gamma'}(u[\gamma'][\gamma]^{\theta[\gamma;\tau]}) \\
      &= \uparrow_{\;\Psi';\Gamma'}(u[\gamma' \circ \gamma]^{\theta'})
        \tag{where $\theta' :=
        \downarrow^{\Delta'}_{\;\Psi';\Gamma'}(\intp{\delta}^0_{\;\Psi;\Gamma}((\gamma';\rho)[\gamma;\tau]))
        = \theta[\gamma; \tau]$ by IH} \\
      &= \intp{u^\delta}^0_{\;\Psi';\Gamma'}((\gamma';\rho)[\gamma;\tau])
    \end{align*}

  \item $i = 1$ and $t = u$ and $\ltyping[\Phi][\Delta] 0 \delta \Delta'$. %
    In this case,
    \begin{align*}
      \intp{u^\delta}^1_{\;\Psi;\Gamma}(\sigma;\rho)[\gamma;\tau]
      &= \intp{\sigma(u)}^0_{\;\Psi;\Gamma}(\id; \intp{\delta}^1_{\;\Psi;\Gamma}(\sigma;\rho))[\gamma;\tau] \\
      &= \intp{\sigma(u)}^0_{\;\Psi;\Gamma}((\id; \intp{\delta}^1_{\;\Psi;\Gamma}(\sigma;\rho))[\gamma;\tau]) \\
        \byIH \\
      &= \intp{\sigma(u)}^0_{\;\Psi;\Gamma}(\gamma; \intp{\delta}^1_{\;\Psi;\Gamma}(\sigma;\rho)[\gamma;\tau]) \\
      &= \intp{\sigma(u)}^0_{\;\Psi;\Gamma}(\gamma;
        \intp{\delta}^1_{\;\Psi;\Gamma}((\sigma;\rho)[\gamma;\tau]))
      \tag{by IH on $\delta$} \\
      &= \intp{\sigma(u)[\gamma]}^0_{\;\Psi;\Gamma}(\id;
        \intp{\delta}^1_{\;\Psi;\Gamma}((\sigma;\rho)[\gamma;\tau]))
      \tag{by \Cref{lem:ct:gsubst-nat}} \\
      &= \intp{u^\delta}^1_{\;\Psi';\Gamma'}((\sigma;\rho)[\gamma;\tau])
    \end{align*}
    
  \item
    $i = 1$ and $t = \boxit t'$. %
    This case is unchanged.
    
  \item 
    $i = 1$ and $t = \matc {t'} \ \vect\branch$ and $\intp{t'}^1_{\;\Psi;\Gamma}(\sigma;
    \rho) = v$. %
    Let $\vect\nbranch := \tnfbranch(\vect\branch)_{\;\Psi; \Gamma}(\sigma;\rho)$. 
    By IH, we have
    \begin{align*}
      \vect\nbranch[\gamma; \tau] &= \tnfbranch(\vect\branch)_{\;\Psi; \Gamma}(\sigma;\rho)[\gamma; \tau]
      = \tnfbranch(\vect\branch)_{\;\Psi'; \Gamma'}((\sigma;\rho)[\gamma; \tau]) \\
      v[\gamma; \tau] &= \intp{t'}^1_{\;\Psi;\Gamma}(\sigma;
      \rho)[\gamma; \tau] = \intp{t'}^1_{\;\Psi';\Gamma'}((\sigma;
      \rho)[\gamma; \tau])
    \end{align*}
    Therefore, due to
    \begin{align*}
      \uparrow^T_{\;\Phi;\Gamma}(\matc v \ \vect\nbranch)[\gamma; \tau]
      &= \uparrow^T_{\;\Phi;\Gamma}(\matc {v[\gamma; \tau]} \ (\vect\nbranch[\gamma;
        \tau]))
        \tag{by naturality of reflection}
    \end{align*}
    we conclude the goal. 
    
  \item 
    $i = 1$ and $t = \matc {t'} \ \vect\branch$ and $\intp{t'}^1_{\;\Psi;\Gamma}(\sigma;
    \rho) = \boxit s$. %
    Notice that by IH, we have
    \begin{align*}
      \boxit {(s[\gamma])} = \boxit s[\gamma; \tau]
      = \intp{t'}^1_{\;\Psi;\Gamma}(\sigma; \rho)[\gamma; \tau]
      = \intp{t'}^1_{\;\Psi';\Gamma'}((\sigma; \rho)[\gamma; \tau])
    \end{align*}
    We use invoke IH to conclude
    \begin{align*}
      \tmatc(s, \vect\branch)_{\;\Psi;\Gamma}(\sigma; \rho)[\gamma; \tau]
      = \tmatc(s[\gamma], \vect\branch)_{\;\Psi;\Gamma}((\sigma; \rho)[\gamma; \tau])
    \end{align*}
    Effectively this requires the naturality of $\tmatc$ to be proved, which we will work on next.
  \end{itemize}

  For the naturality of $\tmatc$, we know that $\ltyping[\Psi][\Delta'] 0 t T$. %
  By doing induction, we know that there are only cases of terms in the core type
  theory and the case of global variables. %
  Effectively we are relying on the exhaustiveness of pattern matching on code here.
  \begin{itemize}
  \item $t = x$ and $\vect\branch(x) = \var x \STo t'$,
    \begin{mathpar}
      \inferrule
      {\iscore \Psi \\ \iscore{\Delta'} \\ x : T \in \Delta'}
      {\ltyping[\Psi][\Delta']{0}{x}{T}}
    \end{mathpar}
    Then $x[\gamma] = x$, then the goal is effectively to prove
    \begin{align*}
      \intp{t'}^1_{\;\Psi;\Gamma}(\sigma; \rho)[\gamma; \tau]
      = \intp{t'}^1_{\;\Psi';\Gamma'}(\sigma; \rho[\gamma; \tau])
    \end{align*}
    which can be discharged by IH.
    
  \item $t = \ze$ and $\vect\branch(\ze) = \ze \STo t'$,
    \begin{mathpar}
      \inferrule
      {\iscore \Psi \\ \iscore{\Delta'}}
      {\ltyping[\Psi][\Delta']{0}{\ze}{\Nat}}
    \end{mathpar}
    This case is the same as above.
    
  \item $t = \su s$ and $\vect\branch(\su s) = \su ?u \STo t'$,
    \begin{mathpar}
      \inferrule
      {\ltyping[\Psi][\Delta']{0}{s}{\Nat}}
      {\ltyping[\Psi][\Delta']{0}{\su s}{\Nat}}
    \end{mathpar}
    Then we prove
    \begin{align*}
      \intp{t'}^1_{\;\Psi;\Gamma}(\sigma, s/u; \rho)[\gamma; \tau]
      = \intp{t'}^1_{\;\Psi';\Gamma'}(\sigma, s/u; \rho[\gamma; \tau])
    \end{align*}
    which can be discharged by IH.
    
  \item $t = \lambda x. s$ and $\vect\branch(\lambda x. s) = \lambda x. ?u \STo t'$,
    \begin{mathpar}
      \inferrule
      {\ltyping[\Psi][\Delta', x : S]{0}{s}{T}}
      {\ltyping[\Psi][\Delta']{0}{\lambda x. s}{S \func T}}
    \end{mathpar}
    This case is the same as above. %
    The extension of local context does not affect the argument.
    
  \item $t = s~s'$ and $\vect\branch(s~s') = ?u~?u' \STo t'$ and for some $\iscore S$,
    \begin{mathpar}
      \inferrule
      {\ltyping[\Psi][\Delta']{0}{s}{S \func T} \\ \ltyping[\Psi][\Delta']{0}{s'}{S}}
      {\ltyping[\Psi][\Delta']{0}{s\ s'}{T}}
    \end{mathpar}
    This case is still the same expect that there are two terms inserted to the global
    substitution.
    
  \item $t = u^\delta$,
    \begin{mathpar}
      \inferrule
      {\ltyping[\Psi][\Delta'] 0 \delta {\Delta''} \\ u : (\judge[\Delta''] T) \in \Psi}
      {\ltyping[\Psi][\Delta']{0}{u^\delta}{T}}
    \end{mathpar}
    In this case, we do not look up $\vect\branch$ at all. %
    Instead, we just construct a neutral form and use reflection to convert it to a
    natural transformation. %
    Let $\vect\nbranch := \tnfbranch(\vect\branch)_{\;\Psi; \Gamma}(\sigma;\rho)$. 
    By IH, we have
    \begin{align*}
      \vect\nbranch[\gamma; \tau] &= \tnfbranch(\vect\branch)_{\;\Psi; \Gamma}(\sigma;\rho)[\gamma; \tau]
      = \tnfbranch(\vect\branch)_{\;\Psi'; \Gamma'}((\sigma;\rho)[\gamma; \tau]) \\
      \boxit{u^\delta}[\gamma; \tau] &= \boxit{u[\gamma]^{\delta[\gamma]}}
    \end{align*}
    Therefore, the goal is concluded by the naturality of reflection and the two
    equations above. %
  \end{itemize}

  At last, we have the naturality of $\tnfbranch$ to prove. %
  This in fact is very similar to the neutral case in the $\tletbox$ expression in our
  previous development. %
  Here we only pick the $\lambda$ case for demonstration. %
  If $\branch = \lambda x. ?u \STo t$, 
  \begin{mathpar}
    \inferrule
    {\ltyping[\Phi, u : (\judge[\Delta', x : S]T)] 1 t T'}
    {\ltyping[\Phi][\Delta] 1 {\lambda x. ?u \STo t}{\judge[\Delta'] S \func T \STo T'}}
  \end{mathpar}
  Then the goal effectively becomes
  \begin{align*}
    & \downarrow^{T'}_{\;\Psi, u : (\judge[\Delta', x : S]T);\Gamma}(\intp{t}^1_{\;\Psi,
      u : (\judge[\Delta', x : S]T);\Gamma}(\sigma[p(\id)], u/u; \rho[p(\id);
      \id]))[q(\gamma); \tau] \\
    =~& \downarrow^{T'}_{\;\Psi', u : (\judge[\Delta', x :
        S]T);\Gamma'}(\intp{t}^1_{\;\Psi', u : (\judge[\Delta', x :
        S]T);\Gamma'}((\sigma[p(\id)], u/u; \rho[p(\id); \id])[q(\gamma); \tau]))
  \end{align*}
  This equation is further due to
  \begin{align*}
    \intp{t}^1_{\;\Psi, u : (\judge[\Delta', x : S]T);\Gamma}(\sigma[p(\id)], u/u;
    \rho[p(\id); \id])[q(\gamma);\tau] = \intp{t}^1_{\;\Psi', u : (\judge[\Delta', x : S]T);\Gamma'}((\sigma[p(\id)], u/u;
    \rho[p(\id); \id])[q(\gamma); \tau])
  \end{align*}
  This is proved by IH on $t$. 
\end{proof}
This lemma justifies the claim that the interpretations of terms and local
substitutions return natural transformations. %

\begin{lemma}\labeledit{lem:ct:intp-lsubst-0} $ $
  \begin{itemize}
  \item If $\ltyping[\Phi][\Delta'] 0 t T$, $\ltyping[\Phi][\Delta] 0 \delta \Delta'$,
    $\gamma : \Psi \To_g \Phi$ and $\rho \in \intp{\Delta}_{\;\Psi;\Gamma}$, then
    $\intp{t[\delta]}^0_{\;\Psi;\Gamma}(\gamma; \rho) =
    \intp{t}^0_{\;\Psi;\Gamma}(\gamma; \intp{\delta}^0_{\;\Psi;\Gamma}(\gamma;
    \rho))$.
  \item If $\ltyping[\Phi][\Delta'] 0{\delta'}{\Delta''}$, $\ltyping[\Phi][\Delta] 0 \delta \Delta'$,
    $\gamma : \Psi \To_g \Phi$ and $\rho \in \intp{\Delta}_{\;\Psi;\Gamma}$, then
    $\intp{\delta' \circ \delta}^0_{\;\Psi;\Gamma}(\gamma; \rho) =
    \intp{\delta'}^0_{\;\Psi;\Gamma}(\gamma; \intp{\delta}^0_{\;\Psi;\Gamma}(\gamma;
    \rho))$.
  \end{itemize}
\end{lemma}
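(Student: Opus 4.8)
The plan is to prove both statements simultaneously by mutual induction on the two derivations $\ltyping[\Phi][\Delta'] 0 t T$ and $\ltyping[\Phi][\Delta'] 0 {\delta'}{\Delta''}$, keeping the inner substitution $\delta$, the global weakening $\gamma$ and the environment $\rho$ universally quantified so that the induction hypotheses may be instantiated at the extended contexts that arise under binders. The two statements must be mutual: in the term statement the global-variable case reduces exactly to the composition statement applied to the stored local substitution, while in the composition statement the extension case invokes the term statement on each component. Throughout I write $\rho_\delta := \intp{\delta}^0_{\;\Psi;\Gamma}(\gamma;\rho)$ for the interpreted environment that both statements push inward.

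Most cases are direct. For composition, $\delta' = \cdot$ is immediate, and $\delta' = \delta'', t/x$ splits $\intp{(\delta''\circ\delta), t[\delta]/x}^0$ into its two components, closed by the composition hypothesis on $\delta''$ and the term hypothesis on $t$. For terms, the local-variable case $x[\delta] = \delta(x)$ holds because looking up $x$ in $\rho_\delta$ returns $\intp{\delta(x)}^0_{\;\Psi;\Gamma}(\gamma;\rho)$ by the definition of the substitution interpretation, and the numerals $\ze$, $\su$ and application follow pointwise from the induction hypotheses. The first genuinely modal step is the global variable $u^{\delta'}$, with $u^{\delta'}[\delta] = u^{\delta'\circ\delta}$: both sides reflect the same neutral $u[\gamma]$ decorated with a reified local substitution, namely $\theta_1 = \downarrow^{\Delta''}_{\;\Psi;\Gamma}(\intp{\delta'\circ\delta}^0_{\;\Psi;\Gamma}(\gamma;\rho))$ on the left and $\theta_2 = \downarrow^{\Delta''}_{\;\Psi;\Gamma}(\intp{\delta'}^0_{\;\Psi;\Gamma}(\gamma;\rho_\delta))$ on the right; the composition hypothesis applied to $\delta'$ gives $\intp{\delta'\circ\delta}^0_{\;\Psi;\Gamma}(\gamma;\rho) = \intp{\delta'}^0_{\;\Psi;\Gamma}(\gamma;\rho_\delta)$, so $\theta_1 = \theta_2$ and the two reflections coincide.

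The main obstacle is the abstraction case $t = \lambda x. t'$, where $(\lambda x. t')[\delta] = \lambda x. (t'[\delta, x/x])$. After unfolding the layer-$0$ interpretation of $\lambda$, both sides become functions that, on a world morphism $\gamma_0;\tau_0$ and an argument $a$, evaluate $t'$ (respectively $t'[\delta, x/x]$) in the domain environment $(\rho[\gamma_0;\tau_0], a)$ under the global weakening $\gamma \circ \gamma_0$. Instantiating the term hypothesis at $t'$ with the substitution $\delta, x/x$ reduces the goal to the identity $\intp{\delta, x/x}^0(\gamma\circ\gamma_0; (\rho[\gamma_0;\tau_0], a)) = (\intp{\delta}^0_{\;\Psi;\Gamma}(\gamma;\rho)[\gamma_0;\tau_0], a)$. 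The $x$-component matches by the lookup rule; for the $\delta$-component I would first rewrite $\intp{\delta}^0_{\;\Psi;\Gamma}(\gamma;\rho)[\gamma_0;\tau_0]$ as $\intp{\delta}^0(\gamma\circ\gamma_0; \rho[\gamma_0;\tau_0])$ using naturality of the substitution interpretation (\Cref{lem:st:t-intp-nat}), and then observe that the weakened $\delta$ does not mention the freshly bound variable, so appending $a$ to the environment leaves its interpretation unchanged — the routine weakening-irrelevance already used silently in \Cref{sec:st}. The delicate bookkeeping is exactly the interplay of these two distinct weakenings — the world weakening $\gamma_0;\tau_0$ commuting past the interpretation, and the domain extension of $\delta$ discarding $a$; once this identity is established the abstraction case closes and the mutual induction is complete.
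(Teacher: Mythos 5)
Your proposal is correct and follows essentially the same route as the paper's proof: mutual induction on the two typing derivations, discharging the global-variable case $u^{\delta'}$ by applying the composition statement to the stored substitution so that the two reified substitutions $\theta$ and $\theta'$ coincide, and closing the $\lambda$ case by instantiating the term IH under the binder and then commuting the world weakening past $\intp{\delta}^0$ via naturality together with the (silent in the paper, explicit in your write-up) weakening-irrelevance of the extended environment. No gaps.
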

\begin{proof}
  We perform mutual induction.
  \begin{itemize}
  \item $t = x$, $t = \ze$, $t = \su t'$, and $t = s~s'$, immediate with possible use
    of IHs.

  \item $t = \lambda x. s$,
    \begin{mathpar}
      \inferrule
      {\ltyping[\Psi][\Delta', x : S]{0}{s}{T}}
      {\ltyping[\Psi][\Delta']{0}{\lambda x. s}{S \func T}}
    \end{mathpar}
    We compute
    \begin{align*}
      \intp{\lambda x. s[\delta]}^0_{\;\Psi;\Gamma}(\gamma; \rho)
      &= \intp{\lambda x. (s[\delta[\id;p(\id)], x/x])}^0_{\;\Psi;\Gamma}(\gamma; \rho)
      \\
      &= (\gamma'; \tau' : \Psi';\Gamma' \To \Psi; \Gamma) (a) \mapsto
        \intp{s[\delta[\id;p(\id)], x/x]}^0_{\;\Psi;\Gamma}(\gamma \circ \gamma';
        (\rho[\gamma'; \tau'], a)) \\
      &= (\gamma'; \tau') (a) \mapsto
        \intp{s}^0_{\;\Psi;\Gamma}(\gamma \circ \gamma';
        \intp{\delta[\id;p(\id)], x/x}^0_{\;\Psi;\Gamma}(\gamma \circ \gamma';
        (\rho[\gamma'; \tau'], a)))
        \byIH \\
      &= (\gamma'; \tau') (a) \mapsto
        \intp{s}^0_{\;\Psi;\Gamma}(\gamma \circ \gamma';
        (\intp{\delta}^0_{\;\Psi;\Gamma}(\gamma \circ \gamma';\rho[\gamma'; \tau']),
        a)) \\
      &= (\gamma'; \tau') (a) \mapsto
        \intp{s}^0_{\;\Psi;\Gamma}(\gamma \circ \gamma';
        (\intp{\delta}^0_{\;\Psi;\Gamma}(\gamma;\rho)[\gamma'; \tau'], a))
        \tag{by naturality of $\intp{\delta}^0$} \\
      &= \intp{\lambda x. s}^0_{\;\Psi;\Gamma}(\gamma;
        \intp{\delta}^0_{\;\Psi;\Gamma}(\gamma; \rho)) 
    \end{align*}

  \item $t = u^{\delta'}$,
    \begin{mathpar}
      \inferrule
      {\ltyping[\Psi][\Delta'] 0{\delta'}{\Delta''} \\ u : (\judge[\Delta''] T) \in \Psi}
      {\ltyping[\Psi][\Delta']{0}{u^\delta}{T}}
    \end{mathpar}
    We compute as follows
    \begin{align*}
      \intp{u^{\delta'}[\delta]}^0_{\;\Psi;\Gamma}(\gamma; \rho)
      &= \intp{u^{\delta' \circ \delta}}^0_{\;\Psi;\Gamma}(\gamma; \rho) \\
      &= \uparrow^T_{\;\Psi;\Gamma}(u[\gamma]^\theta)
        \tag{where $\theta := \downarrow^{\Delta''}_{\;\Psi;\Gamma}(\intp{\delta' \circ \delta}^0_{\;\Psi;\Gamma}(\gamma; \rho))$}
    \end{align*}
    Notice that
    \begin{align*}
      \intp{\delta' \circ \delta}^0_{\;\Psi;\Gamma}(\gamma; \rho)
      &= \intp{\delta'}^0_{\;\Psi;\Gamma}(\gamma;
        \intp{\delta}^0_{\;\Psi;\Gamma}(\gamma; \rho))
        \byIH
    \end{align*}
    Moreover,
    \begin{align*}
      \intp{u^{\delta'}}^0_{\;\Psi;\Gamma}(\gamma;
      \intp{\delta}^0_{\;\Psi;\Gamma}(\gamma; \rho))
      &= \uparrow^T_{\;\Psi;\Gamma}(u[\gamma]^{\theta'})
        \tag{where $\theta' := \downarrow^{\Delta''}_{\;\Psi;\Gamma}(\intp{\delta'}^0_{\;\Psi;\Gamma}(\gamma;
        \intp{\delta}^0_{\;\Psi;\Gamma}(\gamma; \rho)))$}
    \end{align*}
    The goal is seen by having $\theta = \theta'$. 
  \end{itemize}
\end{proof}

\begin{lemma}\labeledit{lem:ct:intp-0-1} $ $
  \begin{itemize}
  \item If $\ltyping[\Phi][\Delta] 0 t T$, $\typing[\Psi]\sigma\Phi$, $\gamma : \Psi' \To_g \Psi$ and $\rho \in
    \intp{\Delta}_{\;\Psi';\Gamma}$, then
    $\intp{t[\sigma]}^0_{\;\Psi';\Gamma}(\gamma; \rho) =
    \intp{t}^1_{\;\Psi';\Gamma}(\sigma[\gamma]; \rho)$.
  \item If $\ltyping[\Phi][\Delta] 0 \delta \Delta'$, $\typing[\Psi]\sigma\Phi$, $\gamma : \Psi' \To_g \Psi$ and $\rho \in
    \intp{\Delta}_{\;\Psi';\Gamma}$, then
    $\intp{\delta[\sigma]}^0_{\;\Psi';\Gamma}(\gamma; \rho) =
    \intp{\delta}^1_{\;\Psi';\Gamma}(\sigma[\gamma]; \rho)$.
  \end{itemize}
\end{lemma}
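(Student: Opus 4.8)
The plan is to prove both statements simultaneously by mutual induction on the typing derivations $\ltyping[\Phi][\Delta] 0 t T$ and $\ltyping[\Phi][\Delta] 0 \delta \Delta'$, exactly paralleling the proof of \Cref{lem:st:intp-0-1} but now threading the local-substitution statement through the term statement. By the validity theorem (\Cref{thm:ct:validity}), a layer-$0$ typing derivation never introduces $\tbox$ or pattern matching, so the only term cases to inspect are the core constructs $x$, $\ze$, $\su t'$, $\lambda x. t'$, $t'\ s$ together with the global variable $u^\delta$, and for substitutions the empty substitution and the extension $\delta', t/x$. All purely core cases go through as before: the variable and numeral cases are immediate, application is immediate by the IH, and the $\lambda$ case reduces to the IH after rewriting $\sigma[\gamma \circ \gamma'] = \sigma[\gamma][\gamma']$ under the weakening introduced by the function interpretation.

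The heart of the argument is the global-variable case $t = u^\delta$ with $u : (\judge[\Delta'] T) \in \Phi$, and this is precisely where the mutual structure is needed. I would first unfold the global substitution, $u^\delta[\sigma] = \sigma(u)[\delta[\sigma]]$, noting that $\delta[\sigma]$ is again a layer-$0$ local substitution by the global substitution theorem, so \Cref{lem:ct:intp-lsubst-0} applies and lets me peel it off:
\[
  \intp{\sigma(u)[\delta[\sigma]]}^0_{\;\Psi';\Gamma}(\gamma; \rho)
  = \intp{\sigma(u)}^0_{\;\Psi';\Gamma}(\gamma; \intp{\delta[\sigma]}^0_{\;\Psi';\Gamma}(\gamma; \rho)).
\]
On the other side I unfold the layer-$1$ interpretation of the global variable, using $(\sigma[\gamma])(u) = \sigma(u)[\gamma]$, and then apply \Cref{lem:ct:gsubst-nat} in reverse to trade the explicit global weakening on $\sigma(u)$ for the weakening $\gamma$ in the environment slot, obtaining $\intp{\sigma(u)}^0_{\;\Psi';\Gamma}(\gamma; \intp{\delta}^1_{\;\Psi';\Gamma}(\sigma[\gamma]; \rho))$. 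Comparing the two expressions, the goal collapses to $\intp{\delta[\sigma]}^0_{\;\Psi';\Gamma}(\gamma;\rho) = \intp{\delta}^1_{\;\Psi';\Gamma}(\sigma[\gamma];\rho)$, which is exactly the second statement of the lemma instantiated at the structurally smaller $\delta$, hence available from the mutual IH.

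The substitution statement itself is then routine: the empty case is trivial, and for $\delta', t/x$ I unfold both interpretations componentwise and apply the mutual IH (the term statement to $t$ and the substitution statement to $\delta'$). I expect the main obstacle to be the bookkeeping in the $u^\delta$ case rather than any deep difficulty: one must keep careful track of which global and local contexts each interpretation lives in, make sure the hypotheses of \Cref{lem:ct:intp-lsubst-0} and \Cref{lem:ct:gsubst-nat} are met after the unfoldings, and confirm that $\sigma(u)[\gamma]$ and the environment produced by $\intp{\delta}^1$ line up so that the two chains meet on the nose. Everything else is a direct transcription of the simple-type proof.
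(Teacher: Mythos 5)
Your proposal is correct and takes essentially the same approach as the paper: both proofs reduce everything to the global-variable case $u^\delta$, where unfolding $u^\delta[\sigma] = \sigma(u)[\delta[\sigma]]$, applying \Cref{lem:ct:intp-lsubst-0} and \Cref{lem:ct:gsubst-nat}, and invoking the mutual induction hypothesis on $\delta$ closes the goal. The only cosmetic difference is ordering: you apply \Cref{lem:ct:intp-lsubst-0} at the weakening $\gamma$ directly and then normalize the other side, whereas the paper first uses \Cref{lem:ct:gsubst-nat} to push $\gamma$ into the term and peels off the local substitution at $\id$.
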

\begin{proof}
  This lemma connects interpretations at both layers. %
  Since $t$ is from layer $0$, we only concern ourselves with the case of global variables.
  
  If $t = u^\delta$, then
  \begin{align*}
    \intp{u^\delta[\sigma]}^0_{\;\Psi';\Gamma}(\gamma; \rho)
    &= \intp{\sigma(u)[\delta[\sigma]]}^0_{\;\Psi';\Gamma}(\gamma; \rho) \\
    &= \intp{\sigma(u)[\delta[\sigma]][\gamma]}^0_{\;\Psi';\Gamma}(\id; \rho)
      \tag{by \Cref{lem:ct:gsubst-nat}} \\
    &= \intp{\sigma(u)[\gamma][\delta[\sigma][\gamma]]}^0_{\;\Psi';\Gamma}(\id; \rho)
    \\
    &= \intp{\sigma(u)[\gamma]}^0_{\;\Psi';\Gamma}(\id;
      \intp{\delta[\sigma][\gamma]}^0_{\;\Psi';\Gamma}(\id; \rho))
      \tag{by \Cref{lem:ct:intp-lsubst-0}} \\
    &= \intp{\sigma(u)[\gamma]}^0_{\;\Psi';\Gamma}(\id;
      \intp{\delta}^1_{\;\Psi';\Gamma}(\sigma[\gamma]; \rho))
      \byIH \\
    &= \intp{u^\delta}^0_{\;\Psi';\Gamma}(\sigma[\gamma]; \rho)
  \end{align*}
\end{proof}

\begin{lemma}\labeledit{lem:ct:intp-lsubst-1} $ $
  \begin{itemize}
  \item If $\ltyping[\Phi][\Delta'] 1 t T$, $\ltyping[\Phi][\Delta] 1 \delta \Delta'$,
    $\typing[\Psi]{\sigma}\Phi$ and $\rho \in \intp{\Delta}_{\;\Psi;\Gamma}$, then
    $\intp{t[\delta]}^1_{\;\Psi;\Gamma}(\sigma; \rho) =
    \intp{t}^1_{\;\Psi;\Gamma}(\sigma; \intp{\delta}^1_{\;\Psi;\Gamma}(\sigma;
    \rho))$.

  \item If $\ltyping[\Phi][\Delta'] 1{\delta'}{\Delta''}$, $\ltyping[\Phi][\Delta] 1 \delta \Delta'$,
    $\typing[\Psi]{\sigma}\Phi$ and $\rho \in \intp{\Delta}_{\;\Psi;\Gamma}$, then
    $\intp{\delta' \circ \delta}^1_{\;\Psi;\Gamma}(\sigma; \rho) =
    \intp{\delta'}^1_{\;\Psi;\Gamma}(\sigma; \intp{\delta}^1_{\;\Psi;\Gamma}(\sigma;
    \rho))$.

  \item If $\ltyping[\Psi][\Gamma'] 0 t T'$ and
    $\ltyping[\Phi][\Delta'] 1{\vect\branch}{\judge[\Gamma']{T'} \STo T}$,
    $\ltyping[\Phi][\Delta] 1 \delta \Delta'$, $\typing[\Psi]{\sigma}\Phi$ and
    $\rho \in \intp{\Delta}_{\;\Psi;\Gamma}$, then
    $\tmatc(t, \vect\branch[\delta])_{\;\Psi;\Gamma}(\sigma; \rho) = \tmatc(t,
    \vect\branch)_{\;\Psi;\Gamma}(\sigma; \intp{\delta}^1_{\;\Psi;\Gamma}(\sigma;
    \rho))$.
    
  \item If $\ltyping[\Phi][\Delta'] 1{\branch}{\judge[\Gamma']{T'} \STo T}$,
    $\ltyping[\Phi][\Delta] 1 \delta \Delta'$, $\typing[\Psi]{\sigma}\Phi$ and
    $\rho \in \intp{\Delta}_{\;\Psi;\Gamma}$, then
    $\tnfbranch(\branch[\delta])_{\;\Psi;\Gamma}(\sigma; \rho) =
    \tnfbranch(\branch)_{\;\Psi;\Gamma}(\sigma; \intp{\delta}^1_{\;\Psi;\Gamma}(\sigma;
    \rho))$.

  \end{itemize}
\end{lemma}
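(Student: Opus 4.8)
The plan is to prove all four statements by a single mutual induction on the layer-$1$ typing derivation of each, mirroring the layer-$0$ argument of \Cref{lem:ct:intp-lsubst-0} but now also threading through the $\tmatc$ and $\tnfbranch$ machinery that pattern matching introduces. The term statement and the composition statement are established together exactly as before: $\delta' \circ \delta$ unfolds componentwise, so each of its term components $t[\delta]$ is discharged by the term statement, and conversely the global-variable case of the term statement appeals to the composition statement. Thus the two halves feed each other, and it suffices to isolate the cases where the interpretation genuinely does something modal.

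First I would dispatch the routine cases. The local-variable, $\ze$, $\tsucc$, and application cases are immediate or follow directly from the induction hypotheses, just as in the simply typed development. The $\boxit{t'}$ case is trivial, since a local substitution never enters a $\tbox$: we have $\boxit{t'}[\delta] = \boxit{t'}$, and both sides reduce to $\boxit{(t'[\sigma])}$, which does not depend on $\rho$ at all. The global-variable case $u^{\delta'}$ is where the mutual structure pays off: using $u^{\delta'}[\delta] = u^{\delta' \circ \delta}$ and the definition $\intp{u^{\delta' \circ \delta}}^1_{\;\Psi;\Gamma}(\sigma;\rho) = \intp{\sigma(u)}^0_{\;\Psi;\Gamma}(\id; \intp{\delta' \circ \delta}^1_{\;\Psi;\Gamma}(\sigma;\rho))$, I rewrite $\intp{\delta' \circ \delta}^1$ by the composition statement and match the definition of $\intp{u^{\delta'}}^1_{\;\Psi;\Gamma}(\sigma; \intp{\delta}^1_{\;\Psi;\Gamma}(\sigma;\rho))$. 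The $\lambda$ case proceeds exactly as its layer-$0$ counterpart, the only extra ingredient being the naturality of $\intp{\delta}^1$ (\Cref{lem:st:t-intp-nat}), used to commute the interpretation of $\delta$ past the local weakening that arises when descending under the binder.

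The genuinely new work is the pattern-matching case $\matc{t'}\ \vect\branch$, which I split according to the evaluated scrutinee. Applying the term statement to $t'$ gives $\intp{t'[\delta]}^1_{\;\Psi;\Gamma}(\sigma;\rho) = \intp{t'}^1_{\;\Psi;\Gamma}(\sigma; \intp{\delta}^1_{\;\Psi;\Gamma}(\sigma;\rho))$, so both sides fall into the same clause. If this value is some $\boxit s$, the goal reduces to the $\tmatc$ statement; if it is a neutral $v$, the goal reduces to the $\tnfbranch$ statement on $\vect\branch$, together with naturality of reflection. For the $\tmatc$ statement I would induct on the evaluated core scrutinee, relying on the exhaustiveness of the branch set: each introduction form ($x$, $\ze$, $\su s$, $\lambda x. s$, $s~s'$) selects the corresponding branch, and the goal collapses to the term statement with the captured sub-codes pushed into $\sigma$, while the $u^\delta$ scrutinee emits the neutral $\matc{\boxit{u^\delta}}\ \vect\nbranch$ and again defers to the $\tnfbranch$ statement.

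The main obstacle is the $\tnfbranch$ statement on the branches that bind pattern variables — $\su ?u$, $\lambda x. ?u$, and $?u~?u'$ — because there the body $t$ is interpreted in a global context extended by the freshly captured variables, under the weakening $p(\id); \id$ (respectively $p(p(\id)); \id$). Applying the term statement to $t[\delta]$ in the extended context introduces $\intp{\delta}^1$ evaluated at the extended global substitution $\sigma[p(\id)], u/u$, whereas the right-hand side carries $\intp{\delta}^1_{\;\Psi;\Gamma}(\sigma;\rho)$ weakened by $p(\id); \id$. Reconciling the two requires commuting $\intp{\delta}^1$ with the weakening while observing that $\delta$, being typed in the unextended global context, does not mention the newly bound variable; both facts are precisely the content of the naturality lemma for $\intp{\delta}^1$ (\Cref{lem:st:t-intp-nat}). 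Once this coherence is in place, each branch clause closes by congruence of reification, and the mutual induction goes through.
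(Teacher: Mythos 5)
Your proposal matches the paper's own proof essentially step for step: the same four-way mutual induction, the same treatment of the routine, $\tbox$, global-variable, and pattern-matching cases (with the match case split on the evaluated scrutinee and deferred to the $\tmatc$ and $\tnfbranch$ statements), and the same reconciliation of the extended global substitution against $\delta$ in the pattern-variable branches. The only cosmetic difference is that you attribute both reconciliation facts to naturality, whereas the fact that $\delta$, typed in the unextended global context, is unaffected by the fresh pattern variables is strictly speaking a separate (easy) weakening observation --- one the paper also uses without explicit justification.
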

\begin{proof}
  We do mutual induction. %
  We only consider the cases different from \Cref{lem:ct:intp-lsubst-0}.
  \begin{itemize}
  \item  $t = u^\delta$,
    \begin{mathpar}
      \inferrule
      {\ltyping[\Phi][\Delta'] 1 {\delta'} {\Gamma'} \\ u : (\judge[\Gamma'] T) \in \Phi}
      {\ltyping[\Phi][\Delta']{1}{u^{\delta'}}{T}}
    \end{mathpar}
    we compute
    \begin{align*}
      \intp{u^{\delta'}[\delta]}^1_{\;\Psi;\Gamma}(\sigma; \rho)
      &= \intp{\sigma(u)}^0_{\;\Psi;\Gamma}(\id; \intp{\delta' \circ
        \delta}^1_{\;\Psi;\Gamma}(\sigma; \rho)) \\
      &= \intp{\sigma(u)}^0_{\;\Psi;\Gamma}(\id;
        \intp{\delta'}^1_{\;\Psi;\Gamma}(\sigma;
        \intp{\delta}^1_{\;\Psi;\Gamma}(\sigma; \rho)))
        \byIH \\
      &= \intp{u^{\delta'}}^1_{\;\Psi;\Gamma}(\sigma;
        \intp{\delta}^1_{\;\Psi;\Gamma}(\sigma; \rho))
    \end{align*}
    as required.
  \item $t = \boxit s$,
    \begin{mathpar}
      \inferrule
      {\istype \Delta' \\ \ltyping[\Phi][\Gamma'] 0 s T'}
      {\ltyping[\Phi][\Delta']{1}{\boxit s}{\cont[\Gamma']{T'}}}
    \end{mathpar}
    we compute
    \begin{align*}
      \intp{\boxit s[\delta]}^1_{\;\Psi;\Gamma}(\sigma; \rho)
      &= \intp{\boxit s}^1_{\;\Psi;\Gamma}(\sigma; \rho) \\
      &= \boxit {(s[\sigma])} \\
      &= \intp{\boxit s}^1_{\;\Psi;\Gamma}(\sigma; \intp{\delta}^1_{\;\Psi;\Gamma}(\sigma; \rho))
    \end{align*}
    
  \item $t = \matc s \ \vect\branch$,
    \begin{mathpar}
      \inferrule
      {\ltyping[\Phi][\Delta'] 1 {s}{\cont[\Gamma']{T'}} \\ \ltyping[\Phi][\Delta'] 1 {\vect \branch}{\judge[\Gamma']{T'} \STo T}}
      {\ltyping[\Phi][\Delta'] 1 {\matc s\ \vect\branch} T}
    \end{mathpar}
    we compute
    \begin{align*}
      \intp{\matc s\ \vect\branch[\delta]}^1_{\;\Psi;\Gamma}(\sigma; \rho)
      = \intp{\matc {(s[\delta])}\ (\vect\branch[\delta])}^1_{\;\Psi;\Gamma}(\sigma; \rho)
    \end{align*}
    By IH, we know
    \begin{align*}
      \intp{s[\delta]}^1_{\;\Psi;\Gamma}(\sigma; \rho)
      = \intp{s}^1_{\;\Psi;\Gamma}(\sigma; \intp{\delta}^1_{\;\Psi;\Gamma}(\sigma; \rho))
    \end{align*}
    We split its result by cases:
    \begin{itemize}[label=Subcase]
    \item If $\intp{s[\delta]}^1_{\;\Psi;\Gamma}(\sigma; \rho) = \boxit {s'}$, then we
      just resort this case to the third statement of this lemma by IH.
      
    \item If $\intp{s[\delta]}^1_{\;\Psi;\Gamma}(\sigma; \rho) = v$ for some neutral
      $v$, then we just resort this case to the fourth statement of this lemma by IH.
    \end{itemize}    
  \end{itemize}

  Now we just need to verify the third and the fourth statements. %
  We will only pick representative cases to verify:
  \begin{itemize}[label=Case]
  \item $t = s~s'$ and $\vect\branch(s~s') = ?u~?u' \STo t'$ and for some $\iscore {S'}$,
    \begin{mathpar}
      \inferrule
      {\ltyping[\Psi][\Gamma']{0}{s}{S' \func T'} \\ \ltyping[\Psi][\Gamma']{0}{s'}{S'}}
      {\ltyping[\Psi][\Gamma']{0}{s\ s'}{T'}}
    \end{mathpar}
    we compute
    \begin{align*}
      & \tmatc(s~s', \vect\branch)_{\;\Psi;\Gamma}(\sigma;
        \intp{\delta}^1_{\;\Psi;\Gamma}(\sigma; \rho))
      \\
      =~& \intp{t'}^1_{\;\Psi;\Gamma}(\sigma, s/u, s'/u';
        \intp{\delta}^1_{\;\Psi;\Gamma}(\sigma; \rho)) \\
      =~& \intp{t'}^1_{\;\Psi;\Gamma}(\sigma, s/u, s'/u';
          \intp{\delta[p(p(\id))]}^1_{\;\Psi;\Gamma}(\sigma, s/u, s'/u'; \rho))
      \tag{where $p(p(\id)) : \Phi, u : (\judge[\Gamma']{S' \func T'}), u' : (\judge[\Gamma']{S'}) \To_g \Phi$}\\
      =~& \intp{t'[\delta[p(p(\id))]]}^1_{\;\Psi;\Gamma}(\sigma, s/u, s'/u'; \rho)
          \byIH \\
      =~& \tmatc(s~s', \vect\branch[\delta])_{\;\Psi;\Gamma}(\sigma; \rho)
    \end{align*}
  \item $t = u^{\delta'}$,
    \begin{mathpar}
      \inferrule
      {\ltyping[\Psi][\Gamma'] 0 {\delta'} {\Gamma''} \\ u : (\judge[\Gamma'']{T'}) \in \Psi}
      {\ltyping[\Psi][\Gamma']{0}{u^{\delta'}}{T'}}
    \end{mathpar}
    In this case, we have
    \begin{align*}
      \tmatc(u^{\delta'}, \vect\branch)_{\;\Psi;\Gamma}(\sigma;
      \intp{\delta}^1_{\;\Psi;\Gamma}(\sigma; \rho))
      & = \uparrow^T_{\;\Psi;\Gamma}(\matc {\boxit {u^{\delta'}}}\
      \tnfbranch(\vect\branch)_{\;\Psi;\Gamma}(\sigma;
      \intp{\delta}^1_{\;\Psi;\Gamma}(\sigma; \rho))) \\
      & = \uparrow^T_{\;\Psi;\Gamma}(\matc {\boxit {u^{\delta'}}}\
        \tnfbranch(\vect\branch[\delta])_{\;\Psi;\Gamma}(\sigma; \rho))
        \byIH \\
      &= \tmatc(u^{\delta'}, \vect\branch[\delta])_{\;\Psi;\Gamma}(\sigma; \rho)
    \end{align*}
    
  \item $\branch = \lambda x.?u \To t'$, 
    \begin{mathpar}
      \inferrule
      {\ltyping[\Phi, u : (\judge[\Gamma', x : S]{T'})][\Delta'] 1{t'} T}
      {\ltyping[\Phi][\Delta'] 1 {\lambda x. ?u \STo t'}{\judge[\Gamma'] S \func T' \STo T}}
    \end{mathpar}
    we compute
    \begin{align*}
      & \tnfbranch(\branch[\delta])_{\;\Psi;\Gamma}(\sigma; \rho) \\
      =~& \lambda x.?u \To
          \downarrow^T_{\;\Psi, u : (\judge[\Gamma', x :
          S]{T'});\Gamma}(\intp{t'[\delta[p(\id)]]}^1_{\;\Psi, u : (\judge[\Gamma', x
          : S]{T'});\Gamma}(\sigma[p(\id)], u/u; \rho[p(\id)]))
          \tag{where $p(\id) : \Phi, u : (\judge[\Delta', x : S]{T'}) \To_g \Phi$} \\
      =~& \lambda x.?u \To
          \downarrow^T_{\;\Psi, u : (\judge[\Gamma', x :
          S]{T'});\Gamma}(\intp{t'}^1_{\;\Psi, u : (\judge[\Gamma', x :
          S]{T'});\Gamma}(\sigma[p(\id)], u/u;
          \intp{\delta[p(\id)]}^1_{\;\Psi, u : (\judge[\Gamma', x : S]{T'});\Gamma}(\sigma[p(\id)], u/u; \rho[p(\id)]))) 
          \byIH \\
      =~& \lambda x.?u \To
          \downarrow^T_{\;\Psi, u : (\judge[\Gamma', x : S]{T'});\Gamma}(\intp{t'}^1_{\;\Psi, u : (\judge[\Gamma', x : S]{T'});\Gamma}(\sigma[p(\id)],
          u/u; \intp{\delta}^1_{\;\Psi, u : (\judge[\Gamma', x : S]{T'});\Gamma}(\sigma[p(\id)]; \rho[p(\id)]))) \\
      =~& \lambda x.?u \To
          \downarrow^T_{\;\Psi, u : (\judge[\Gamma', x : S]{T'});\Gamma}(\intp{t'}^1_{\;\Psi, u : (\judge[\Gamma', x : S]{T'});\Gamma}(\sigma[p(\id)],
          u/u; \intp{\delta}^1_{\;\Psi;\Gamma}(\sigma; \rho)[p(\id)]))
          \tag{by naturality} \\
      =~& \tnfbranch(\branch)_{\;\Psi;\Gamma}(\sigma; \intp{\delta}^1_{\;\Psi;\Gamma}(\sigma; \rho))
    \end{align*}
  \end{itemize}
  Hence we conclude the desired lemma.
\end{proof}

\Cref{lem:ct:intp-lsubst-0,lem:ct:intp-lsubst-1} in fact allow us to prove the
following corollary:
\begin{lemma}[Local substitutions]\labeledit{lem:ct:loc-subst}
  Assume $x$ is the topmost local variable, then
  $\intp{t[s/x]}^i_{\;\Psi;\Gamma}(\sigma;\rho) =
  \intp{t}^i_{\;\Psi;\Gamma}(\sigma;(\rho, \intp{s}^i_{\;\Psi;\Gamma}(\sigma;\rho)))$
\end{lemma}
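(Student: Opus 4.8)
The plan is to recognize the single substitution $[s/x]$ of the topmost variable as a special case of the general simultaneous local substitution machinery, and then invoke \Cref{lem:ct:intp-lsubst-0,lem:ct:intp-lsubst-1} directly. Concretely, writing $\Delta, x : S$ for the local context in which $t$ is typed, the capture-avoiding single substitution $t[s/x]$ coincides with $t[\id, s/x]$, where $\id$ is the identity local substitution on $\Delta$ and $(\id, s/x)$ is the local substitution of type $\Delta \to \Delta, x : S$ that sends each variable of $\Delta$ to itself and $x$ to $s$. Since the target statement is parametric in the layer $i$, I would split into the two cases $i = 0$ and $i = 1$, discharging them with \Cref{lem:ct:intp-lsubst-0} and \Cref{lem:ct:intp-lsubst-1} respectively; the two computations are identical in shape and differ only in whether the ambient global datum is a weakening or a full global substitution.

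Applying the appropriate layer's lemma with $\delta := (\id, s/x)$ then gives
\[
\intp{t[s/x]}^i_{\;\Psi;\Gamma}(\sigma;\rho)
= \intp{t}^i_{\;\Psi;\Gamma}\bigl(\sigma; \intp{(\id, s/x)}^i_{\;\Psi;\Gamma}(\sigma;\rho)\bigr).
\]
By the defining clause for the interpretation of a cons'd local substitution, the right-hand environment unfolds to $\bigl(\intp{\id}^i_{\;\Psi;\Gamma}(\sigma;\rho), \intp{s}^i_{\;\Psi;\Gamma}(\sigma;\rho)\bigr)$, so it remains to establish the auxiliary fact that the identity local substitution is interpreted as the identity on environments, i.e.\ $\intp{\id}^i_{\;\Psi;\Gamma}(\sigma;\rho) = \rho$. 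Granting this, the right-hand environment becomes $\bigl(\rho, \intp{s}^i_{\;\Psi;\Gamma}(\sigma;\rho)\bigr)$, which is exactly the claimed equation.

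For the auxiliary fact I would induct on $\Delta$. The base case $\Delta = \cdot$ is immediate. In the step case the identity substitution on $\Delta, y : T$ weakens the tail and places $y$ on top, so its interpretation pairs the interpretation of a weakened identity with the lookup of $y$; matching this against $\rho = (\rho', a)$ requires pushing the weakening through the interpretation, which is precisely naturality (\Cref{lem:st:t-intp-nat}), combined with the induction hypothesis that the unweakened identity returns $\rho'$. This naturality bookkeeping---aligning the shift introduced by the inductive definition of the identity substitution with the projection of the environment---is the one genuinely technical point; everything else is unfolding definitions. The only remaining thing to check is the purely syntactic identity $t[s/x] = t[\id, s/x]$, which follows from the definitions of single and simultaneous local substitution and does not interact with the semantics.
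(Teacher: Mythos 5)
Your overall route is exactly the paper's: the paper states this lemma as a direct corollary of \Cref{lem:ct:intp-lsubst-0,lem:ct:intp-lsubst-1}, instantiated (as you do) with $\delta := (\id, s/x)$, after which everything reduces to the syntactic identity $t[s/x] = t[\id, s/x]$ and the semantic fact $\intp{\id}^i_{\;\Psi;\Gamma}(\sigma;\rho) = \rho$. Your decomposition, the layer split, and the unfolding of the cons case of $\intp{\delta, t/x}$ are all correct and match the intended argument.

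There is, however, one concrete misstep in the step you yourself single out as the technical heart. To show $\intp{\id_{\Delta}, y/y}^i(\sigma;(\rho',a)) = (\rho', a)$ you must argue that the interpretation of the tail $\id_\Delta$ --- which is (re)typed in the extended context $\Delta, y:T$ but evaluated in the extended environment $(\rho', a)$ --- coincides with its interpretation in $\rho'$, so that the induction hypothesis applies. You justify this by naturality (\Cref{lem:st:t-intp-nat}), but naturality is not the right tool here: it relates values of the \emph{same} interpretation at \emph{different worlds} along a weakening $\gamma;\tau : \Psi';\Gamma' \To \Psi;\Gamma$, whereas in your step both sides sit at the same world $\Psi;\Gamma$ and what changes is the local context indexing the environment presheaf ($\Delta$ versus $\Delta, y:T$). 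What you actually need is stability of the interpretation under weakening of the term's own local context: a term (or substitution) typed in $\Delta$ and weakened into $\Delta, y:T$ has an interpretation that ignores the extra environment component. This is a separate, routine induction --- in the named-variable setting it bottoms out in the observation that variable interpretation is lookup by name, which skips the new entry --- and it is a fact the paper itself uses repeatedly but only informally (e.g.\ the steps annotated ``due to weakening in $t$'' in the $\eta$ case and ``$s$ was locally weakening when inside of a $\lambda$'' in the insert lemma of the earlier system). So your proof is repairable and structurally faithful to the paper, but as written the justification of the identity-substitution fact would not go through by citing naturality; you need to state and prove the local-context weakening lemma instead.
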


\begin{lemma}\labeledit{lem:ct:glob-subst-gen} $ $
  \begin{itemize}
  \item If $\ltyping[\Phi][\Delta] 1 t T$, $\typing[\Psi']{\sigma'}\Psi$,
    $\typing[\Psi]{\sigma}\Phi$ and $\rho \in \intp{\Delta}_{\;\Psi';\Gamma}$, then
    $\intp{t}^1_{\;\Psi';\Gamma}(\sigma \circ \sigma';\rho) =
    \intp{t[\sigma]}^1_{\;\Psi';\Gamma}(\sigma';\rho)$.
    
  \item If $\ltyping[\Phi][\Delta] 1 \delta \Delta'$, $\typing[\Psi']{\sigma'}\Psi$,
    $\typing[\Psi]{\sigma}\Phi$ and $\rho \in \intp{\Delta}_{\;\Psi';\Gamma}$, then
    $\intp{\delta}^1_{\;\Psi';\Gamma}(\sigma \circ \sigma';\rho) =
    \intp{\delta[\sigma]}^1_{\;\Psi';\Gamma}(\sigma';\rho)$.
    
  \item If $\ltyping[\Psi'][\Delta'] 1 t {T'}$, $\typing[\Psi']{\sigma'}\Psi$,
    $\ltyping[\Phi][\Delta] 1 {\vect \branch}{\judge[\Delta']{T'} \STo T}$,
    $\typing[\Psi]{\sigma}\Phi$ and $\rho \in \intp{\Delta}_{\;\Psi';\Gamma}$, then
    $\tmatc(t, \vect\branch)_{\;\Psi';\Gamma}(\sigma \circ \sigma', \rho)
    = \tmatc(t, \vect\branch[\sigma])_{\;\Psi';\Gamma}(\sigma', \rho)$.

  \item If $\ltyping[\Phi][\Delta] 1 {\branch}{\judge[\Delta']{T'} \STo T}$,
    $\typing[\Psi']{\sigma'}\Psi$, $\typing[\Psi]{\sigma}\Phi$ and
    $\rho \in \intp{\Delta}_{\;\Psi';\Gamma}$, then
    $\tnfbranch(\branch)_{\;\Psi';\Gamma}(\sigma \circ \sigma'; \rho) =
    \tnfbranch(\branch[\sigma])_{\;\Psi';\Gamma}(\sigma'; \rho)$.
\end{itemize}
\end{lemma}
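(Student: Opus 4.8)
The plan is to prove all four statements simultaneously by mutual induction, mirroring the layer-$1$ direction of \Cref{lem:st:glob-subst} from \Cref{sec:st} but now accounting for the local substitutions stored in global variables and for the pattern-matching machinery. The first two statements proceed by induction on the typing derivations of $t$ and $\delta$; the third by induction on the layer-$0$ code $t$ that the scrutinee evaluated to; and the fourth by induction on the branch $\branch$. Because interpreting $\matc s\ \vect\branch$ at layer $1$ dispatches to $\tmatc$ or $\tnfbranch$ according to whether $\intp{s}^1$ returns a $\boxit{\cdot}$ or a neutral, the four statements genuinely feed into one another and must be established together. All non-modal cases ($x$, $\ze$, $\su$, $\lambda$, application) are discharged verbatim as in \Cref{sec:st}, since global substitution commutes with their interpretation in the usual way.

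For the term statement the new work is three cases. The case $t = \boxit{t'}$ is immediate, as both sides unfold to $\boxit{(t'[\sigma][\sigma'])} = \boxit{(t'[\sigma \circ \sigma'])}$. The case $t = u^\delta$ is the first delicate point: unfolding gives $\intp{u^\delta}^1_{\;\Psi';\Gamma}(\sigma \circ \sigma';\rho) = \intp{(\sigma\circ\sigma')(u)}^0_{\;\Psi';\Gamma}(\id; \intp{\delta}^1_{\;\Psi';\Gamma}(\sigma\circ\sigma';\rho))$, and I would rewrite $(\sigma\circ\sigma')(u) = \sigma(u)[\sigma']$, use the IH for $\delta$ (the second statement) to replace $\intp{\delta}^1(\sigma\circ\sigma';\rho)$ by $\intp{\delta[\sigma]}^1(\sigma';\rho)$, and then chain \Cref{lem:ct:intp-0-1} (with identity weakening) followed by \Cref{lem:ct:intp-lsubst-1} to reassemble $\intp{\sigma(u)[\delta[\sigma]]}^1(\sigma';\rho) = \intp{u^\delta[\sigma]}^1(\sigma';\rho)$, recalling the definition $u^\delta[\sigma] = \sigma(u)[\delta[\sigma]]$. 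The case $t = \matc s\ \vect\branch$ reduces to the IH on $s$, which yields $\intp{s}^1(\sigma\circ\sigma';\rho) = \intp{s[\sigma]}^1(\sigma';\rho)$; according to whether this common value is a box or a neutral, the goal hands off to the third or fourth statement applied to $\vect\branch$.

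For the $\tmatc$ statement I would induct on the layer-$0$ code $t$, relying on the exhaustiveness of pattern matching so that only core constructors and global variables arise. In the constructor cases ($\su s$, $\lambda x. s$, $s~s'$) the branch body is interpreted in a global context grown by the captured pattern variables, so the computation turns on the bookkeeping identity expressing how $\sigma \circ \sigma'$ extended by the captured code equals $\sigma$ lifted under the new pattern variables composed with $\sigma'$ extended by that code; the IH on the body then closes the case, using that $\vect\branch[\sigma]$ substitutes $\sigma$ into each body under exactly those lifts. The global-variable case $t = u^\delta$ never consults $\vect\branch$ but produces a neutral $\matc{\boxit{u^\delta}}\ \vect\nbranch$ by reflection, and so reduces directly to the fourth statement. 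The $\tnfbranch$ statement is handled branch by branch and is structurally identical to the neutral $\tletbox$ computation in the proof of \Cref{lem:st:t-intp-nat}: each pattern reifies its body under a grown global context, and the equality follows from the IH on the body after pushing the weakening through, invoking naturality (\Cref{lem:st:t-intp-nat}) where the reification sits under a weakening such as $p(\id)$ or $p(p(\id))$.

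The hard part will be the bookkeeping in precisely these growing-context cases: keeping the weakenings and the extensions $u/u$, $u'/u'$ synchronized on both sides of the claimed equation, and verifying the composition identity that makes the IH applicable. The global-variable case $u^\delta$ is the other sensitive point, since it is the only place where the layer drops from $1$ to $0$ and where the stored $\delta$ must be threaded through \Cref{lem:ct:intp-0-1} and \Cref{lem:ct:intp-lsubst-1} in the correct order; matching the resulting expression against $u^\delta[\sigma] = \sigma(u)[\delta[\sigma]]$ is where I expect the proof to demand the most care, while the remaining cases are routine once the mutual induction is set up.
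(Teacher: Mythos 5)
Your proposal matches the paper's proof essentially step for step: the same mutual induction over the four statements, the identical chain of rewriting $(\sigma\circ\sigma')(u) = \sigma(u)[\sigma']$, the IH on the stored $\delta$, then \Cref{lem:ct:intp-0-1} and \Cref{lem:ct:intp-lsubst-1} in that order for the $u^\delta$ case, the same dispatch of $\matc s\ \vect\branch$ to the third or fourth statement by case-splitting on the value of $\intp{s}^1$, and the same composition identity (e.g.\ $(\sigma\circ\sigma'), s/u, s'/u' = (\sigma,u/u,u'/u')\circ(\sigma', s/u, s'/u')$) in the pattern-variable-growing cases of $\tmatc$ and $\tnfbranch$. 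The proposal is correct and takes essentially the same route as the paper.
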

\begin{proof}
  We do mutual induction. %
  We only consider the modal cases because the other cases are identical to those in
  \Cref{lem:ct:gsubst-nat-gen}.
  \begin{itemize}[label=Case]
  \item  $t = u^\delta$,
    \begin{mathpar}
      \inferrule
      {\ltyping[\Phi][\Delta] 0 \delta {\Delta'} \\ u : (\judge[\Delta'] T) \in \Phi}
      {\ltyping[\Phi][\Delta]{0}{u^\delta}{T}}
    \end{mathpar}
    we compute
    \begin{align*}
      \intp{u^\delta}^1_{\;\Psi';\Gamma}(\sigma \circ \sigma';\rho)
      &= \intp{(\sigma \circ \sigma')(u)}^0_{\;\Psi';\Gamma}(\id,
        \intp{\delta}^1_{\;\Psi';\Gamma}(\sigma \circ \sigma';\rho)) \\
      &= \intp{\sigma(u)[\sigma']}^0_{\;\Psi';\Gamma}(\id,
        \intp{\delta[\sigma]}^1_{\;\Psi';\Gamma}(\sigma';\rho))
        \byIH \\
      &= \intp{\sigma(u)}^1_{\;\Psi';\Gamma}(\sigma',
        \intp{\delta[\sigma]}^1_{\;\Psi';\Gamma}(\sigma';\rho))
        \tag{by \Cref{lem:ct:intp-0-1}} \\
      &= \intp{\sigma(u)[\delta[\sigma]]}^1_{\;\Psi';\Gamma}(\sigma';\rho)
      \tag{by \Cref{lem:ct:intp-lsubst-1}} \\
      &= \intp{u^\delta[\sigma]}^1_{\;\Psi';\Gamma}(\sigma';\rho)
    \end{align*}
    
  \item $t = \boxit s$,
    \begin{mathpar}
      \inferrule
      {\istype \Delta \\ \ltyping[\Phi][\Delta'] 0 s T'}
      {\ltyping[\Phi][\Delta]{1}{\boxit s}{\cont[\Delta']{T'}}}
    \end{mathpar}
    we compute
    \begin{align*}
      \intp{\boxit{s}}^1_{\;\Psi';\Gamma}(\sigma \circ \sigma';\rho)
      &= \boxit{(s[\sigma\circ\sigma'])} \\
      &= \boxit{(s[\sigma][\sigma'])} \\
      &= \intp{\boxit{(s[\sigma])}}^1_{\;\Psi';\Gamma}(\sigma';\rho)
    \end{align*}
  \item $t = \matc s \ \vect\branch$,
    \begin{mathpar}
      \inferrule
      {\ltyping[\Phi][\Delta] 1 {s}{\cont[\Delta']{T'}} \\ \ltyping[\Phi][\Delta] 1 {\vect \branch}{\judge[\Delta']{T'} \STo T}}
      {\ltyping[\Phi][\Delta] 1 {\matc s\ \vect\branch} T}
    \end{mathpar}
    By IH
    \begin{align*}
      \intp{s}^1_{\;\Psi';\Gamma}(\sigma \circ \sigma'; \rho)
      = \intp{s[\sigma]}^1_{\;\Psi';\Gamma}(\sigma'; \rho)
    \end{align*}
    so we split its result by cases:
    \begin{itemize}[label=Subcase]
    \item If $\intp{s}^1_{\;\Psi';\Gamma}(\sigma \circ \sigma'; \rho) = \boxit {s'}$,
      then we just resort this case to the third statement of this lemma by IH.
      
    \item If $\intp{s}^1_{\;\Psi';\Gamma}(\sigma \circ \sigma'; \rho) = v$ for some neutral
      $v$, then we just resort this case to the fourth statement of this lemma by IH.
    \end{itemize}
  \end{itemize}

  Again, we only verify the representative cases:
  \begin{itemize}[label=Case]
  \item $t = s~s'$ and $\vect\branch(s~s') = ?u~?u' \STo t'$ and for some $\iscore {S'}$,
    \begin{mathpar}
      \inferrule
      {\ltyping[\Psi'][\Delta']{0}{s}{S' \func T'} \\ \ltyping[\Psi'][\Delta']{0}{s'}{S'}}
      {\ltyping[\Psi'][\Delta']{0}{s\ s'}{T'}}
    \end{mathpar}
    we compute
    \begin{align*}
      & \tmatc(s~s', \vect\branch)_{\;\Psi';\Gamma}(\sigma \circ \sigma'; \rho)
      \\
      =~& \intp{t'}^1_{\;\Psi';\Gamma}((\sigma \circ \sigma'), s/u, s'/u'; \rho) \\
      =~& \intp{t'}^1_{\;\Psi';\Gamma}((\sigma,u/u,u'/u') \circ (\sigma', s/u, s'/u'); \rho) \\
      =~& \intp{t'[\sigma,u/u,u'/u']}^1_{\;\Psi';\Gamma}((\sigma', s/u, s'/u'); \rho)
          \byIH \\
      =~& \tmatc(s~s', \vect\branch[\sigma])_{\;\Psi';\Gamma}(\sigma'; \rho)
    \end{align*}
  \item $t = u^{\delta'}$,
    \begin{mathpar}
      \inferrule
      {\ltyping[\Psi][\Delta'] 0 {\delta'} {\Delta''} \\ u : (\judge[\Delta'']{T'}) \in \Psi}
      {\ltyping[\Psi][\Delta']{0}{u^{\delta'}}{T'}}
    \end{mathpar}
    In this case, we have
    \begin{align*}
      \tmatc(u^{\delta'}, \vect\branch)_{\;\Psi';\Gamma}(\sigma \circ \sigma'; \rho)
      & = \uparrow^T_{\;\Psi';\Gamma}(\matc {\boxit {u^{\delta'}}}\
      \tnfbranch(\vect\branch)_{\;\Psi';\Gamma}(\sigma \circ \sigma'; \rho)) \\
      & = \uparrow^T_{\;\Psi';\Gamma}(\matc {\boxit {u^{\delta'}}}\
        \tnfbranch(\vect\branch[\sigma])_{\;\Psi';\Gamma}(\sigma'; \rho))
        \byIH \\
      & = \tmatc(u^{\delta'}, \vect\branch[\sigma])_{\;\Psi';\Gamma}(\sigma'; \rho)
    \end{align*}
    
  \item $\branch = \lambda x.?u \To t'$, 
    \begin{mathpar}
      \inferrule
      {\ltyping[\Phi, u : (\judge[\Delta', x : S]{T'})][\Delta] 1{t'} T}
      {\ltyping[\Phi][\Delta] 1 {\lambda x. ?u \STo t'}{\judge[\Delta'] S \func T' \STo T}}
    \end{mathpar}
    we compute
    \begin{align*}
      & \tnfbranch(\branch)_{\;\Psi';\Gamma}(\sigma \circ \sigma'; \rho) \\
      =~& \lambda x.?u \To
          \downarrow^T_{\;\Psi', u : (\judge[\Delta', x :
          S]{T'});\Gamma}(\intp{t'}^1_{\;\Psi', u : (\judge[\Delta', x :
          S]{T'});\Gamma}((\sigma \circ \sigma')[p(\id)], u/u; \rho[p(\id)])) 
          \tag{where $p(\id) : \Phi, u : (\judge[\Delta', x : S]{T'}) \To_g \Phi$} \\
      =~& \lambda x.?u \To
          \downarrow^T_{\;\Psi', u : (\judge[\Delta', x :
          S]{T'});\Gamma}(\intp{t'}^1_{\;\Psi', u : (\judge[\Delta', x :
          S]{T'});\Gamma}((\sigma[p(\id)], u/u) \circ (\sigma'[p(\id)], u/u);
          \rho[p(\id)])) \\
      =~& \lambda x.?u \To
          \downarrow^T_{\;\Psi', u : (\judge[\Delta', x : S]{T'});\Gamma}(\intp{t'[\sigma[p(\id)],
          u/u]}^1_{\;\Psi', u : (\judge[\Delta', x : S]{T'});\Gamma}((\sigma'[p(\id)], u/u); \rho[p(\id)]))
          \byIH \\
      =~& \tnfbranch(\branch[\sigma])_{\;\Psi';\Gamma}(\sigma'; \rho)
    \end{align*}
  \end{itemize}
\end{proof}

At this point, we have fully re-verified the necessary properties to establish the
completeness and soundness theorems.

\subsection{Completeness}

The completeness proof simply extends the development before. %
We write down the definitions here to be self-contained:
\begin{mathpar}
  \inferrule*
  {\ltyping 0 s T \\ \ltyping 0 t T \\\\ s = t}
  {\lsemtyeq 0{s}{t}{T}}

  \inferrule*
  {\ltyping 1 s T \\ \ltyping 1 t T \\\\
    \forall (\sigma; \rho) \in \intp{\Psi; \Gamma}^1_{\;\Phi;\Delta}.
    \intp{s}^1_{\;\Phi;\Delta}(\sigma;\rho) = \intp{t}^1_{\;\Phi;\Delta}(\sigma;\rho)}
  {\lsemtyeq 1{s}{t}{T}}

  \inferrule*
  {\lsemtyeq i{t}{t}{T}}
  {\lsemtyp i t T}

  \inferrule*
  {\ltyping 0 \delta \Delta \\ \ltyping 0 {\delta'} \Delta \\\\ \delta = \delta'}
  {\lsemtyeq 0{\delta}{\delta'}{\Delta}}

  \inferrule*
  {\ltyping 1 \delta \Delta \\ \ltyping 1 {\delta'}{\Delta} \\\\
    \forall (\sigma; \rho) \in \intp{\Psi; \Gamma}^1_{\;\Phi;\Delta}.
    \intp{\delta}^1_{\;\Phi;\Delta}(\sigma;\rho) = \intp{\delta'}^1_{\;\Phi;\Delta}(\sigma;\rho)}
  {\lsemtyeq 1{\delta}{\delta'}{\Delta}}

  \inferrule*
  {\lsemtyeq i{\delta}{\delta'}{\Delta}}
  {\lsemtyp i \delta \Delta}
\end{mathpar}

We let the following be definitions directly so we do not have to prove them:
\begin{mathpar}
  \inferrule
  {\iscore \Delta \\ \lsemtyeq 1 t {t'} T'}
  {\lsemtyeq 1 {\var x \STo t}{\var x \STo t'}{\judge[\Delta] T \STo T'}}
  
  \inferrule
  {\iscore \Delta \\ \lsemtyeq 1 t {t'} T'}
  {\lsemtyeq 1 {\ze \STo t} {\ze \STo t'}{\judge[\Delta] \Nat \STo T'}}

  \inferrule
  {\lsemtyeq[\Psi, u : (\judge[\Delta]\Nat)] 1 t {t'} T'}
  {\lsemtyeq 1 {\su ?u \STo t}{\su ?u \STo t'}{\judge[\Delta] \Nat \STo T'}}

  \inferrule
  {\lsemtyeq[\Psi, u : (\judge[\Delta, x : S]T)] 1 t {t'} T'}
  {\lsemtyeq 1 {\lambda x. ?u \STo t}{\lambda x. ?u \STo t'}{\judge[\Delta] S \func T \STo T'}}

  \inferrule
  {\forall \iscore S.~ \lsemtyeq[\Psi, u : (\judge[\Delta]S \func T), u' :
    (\judge[\Delta]S)] 1 t {t'} T'}
  {\lsemtyeq 1 {?u~?u' \STo t}{?u~?u' \STo t'}{\judge[\Delta] T \STo T'}}

  \inferrule
  {\lsemtyeq 1 {\branch}{\branch}{\judge[\Delta] T \STo T'}}
  {\lsemtyp 1 \branch{\judge[\Delta] T \STo T'}}
\end{mathpar}

For the same reason, we only focus on judgments at layer $1$.

\subsubsection{Congruence Rules}

\begin{lemma}
  \begin{mathpar}
    \inferrule
    {\lsemtyeq 1 \delta{\delta'} \Delta \\ u : (\judge[\Delta] T) \in \Psi}
    {\lsemtyeq{1}{u^\delta}{u^{\delta'}}{T}}
  \end{mathpar}
\end{lemma}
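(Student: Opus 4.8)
The plan is to unfold the goal $\lsemtyeq{1}{u^\delta}{u^{\delta'}}{T}$ directly against the definition of semantic equivalence at layer $1$. That definition packages three obligations: the well-typedness of $u^\delta$, the well-typedness of $u^{\delta'}$, and the pointwise equality $\intp{u^\delta}^1_{\;\Phi;\Delta}(\sigma;\rho) = \intp{u^{\delta'}}^1_{\;\Phi;\Delta}(\sigma;\rho)$ for every $(\sigma;\rho) \in \intp{\Psi;\Gamma}^1_{\;\Phi;\Delta}$. Each of the three will be discharged by reading off the corresponding data already present in the hypothesis $\lsemtyeq 1 \delta{\delta'} \Delta$, whose own definition bundles $\ltyping 1 \delta \Delta$, $\ltyping 1 {\delta'} \Delta$, and the pointwise equality of $\intp{\delta}^1$ and $\intp{\delta'}^1$.

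For the two typing obligations, I would extract $\ltyping 1 \delta \Delta$ and $\ltyping 1 {\delta'} \Delta$ from $\lsemtyeq 1 \delta{\delta'} \Delta$ and combine each with the premise $u : (\judge[\Delta] T) \in \Psi$ using the typing rule for global variables, namely the rule concluding $\ltyping{i}{u^\delta}{T}$ from $\ltyping i \delta \Delta$ and $u : (\judge[\Delta] T) \in \Psi$. This yields $\ltyping 1 {u^\delta} T$ and $\ltyping 1 {u^{\delta'}} T$ immediately.

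For the interpretation equality, I would fix an arbitrary $(\sigma;\rho) \in \intp{\Psi;\Gamma}^1_{\;\Phi;\Delta}$ and expand both sides with the layer-$1$ interpretation clause for global variables, which gives
\begin{align*}
  \intp{u^\delta}^1_{\;\Phi;\Delta}(\sigma;\rho)
  &= \intp{\sigma(u)}^0_{\;\Phi;\Delta}(\id;\intp{\delta}^1_{\;\Phi;\Delta}(\sigma;\rho)), \\
  \intp{u^{\delta'}}^1_{\;\Phi;\Delta}(\sigma;\rho)
  &= \intp{\sigma(u)}^0_{\;\Phi;\Delta}(\id;\intp{\delta'}^1_{\;\Phi;\Delta}(\sigma;\rho)).
\end{align*}
Here the two right-hand sides differ only in the second component of the environment fed to $\intp{\sigma(u)}^0$, and those two components are $\intp{\delta}^1_{\;\Phi;\Delta}(\sigma;\rho)$ and $\intp{\delta'}^1_{\;\Phi;\Delta}(\sigma;\rho)$, which coincide by the pointwise-equality part of the hypothesis $\lsemtyeq 1 \delta{\delta'} \Delta$. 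Substituting this equality closes the goal.

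The essential observation—and the reason there is no real obstacle—is that the interpretation of $u^\delta$ depends on $\delta$ \emph{only} through $\intp{\delta}^1$, so congruence for global variables is inherited verbatim from congruence for local substitutions. The one thing I would double-check is that the same $(\sigma;\rho)$ is legitimately usable on both sides, i.e.\ that membership in $\intp{\Psi;\Gamma}^1_{\;\Phi;\Delta}$ is exactly the quantification domain appearing in both $\lsemtyeq 1 \delta{\delta'} \Delta$ and the goal; this matches by construction, so no reindexing is needed.
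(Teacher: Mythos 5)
Your proposal is correct and matches the paper's proof essentially verbatim: both fix an arbitrary $(\sigma;\rho) \in \intp{\Psi;\Gamma}^1_{\;\Phi;\Delta}$, unfold $\intp{u^\delta}^1_{\;\Phi;\Delta}(\sigma;\rho) = \intp{\sigma(u)}^0_{\;\Phi;\Delta}(\id;\intp{\delta}^1_{\;\Phi;\Delta}(\sigma;\rho))$, and conclude from the pointwise equality of $\intp{\delta}^1$ and $\intp{\delta'}^1$ supplied by the hypothesis. The only difference is that you explicitly discharge the two typing obligations via the global-variable typing rule, which the paper leaves implicit.
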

\begin{proof}
  Assume $(\sigma; \rho) \in \intp{\Psi; \Gamma}^1_{\;\Phi;\Delta}$,
  we have
  \begin{align*}
    \intp{\delta}^1_{\;\Phi;\Delta}(\sigma; \rho)
    = \intp{\delta'}^1_{\;\Phi;\Delta}(\sigma; \rho)
  \end{align*}
  This allows us to conclude
  \begin{align*}
    \intp{u^\delta}^1_{\;\Phi;\Delta}(\sigma; \rho)
    = \intp{\sigma(u)}^0_{\;\Phi;\Delta}(\id; \intp{\delta}^1_{\;\Phi;\Delta}(\sigma; \rho))
    = \intp{\sigma(u)}^0_{\;\Phi;\Delta}(\id; \intp{\delta'}^1_{\;\Phi;\Delta}(\sigma; \rho))
    = \intp{u^{\delta'}}^1_{\;\Phi;\Delta}(\sigma; \rho)
  \end{align*}
\end{proof}

\begin{lemma}
  \begin{mathpar}
    \inferrule
    {\lsemtyeq 1 {s}{s'}{\cont[\Delta'] T} \\ \lsemtyeq 1 {\vect \branch}{\vect \branch'}{\judge[\Delta'] T \STo T'}}
    {\lsemtyeq 1 {\matc s\ \vect\branch}{\matc s'\ \vect\branch'} T'}
  \end{mathpar}
\end{lemma}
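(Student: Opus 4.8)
The plan is to fix an arbitrary $(\sigma;\rho) \in \intp{\Psi;\Gamma}^1_{\;\Phi;\Delta}$ and show that both match expressions evaluate to the same element of $\intp{T'}_{\;\Phi;\Delta}$. First I would apply the hypothesis $\lsemtyeq 1 s{s'}{\cont[\Delta']T}$ to obtain a common scrutinee value $c := \intp{s}^1_{\;\Phi;\Delta}(\sigma;\rho) = \intp{s'}^1_{\;\Phi;\Delta}(\sigma;\rho) \in \Nf^{\cont[\Delta']T}_{\;\Phi;\Delta}$. Since the interpretation of a match dispatches on whether its evaluated scrutinee is a $\boxit{s^c}$ or a neutral form, and both matches share the same $c$, I would split into exactly these two cases, with the two matches travelling down the same clause of the definition of $\intp{\matc{\cdot}\ \vect\branch}^1$ in each case.

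In the neutral case, where $c = v$, both interpretations become $\uparrow^{T'}_{\;\Phi;\Delta}(\matc{v}\ \vect\nbranch)$ and $\uparrow^{T'}_{\;\Phi;\Delta}(\matc{v}\ \vect\nbranch')$, where $\vect\nbranch := \tnfbranch(\vect\branch)_{\;\Phi;\Delta}(\sigma;\rho)$ and $\vect\nbranch'$ is the analogue for $\vect\branch'$. It therefore suffices to prove the key sublemma $\tnfbranch(\vect\branch)_{\;\Phi;\Delta}(\sigma;\rho) = \tnfbranch(\vect\branch')_{\;\Phi;\Delta}(\sigma;\rho)$, after which the two neutral forms are syntactically identical and applying $\uparrow^{T'}$ closes the case. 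I would establish this sublemma branch by branch: each normal branch is produced by reifying the body of the corresponding branch, evaluated in the global context extended by fresh pattern variables (for instance in the environment $(\sigma',u/u;\rho')$ obtained by weakening along $p(\id);\id$ for a successor pattern). The per-branch equivalences packaged in $\lsemtyeq 1 {\vect\branch}{\vect\branch'}{\judge[\Delta']T \STo T'}$ give precisely that the corresponding branch bodies have equal interpretations in that extended environment, so their images under the function $\downarrow^{T'}$ agree.

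In the box case, where $c = \boxit{s^c}$, both interpretations reduce to $\tmatc(s^c,\vect\branch)_{\;\Phi;\Delta}(\sigma;\rho)$ and $\tmatc(s^c,\vect\branch')_{\;\Phi;\Delta}(\sigma;\rho)$, so I would perform a nested case analysis on the syntactic shape of the core term $s^c$, relying on the exhaustiveness of the six core-term forms. For the five genuine shapes ($\var{x}$, $\ze$, $\su s$, $\lambda x.s$ and an application $t~s$), $\tmatc$ selects the corresponding branch and evaluates its body in the environment extended by the captured subterms; here the relevant component of $\lsemtyeq 1 {\vect\branch}{\vect\branch'}{\judge[\Delta']T \STo T'}$ directly yields equality of the two results. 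The remaining shape $s^c = u^\delta$ is handled differently by $\tmatc$, which does not reduce but instead returns $\uparrow^{T'}_{\;\Phi;\Delta}(\matc{\boxit{u^\delta}}\ \vect\nbranch)$; this subcase therefore collapses to the same $\tnfbranch$ congruence sublemma established in the neutral case.

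I expect the $\tnfbranch$ congruence to be the main obstacle, since it is needed for both the neutral scrutinee case and the global-variable box subcase. The difficulty is entirely bookkeeping: the normal branches are built by weakening $(\sigma;\rho)$ along $p(\id);\id$ (or $p(p(\id));\id$ for application branches), appending fresh pattern variables to the global context, evaluating, and reifying, so aligning the two sides requires carefully tracking these extended environments and appealing to \Cref{lem:st:t-intp-nat} together with the fact that $\downarrow^{T'}$ is a genuine function and hence preserves the equalities coming from the branch hypotheses. Once this sublemma is in place, every other case follows immediately from the corresponding semantic branch equivalence.
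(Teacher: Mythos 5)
Your proposal is correct and follows essentially the same route as the paper's proof: fix an environment, use the first hypothesis to obtain a common scrutinee value, split on neutral versus $\tbox$, handle the genuine code shapes by instantiating the per-branch hypotheses in the appropriately extended global environments, and reduce both the neutral case and the $\boxit{u^\delta}$ subcase to branch-wise agreement of the reified bodies (the $\tnfbranch$ congruence, which the paper phrases as "congruence over reification" and verifies on one representative branch). The only difference is presentational: you isolate the $\tnfbranch$ congruence as an explicit sublemma, whereas the paper proves it inline, picking the $\lambda$ branch and declaring the others similar.
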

\begin{proof}
  Assume $(\sigma; \rho) \in \intp{\Psi; \Gamma}^1_{\;\Phi;\Delta}$,
  we have
  \begin{align*}
    \intp{s}^1_{\;\Phi;\Delta}(\sigma; \rho)
    = \intp{s'}^1_{\;\Phi;\Delta}(\sigma; \rho)
    \in \Nf^{\cont[\Delta'] T}_{\;\Phi;\Delta}
  \end{align*}
  We perform case analysis:
  \begin{itemize}[label=Case]
  \item $\intp{s}^1_{\;\Phi;\Delta}(\sigma; \rho) = v$ for some neutral $v$. %
    In this case, we need to get into each branch and evaluate the normal form of the
    body. %
    We just pick one body and the others are similar: %
    \begin{mathpar}
      \inferrule
      {\lsemtyeq[\Psi, u : (\judge[\Delta', x : S]T)] 1 t {t'} T'}
      {\lsemtyeq 1 {\lambda x. ?u \STo t}{\lambda x. ?u \STo t'}{\judge[\Delta'] S \func T \STo T'}}
    \end{mathpar}
    We construct
    \begin{align*}
      (\sigma[p(\id)], u/u; \rho) \in \intp{\Psi, u : (\judge[\Delta', x : S]T);
      \Gamma}^1_{\;\Phi, u : (\judge[\Delta', x : S]T);\Delta}
    \end{align*}
    using which we have
    \begin{align*}
      \intp{t}^1_{\;\Phi, u : (\judge[\Delta', x : S]T);\Delta}(\sigma[p(\id)], u/u; \rho) = \intp{t'}^1_{\;\Phi, u : (\judge[\Delta', x : S]T);\Delta}(\sigma[p(\id)], u/u; \rho)
    \end{align*}
    This concludes our goal by congruence over reification. 
  \item $\intp{s}^1_{\;\Phi;\Delta}(\sigma; \rho) = \boxit {u^\delta}$, then this case is
    almost identical to the case above.
    
  \item $\intp{s}^1_{\;\Phi;\Delta}(\sigma; \rho) = \boxit {(t_1~t_2)}$ for some $t_1$
    and $t_2$. %
    Moreover we assume $t_1$ has type $S \func T$. %
    Since $t_1$ lives at layer $0$, we know $\iscore S$. %
    By looking up $\vect\branch$ and $\vect\branch'$, we have
    \begin{mathpar}
      \inferrule
      {\forall \iscore S.~ \lsemtyeq[\Psi, u : (\judge[\Delta]S \func T), u' :
        (\judge[\Delta]S)] 1 t {t'} T'}
      {\lsemtyeq 1 {?u~?u' \STo t}{?u~?u' \STo t'}{\judge[\Delta] T \STo T'}}
    \end{mathpar}
    We construct
    \begin{align*}
      (\sigma, t_1/u, t_2/u'; \rho) \in \intp{\Psi, u : (\judge[\Delta]S \func T), u' :
      (\judge[\Delta]S); \Gamma}^1_{\;\Phi;\Delta}
    \end{align*}
    From this we have
    \begin{align*}
      \intp{t}^1_{\;\Phi;\Delta}(\sigma, t_1/u, t_2/u'; \rho)
      = \intp{t'}^1_{\;\Phi;\Delta}(\sigma, t_1/u, t_2/u'; \rho)
    \end{align*}
    This is exactly the goal.
  \item All rest of the cases are cases for code. %
    They follow similarly to the case above.
  \end{itemize}
\end{proof}

\subsubsection{$\beta$ Rules}

\begin{lemma}
  \begin{mathpar}
    \inferrule
    {x : T \in \Delta \\ \lsemtyp 1 {\vect \branch}{\judge[\Delta] T \STo T'} \\
      \vect\branch(x) = \var x \STo t}
    {\lsemtyeq 1 {\matc {\boxit x}\ \vect\branch}{t} T'}  
  \end{mathpar}
\end{lemma}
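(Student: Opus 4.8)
The plan is to unfold the definition of semantic equivalence at layer $1$ and compute both interpretations directly; the decisive observation is that a global substitution leaves local variables untouched, so the scrutinee $\boxit x$ always evaluates to $\boxit x$, which lands in the \emph{reducing} case of the pattern-matching interpretation and immediately selects the variable branch.

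First I would discharge the two well-typedness side conditions that $\lsemtyeq 1{\cdot}{\cdot}{T'}$ demands. Since $x : T \in \Delta$ we have $\ltyping[\Psi][\Delta] 0 x T$, hence $\ltyping 1 {\boxit x}{\cont[\Delta] T}$; combining this with the covering judgment $\ltyping 1 {\vect\branch}{\judge[\Delta] T \STo T'}$ underlying the premise $\lsemtyp 1 {\vect \branch}{\judge[\Delta] T \STo T'}$, the elimination rule for pattern matching yields $\ltyping 1 {\matc {\boxit x}\ \vect\branch}{T'}$. For the right-hand side, from $\lsemtyp 1 {\vect \branch}{\judge[\Delta] T \STo T'}$ I extract the semantic typing of the variable branch $\var x \STo t = \vect\branch(x)$, giving $\lsemtyp 1 t {T'}$ and in particular $\ltyping 1 t {T'}$.

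Next I would fix an arbitrary $(\sigma; \rho) \in \intp{\Psi; \Gamma}^1_{\;\Phi;\Delta}$ and evaluate the scrutinee. By the interpretation of $\tbox$ we have $\intp{\boxit x}^1_{\;\Phi;\Delta}(\sigma; \rho) = \boxit {(x[\sigma])}$, and since the action of a global substitution fixes local variables, $x[\sigma] = x$, so this is exactly $\boxit x$. The scrutinee therefore falls into the $\boxit s$ clause of the pattern-matching interpretation with $s = x$, whence $\intp{\matc {\boxit x}\ \vect\branch}^1_{\;\Phi;\Delta}(\sigma; \rho) = \tmatc(x, \vect\branch)_{\;\Phi;\Delta}(\sigma; \rho)$. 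Finally, by the defining clause of $\tmatc$ on a variable, and using $\vect\branch(x) = \var x \STo t$, this equals $\intp{t}^1_{\;\Phi;\Delta}(\sigma; \rho)$. As $(\sigma; \rho)$ was arbitrary, the two interpretations agree pointwise, which is precisely $\lsemtyeq 1 {\matc {\boxit x}\ \vect\branch}{t}{T'}$.

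The step requiring the most care is not a calculation but the bookkeeping that the scrutinee genuinely enters the reducing ($\boxit s$) branch rather than the neutral one: this hinges entirely on $x[\sigma] = x$, i.e. on the fact that global substitutions do not disturb the open code captured by $\tbox$. Once that is recorded the remainder is a two-line unfolding, so I anticipate no real obstacle; the only thing to watch is invoking the clauses of $\tmatc$ and of the global-substitution action in the right order.
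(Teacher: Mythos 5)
Your proposal is correct and follows essentially the same route as the paper's proof: extract $\lsemtyp 1 t{T'}$ from the semantic branch premise, fix $(\sigma;\rho) \in \intp{\Psi;\Gamma}^1_{\;\Phi;\Delta}$, and compute $\intp{\matc {\boxit x}\ \vect\branch}^1_{\;\Phi;\Delta}(\sigma;\rho) = \tmatc(x,\vect\branch)_{\;\Phi;\Delta}(\sigma;\rho) = \intp{t}^1_{\;\Phi;\Delta}(\sigma;\rho)$, where the paper leaves implicit exactly the points you spell out, namely that $x[\sigma]=x$ puts the scrutinee into the reducing $\boxit s$ clause and that the $\tmatc$ variable clause selects $\vect\branch(x) = \var x \STo t$.
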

\begin{proof}
  We know $\lsemtyp 1 t{T'}$ by premise. %
  Assume $(\sigma; \rho) \in \intp{\Psi; \Gamma}^1_{\;\Phi;\Delta}$,
  we compute
  \begin{align*}
    \intp{\matc {\boxit x}\ \vect\branch}^1_{\;\Phi;\Delta}(\sigma; \rho)
    = \intp{t}^1_{\;\Phi;\Delta}(\sigma; \rho)
  \end{align*}
  as desired.
\end{proof}

\begin{lemma}
  \begin{mathpar}
    \inferrule
    {\lsemtyp[\Psi][\Delta, x : S] 0 s T \\ \lsemtyp 1 {\vect \branch}{\judge[\Delta] S \func T \STo
        T'} \\ \vect\branch(\lambda x. s) = \lambda x. ?u \STo t}
    {\lsemtyeq 1 {\matc {\boxit {(\lambda x. s)}}\ \vect\branch}{t[s/u]}{T'}}
  \end{mathpar}
\end{lemma}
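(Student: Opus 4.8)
The plan is to unfold the definition of semantic equivalence at layer~$1$ and reduce the claim to a single equality of interpretations. First I would record well-typedness of the two sides: the premises supply $\ltyping[\Psi][\Delta, x : S] 0 s T$ and $\ltyping 1 {\vect\branch}{\judge[\Delta]{S \func T} \STo T'}$, whence $\ltyping 1 {\boxit{(\lambda x. s)}}{\cont[\Delta]{S \func T}}$ and, by the pattern-matching rule, $\ltyping 1 {\matc{\boxit{(\lambda x. s)}}\ \vect\branch}{T'}$; forming the global substitution $\typing[\Psi]{\id_\Psi, s/u}{\Psi, u : (\judge[\Delta, x : S]{T})}$ and applying the syntactic global substitution theorem gives $\ltyping 1 {t[s/u]}{T'}$. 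It then suffices to fix an arbitrary $(\sigma;\rho) \in \intp{\Psi;\Gamma}^1_{\;\Phi;\Delta}$ and establish $\intp{\matc{\boxit{(\lambda x. s)}}\ \vect\branch}^1_{\;\Phi;\Delta}(\sigma;\rho) = \intp{t[s/u]}^1_{\;\Phi;\Delta}(\sigma;\rho)$.

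For the left-hand side I would evaluate the scrutinee first. By the interpretation of $\tbox$, $\intp{\boxit{(\lambda x. s)}}^1_{\;\Phi;\Delta}(\sigma;\rho) = \boxit{((\lambda x. s)[\sigma])} = \boxit{(\lambda x. (s[\sigma]))}$, which is a $\boxit{\cdot}$ of a value; hence the pattern-match interpretation takes its first clause and computes $\tmatc(\lambda x. (s[\sigma]), \vect\branch)_{\;\Phi;\Delta}(\sigma;\rho)$. Since the branch lookup $\vect\branch(\cdot)$ depends only on the head constructor of the code, and $\lambda x. (s[\sigma])$ is still a $\lambda$-abstraction, it selects the same branch $\lambda x. ?u \STo t$ as $\vect\branch(\lambda x. s)$; by the defining clause of $\tmatc$ for $\lambda$, this yields $\intp{t}^1_{\;\Phi;\Delta}(\sigma, (s[\sigma])/u; \rho)$.

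Finally I would bridge to the right-hand side exactly as in the simply-typed $\beta$-rule proof of \Cref{sec:st}. Using the global-substitution composition identity $(\id_\Psi, s/u) \circ \sigma = \sigma, (s[\sigma])/u$ together with \Cref{lem:ct:glob-subst-gen}, I obtain $\intp{t}^1_{\;\Phi;\Delta}(\sigma, (s[\sigma])/u; \rho) = \intp{t}^1_{\;\Phi;\Delta}((\id_\Psi, s/u) \circ \sigma; \rho) = \intp{t[\id_\Psi, s/u]}^1_{\;\Phi;\Delta}(\sigma; \rho) = \intp{t[s/u]}^1_{\;\Phi;\Delta}(\sigma; \rho)$, which closes the goal. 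I expect the only delicate points — and the place where the contextual setting departs from \Cref{sec:st} — to be pure bookkeeping: confirming that the head constructor, and hence the chosen branch, is stable under the global substitution $\sigma$, and checking that the composition identity survives now that the substituted code $s$ is \emph{open} in $\Delta, x : S$ rather than closed. Neither is a genuine obstacle, since $s[\sigma]$ retains the local context $\Delta, x : S$ and \Cref{lem:ct:glob-subst-gen} is already stated at the required generality.
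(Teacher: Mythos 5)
Your proposal is correct and follows essentially the same route as the paper's proof: unfold the layer-$1$ semantic equivalence, evaluate the scrutinee to $\boxit{(\lambda x.(s[\sigma]))}$, apply the $\tmatc$ clause for $\lambda$ to get $\intp{t}^1_{\;\Phi;\Delta}(\sigma, s[\sigma]/u;\rho)$, and close via the composition identity $(\id, s/u)\circ\sigma = \sigma, s[\sigma]/u$ with \Cref{lem:ct:glob-subst-gen}. The paper compresses all of this into a one-line computation citing that lemma; your write-up just makes the branch-lookup stability and typing bookkeeping explicit.
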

\begin{proof}
  We know $\lsemtyp[\Psi, u : (\judge[\Delta, x : S] T)] 1 t{T'}$ by premise. %
  Assume $(\sigma; \rho) \in \intp{\Psi; \Gamma}^1_{\;\Phi;\Delta}$,
  we compute
  \begin{align*}
    \intp{\matc {\boxit {(\lambda x. s)}}\ \vect\branch}^1_{\;\Phi;\Delta}(\sigma; \rho)
    = \intp{t}^1_{\;\Phi;\Delta}(\sigma, s[\sigma]/u; \rho)
  \end{align*}
  as desired by \Cref{lem:ct:glob-subst-gen}.
\end{proof}

\begin{lemma}
  \begin{mathpar}
    \inferrule
    {\lsemtyp 1 {\vect \branch}{\judge[\Delta] \Nat \STo T'} \\
      \vect\branch(\ze) = \ze \STo t}
    {\lsemtyeq 1 {\matc {\boxit \ze}\ \vect\branch}{t} T'}  

    \inferrule
    {\lsemtyp[\Psi][\Delta] 0 s\Nat \\ \lsemtyp 1 {\vect \branch}{\judge[\Delta] \Nat \STo T'}
      \\ \vect\branch(\su s) = \su ?u \STo t}
    {\lsemtyeq 1 {\matc {\boxit {(\su s)}}\ \vect\branch}{t[s/u]}{T'}}

    \inferrule
    {\lsemtyp[\Psi][\Delta] 0 t{S \func T} \\ \lsemtyp[\Psi][\Delta] 0 s S \\
      \lsemtyp 1 {\vect \branch}{\judge[\Delta] T \STo T'} \\
      \vect\branch(t~s) = ?u~?u' \STo t}
    {\lsemtyeq 1 {\matc {\boxit {(t~s)}}\ \vect\branch}{t[t/u, s/u']}{T'}}
  \end{mathpar}
\end{lemma}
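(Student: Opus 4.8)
The plan is to handle all three rules by the same pattern used in the two $\beta$-lemmas immediately above (the variable and $\lambda$ cases): each conclusion is a layer-$1$ semantic equivalence, so after noting that both sides are well typed (which follows from the premises together with the global- and local-substitution theorems), it suffices to fix an arbitrary $(\sigma; \rho) \in \intp{\Psi; \Gamma}^1_{\;\Phi;\Delta}$ and show the two interpretations coincide.

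First I would dispatch the $\ze$ rule, which is immediate. The scrutinee evaluates to $\intp{\boxit \ze}^1_{\;\Phi;\Delta}(\sigma; \rho) = \boxit \ze$, so the interpretation of the match enters the box branch and unfolds to $\tmatc(\ze, \vect\branch)_{\;\Phi;\Delta}(\sigma; \rho)$, which is by definition $\intp{t}^1_{\;\Phi;\Delta}(\sigma; \rho)$ where $\ze \STo t = \vect\branch(\ze)$. This is already the interpretation of the right-hand side, so the case closes exactly like the variable case.

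For the $\su$ and application rules I would follow the $\lambda$-case template. The scrutinee now evaluates to $\intp{\boxit {(\su s)}}^1_{\;\Phi;\Delta}(\sigma; \rho) = \boxit {(\su{(s[\sigma])})}$, respectively to $\boxit {((t[\sigma])~(s[\sigma]))}$. Since $\vect\branch$ dispatches on the head constructor of the code and global substitution preserves that head constructor, the selected branch is the same $\su{?u} \STo t$ (respectively $?u~?u' \STo t$) as for the unsubstituted scrutinee. Unfolding $\tmatc$ then yields $\intp{t}^1_{\;\Phi;\Delta}(\sigma, s[\sigma]/u; \rho)$, respectively $\intp{t}^1_{\;\Phi;\Delta}(\sigma, t[\sigma]/u, s[\sigma]/u'; \rho)$ (where here $t$ denotes the branch body). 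I would then rewrite the extended environment as a composite, using the identity $(\sigma_0, r/u) \circ \sigma = (\sigma_0 \circ \sigma), r[\sigma]/u$, obtaining $\intp{t}^1_{\;\Phi;\Delta}((\id, s/u) \circ \sigma; \rho)$, respectively $\intp{t}^1_{\;\Phi;\Delta}((\id, t/u, s/u') \circ \sigma; \rho)$, and finally invoke \Cref{lem:ct:glob-subst-gen} to pull the global substitution into the term, giving $\intp{t[s/u]}^1_{\;\Phi;\Delta}(\sigma; \rho)$, respectively $\intp{t[t/u, s/u']}^1_{\;\Phi;\Delta}(\sigma; \rho)$, which is exactly the right-hand side.

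I do not expect a genuine obstacle: once the scrutinee is evaluated, the argument is a mechanical unfolding of $\tmatc$ followed by a single application of \Cref{lem:ct:glob-subst-gen}, just as in the preceding lemma. The only points deserving care are (i) checking that the branch chosen by $\vect\branch$ is stable under $\sigma$ — i.e. that $(\su s)[\sigma]$ and $(t~s)[\sigma]$ keep their head constructors so that the same branch body is selected — and (ii) in the application case, tracking the simultaneous filling of the two slots $u$ and $u'$ so that it matches the composition identity $(\id, t/u, s/u') \circ \sigma = \sigma, t[\sigma]/u, s[\sigma]/u'$.
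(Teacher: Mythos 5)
Your proposal is correct and takes essentially the same route as the paper, whose entire proof of this lemma is ``we follow the same pattern as above'': evaluate the scrutinee to a boxed term, unfold $\tmatc$ to obtain the branch body interpreted in the extended global environment, and rewrite $\sigma, s[\sigma]/u$ as $(\id, s/u)\circ\sigma$ so that \Cref{lem:ct:glob-subst-gen} converts this into the interpretation of the substituted body, exactly as in the $\lambda$ case preceding it. Your point (i), that the branch selected by $\vect\branch$ is stable under $\sigma$ because global substitution preserves head constructors, is a genuine detail the paper leaves implicit.
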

\begin{proof}
  We follow the same pattern as above.
\end{proof}

\subsubsection{Fundamental Theorems}

Having all semantic judgments allows us to conclude the following fundamental
theorems:
\begin{theorem}[Fundamental] $ $
  \begin{itemize}
  \item If $\ltyping i t T$, then $\lsemtyp i t T$.
  \item If $\ltyping i \delta \Delta$, then $\lsemtyp i \delta \Delta$.
  \item If $\ltyping 1 {\branch}{\judge[\Delta] T \STo T'}$, then
    $\lsemtyp 1 {\branch}{\judge[\Delta] T \STo T'}$.
  \item If $\ltyequiv i s t T$, then $\lsemtyeq i s t T$.
  \item If $\ltyequiv i \delta{\delta'}\Delta$, then $\lsemtyeq i
    \delta{\delta'}\Delta$.
  \item If $\ltyequiv 1 {\branch}{\branch'}{\judge[\Delta] T \STo T'}$, then
    $\lsemtyeq 1 {\branch}{\branch'}{\judge[\Delta] T \STo T'}$.
  \end{itemize}
\end{theorem}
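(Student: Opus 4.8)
The plan is to establish all six statements at once by a single mutual structural induction on the derivations of the six syntactic judgments. Mutual induction is unavoidable here: the term typing rules invoke the branch and local-substitution typing judgments, and several equivalence rules appeal to typing through presupposition (\Cref{lem:ct:presup}), so the typing and equivalence derivations cannot be separated. For the three typing statements, recall that the semantic typing judgments are defined reflexively, $\lsemtyp i t T$ by $\lsemtyeq i t t T$ (and analogously for substitutions and branches). Hence each typing case reduces to the reflexive instance of the corresponding semantic congruence rule, whose premises are supplied by the induction hypotheses on the subderivations — these premises are again typing statements, which is exactly why the two families of judgments must be proved together.

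The core of the argument is to pair each syntactic rule with an already-established semantic rule. This work is almost entirely discharged in advance. The semantic PER rules are immediate since semantic equivalence at layer $1$ is point-wise equality of interpretations and at layer $0$ is syntactic equality; the semantic congruence and PER rules for branches, together with $\lsemtyp 1 \branch {\judge[\Delta] T \STo T'}$, are posited as definitions; the congruence rules for $u^\delta$ and for $\matc s\ \vect\branch$ are the two lemmas of the Congruence Rules subsection; and the $\beta$ rules for pattern matching against $\boxit x$, $\boxit \ze$, $\boxit{(\su s)}$, $\boxit{(\lambda x. s)}$ and $\boxit{(t~s)}$ are the lemmas of the $\beta$ Rules subsection, each relying on \Cref{lem:ct:glob-subst-gen}. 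The non-modal cases — $\ze$, $\su$, local variables, $\lambda$, application, as well as the function $\beta$ and $\eta$ rules and the congruence and $\beta$ rules for $\letbox u s t$ — transfer with only notational adjustment from the completeness development of \Cref{sec:st}, the function $\beta$ rule using \Cref{lem:ct:loc-subst}.

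The main obstacle will be the congruence case for $\matc s\ \vect\branch$, where the interpretation branches on whether $\intp{s}^1$ evaluates to some $\boxit{\cdot}$ or remains a neutral form $v$. Here I would first apply the induction hypothesis to the scrutinee to conclude that the two sides fall into the same case; in the neutral case I then descend into the branches via $\tnfbranch$, which forces me to compare the reified branch bodies under the weakened and extended environments $(\sigma[p(\id)], u/u; \rho[p(\id);\id])$, and this is where the induction hypotheses on the branch equivalences are consumed. The genuinely delicate subcase, flagged in the discussion around \citet{kavvos_intensionality_2021}, is a scrutinee of the form $\boxit{u^\delta}$: it must be kept neutral rather than reduced, so the analysis has to route it through $\tnfbranch$ exactly as for an ordinary neutral. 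Since the $\matc$ congruence lemma already isolates this subcase explicitly, no new reasoning is needed at the level of the fundamental theorem itself. Finally, the completeness theorem \Cref{thm:ct:compl} falls out by instantiating the layer-$1$ equivalence statement at the identity environment $\uparrow^{\Psi;\Gamma} \in \intp{\Psi;\Gamma}^1_{\;\Psi;\Gamma}$ and reifying both sides.
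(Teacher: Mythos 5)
Your proposal is correct and follows essentially the same route as the paper: the paper's (implicit) proof is exactly a mutual induction over the six judgments in which every case is discharged by the semantic rules established beforehand — the branch rules taken as definitions, the $u^\delta$ and $\matc$ congruence lemmas, the pattern-matching $\beta$ lemmas via \Cref{lem:ct:glob-subst-gen}, and the non-modal cases carried over from \Cref{sec:st} — after which \Cref{thm:ct:compl} follows by instantiating at $\uparrow^{\Psi;\Gamma}$. Your only inessential deviation is attributing the need for mutuality to presupposition, when it is really the typing premises occurring directly in the $\beta$/$\eta$ equivalence rules that force the typing and equivalence statements to be proved together; the conclusion is the same.
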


From here, we can prove the completeness theorem.
\begin{proof}[Proof of \Cref{thm:ct:compl}]
  Immediate by plugging in $\uparrow^{\Psi;\Gamma} \in \intp{\Psi;\Gamma}^1_{\;\Psi;\Gamma}$. 
\end{proof}

\subsection{Soundness}

The final theorem we would like to establish for contextual types is the soundness
proof. %
To our surprises, the soundness theorem is substantially more complex than the
$\tletbox$ formulation, due to pattern matching on code. %
The complexity comes from the need to maintain both syntactic and semantic information
of layer $0$ because pattern matching might potentially access both. %
This complexity leads to a rarely seen, inductively defined semantic judgments as
shown shortly. %

\subsubsection{Layer-$0$ Semantic Judgments}\labeledit{sec:prescont:semlayer0}

Since our core theory stays the same, we can completely reuse the gluing model at
layer $0$ as defined in~\Cref{sec:st:glue-0} without any adjustment. %
Its properties also hold automatically. %
However, what changes substantially due to pattern matching is the definition of
semantic judgment.
\begin{align*}
  \lSemtypPrime 0 t T &:= \forall \gamma : \Phi \To_g \Psi \tand \delta \sim \rho \in
                   \glu{\Gamma}^0_{\;\Phi;\Delta}. t[\gamma][\delta] \sim
                   \intp{t}^0_{\;\Phi;\Delta}(\gamma;\rho) \in
                   \glu{T}^0_{\;\Phi;\Delta}
\end{align*}
This judgment says that applying related environments from both syntax and semantics
to the same term $t$ at layer $0$ yields related results. %
This judgment is stable under syntactic equivalence due to 
\Cref{lem:st:glue-resp-equiv}. %
This means that dynamics is already introduced in this judgment and we in general have
lost the syntactic information of $t$. %
However, at layer $1$, in the semantic rule for typing pattern matching, we must have
access to the syntactic information and all subterms of the scrutinee must be
semantically well-typed. %
In short, $\lSemtypPrime 0 t T$ does not provide sufficient information to support
pattern matching. %
To remedy this issue, we re-introduce syntactic information back to the semantic
judgment. %
This leads to an inductively defined semantic judgment as follows:
\begin{mathpar}
  \inferrule
  {\iscore \Psi \\ \iscore \Gamma}
  {\lSemtyp 0 {\cdot}{\cdot}}

  \inferrule
  {\lSemtyp 0 {\delta}{\Delta} \\ \lSemtyp 0 {t}{T}}
  {\lSemtyp 0 {\delta, t/x}{\Delta, x : T}}

  \inferrule
  {\iscore \Psi \\ \iscore \Gamma}
  {\lSemtyp 0{\ze}{\Nat}}

  \inferrule
  {\lSemtyp 0{t}{\Nat} \\ \lSemtypPrime 0 {\su t} T}
  {\lSemtyp 0{\su t}{\Nat}}

  \inferrule
  {\lSemtyp 0 \delta \Delta \\ u : (\judge[\Delta] T) \in \Psi \\ \lSemtypPrime 0 {u^\delta} T}
  {\lSemtyp 0{u^\delta}{\cont[\Delta] T}}

  \inferrule
  {\lSemtyp[\Psi][\Gamma, x : S]{0}{t}{T} \\ \lSemtypPrime 0 {\lambda x. t}{S \func T}}
  {\lSemtyp 0{\lambda x. t}{S \func T}}

  \inferrule
  {\lSemtyp 0{t}{S \func T} \\ \lSemtyp 0{s}{S} \\ \lSemtypPrime 0 {t~s} T}
  {\lSemtyp 0{t\ s}{T}}
\end{mathpar}
This new definition clearly keeps track of all syntactic information that can be used
later in pattern matching. %
The following lemma states that $\lSemtyp 0 {t} T$ is a richer notion than $\lSemtypPrime 0
t T$.
\begin{lemma}
  If $\lSemtyp 0 {t} T$, then $\lSemtypPrime 0 t T$.
\end{lemma}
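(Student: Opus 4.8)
The plan is to prove the lemma by a straightforward induction on the derivation of $\lSemtyp 0 t T$. The essential point is that the inductively defined judgment $\lSemtyp 0 {\_} {\_}$ was engineered so that each term-forming rule already carries the corresponding point-wise judgment $\lSemtypPrime 0 {\_} {\_}$ among its premises; the only leaf rule lacking such a premise is the one for $\ze$. The lemma is therefore little more than a projection that extracts this bundled component, and the induction is almost entirely administrative.

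Concretely, for $t = \su{t'}$, $t = u^\delta$, $t = \lambda x. t'$, and $t = t'\ s$, the desired conclusion $\lSemtypPrime 0 t T$ is itself one of the premises of the rule that was applied, so in each of these cases I would simply read it off. It is worth noting that in the $u^\delta$ case I never need the induction hypothesis on the subderivation $\lSemtyp 0 \delta \Delta$: the packaged $\lSemtypPrime$ premise supplies exactly the goal. For the same reason I do not need to formulate or prove a companion statement about local substitutions, and the two local-substitution rules never arise here because the lemma is stated for terms.

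The single case that demands any computation is $t = \ze$. Here I would unfold $\lSemtypPrime 0 \ze \Nat$: fixing arbitrary $\gamma : \Phi \To_g \Psi$ and $\delta \sim \rho \in \glu{\Gamma}^0_{\;\Phi;\Delta}$, I compute $\ze[\gamma][\delta] = \ze$ and $\intp{\ze}^0_{\;\Phi;\Delta}(\gamma;\rho) = \ze$, reducing the goal to $\ze \sim \ze \in \glu{\Nat}^0_{\;\Phi;\Delta}$, that is, $\ze \sim \ze \in \Nat_{\;\Phi;\Delta}$. This is immediate from the defining rule of the $\Nat$ gluing relation together with the congruence rule $\ltyequiv 1 \ze \ze \Nat$.

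Accordingly, there is no genuine obstacle in this lemma; the work has already been front-loaded into the design of the inductive judgment, whose constructors each discharge their own semantic obligation. The proof merely confirms that $\lSemtyp 0 {\_} {\_}$ refines the point-wise judgment $\lSemtypPrime 0 {\_} {\_}$, which is exactly what is needed so that the richer judgment can later supply the syntactic information required by pattern matching at layer $1$ without sacrificing the semantic content of the original judgment.
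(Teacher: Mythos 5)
Your proof is correct and matches the paper's argument, which is simply ``by case analysis'': every term-forming rule of the inductive judgment $\lSemtyp 0 t T$ except the one for $\ze$ carries the goal $\lSemtypPrime 0 t T$ as an explicit premise, and the $\ze$ case is discharged by unfolding the definition and using the $\Nat$ gluing rule with $\ltyequiv 1 \ze \ze \Nat$, exactly as you do. Your observation that no induction hypotheses are ever invoked (so the ``induction'' degenerates to case analysis) is also consistent with the paper's phrasing.
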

\begin{proof}
  By case analysis. 
\end{proof}


Finally, we adjust the definition of semantic judgment for global substitutions:
\begin{mathpar}
  \inferrule*
  {\iscore \Psi}
  {\Semtyp[\Psi]{\cdot}{\cdot}}

  \inferrule*
  {\Semtyp[\Psi]{\sigma}{\Phi} \\ \lSemtyp 0 {t}{T}}
  {\Semtyp[\Psi]{\sigma, t/u}{\Phi, u : (\judge T)}}
\end{mathpar}

Monotonicity holds:
\begin{lemma}[Monotonicity] $ $
  \begin{itemize}
  \item If $\lSemtyp[\Psi] 0 {t}{T}$ and $\gamma : \Psi' \To_g \Psi$, then
    $\lSemtyp[\Psi'][\Phi] 0 {t[\gamma]}{T}$.
  \item If $\lSemtyp[\Psi] 0 {\delta}{\Delta}$ and $\gamma : \Psi' \To_g \Psi$, then
    $\lSemtyp[\Psi'] 0 {\delta[\gamma]}{T}$.
  \end{itemize}
\end{lemma}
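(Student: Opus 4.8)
The plan is to prove both statements simultaneously by mutual induction on the derivations of the inductively-defined judgments $\lSemtyp[\Psi] 0 t T$ and $\lSemtyp[\Psi] 0 \delta \Delta$. Before turning to that induction, I would first isolate the one genuinely semantic ingredient as a standalone helper: monotonicity for the \emph{non-inductive} (primed) judgment, namely that $\lSemtypPrime[\Psi] 0 t T$ together with $\gamma : \Psi' \To_g \Psi$ yields $\lSemtypPrime[\Psi'] 0 {t[\gamma]} T$. With that helper in hand, every side condition of the form $\lSemtypPrime 0 {-}{-}$ that appears in the inductive rules can be transported along $\gamma$ directly, so the induction proper only has to propagate the structural premises.

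For the helper I would unfold the goal $\lSemtypPrime[\Psi'] 0 {t[\gamma]} T$: fix an arbitrary $\gamma' : \Phi \To_g \Psi'$ and $\delta \sim \rho \in \glu{\Gamma}^0_{\;\Phi;\Delta}$, and aim at $t[\gamma][\gamma'][\delta] \sim \intp{t[\gamma]}^0_{\;\Phi;\Delta}(\gamma'; \rho) \in \glu{T}^0_{\;\Phi;\Delta}$. Instantiating the hypothesis $\lSemtypPrime[\Psi] 0 t T$ at the composite weakening $\gamma \circ \gamma' : \Phi \To_g \Psi$ gives $t[\gamma \circ \gamma'][\delta] \sim \intp{t}^0_{\;\Phi;\Delta}(\gamma \circ \gamma'; \rho)$. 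The syntactic side rewrites by functoriality of the weakening action, $t[\gamma \circ \gamma'] = t[\gamma][\gamma']$, and the semantic side by \Cref{lem:ct:gsubst-nat-gen}, which yields $\intp{t}^0_{\;\Phi;\Delta}(\gamma \circ \gamma'; \rho) = \intp{t[\gamma]}^0_{\;\Phi;\Delta}(\gamma'; \rho)$; so the two statements coincide. This is the contextual analogue of the monotonicity lemma already proved for the $\tletbox$ system, now routed through \Cref{lem:ct:gsubst-nat-gen} rather than its simpler predecessor.

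The mutual induction then proceeds rule by rule. The base cases ($\cdot$, $\ze$, and variables) are immediate: their side conditions are coreness facts such as $\iscore \Psi$, and since a global weakening only ever extends by core-typed bindings, $\iscore \Psi'$ holds as well. For the structural rules ($\su t$, $\lambda x. t$, $t\ s$, and the extension of local substitutions) I apply the inductive hypothesis to each $\lSemtyp 0 {-}{-}$ premise and the helper lemma to each $\lSemtypPrime 0 {-}{-}$ premise, then reassemble with the same rule after noting that global weakening commutes with the term formers, e.g. $(\su t)[\gamma] = \su{(t[\gamma])}$ and $(\lambda x. t)[\gamma] = \lambda x.\,(t[\gamma])$.

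The case I expect to carry the real weight is the global-variable rule for $u^\delta$. Here the conclusion to rebuild is $\lSemtyp[\Psi'] 0 {u[\gamma]^{\delta[\gamma]}}{\cont[\Delta] T}$, using $(u^\delta)[\gamma] = u[\gamma]^{\delta[\gamma]}$, and I must line up three premises at once: the local substitution $\delta$ is weakened by the mutual inductive hypothesis to $\lSemtyp[\Psi'] 0 {\delta[\gamma]}\Delta$; the membership $u : (\judge[\Delta] T) \in \Psi$ transports along $\gamma$ to $u[\gamma] : (\judge[\Delta] T) \in \Psi'$, with the contextual local context $\Delta$ left untouched since it consists of core types; and the remaining side condition $\lSemtypPrime 0 {u^\delta} T$ is carried across by the helper to $\lSemtypPrime[\Psi'] 0 {u[\gamma]^{\delta[\gamma]}} T$. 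Once these three fit the premises of the $u^\delta$ rule, the conclusion follows. The only delicate bookkeeping is keeping the layer-$0$ interpretation and the syntactic action of $\gamma$ in sync on $\delta$, which is exactly what \Cref{lem:ct:gsubst-nat-gen} guarantees.
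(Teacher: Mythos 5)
Your proposal is correct. Note that the paper offers no proof for this lemma at all — it simply asserts ``Monotonicity holds:'' — so the natural benchmark is the analogous lemma in the earlier $\tletbox$ system, whose proof is given as ``immediate by composition of global weakenings'' (for the universally-quantified judgment) plus induction. Your argument is exactly the right elaboration of that for the contextual setting: the helper lemma for $\lSemtypPrime 0 {-}{-}$ reproduces the composition-of-weakenings step, correctly routed through \Cref{lem:ct:gsubst-nat-gen} to commute the semantic environment weakening with the syntactic action of $\gamma$, and the mutual structural induction is forced by the new inductively-defined layer-$0$ judgment, with the $u^\delta$ case assembled from the mutual IH on $\delta$, transport of membership, and the helper. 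The only cosmetic quibble is your justification that the contextual local context $\Delta$ in the binding is ``left untouched since it consists of core types''; it is untouched simply because global weakenings do not act on the local contexts stored in global bindings — coreness is irrelevant to that point.
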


\begin{lemma}[Monotonicity]
  If $\Semtyp[\Psi]{\sigma}{\Phi}$ and $\gamma : \Psi' \To_g \Psi$, then
  $\Semtyp[\Psi']{\sigma[\gamma]}{\Phi}$.
\end{lemma}

Fundamental theorem also holds because our core language does not have substantial
change. %
The case of global variables is very simple to check due to \Cref{lem:st:ne-glue}. 
\begin{theorem}[Fundamental] $ $
  \begin{itemize}
  \item If $\ltyping 0 t T$, then $\lSemtyp 0 t T$.
  \item If $\ltyping 0 \delta \Delta$, then $\lSemtyp 0 \delta \Delta$.
  \end{itemize}
\end{theorem}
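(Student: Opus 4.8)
The plan is to prove both statements simultaneously by mutual induction on the typing derivations $\ltyping 0 t T$ and $\ltyping 0 \delta \Delta$. The inductive semantic judgment $\lSemtyp 0 t T$ is deliberately built to shadow the typing rules node-for-node, each node carrying \emph{in addition} the gluing-based judgment $\lSemtypPrime 0 t T$; so in every case I must manufacture two things, namely the inductive semantic judgments of the immediate subterms and the plain judgment $\lSemtypPrime 0 t T$ for the whole term. The former are handed to me directly by the induction hypotheses, and the matching inductive rule then assembles them with the latter. The local-substitution cases $\cdot$ and $\delta, t/x$ are immediate from the two corresponding rules of the inductive definition.

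To produce $\lSemtypPrime 0 t T$ for a compound term I simply replay the layer-$0$ semantic development of \Cref{sec:st}. Concretely, I first downgrade the inductive hypotheses on the subterms to their plain form using the lemma that $\lSemtyp 0 t T$ entails $\lSemtypPrime 0 t T$, and then apply the already-verified semantic rules for $\ze$, $\su$, local variables, $\lambda$-abstraction and application. Because the core type theory is unchanged, these rules and their proofs carry over verbatim; for instance, the $\lambda$ case threads a weakening $\gamma;\tau$ and an argument through $\glu{S}^0$ and invokes monotonicity of the gluing model (\Cref{lem:st:glue-mon,lem:st:glue-context-mon}), exactly as before. Thus the bulk of the argument is routine reuse of \Cref{sec:st}, modulo the bookkeeping of passing around the extra inductive evidence.

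The one genuinely new case, and the main obstacle, is the global variable $u^\delta$, where I must establish $\lSemtypPrime 0 {u^\delta} T$ from the interpretation $\intp{u^\delta}^0_{\;\Psi;\Gamma}(\gamma;\rho) = \uparrow^{T}_{\;\Psi;\Gamma}(u[\gamma]^{\theta})$ with $\theta := \downarrow^{\Delta}_{\;\Psi;\Gamma}(\intp{\delta}^0_{\;\Psi;\Gamma}(\gamma;\rho))$. Fixing $\gamma_0 : \Phi \To_g \Psi$ and $\delta_0 \sim \rho \in \glu{\Gamma}^0_{\;\Phi;\Delta}$, the syntactic side rewrites to $u^\delta[\gamma_0][\delta_0] = u[\gamma_0]^{\delta[\gamma_0]\circ\delta_0}$, while \Cref{lem:ct:reif-lsubst} guarantees that $u[\gamma_0]^{\theta}$ is a well-typed neutral; hence \Cref{lem:st:ne-glue} supplies $u[\gamma_0]^{\theta} \sim \uparrow^{T}_{\;\Phi;\Delta}(u[\gamma_0]^{\theta}) \in \glu{T}^0_{\;\Phi;\Delta}$. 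It remains to bridge the two neutrals: the induction hypothesis on $\delta$ supplies $\delta[\gamma_0]\circ\delta_0 \sim \intp{\delta}^0_{\;\Phi;\Delta}(\gamma_0;\rho)$ in the local-context gluing model, and the local-substitution analogue of \Cref{lem:st:glue-nf} turns this into $\delta[\gamma_0]\circ\delta_0 \approx \theta$, whence congruence gives $u[\gamma_0]^{\delta[\gamma_0]\circ\delta_0} \approx u[\gamma_0]^{\theta}$ and \Cref{lem:st:glue-resp-equiv} transports the gluing along this equivalence. Combining this plain judgment with the inductive hypothesis on $\delta$ discharges the inductive rule for $u^\delta$, completing the case and the induction.
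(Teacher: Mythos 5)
Your proposal is correct and is essentially the paper's own argument: the paper's (very terse) proof likewise reduces the theorem to showing that every $\lSemtypPrime{0}{t}{T}$ premise in the inductive definition of $\lSemtyp{0}{t}{T}$ is derivable, reusing the layer-$0$ fundamental theorem of the earlier non-contextual system for the core cases and \Cref{lem:st:ne-glue} for global variables. Your explicit bridging step for $u^\delta$ --- relating $u[\gamma_0]^{\delta[\gamma_0]\circ\delta_0}$ to the reflected neutral $u[\gamma_0]^{\theta}$ via \Cref{lem:ct:reif-lsubst}, a componentwise analogue of \Cref{lem:st:glue-nf}, the congruence rule for global variables, and \Cref{lem:st:glue-resp-equiv} --- is precisely the detail hidden behind the paper's remark that this case is ``very simple to check''.
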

\begin{proof}
  This is effectively the same to prove that all $\lSemtypPrime {0} {t} T$ premises in the
  definition of $\lSemtyp 0 t T$ are redundant, but this is just the same as the
  fundamental theorems at layer $0$ in our previous system.
\end{proof}

\subsubsection{Layer-$1$ Gluing Model}

With the adjustments at layer $0$, things fall into one piece automatically at layer
$1$. %
We follow our noses to make adjustments at layer $1$. %
First, we generalize the gluing of contextual types:
\begin{mathpar}
  \inferrule*
  {\ltyequiv 1 t{\boxit s}{\cont[\Delta]T} \\ \lSemtyp[\Psi][\Delta] 0 {s} T}
  {t \sim {\boxit s} \in {\cont[\Delta]T}_{\;\Psi;\Gamma}}

  \inferrule*
  {\ltyequiv 1 t v{\cont[\Delta]T}}
  {t \sim v \in {\cont[\Delta]T}_{\;\Psi;\Gamma}}
\end{mathpar}
Notice here in the first case, $\lSemtyp[\Psi][\Delta] 0 {t} T$ is the inductively
defined semantic judgment at layer $0$, so the information of $t$ is preserved
also in the semantics. %
Then we define the gluing of types at layer $1$:
\begin{align*}
  \glu{T}^1_{\;\Psi;\Gamma} &\subseteq \Exp \times \intp{T}_{\;\Psi;\Gamma} \\
  \glu{\Nat}^1_{\;\Psi;\Gamma} &:= \Nat_{\;\Psi;\Gamma} \\
  \glu{\cont[\Delta] T}^1_{\;\Psi;\Gamma} &:= {\cont[\Delta] T}_{\;\Psi;\Gamma} \\ 
  \glu{S \func T}^1_{\;\Psi;\Gamma} &:= \{(t, a) \sep \forall
                              \gamma; \tau : \Phi; \Delta \To \Psi;\Gamma, s \sim b \in
                              \glu{S}^1_{\;\Phi;\Delta}. t[\gamma; \tau]\ s \sim a(\gamma; \tau, b) \in
                              \glu{T}^1_{\;\Phi;\Delta} \} 
\end{align*}

Finally we define the semantic judgments for both types and local substitutions at
layer $1$:
\begin{definition}
  \begin{align*}
    \lSemtyp 1 t T &:= \forall \Semtyp[\Phi]\sigma\Psi \tand \delta \sim \rho \in
                     \glu{\Gamma}^1_{\;\Phi;\Delta}. t[\sigma;\delta] \sim
                     \intp{t}^1_{\;\Phi;\Delta}(\sigma;\rho) \in
                     \glu{T}^1_{\;\Phi;\Delta} \\
    \lSemtyp 1 {\delta'} {\Delta'} &:= \forall \Semtyp[\Phi]\sigma\Psi \tand \delta \sim \rho \in
                     \glu{\Gamma}^1_{\;\Phi;\Delta}. \delta'[\sigma] \circ \delta \sim
                                     \intp{\delta'}^1_{\;\Phi;\Delta}(\sigma;\rho) \in
                                     \glu{\Delta'}^1_{\;\Phi;\Delta} 
  \end{align*}
\end{definition}

\subsubsection{Semantic Rules at Layer $1$}

Finally we move on to the semantic rules at layer $1$. %
We skip those terms from core type theory because the proof would be identical to the
previous development.
\begin{lemma}
  \begin{mathpar}
    \inferrule
    {\iscore \Psi \\ \iscore \Gamma}
    {\lSemtyp 1 {\cdot}{\cdot}}

    \inferrule
    {\lSemtyp 1 {\delta}{\Delta} \\ \lSemtyp 1 {t}{T}}
    {\lSemtyp 1 {\delta, t/x}{\Delta, x : T}}

    \inferrule
    {\istype \Gamma \\ \lSemtyp[\Psi][\Delta] 0 t T}
    {\lSemtyp{1}{\boxit t}{\cont[\Delta] T}}
\end{mathpar}
\end{lemma}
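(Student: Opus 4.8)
The statement collects the base and step cases of the layer-$1$ semantic judgment for local substitutions together with the introduction rule for contextual types. The plan is to prove each clause by unfolding the relevant layer-$1$ semantic judgment against an arbitrary $\Semtyp[\Phi]\sigma\Psi$ and related pair $\delta \sim \rho \in \glu{\Gamma}^1$, and then matching the result against the premises. The first two clauses are immediate bookkeeping; the genuine content is in the box rule, where the work is to feed the correct layer-$0$ semantic judgment into the gluing relation for $\cont[\Delta]{T}$.

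For the empty substitution, unfolding $\lSemtyp 1 {\cdot}{\cdot}$ requires $\cdot[\sigma] \circ \delta \sim \intp{\cdot}^1(\sigma;\rho) \in \glu{\cdot}^1$; the syntactic side reduces to $\cdot$ and the semantic side to $*$, so the goal is $(\cdot, *) \in \glu{\cdot}^1$, which holds by definition. For the extension rule I would use $(\delta', t/x)[\sigma] = (\delta'[\sigma], t[\sigma]/x)$ and the definition of composition to rewrite the syntactic side as $(\delta'[\sigma] \circ \delta,\ t[\sigma][\delta]/x)$, and the interpretation clause for extended substitutions to rewrite the semantic side as the pair $(\intp{\delta'}^1(\sigma;\rho),\ \intp{t}^1(\sigma;\rho))$. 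Membership in $\glu{\Delta, x : T}^1$ then splits into exactly the two premises $\lSemtyp 1 {\delta'}{\Delta}$ and $\lSemtyp 1 {t}{T}$, using $t[\sigma;\delta] = t[\sigma][\delta]$. Both clauses are thus purely definitional.

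For the box rule I would fix $\Semtyp[\Phi]\sigma\Psi$ and $\delta \sim \rho \in \glu{\Gamma}^1_{\;\Phi;\Gamma'}$ (renaming the target local context to $\Gamma'$ to avoid clashing with the code context $\Delta$ in $\cont[\Delta]{T}$). Since box blocks local substitutions, the syntactic side $(\boxit t)[\sigma][\delta]$ collapses to $\boxit{(t[\sigma])}$, and the interpretation clause gives $\intp{\boxit t}^1_{\;\Phi;\Gamma'}(\sigma;\rho) = \boxit{(t[\sigma])}$, so the goal becomes $\boxit{(t[\sigma])} \sim \boxit{(t[\sigma])} \in \glu{\cont[\Delta] T}^1_{\;\Phi;\Gamma'}$. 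I would discharge this through the first gluing clause for contextual types, which demands (i) $\ltyequiv[\Phi][\Gamma'] 1 {\boxit{(t[\sigma])}}{\boxit{(t[\sigma])}}{\cont[\Delta] T}$ and (ii) $\lSemtyp[\Phi][\Delta] 0 {t[\sigma]} T$. Requirement (i) is reflexivity of equivalence, whose well-typedness side condition follows by applying the global substitution theorem (and $\istype \Gamma'$ extracted from $\delta \sim \rho$) to the typing of $t$.

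The crux is requirement (ii): I must produce the inductively defined layer-$0$ semantic judgment for $t[\sigma]$ from the premise $\lSemtyp[\Psi][\Delta] 0 t T$. The cleanest route is to observe that the inductive judgment $\lSemtyp 0$ entails syntactic typing (a routine induction on its derivation, since each of its rules mirrors a typing rule), so from the premise I extract $\ltyping[\Psi][\Delta] 0 t T$; then $\typing[\Phi]\sigma\Psi$ (recovered from $\Semtyp[\Phi]\sigma\Psi$) together with the global substitution theorem yields $\ltyping[\Phi][\Delta] 0 {t[\sigma]} T$, and the layer-$0$ fundamental theorem converts this back to $\lSemtyp[\Phi][\Delta] 0 {t[\sigma]} T$. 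The main obstacle is exactly this preservation of the layer-$0$ semantic judgment under the global substitution $\sigma$: if one instead proves it directly by induction rather than round-tripping through syntax, the global-variable case $u^\delta[\sigma] = \sigma(u)[\delta[\sigma]]$ is delicate, requiring the semantic well-typedness of $\sigma(u)$ supplied by $\Semtyp[\Phi]\sigma\Psi$, the induction hypothesis on $\delta$, and a semantic layer-$0$ local substitution lemma to combine them.
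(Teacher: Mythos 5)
Your proof is correct. For the two local-substitution clauses it coincides with the paper, whose entire proof reads ``Immediate by following the definition.'' The substance is in the box clause: you rightly observe that after unfolding, the gluing clause for $\cont[\Delta]{T}$ demands $\lSemtyp[\Phi][\Delta] 0 {t[\sigma]}{T}$, i.e.\ the inductively defined layer-$0$ semantic judgment for the \emph{substituted} term in the \emph{new} global context $\Phi$, whereas the rule's premise only provides $\lSemtyp[\Psi][\Delta] 0 {t}{T}$. The paper silently skips this transport (its \Cref{sec:st} analogue says the premise ``is remembered by the gluing model,'' which only matches the gluing clause up to the missing substitution), so your elaboration fills a real gap rather than taking a genuinely different route. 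Moreover, your round-trip is sound and uses only results the paper states: semantic-to-syntactic typing is a routine induction on the inductive judgment (every rule mirrors a typing rule), $\Semtyp[\Phi]\sigma\Psi$ yields $\typing[\Phi]\sigma\Psi$ the same way, the global substitution theorem then gives $\ltyping[\Phi][\Delta] 0 {t[\sigma]}{T}$, and the layer-$0$ fundamental theorem converts back to $\lSemtyp[\Phi][\Delta] 0 {t[\sigma]}{T}$. Your closing remark is also apt: a direct induction showing stability of the inductive layer-$0$ judgment under semantic global substitutions would get stuck on the case $u^\delta[\sigma] = \sigma(u)[\delta[\sigma]]$ without an additional semantic local-substitution lemma, and the detour through syntax is exactly what makes that machinery unnecessary.
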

\begin{proof}
  Immediate by following the definition. 
\end{proof}

\begin{lemma}
  \begin{mathpar}
    \inferrule
    {\lSemtyp 1{\delta'}{\Delta'} \\ u : (\judge[\Delta'] T) \in \Psi}
    {\lSemtyp 1{u^{\delta'}}{\cont[\Delta'] T}}
  \end{mathpar}
\end{lemma}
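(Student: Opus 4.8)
The plan is to unfold the layer-$1$ semantic judgment and close the goal by routing the two hypotheses through the layer-$0$ semantic judgment that $\sigma$ carries for $u$. First I would fix arbitrary $\Semtyp[\Phi]{\sigma}{\Psi}$ and $\delta \sim \rho \in \glu{\Gamma}^1_{\;\Phi;\Delta}$, reducing the goal to
\[ u^{\delta'}[\sigma;\delta] \sim \intp{u^{\delta'}}^1_{\;\Phi;\Delta}(\sigma;\rho) \in \glu{T}^1_{\;\Phi;\Delta}. \]
Here I would rewrite both sides into a common shape: by the interpretation of global variables at layer $1$ the right-hand side is $\intp{\sigma(u)}^0_{\;\Phi;\Delta}(\id;\intp{\delta'}^1_{\;\Phi;\Delta}(\sigma;\rho))$, while by the definitions of global- and local-substitution application together with local composition the left-hand side rewrites to $\sigma(u)[(\delta'[\sigma]) \circ \delta]$.

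Next I would harvest what the premises provide. Inverting $\Semtyp[\Phi]{\sigma}{\Psi}$ at the binding $u : (\judge[\Delta'] T) \in \Psi$ yields the inductive judgment $\lSemtyp[\Phi][\Delta'] 0 {\sigma(u)} T$, and hence its open form $\lSemtypPrime[\Phi][\Delta'] 0 {\sigma(u)} T$ via the lemma relating the two. Instantiating the premise $\lSemtyp 1{\delta'}{\Delta'}$ at the same data gives
\[ (\delta'[\sigma]) \circ \delta \sim \intp{\delta'}^1_{\;\Phi;\Delta}(\sigma;\rho) \in \glu{\Delta'}^1_{\;\Phi;\Delta}. \]
Since a contextual type is well-formed only when $\iscore{\Delta'}$ and $\iscore T$, \Cref{lem:st:glu-0-to-1} (applied pointwise along $\Delta'$) identifies $\glu{\Delta'}^1_{\;\Phi;\Delta}$ with $\glu{\Delta'}^0_{\;\Phi;\Delta}$ and $\glu{T}^1$ with $\glu{T}^0$.

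Then I would instantiate the open layer-$0$ judgment for $\sigma(u)$ with the weakening $\id : \Phi \To_g \Phi$ and the glued local substitution $(\delta'[\sigma]) \circ \delta \sim \intp{\delta'}^1_{\;\Phi;\Delta}(\sigma;\rho)$ obtained above. This delivers precisely
\[ \sigma(u)[(\delta'[\sigma]) \circ \delta] \sim \intp{\sigma(u)}^0_{\;\Phi;\Delta}(\id;\intp{\delta'}^1_{\;\Phi;\Delta}(\sigma;\rho)) \in \glu{T}^0_{\;\Phi;\Delta}, \]
which is the rewritten goal once $\glu{T}^0 = \glu{T}^1$ is invoked.

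The hard part is not conceptual but is the coherence bookkeeping in the two rewrites: I must check that $u^{\delta'}[\sigma;\delta]$ really collapses to $\sigma(u)[(\delta'[\sigma]) \circ \delta]$ under the substitution-composition equations, and, crucially, that the semantic environment $\intp{\delta'}^1_{\;\Phi;\Delta}(\sigma;\rho)$ is the legitimate counterpart of the syntactic substitution $(\delta'[\sigma]) \circ \delta$ — which is exactly the content of $\lSemtyp 1{\delta'}{\Delta'}$. The single load-bearing semantic step is the identification $\glu{\Delta'}^1 = \glu{\Delta'}^0$: it is what permits the layer-$1$ evaluation of $\delta'$ to be fed into the layer-$0$ semantic judgment for $\sigma(u)$, and it is sound precisely because contextual types restrict $\Delta'$ to core local contexts.
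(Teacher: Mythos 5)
Your proposal is correct and takes essentially the same route as the paper's proof: unfold the layer-$1$ judgment, rewrite $u^{\delta'}[\sigma;\delta]$ to $\sigma(u)[\delta'[\sigma] \circ \delta]$ and the interpretation to $\intp{\sigma(u)}^0_{\;\Phi;\Delta}(\id;\intp{\delta'}^1_{\;\Phi;\Delta}(\sigma;\rho))$, then close the goal by instantiating the layer-$0$ semantic judgment for $\sigma(u)$ (obtained by inverting $\Semtyp[\Phi]{\sigma}{\Psi}$ at the binding for $u$) with $\id$ and the glued pair supplied by $\lSemtyp 1{\delta'}{\Delta'}$, using \Cref{lem:st:glu-0-to-1} to identify the layer-$0$ and layer-$1$ gluing models on core types. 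The only difference is that you make explicit the bookkeeping the paper leaves implicit, namely the passage from the inductive layer-$0$ judgment to its open (universally quantified) form and the pointwise application of \Cref{lem:st:glu-0-to-1} along $\Delta'$.
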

\begin{proof}
  Assume $\Semtyp[\Phi]\sigma\Psi$ and
  $\delta \sim \rho \in \glu{\Gamma}^1_{\;\Phi;\Delta}$. %
  Then we should prove
  \begin{align*}
    \sigma(u)[\delta'[\sigma] \circ \delta] &\sim
    \intp{\sigma(u)}^0_{\;\Phi;\Delta}(\id; \intp{\delta'}^1_{\;\Phi;\Delta}(\sigma;
                                              \rho)) \in \glu{T}^0_{\;\Phi;\Delta}
                                              =\glu{T}^1_{\;\Phi;\Delta}
                                              \tag{by \Cref{lem:st:glu-0-to-1}}
  \end{align*}
  Moreover, we know $\lSemtyp[\Psi][\Delta']0{\sigma(u)}T$, $\id : \Psi \To_g \Psi$
  and $\delta'[\sigma] \circ \delta \sim \intp{\delta'}^1_{\;\Phi;\Delta}(\sigma;
  \rho)  \in \glu{\Delta'}^1_{\;\Phi;\Delta}$. %
  This is sufficient to prove the goal.
\end{proof}

Finally, we prove the semantic rule for pattern matching. %
We define the following judgments for the semantic well-typedness of branches:
\begin{mathpar}
  \inferrule
  {\iscore \Delta \\ \lSemtyp 1 t T'}
  {\lSemtyp 1 {\var x \STo t}{\judge[\Delta] T \STo T'}}
  
  \inferrule
  {\iscore \Delta \\ \lSemtyp 1 t T'}
  {\lSemtyp 1 {\ze \STo t}{\judge[\Delta] \Nat \STo T'}}

  \inferrule
  {\lSemtyp[\Psi, u : (\judge[\Delta]\Nat)] 1 t T'}
  {\lSemtyp 1 {\su ?u \STo t}{\judge[\Delta] \Nat \STo T'}}

  \inferrule
  {\lSemtyp[\Psi, u : (\judge[\Delta, x : S]T)] 1 t T'}
  {\lSemtyp 1 {\lambda x. ?u \STo t}{\judge[\Delta] S \func T \STo T'}}

  \inferrule
  {\forall \iscore S.~ \lSemtyp[\Psi, u : (\judge[\Delta]S \func T), u' : (\judge[\Delta]S)] 1 t T'}
  {\lSemtyp 1 {?u~?u' \STo t}{\judge[\Delta] T \STo T'}}
\end{mathpar}

\begin{lemma}
  \begin{mathpar}
    \inferrule
    {\lSemtyp 1 {s}{\cont[\Delta'] T} \\ \lSemtyp 1 {\vect \branch}{\judge[\Delta'] T \STo T'}}
    {\lSemtyp 1 {\matc s\ \vect\branch} T'}
  \end{mathpar}
\end{lemma}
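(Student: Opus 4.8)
The plan is to unfold $\lSemtyp 1 {\matc s\ \vect\branch} T'$: fix $\Semtyp[\Phi]\sigma\Psi$ and $\delta \sim \rho \in \glu{\Gamma}^1_{\;\Phi;\Delta}$ and prove
\[
(\matc s\ \vect\branch)[\sigma;\delta] \sim \intp{\matc s\ \vect\branch}^1_{\;\Phi;\Delta}(\sigma;\rho) \in \glu{T'}^1_{\;\Phi;\Delta}.
\]
First I would feed the same $\sigma$ and $\delta \sim \rho$ into the hypothesis $\lSemtyp 1 {s}{\cont[\Delta'] T}$, obtaining $s[\sigma;\delta] \sim \intp{s}^1_{\;\Phi;\Delta}(\sigma;\rho) \in {\cont[\Delta'] T}_{\;\Phi;\Delta}$. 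Inverting this contextual-type gluing relation yields exactly two shapes for $\intp{s}^1_{\;\Phi;\Delta}(\sigma;\rho)$, and they line up precisely with the two clauses defining $\intp{\matc s\ \vect\branch}^1$, so the whole argument splits along this distinction.

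\textbf{Neutral case.} Here $\intp{s}^1_{\;\Phi;\Delta}(\sigma;\rho) = v$ with $\ltyequiv 1 {s[\sigma;\delta]}{v}{\cont[\Delta'] T}$, and the interpretation equals $\uparrow^{T'}_{\;\Phi;\Delta}(\matc v\ \vect\nbranch)$ where $\vect\nbranch := \tnfbranch(\vect\branch)_{\;\Phi;\Delta}(\sigma;\rho)$. This mirrors the neutral $\tletbox$ case of \Cref{sec:st}: I would check that $\matc v\ \vect\nbranch$ is a well-typed neutral form, invoke the layer-$1$ version of \Cref{lem:st:ne-glue} to get $\matc v\ \vect\nbranch \sim \uparrow^{T'}_{\;\Phi;\Delta}(\matc v\ \vect\nbranch) \in \glu{T'}^1_{\;\Phi;\Delta}$, and then close the gap with $(\matc s\ \vect\branch)[\sigma;\delta] \approx \matc v\ \vect\nbranch$ via \Cref{lem:st:glue-resp-equiv}. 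Establishing that equivalence reduces, branch by branch, to showing each body is equivalent to its reification $\downarrow^{T'}(\intp{t}^1(\cdots))$, which is \Cref{lem:st:glue-nf} applied after extending the global context with the branch's pattern variables; the gluings required there are supplied by the branch hypotheses $\lSemtyp 1 {\vect\branch}{\judge[\Delta'] T \STo T'}$.

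\textbf{Box case.} Here $\intp{s}^1_{\;\Phi;\Delta}(\sigma;\rho) = \boxit{s'}$ with $\ltyequiv 1 {s[\sigma;\delta]}{\boxit{s'}}{\cont[\Delta'] T}$ and, crucially, $\lSemtyp[\Phi][\Delta'] 0 {s'} T$ in the \emph{inductive} layer-$0$ semantic judgment of \Cref{sec:prescont:semlayer0}, which retains the full syntax of the code. I would case-analyze $s'$ along this judgment. For each genuine constructor ($x$, $\ze$, $\su s''$, $\lambda x. s''$, $t_1~t_2$), the interpretation unfolds via $\tmatc(s', \vect\branch)_{\;\Phi;\Delta}(\sigma;\rho)$ to the evaluation of the selected branch body under $\sigma$ extended with the captured subterms; covering of $\vect\branch$ (guaranteed by its typing) makes the branch lookup succeed. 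The matching $\beta$-rule rewrites $(\matc s\ \vect\branch)[\sigma;\delta] \approx \matc{\boxit{s'}}\ (\vect\branch[\sigma;\delta])$ to that same body under the corresponding \emph{syntactic} substitution. I would then build the extended semantic global substitution that adjoins the captured subterms to $\sigma$ (its sub-derivations are exactly the $\lSemtyp 0 {\cdots}{\cdots}$ premises delivered by the inductive judgment on $s'$), instantiate the appropriate branch hypothesis from $\lSemtyp 1 {\vect\branch}{\judge[\Delta'] T \STo T'}$ — for applications at the argument type $S$ read off the derivation, which is permitted because $\iscore S$ — and conclude with \Cref{lem:st:glue-resp-equiv}. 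The remaining subcase $s' = u^{\delta''}$ is precisely the \citet{kavvos_intensionality_2021} situation: the match stays blocked, $\tmatc$ returns $\uparrow^{T'}_{\;\Phi;\Delta}(\matc{\boxit{u^{\delta''}}}\ \vect\nbranch)$, and I would dispatch it exactly as in the neutral case above.

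\textbf{Main obstacle.} I expect the real work to be the box/constructor subcases, namely reconciling the syntactic substitution produced by the $\beta$-rules with the semantic bookkeeping of $\tmatc$ under the extended global substitution; this is exactly where \Cref{lem:ct:glob-subst-gen} is needed, together with \Cref{lem:st:glu-0-to-1} to pass between $\glu{T}^0$ and $\glu{T}^1$ when the captured code is fed back through a global assumption. The application branch is the most delicate: its hypothesis is universally quantified over all core argument types and the argument type cannot be recovered from $T'$, so $S$ must be threaded consistently from the layer-$0$ derivation of $s'$ through the extended global substitution into the branch body. The neutral and blocked subcases are comparatively routine, being faithful copies of the neutral $\tletbox$ argument from \Cref{sec:st}.
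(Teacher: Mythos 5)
Your proposal is correct and follows essentially the same route as the paper's proof: unfold the semantic judgment, split on whether $\intp{s}^1$ is neutral or a $\tbox$, handle the neutral case by reifying the branch bodies (\Cref{lem:st:glue-nf} under the extended global context) and reflecting the neutral $\tmatc$ expression (\Cref{lem:st:ne-glue}, \Cref{lem:st:glue-resp-equiv}), and handle the $\tbox$ case by case-analyzing the inductively defined layer-$0$ semantic judgment of \Cref{sec:prescont:semlayer0}, extending $\sigma$ with the captured subterms, instantiating the branch hypothesis, and closing with the $\beta$-rule plus \Cref{lem:st:glue-resp-equiv}, with $u^{\delta''}$ treated as blocked exactly as in the paper. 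Your identification of the inductive layer-$0$ judgment as the device that preserves the syntactic information needed for the constructor subcases, and of the application branch's universally quantified hypothesis as the delicate point, matches the paper's own emphasis.
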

\begin{proof}
  Assume $\Semtyp[\Phi]\sigma\Psi$ and
  $\delta \sim \rho \in \glu{\Gamma}^1_{\;\Phi;\Delta}$. %
  We have
  \begin{align*}
    s[\sigma;\delta] \sim \intp{s}^1_{\;\Phi;\Delta}(\sigma;\rho)
    \in {\cont[\Delta'] T}_{\;\Phi;\Delta}
  \end{align*}
  By \Cref{lem:st:glue-nf}, we have
  \begin{align*}
    \ltyequiv[\Phi][\Delta] 1 {s[\sigma;\delta]}{\downarrow^{\cont[\Delta']T}_{\;\Phi;\Delta}(\intp{s}^1_{\;\Phi;\Delta}(\sigma;\rho))}{\cont[\Delta']T}
  \end{align*}
  
  Then we do case analysis:
  \begin{itemize}[label=Case]
  \item $\intp{s}^1_{\;\Phi;\Delta}(\sigma;\rho) = v$ for some neutral $v$. %
    Then
    \begin{align*}
      \ltyequiv[\Phi][\Delta] 1 {s[\sigma;\delta]}{\downarrow^{\cont[\Delta']T}_{\;\Phi;\Delta}(v)}{\cont[\Delta']T}
    \end{align*}
    Then we evaluate the normal forms of all branches. %
    We just pick one branch to verify and other branches are similar.
    \begin{mathpar}
      \inferrule
      {\lSemtyp[\Psi, u : (\judge[\Delta', x : S]{T''})] 1 t T'}
      {\lSemtyp 1 {\lambda x. ?u \STo t}{\judge[\Delta'] S \func T'' \STo T'}}
    \end{mathpar}
    We construct
    \begin{align*}
      \Semtyp[\Phi, u : (\judge[\Delta', x : S]{T''})]{\sigma[p(\id)], u/u}{\Psi, u : (\judge[\Delta', x : S]{T''})}
    \end{align*}
    and
    $\delta[p(\id)] \sim \rho[p(\id)] \in \glu{\Gamma}^1_{\;\Phi, u : (\judge[\Delta', x : S]{T''});\Delta}$. %
    We have
    \begin{align*}
      t[\sigma[p(\id)], u/u;\delta[p(\id)]] \sim \intp{t}^1_{\;\Phi;\Delta}(\sigma[p(\id)], u/u;\rho[p(\id)])
      \in \glu{T'}^1_{\;\Phi, u : (\judge[\Delta', x : S]T);\Delta} 
    \end{align*}
    By \Cref{lem:st:glue-nf}, we have
    \begin{align*}
      \ltyequiv[\Phi, u : (\judge[\Delta', x : S]{T''})][\Delta] 1 {t[\sigma[p(\id)],
      u/u;\delta[p(\id)]]}{\downarrow^{T'}_{\;\Phi, u : (\judge[\Delta', x :
      S]T);\Delta}(\intp{t}^1_{\;\Phi, u : (\judge[\Delta', x :
      S]T);\Delta}(\sigma[p(\id)], u/u;\rho[p(\id)]))}{T'}
    \end{align*}

    We apply similar proofs to other branches, yielding
    \begin{align*}
      \ltyequiv[\Phi][\Delta] 1 {\vect \branch[\sigma;\delta]}{\vect \nbranch}{\judge[\Delta'] T \STo T'}
    \end{align*}
    where $\vect\nbranch := \tnfbranch(\vect\branch)_{\;\Phi;\Delta}(\sigma; \rho)$. %
    Since $\matc{v}\ \vect\nbranch$ is neutral, we apply
    \Cref{lem:st:ne-glue,lem:st:glue-resp-equiv} to obtain the goal.
    
  \item $\intp{s}^1_{\;\Phi;\Delta}(\sigma;\rho) = \boxit {s'}$ and
    $\lSemtyp[\Phi][\Delta'] 0 {s'}T$. %
    We further analyze $\lSemtyp[\Phi][\Delta'] 0 {s'}T$.

    \begin{itemize}[label=Subcase]
    \item $s' = u^{\delta'}$, then we hit the neutral case. %
      We follow the previous case and construct a neutral again.
      
    \item $s' = \lambda x. s''$ and
      $\lambda x. ?u \STo t := \vect\branch(\lambda x. s'')$,
      \begin{mathpar}
        \inferrule
        {\lSemtyp[\Phi][\Delta', x : S]{0}{s''}{T''} \\ \lSemtyp[\Phi][\Delta'] 0 {' \lambda x. s''}{S \func T''}}
        {\lSemtyp[\Phi][\Delta'] 0{\lambda x. s''}{S \func T''}}
      \end{mathpar}
      We evaluate as follows
      \begin{align*}
        \intp{\matc s \ \vect\branch}^1_{\;\Phi;\Delta}(\sigma; \rho)
        &= \tmatc(\lambda x. s'', \vect\branch)_{\;\Phi;\Delta}(\sigma; \rho) \\
        &= \intp{t}^1_{\;\Phi;\Delta}(\sigma, s''/u; \rho)
      \end{align*}
      Notice that
      \begin{align*}
        \Semtyp[\Phi]{\sigma, s''/u}{\Psi, u : (\judge[\Delta', x : S]{T''})}
      \end{align*}
      Hence
      \begin{align*}
        t[\sigma, s''/u; \delta]
        \sim \intp{t}^1_{\;\Phi;\Delta}(\sigma, s''/u; \rho) \in \glu{T'}^1_{\;\Phi;\Delta}
      \end{align*}
      Moreover,
      \begin{align*}
        t[\sigma[p(\id)],u/u; \delta[p(\id)]][\id, s''/u]
        &= t[(\sigma[p(\id)],u/u) \circ (\id, s''/u); \delta[p(\id)]] \\
        &= t[\sigma, s''/u; \delta]
      \end{align*}
      Therefore
      \begin{align*}
        \ltyequiv[\Phi][\Delta] 1 {\matc s\ \vect\branch[\sigma;\delta]}{t[\sigma, s''/u; \delta]}T'
      \end{align*}
      and
      \begin{align*}
        \intp{\matc s\ \vect\branch[\sigma;\delta]}^1_{\;\Phi;\Delta}(\sigma; \rho) = \intp{t}^1_{\;\Phi;\Delta}(\sigma, s''/u; \rho)
      \end{align*}
      Combining both and \Cref{lem:st:glue-resp-equiv} gives the desired goal.
      
    \item The other cases of $s'$ follow the same principle by using
      \Cref{lem:st:glue-resp-equiv} and prove the goals.
    \end{itemize}
  \end{itemize}
\end{proof}

\subsubsection{Fundamental Theorems at Layer $1$}

\begin{theorem}[Fundamental] $ $
  \begin{itemize}
  \item If $\ltyping 1 t T$, then $\lSemtyp 1 t T$.
  \item If $\ltyping 1 \delta \Delta$, then $\lSemtyp 1 \delta \Delta$.
  \end{itemize}
\end{theorem}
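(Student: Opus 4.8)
The plan is to prove both statements simultaneously by mutual induction on the typing derivations, \emph{extending} the induction to also cover the two branch judgments $\ltyping 1 \branch {\judge[\Delta] T \STo T'}$ and $\ltyping 1 {\vect\branch}{\judge[\Delta] T \STo T'}$. This extension is forced: the pattern-matching case of the term statement consumes the semantic well-typedness of the branch vector, so I must carry four goals through the induction at once, namely $\ltyping 1 t T \Rightarrow \lSemtyp 1 t T$, $\ltyping 1 \delta \Delta \Rightarrow \lSemtyp 1 \delta \Delta$, and the analogous implications for a single branch and for a branch vector.

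The key observation that makes the whole argument short is that, for every typing rule, the corresponding \emph{semantic compatibility rule} has already been discharged as a lemma in the preceding subsection. Hence each case of the induction collapses to one application of the matching semantic rule, fed with the induction hypotheses on the premises. The core-calculus cases (the variable rule for $x$, $\ze$, $\tsucc$, $\lambda$-abstraction, application, and the empty and extended local substitutions) are literally the ones from \Cref{sec:st} and require no change. The genuinely modal cases --- $\boxit t$, the global-variable reference $u^\delta$, and the scrutinee form $\matc s\ \vect\branch$ --- are each closed by invoking the semantic rule just proved, supplying the induction hypothesis for the relevant subterm $t$, local substitution $\delta$, or scrutinee $s$, and, in the pattern case, the induction hypothesis establishing $\lSemtyp 1 {\vect\branch}{\judge[\Delta'] T \STo T'}$.

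For the branch goals, each of the five branch-typing rules pairs with the correspondingly shaped semantic branch definition, and the case is discharged simply by applying the induction hypothesis to the branch body, which is typed in an enlarged global context. The $?u~?u'$ branch additionally carries the universal quantification over $\iscore S$; I thread this through unchanged, since the induction hypothesis on the body is available uniformly for every such $S$. The branch-vector goal then repackages the individual branch results according to the covering side conditions of the $\vect\branch$ typing rule.

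The main obstacle is organizational rather than mathematical. I must arrange the mutual induction so that the branch and branch-vector implications are available as induction hypotheses at the moment the term case $\matc s\ \vect\branch$ is encountered, even though these implications do not appear in the theorem's statement. I must also verify that the exhaustiveness and covering conditions imposed by the typing of $\vect\branch$ are exactly what the semantic pattern-matching lemma's case analysis on the scrutinee's normal form requires, so that whichever constructor $\intp{s}^1$ evaluates to (including the neutral $\boxit{u^\delta}$ case that blocks reduction), a semantically well-typed branch is guaranteed to exist. Once this bookkeeping is in place, no fresh calculation is needed beyond citing the already-established semantic rules.
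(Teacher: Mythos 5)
Your proposal is correct and takes essentially the same route as the paper: the paper proves each semantic compatibility rule as a lemma in the preceding subsection and then obtains the fundamental theorem by induction on typing derivations, with every case discharged by citing the matching semantic rule. Your explicit extension of the mutual induction to the single-branch and branch-vector judgments is precisely the bookkeeping the paper leaves implicit, and it is what makes the $\matc s\ \vect\branch$ case close.
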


This allows us to conclude the soundness theorem.
\begin{proof}[Proof of \Cref{thm:ct:sound}]
  This is a direct consequence of fundamental theorems, the same as before.
\end{proof}

Now we have finished our development of the NbE proof of 2-layered contextual modal
type theory.

\section{Adding Recursor for Natural Numbers}

For completeness, in this section, we add recursors for natural numbers to the type
theory in \Cref{sec:contextual}. %
This addition is rather straightforward and is essentially identical to a standard
presheaf model of STLC with weak natural numbers, so readers uninterested in this
section might safely skip it. %
We include this section in order to better match up with our published version.

\subsection{Adjustments to Syntax}

In this subsection, we first update the syntactic definitions of the system. %
First, we update the syntax as follows:
\begin{align*}
  s, t &:= \cdots \sep \recn T s {x~y. s'} t \tag*{(Terms, $\Exp$)} \\
  v &:= \cdots \sep \recn T w {x~y. w'} v \tag*{(Neutral form, $\Ne$)}
\end{align*}
In the syntax, $\recn T s {x~y. s'} t$ eliminates the natural number $t$. %
Here, $s$ is the base case and $s'$ is the step case. %
$x$ is the predecessor in $s'$ and $y$ is the recursive call, i.e. inductive
hypothesis in the proof theoretic sense. 

Then, we add the following typing rules for the recursor:
\begin{mathpar}
  \inferrule
  {\ltyping i s T \\ \ltyping[\Psi][\Gamma, x : \Nat, y : T]i{s'}T \\ \ltyping i t \Nat}
  {\ltyping i{\recn T s {x~y. s'} t}T}

  \inferrule
  {\ltyping[\Psi, u : (\judge[\Delta]T), u' : (\judge[\Delta, x : \Nat, y : T]T),
    u'' : (\judge[\Delta]\Nat)] 1 t T'}
  {\ltyping 1 {\recn{T}{?u}{x~y.?u'}{?u''} \STo t}{\judge[\Delta] T \STo T'}}
\end{mathpar}
Since the recursor is extended to STLC, i.e. it is available at both layers including
layer $0$, we need to add a case for pattern matching to handle code of the recursor
as in the second rule. 

Next, we extend $\ltyping 1{\vect\branch}{\judge[\Delta] T \STo T'}$ to ensure that
the branch for the recursor is included properly so that this judgment remains \emph{covering}:
\begin{mathpar}
  \inferrule
  {\forall b \in \vect \branch ~.~ \ltyping 1 {\branch}{\judge[\Delta] \Nat \STo T'}
    \\
    \branch_\ze = \ze \STo t \text{ for some } t \\
    \branch_\tsucc = \su ?u \STo t \text{ for some } t \\
    \mhighlight{\branch_\trec = \recn{T'}{?u}{x~y.?u'}{?u''} \STo t \text{ for some } t} \\
    \branch_\tapp = ?u~?u' \STo t \text{ for some } t \\
    \forall x : \Nat \in \Delta ~.~ b_x = \var x \STo t \text{ for some } t \\
  \vect\branch \text{ is a permutation of } \{ \branch_\ze, \branch_\tsucc,
  \mhighlight{\branch_\trec}, \branch_\tapp, \branch_x \text{ for all } x : \Nat \in \Delta \}}
  {\ltyping 1{\vect\branch}{\judge[\Delta] \Nat \STo T'}}

  \inferrule
  {\forall b \in \vect \branch ~.~ \ltyping 1 {\branch}{\judge[\Delta] S \func T \STo T'}
    \\
    \branch_\lambda = \lambda x. ?u \STo t \text{ for some } t \\
    \mhighlight{\branch_\trec = \recn{T'}{?u}{x~y.?u'}{?u''} \STo t \text{ for some } t} \\
    \branch_\tapp = ?u~?u' \STo t \text{ for some } t \\
    \forall x : S \func T \in \Delta ~.~ b_x = \var x \STo t \text{ for some } t \\
  \vect\branch \text{ is a permutation of } \{ \branch_\lambda,
  \mhighlight{\branch_\trec}, \branch_\tapp, \branch_x \text{ for all } x : S \func T \in \Delta \}}
  {\ltyping 1{\vect\branch}{\judge[\Delta] S \func T \STo T'}}
\end{mathpar}
This case must be included in both rules because it is possible to eliminate a natural
number into any type.

At last, we include the equivalence rules. %
There are three kinds of rules that we must add. %
Since we add another term construct, we must include its congruence rule. %
Then we specify the $\beta$ equivalence when encountering $\ze$ and $\tsucc$,
respectively. %
At last, we specify the $\beta$ equivalence for $\tmatc$. 
\begin{mathpar}
  \inferrule
  {\ltyequiv i {s_0}{s_0'} T \\ \ltyequiv[\Psi][\Gamma, x : \Nat, y : T]i{s_1}{s_1'}T
    \\ \ltyequiv i t {t'}\Nat}
  {\ltyequiv i{\recn T{s_0}{x~y. s_1}t}{\recn T{s_0'}{x~y. s_1'}{t'}}T}

  \inferrule
  {\ltyping 1 s T \\ \ltyping[\Psi][\Gamma, x : \Nat, y : T]i{s'}T \\ \ltyping 1 t \Nat}
  {\ltyequiv 1{\recn T s {x~y. s'} \ze}{s}T}

  \inferrule
  {\ltyping 1 s T \\ \ltyping[\Psi][\Gamma, x : \Nat, y : T]i{s'}T \\ \ltyping 1 t \Nat}
  {\ltyequiv 1{\recn T s {x~y. s'}{\su t}}{s'[t/x,\recn T s {x~y. s'}{t}/y]}T}

  \inferrule
  {\ltyping 0 s T \\ \ltyping[\Psi][\Gamma, x : \Nat, y : T]0{s'}T \\ \ltyping 1 {t'} \Nat \\
    \ltyping 1 {\vect \branch}{\judge[\Delta] T \STo T'} \\
    \vect\branch(\recn T s {x~y.s'}{t'}) = \recn{T}{?u}{x~y.?u'}{?u''} \STo t}
  {\ltyequiv 1 {\matc {\boxit{(\recn T s {x~y.s'}{t'})}}\ \vect\branch}{t[s/u,s'/u',t'/u'']} T'} 
\end{mathpar}

\subsection{Semantics of Recursor}

The presheaf model in \Cref{sec:prescont} can be extended with the recursor on natural
numbers so that it gives a normalization algorithm with the recursor.

First, we extend the evaluation function:
\begin{align*}
  \intp{\recn T s{x~y. s'} t}^i_{\;\Psi;\Gamma}(\sigma;\rho) :=
  \trec^i_T(s,{x~y. s'}, \intp{t}^i_{\;\Psi;\Gamma}(\sigma;\rho))_{\;\Psi;\Gamma}(\sigma;\rho)
\end{align*}
where $\trec$ is the recursor implemented in the semantics. %
Note that by typing, we know that $\intp{t}^i_{\;\Psi;\Gamma}(\sigma;\rho) \in
\intp{\Nat}_{\;\Psi;\Gamma} = \Nf^\Nat_{\;\Psi;\Gamma}$.
The $\trec$ semantic function does recursion intuitively:
\begin{align*}
    \trec^i_T(\_)_{\;\Psi;\Gamma}
    &:
      \ltyping[\Phi][\Delta] i s T \to \ltyping[\Phi][\Delta, x : \Nat, y : T] i{s'} T \to
   \Nf^\Nat_{\;\Psi;\Gamma} \to \intp{\Phi;\Delta}^i_{\;\Psi;\Gamma} \to
      \intp{T}_{\;\Psi;\Gamma}
      \tag*{\textbf{(Semantic Recursor for \Nat)}} \\  
    \trec^i_T(s,{x~y. s'}, \ze)_{\;\Psi;\Gamma}(\sigma;\rho)
    &:= \intp{s}^i_{\;\Psi;\Gamma}(\sigma;\rho) \\
    \trec^i_T(s,{x~y. s'}, \su w)_{\;\Psi;\Gamma}(\sigma;\rho)
    &:= \intp{s'}^i_{\;\Psi;\Gamma}(\sigma; (\rho, w, \trec^i_T(s,{x~y. s'},
      w)_{\;\Psi;\Gamma}(\sigma;\rho))) \\
    \trec^i_T(s,{x~y. s'}, v)_{\;\Psi;\Gamma}(\sigma;\rho)
    &:= \uparrow^T_{\;\Psi;\Gamma}(\recn T w {x~y. w'} v)
      \tag*{(where $w :=
    \downarrow^T_{\;\Psi;\Gamma}(\intp{s}^i_{\;\Psi;\Gamma}(\sigma;\rho))$, and
    $w' := \downarrow^T_{\;\Psi;\Gamma, x : \Nat, y : T}(\intp{s'}^i_{\;\Psi;\Gamma, x
      : \Nat, y : T}(\sigma';\rho''))$,)} \\
  \tag*{(with $(\sigma'; \rho') := (\sigma; \rho)[\id; p(p(\id))]:
    \intp{\Phi;\Delta}^i_{\;\Psi;\Gamma, x : \Nat, y : T}$, and $\rho'' := \rho', x, \uparrow^T_{\;\Psi;\Gamma}(y)$)}
\end{align*}
The $\trec$ function operates on a normal form of a \Nat. %
It acts on both layers. %
When $i = 0$, it means that this function is running some generated code, i.e. code
that are lifted by $\tletbox$. %
Checking the typing rules, there are three possibilities. %
First, it could be a $\ze$. %
Then the first case matches. %
Second, it could be a $\tsucc$ of something. %
Then the second case applies. %
Notice that the evaluation environment is extended with the predecessor and the result
of the recursive call. %
Finally, it could be some neutral form. %
Then the third case applies. %
In this case, we must apply reflection ($\uparrow^T_{\;\Psi;\Gamma}$) so we must
compose a neutral form. %
This is done by composing a neutral form of the recursor. %
To do that, we need the normalize $s$ and $s'$, which is what the details are about. %

After describing the recursor itself, let us describe how the \tmatc function acts on
the recursor. %
This is done by extending the \tmatc function with another case for the recursor:
\begin{align*}
      \tmatc(\recn T s{x~y. s'} t, \vect\branch)_{\;\Psi;\Gamma}(\rho; \rho)
    &:= \intp{t}^1_{\;\Psi;\Gamma}(\sigma, s/u, s'/u', t/u''; \rho)
      \tag*{(where ${\recn{T}{?u}{x~y.?u'}{?u''} \STo t} := \vect\branch(\recn T s{x~y. s'} t)$)}
\end{align*}
Much as expected, the \tmatc function only looks up the right branch and then evaluate
the body. %

Similarly, we also need to extend the \tnfbranch function. %
This is simple as we just need to follow other cases:
\[
  \begin{array}{r@{~}l}
    \tnfbranch(\recn{T}{?u}{x~y.?u'}{?u''} \STo t)_{\;\Psi; \Gamma}(\sigma; \rho)
    &:= \\
    \multicolumn{2}{r}{\qquad\qquad\recn{T}{?u}{x~y.?u'}{?u''} \STo \downarrow^{T'}_{\;\Psi';
    \Gamma}(\intp{t}^1_{\;\Psi'; \Gamma}(\sigma', u^\id/u, u'^\id/u', u''^\id/u''; \rho'))} \\
    \multicolumn{2}{r}{\mbox{(where $\Psi' := u : (\judge[\Delta]T), u' : (\judge[\Delta, x : \Nat, y : T]T),
    u'' : (\judge[\Delta]\Nat)$,)}}  \\
    \multicolumn{2}{r}{\mbox{(and $(\sigma'; \rho') := (\sigma; \rho)[p(p(p(\id))); \id]\in \intp{\Phi;\Delta}^1_{\;\Psi';\Gamma}$)}}
  \end{array}
\]
The fix for completeness and soundness proofs is identical to those for STLC with weak
natural numbers. %
In particular, we do not need to amend the definitions of semantic judgments as we
only add an elimination principle for an existing type. %
At this point, we have finished extending our normalization algorithm with a recursor
for \Nat.






\end{document}